\documentclass[11pt,a4paper]{amsart}
\usepackage[T1]{fontenc}
\usepackage[utf8]{inputenc}
\usepackage{lmodern,microtype}
\usepackage[margin=21mm]{geometry}
\usepackage{setspace}
\usepackage{amsmath,amssymb,amsthm,mathtools}
\usepackage{booktabs,array,enumitem}
\usepackage[numbers,sort&compress]{natbib}
\usepackage{xurl}
\usepackage[unicode,hidelinks]{hyperref}
\hypersetup{pdftitle={Dictators are most informative},
            pdfauthor={Vu Khac Ky and Tuan Tran}}
\newcommand{\E}{\mathbb E}
\newcommand{\R}{\mathbb R}
\newcommand{\PP}{\mathbf P}

\DeclareMathOperator{\Cov}{Cov}
\DeclareMathOperator{\Ent}{Ent}
\DeclareMathOperator{\Var}{Var}
\DeclareMathOperator{\atanh}{atanh}
\newtheorem{theorem}{Theorem}[section]
\newtheorem{conjecture}[theorem]{Conjecture}
\newtheorem{lemma}[theorem]{Lemma}
\newtheorem{proposition}[theorem]{Proposition}
\newtheorem{corollary}[theorem]{Corollary}
\numberwithin{equation}{section}
\allowdisplaybreaks[2]
\setlist{itemsep=3pt,topsep=5pt}

\title[Dictators are most informative]
{Dictators are most informative}

\author{Vu Khac Ky}
\address{Department of Mathematics, FPT University, Hanoi, Vietnam}
\email{kyvk2@fpt.edu.vn}

\author{Tuan Tran}
\address{School of Mathematical Sciences, University of Science and Technology of China, Anhui, China}
\thanks{Tuan Tran was supported by the Excellent
Young Talents Program (Overseas) of the National Natural Science Foundation of China under Grant No.
GG0010007003.}
\email{trantuan@ustc.edu.cn}

\begin{document}
\flushbottom

\begin{abstract}
We prove the Courtade--Kumar conjecture: among all Boolean functions
$f\colon \{-1,1\}^n\to\{-1,1\}$, a dictator retains the most
information about a uniformly random input observed through independent
binary noise.
\end{abstract}

\maketitle
\pagestyle{plain} 

\section{Introduction}\label{sec:introduction}

How much information about a noisy vector can be retained in one bit?
Let $X$ be uniform on $\{-1,1\}^n$ and, for $0\le\rho\le1$, obtain
$Y_\rho$ from $X$ by independently flipping each coordinate $X_i$ with probability
$(1-\rho)/2$, so that larger $\rho$ means less noise. A one-bit
summary of $X$ has the form $f(X)$ for a Boolean function
$f:\{-1,1\}^n\to\{-1,1\}$, and we seek a summary that shares as
much information as possible with $Y_\rho$. Courtade and Kumar
\cite{CourtadeKumar2014} relate this question to the information
bottleneck problem, where a compressed description should retain
information about a correlated observation, and to the value of one
bit of side information.

To measure the information preserved by a summary, recall that the
\emph{Shannon entropy} of a finite random variable $U$ with probability
mass function $p_U$ is $H(U)=-\sum_u p_U(u)\log p_U(u)$.
We use natural logarithms and the convention $0\log0=0$.
The \emph{conditional entropy}
$H(U\mid V)=\sum_{v:p_V(v)>0}p_V(v)H(U\mid V=v)$ averages the
entropy of the conditional distributions of $U$ given $V$.
The \emph{mutual information} $I(U;V)=H(U)-H(U\mid V)$ is
therefore the average reduction in uncertainty about $U$ obtained
by observing $V$. Our objective is to maximize $I(f(X);Y_\rho)$
over all Boolean functions $f$.

Kumar and Courtade conjectured in 2013 \cite{KumarCourtade2013}
that a single input coordinate is optimal, and studied the conjecture
further in \cite{CourtadeKumar2014}. Such a summary, allowing a change
of sign, is a \emph{signed dictator}: $f(x)=\sigma x_i$, where
$1\le i\le n$ and $\sigma\in\{-1,1\}$, with random output $\sigma X_i$.
To express its information value, write
$H_{\rm b}(p)=-p\log p-(1-p)\log(1-p)$ for binary entropy, and set
\begin{equation}\label{eq:entropy-functions}
 H(t)=H_{\rm b}((1-t)/2),\qquad \Phi(t)=\log2-H(t).
\end{equation}
Here $H(t)$ is the entropy of a sign-valued variable with mean $t$.
A Boolean function is
called \emph{balanced} when $\E f=0$. Every signed dictator is
balanced, with entropy $\log2$, and its conditional entropy given
$Y_\rho$ is $H(\rho)$. Its mutual information is therefore
$\Phi(\rho)$, which gives the benchmark in the following conjecture.

\begin{conjecture}[Courtade--Kumar {\cite{KumarCourtade2013,CourtadeKumar2014}}]\label{conj:ck}
For every integer $n\ge1$, every Boolean function
$f:\{-1,1\}^n\to\{-1,1\}$, and every $\rho\in[0,1]$,
\[
 I(f(X);Y_\rho)\le\Phi(\rho).
\]

\end{conjecture}

The problem became a meeting point of information theory and Boolean
analysis. Kindler, O'Donnell and Witmer
\cite{KindlerODonnellWitmer2015} record the attention it attracted,
while Yu and Tan's 2022 monograph calls it “one of the most important
open problems in information theory” \cite[Section~9.1.2]{YuTan2022}.
The latter places it alongside noise stability, noninteractive
correlation distillation, and hypercontractivity.

The conjecture has led to sustained work on several related extremal
problems. Anantharam, Gohari, Kamath and Nair
\cite{AnantharamGohariKamathNair2013} used hypercontractivity to study
the weaker problem in which both vectors are reduced to Boolean
outputs; Pichler, Piantanida and Matz \cite{PichlerPiantanidaMatz2018}
later proved this two-output inequality. Kindler, O'Donnell and Witmer \cite{KindlerODonnellWitmer2015}
examined fixed-mean and continuous formulations, while Ordentlich,
Shayevitz and Weinstein \cite{Ordentlich2016} improved the general
upper bound by Fourier analysis. Samorodnitsky
\cite{Samorodnitsky2016} established CK for
$0\le\rho\le\rho_0$, with an absolute constant $\rho_0>0$. Yu \cite{Yu2023,Yu2026Local}
subsequently strengthened this result in the balanced case
to the explicit range $0\le\rho\le0.914$, while also obtaining
further bounds for arbitrary means.

A second line of work seeks a stronger functional inequality from
which CK would follow. Anantharam, Bogdanov, Chakrabarti, Jayram and
Nair \cite{AnantharamEtAl2017} proposed a Hellinger-entropy strengthening
and proved that it implies CK. Chen and Nair \cite{ChenNair2024}
placed these conjectures in an ordered family of $\Phi$-entropy
inequalities and related their small-noise limits to isoperimetry.
Their introduction also records extensive work on the problem at a
Simons workshop in 2015. Chen, Gohari and Nair
\cite{ChenGohariNair2025} then developed a differential-equation
approach and reduced a sufficient condition for balanced CK to
four explicit inequalities. This program connects the conjecture
to both functional inequalities and the evolution of entropy under
noise. The ordering in \cite{ChenNair2024} concerns the entropy
parameter; it does not establish downward propagation in $\rho$.

Li and M\'edard \cite{LiMedard2021} studied noisy moments and
noninteractive correlation distillation. Barnes and \"Ozg\"ur
\cite{BarnesOzgur2020} proved an equivalence between balanced CK and
a symmetrized Li--M\'edard conjecture. More recent work includes
the coordinate-wise information bound of Javanmard and Woodruff
\cite{JavanmardWoodruff2026}; its objective differs from information
about the full noisy vector.

In this paper, we establish the Courtade--Kumar conjecture for every mean and correlation.  Our proof uses log-Sobolev contraction
\cite{Gross1975}, Harris association \cite{Harris1960}, and martingale
entropy methods \cite{FalikSamorodnitsky2007}, with the Fourier
normalization of \citet{ODonnell2014}.

\begin{theorem}\label{jp:main}
The Courtade--Kumar conjecture holds.
\end{theorem}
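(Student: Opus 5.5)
We begin by rewriting the objective in Fourier-analytic form. Since $Y_\rho$ is uniform and $\E[f(X)\mid Y_\rho=y]=T_\rho f(y)$, where $T_\rho$ is the noise operator, we have
\[
 I(f(X);Y_\rho)=H(\E f)-\E\, H(T_\rho f(Y)) .
\]
Because $\Phi=\log 2-H$, this equals $\E\,\Phi(T_\rho f)-\Phi(\E f)$, which is the $\Phi$-entropy $\Ent_\Phi(T_\rho f)$ of the function $T_\rho f$. The goal is therefore
\[
 \Ent_\Phi(T_\rho f)\le\Phi(\rho)\qquad\text{for all Boolean } f.
\]
Here $\Phi(t)=\tfrac{1+t}{2}\log(1+t)+\tfrac{1-t}{2}\log(1-t)$ is even and convex, and it satisfies $\Phi''(t)=1/(1-t^2)$. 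At $\rho=0$ and $\rho=1$ the inequality is trivial; at $\rho=1$ it is just $H(\E f)\le\log 2$. The plan is a continuity argument in $\rho$. We set $F(\rho)=\Phi(\rho)-\Ent_\Phi(T_\rho f)$ and show that $F$ cannot cross zero. For this we differentiate along the semigroup: writing $\rho=e^{-t}$ and $g=T_\rho f$, the de Bruijn-type identity gives
\[
 \frac{d}{dt}\Ent_\Phi(g)=-\E\sum_i \Phi''\text{-weighted discrete gradients},
\]
which for the dictator reduces to $\frac{d}{dt}\Phi(e^{-t})=-\rho^2/(1-\rho^2)\cdot(\text{something explicit})$.

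The core step is a comparison inequality. We would prove
\[
 \mathcal E_\Phi(g)\ \ge\ \psi(\rho)\,\Ent_\Phi(g)\quad\text{whenever }\Ent_\Phi(g)\ge \Phi(\rho),
\]
with $\psi(\rho)$ equal to the exact ratio attained by the dictator, namely
\[
 \psi(\rho)=\frac{\rho\,\Phi'(\rho)}{\Phi(\rho)}
\]
in the $t$-parametrization. Granting this, $F$ satisfies a differential inequality that prevents it from becoming negative as $\rho$ decreases from $1$. Here $F(1)\ge0$, and one must check the endpoint carefully, since $\Phi'(\rho)\to\infty$ there. Propagating downward in $\rho$ is exactly the direction the introduction flags as missing from earlier work.

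To prove the comparison inequality we would argue coordinate by coordinate with a martingale decomposition in the style of Falik--Samorodnitsky. We decompose $\Ent_\Phi(g)=\sum_i \E\,\Ent_\Phi^{(i)}(\E[g\mid x_1,\dots,x_i])$. Each one-coordinate term is a two-point $\Phi$-entropy, which we bound by the corresponding Dirichlet term. The two-point inequality is an explicit one-variable calculus statement about $\Phi$ with equality for $\pm\rho$. The log-Sobolev contraction of Gross controls the loss when passing from conditional expectations back to $g$. Harris association would be used after a monotonization or shifting reduction: we first replace $f$ by a monotone function, which does not decrease the objective via the standard compression argument. After that the coordinate gradients are nonnegative and positively correlated, so the cross terms in the martingale sum have a favorable sign.

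The main obstacle is that the comparison must be sharp, with constant exactly that of the dictator. It must also hold for all means, and it must survive near $\rho\to1$, where $\Phi''$ blows up. Our first attempt at the one-dimensional inequality would be to reduce it, by homogeneity in the two-point values $(a,b)\in[-\rho,\rho]^2$, to a single-variable inequality checkable by convexity of $1/\Phi''$. We would handle unbalanced $f$ by noting that $\Phi(\E f)$ only helps when $|\E f|$ is large.
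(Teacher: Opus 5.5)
Your reformulation $I(f(X);Y_\rho)=\E\Phi(T_\rho f)-\Phi(\mu)$ matches the paper's framework on $[49/50,1]$. So do the monotone compression and the downward barrier from $\rho=1$, where you show that at a hypothetical positive gap the production $D(T_\rho f)$ is at least the dictator's $\rho\atanh\rho$ (Lemma~\ref{ub:barrier} with \eqref{eq:gap-derivative}). But the comparison $D(g)\ge\psi(\rho)\Ent_\Phi(g)$, with $\psi(\rho)=\rho\atanh\rho/\Phi(\rho)$, is the entire difficulty, and the route you sketch cannot give it.

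On an edge with values $m\pm d$, the production is $\tfrac d2[\atanh(m+d)-\atanh(m-d)]\approx d^2/(1-m^2)$. The two-point $\Phi$-entropy is $\approx d^2/[2(1-m^2)]$, so the ratio tends to $2$ as $d\to0$. Even centered edges $\pm d$ with $d<\rho$ have ratio $d\atanh d/\Phi(d)<\psi(\rho)$, because $t\atanh t/\Phi(t)$ is increasing (the series coefficients have ratio $2k$). So a two-point inequality with the dictator's constant is false on every edge smaller than the dictator's. Such small edges are unavoidable when dependence is spread over many coordinates.

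Any bound proved edge by edge and then summed through a martingale decomposition therefore has constant at most $2$, the log-Sobolev constant. That yields only $A_\rho(f)\le\rho^2H(\mu)$ (Corollary~\ref{ub:information-contraction}), which for balanced $f$exceeds $\Phi(\rho)$ at every $0<\rho<1$.

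Three further steps are wrong as stated:
\begin{enumerate}
\item $T_\rho f$ is not confined to $[-\rho,\rho]$: for $\mathrm{Maj}_3$, $T_\rho f(1,1,1)=\tfrac32\rho-\tfrac12\rho^3>\rho$.
\item The inequality has no homogeneity to exploit.
\item The chain rule over Doob increments has no cross terms for Harris association to sign. In the paper, Harris is used instead to bound $W_2$ in terms of $W_1$.
\end{enumerate}

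What is missing is a way to use Booleanity of $f$ and the positive-gap hypothesis globally. The paper replaces your two-point step with the edge profile of Lemma~\ref{lem:edge-entropy-profile}. It bounds production by $B_e(|q|)$, using both the edge's remaining entropy $e$ and its half-difference $q$. Joint convexity then gives $\sum_iB_{E_\rho(f)}(\rho|\widehat f(i)|)$, and the positive gap ($E_\rho(f)<H(\rho)$) lets this be evaluated at the dictator's entropy.

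That bound sees only level one, so the paper recovers higher levels through:
\begin{itemize}
\item $h=\arcsin g$ with the exact edge remainder $R$ (Proposition~\ref{fh:simultaneous-refund});
\item square completion against $\frac\pi2 g$, with error $C_*\mathcal J$;
\item the influence--variance and martingale-energy bounds (Theorem~\ref{rt:source-projection}, Lemma~\ref{as:strengthened-split});
\item the transfer Lemma~\ref{ub:transfer}, which uses the positive gap a second time.
\end{itemize}

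Even so, the paper never proves an inequality that is tight at the dictator. Near dictators it shows directly that no positive gap exists (Lemmas~\ref{ub:exponential-local} and~\ref{rt:local-junction}). Elsewhere the scalar criterion holds with slack only for $\rho\ge49/50$. Below that, entirely different arguments are needed: contraction with one- and two-coordinate local comparisons, the odd lift with Yu's bound, Harris-controlled cubic energy, and integrated profile clocks, several of them computer-assisted.

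Your proposal supplies none of these mechanisms, so its central step is unproved.
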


The next result, proved in the companion paper \cite{VuTranStability},
is a stability version of Theorem~\ref{jp:main}:
if $I(f(X);Y_\rho)$ is close to $\Phi(\rho)$, then $f$ is close
to a signed dictator.

\begin{theorem}[Stability]\label{st:global}
Let $0<\rho<1$ and $\varepsilon\ge0$. If
$f:\{-1,1\}^n\to\{-1,1\}$ satisfies
\[
I(f(X);Y_\rho)\ge\Phi(\rho)-\varepsilon,
\]
then there are $1\le i\le n$ and $\sigma\in\{-1,1\}$ such that
\[
\PP\big(f(X)\ne\sigma X_i\big)
 \le\frac{10^8\,\varepsilon}{\rho^2(1-\rho)H(\rho)}.
\]
\end{theorem}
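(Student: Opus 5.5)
The proof will deduce Theorem~\ref{st:global} from a quantitative version of the argument behind Theorem~\ref{jp:main}, and then upgrade closeness to \emph{some} linear structure into closeness to a single signed dictator. Write $g=T_\rho f$ for the noisy function, so $\E[f(X)\mid Y_\rho]=g(Y_\rho)$ and
\[
 I(f(X);Y_\rho)=\E\,\Phi(g(Y_\rho))-\Phi(\E f),
\]
since $H(f(X)\mid Y_\rho)=\E H(g(Y_\rho))$. The first step is to record the deficit identity produced by the proof of Theorem~\ref{jp:main}. Along the semigroup in $\rho$ (equivalently in $t=-\log\rho$), the proof compares $\frac{d}{dt}\E\Phi(T_{e^{-t}}f)$ with the corresponding quantity $\Phi'(\rho)$ for a dictator, using log-Sobolev contraction and the Falik--Samorodnitsky martingale decomposition. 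I will rewrite every inequality used there with an explicit nonnegative slack term. This should give
\[
 \Phi(\rho)-I(f(X);Y_\rho)\ \ge\ c\,\rho^2(1-\rho)H(\rho)\cdot D(f),
\]
where $D(f)$ is a structural defect. Natural candidates are the Fourier weight outside levels $\le1$, namely $\sum_{|S|\ge2}\hat f(S)^2$, together with $1-\max_i\hat f(\{i\})^2$. The weights $\rho^2$, $(1-\rho)$ and $H(\rho)$ should come respectively from the level-two damping, from the distance to the noiseless endpoint where every balanced $f$ is extremal-adjacent, and from the size of $\Phi''$ near $\rho$.

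The second step is to handle the mean. If $|\E f|$ is not small, the term $\Phi(\E f)$ already costs a constant multiple of $(\E f)^2$ relative to the balanced optimum. The proof of Theorem~\ref{jp:main} should show that unbalanced functions are strictly suboptimal with a quadratic gap, so I will extract $\varepsilon\gtrsim \rho^2 H(\rho)(\E f)^2$ from it. This reduces matters to nearly balanced $f$.

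The third step turns a Fourier-level conclusion into a dictator. Once the level $\ge2$ weight is $O(\varepsilon/\text{weights})$, Friedgut--Kalai--Naor, in its sharp Kindler--Safra or Jendrej--Oleszkiewicz--Wojtaszczyk form with linear dependence, yields $i,\sigma$ with $\PP(f\ne\sigma X_i)\lesssim \sum_{|S|\ge2}\hat f(S)^2$. Boolean values are essential here. Since the target bound is linear in $\varepsilon$, I must use an FKN version that is linear rather than carrying a logarithmic loss. The constant $10^8$ is then simply the product of the explicit constants.

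I expect the main obstacle to be the first step. The proof of Theorem~\ref{jp:main} likely involves several inequalities, such as Harris association and log-Sobolev, whose equality cases do not individually force a dictator; only their combination does. I will first try to locate a single inequality in the chain whose slack dominates $\rho^2\sum_{|S|\ge2}\hat f(S)^2$. For the hypercontractive step, the slack is at least $(1-\rho^2)\rho^2$ times the level $\ge2$ weight of $g$, by spectral gap considerations. The factor $H(\rho)$ near $\rho\to1$ then needs a separate treatment, using the fact that $\Phi''$ blows up.
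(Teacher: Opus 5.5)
Your reduction is sound but is essentially a restatement of the theorem. Since $\|f-\sigma x_i\|_2^2=4\PP(f(X)\ne\sigma X_i)\ge\sum_{|S|\ge2}\widehat f(S)^2+\mu^2$, the theorem itself implies your Step~1 bound. The ordinary FKN theorem is already linear, so no sharp form is needed. And if $D(f)$ contains $1-\max_i\widehat f(i)^2$, FKN is not needed at all, because $\delta(f)=(1-\alpha)/2$ by \eqref{eq:dictator-distance}. All the content is therefore in Step~1, and ``rewriting every inequality with explicit slack'' does not produce it, for three concrete reasons.

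\emph{First}, the proof of Theorem~\ref{jp:main} is not one chain along the semigroup. It covers $[0,1]$ by different arguments on different source cells, and on $[49/50,1]$ it is the barrier Lemma~\ref{ub:barrier}, which only shows $G_f'(\rho)>0$ at hypothetical points where $G_f(\rho)>0$. Such an argument has no slack to record, since it says nothing about the size of $-G_f(\rho)$. Moreover, Proposition~\ref{fh:simultaneous-refund}, Lemma~\ref{ub:transfer} and Theorem~\ref{as:gate} use the positive-gap hypothesis $E_\rho(f)<H(\rho)$ essentially. What is missing is a genuinely quantitative barrier: a lower bound for $G_f'$ with a margin proportional to $\delta(f)$. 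It must hold whenever the deficit is below a threshold, not only at positive gap, and it must be integrated up to $\rho=1$, where $G_f(1)=-\Phi(\mu)$ gives no margin for balanced $f$. The paper explicitly defers these stronger margins to its companion paper.

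\emph{Second}, the proof runs on the compressed increasing source, and Lemma~\ref{lem:compression} gives $\delta(g)\le\delta(f)$, which is the wrong direction for stability. For example, $x_1x_2$ compresses to the dictator $x_1$, so its entire deficit $\Phi(\rho)-\Phi(\rho^2)$ is spent in the compression step. Slack extracted from the monotone argument controls only $g$. You must also quantify the information gained by sorting, and your plan never mentions this.

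\emph{Third}, the specific slack you propose, $(1-\rho^2)\rho^2$ times the level-$\ge2$ weight of $T_\rho f$, is a multiple of $\sum_{k\ge2}\rho^{2k}W_k(f)$. For a $k$-coordinate parity this is $O(\rho^{2k})$ while $W_{\ge2}(f)=1$, so it cannot dominate $\rho^2W_{\ge2}(f)$ uniformly in the dimension.

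The paper does not prove this statement itself; it cites the companion paper. The route it prepares avoids $W_{\ge2}$ and FKN and works with $\alpha$ directly. Theorem~\ref{ub:local} holds for every Boolean $f$, and $M_\rho$ is concave with $M_\rho(1)=0$. Hence a single value $M_\rho(a_*)>0$ gives $\Phi(\rho)-A_\rho(f)\ge 2\delta(f)M_\rho(a_*)/(1-a_*)$ whenever $\alpha(f)\ge a_*$, which is already linear in $\delta(f)$. For $\alpha(f)<a_*$ it then suffices, because $\delta(f)\le1/2$, to prove a uniform lower bound of order $\rho^2(1-\rho)H(\rho)$ on the deficit. The ingredients the paper sets up for this are:
\begin{itemize}
\item the fractional saving $\epsilon>0$ in Theorem~\ref{thm:clock-propagation}, which gives $E_\rho(f)\ge H(\rho)\rho^{-\epsilon}$;
\item the positive slack in the energy gates;
\item the quantitative high-correlation estimates.
\end{itemize}
The hard part, which your plan leaves untouched, is exactly that uniform margin for far-from-dictator sources near $\rho=1$. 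Your Step~2 needs the same branch-by-branch work. $\Phi(\mu)$ enters the deficit $E_\rho(f)-H(\rho)+\Phi(\mu)$ as a credit, but $E_\rho(f)$ can be smaller than $H(\rho)$ for unbalanced sources (trivially $E_\rho(f)\le H(\mu)$). So a margin of order $\mu^2$ is not automatic.
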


\medskip
\noindent \textbf{Independent work.}
While preparing the present manuscript, we learned that Zijie Chen, Amin Gohari, Adel Javanmard, Honghao Lin, Vahab Mirrokni, Chandra Nair, and David P. Woodruff \cite{CGJLMNW2026} had independently obtained a proof of the Courtade--Kumar conjecture. Their proof is substantially different from ours: their argument develops the differential-equation approach, whereas ours develops a new entropy-production and spectral framework. We thank them for their collegiality in coordinating the simultaneous posting of the two manuscripts on arXiv.

\subsection{Methodology and relation to previous work}

Our main technical contribution is a set of entropy-production
and spectral estimates that control functions whose dependence
is spread across many coordinates. These estimates retain both
the contributions of selected coordinates and the variation
outside them. Combined with local entropy comparisons near
dictators, they give a bound valid across all noise levels,
uniformly in the dimension. The key step is to make these two
types of comparison meet without losing the information needed
near the noiseless endpoint.

The difficulty at that endpoint is intrinsic to the problem.
When $\rho=1$, we have $Y_1=X$, so every balanced Boolean function
carries the same information $\log2$. For example, parity on
$k\ge2$ coordinates carries $\Phi(\rho^k)$ information, which
approaches the dictator value $\Phi(\rho)$ as $\rho$ tends to one,
although its probability distance from every signed dictator
remains $1/2$. A qualitative proof must therefore resolve the
sign of a vanishing deficit even for functions far from
dictators. At the same time, higher Fourier levels become less
strongly suppressed as the noise decreases, making estimates
that discard their contribution insufficient.

Several ingredients come from earlier work. Samorodnitsky
\cite{Samorodnitsky2016} proved CK at sufficiently high noise,
and Yu \cite{Yu2023,Yu2026Local} substantially enlarged the
known range for balanced functions. Yu's Fourier bounds and
local entropy comparisons provide important inputs to our
argument. At low noise, Courtade and Kumar
\cite[Appendix~B, Theorem~6]{CourtadeKumar2014} used edge
isoperimetry to obtain dictator optimality in each fixed
dimension. These endpoint results leave a dimension-independent
comparison across all noise levels unresolved. We derive the
needed forms of the local inputs to retain the dependence on
the mean and to connect them to our spectral estimates.

Our estimates address this gap by comparing the rate at which
noise creates conditional entropy with quadratic energy.
Keeping track of where this energy lies allows us to distinguish
dependence concentrated on one coordinate from dependence spread
among several. We use the comparison in two complementary ways:
integration gives an entropy bound in the intermediate range,
while a differential argument rules out an excess over the
dictator benchmark near the noiseless endpoint. Throughout,
we retain the smaller initial entropy of an unbalanced bit,
so the argument also covers arbitrary means.

Concavity and convexity reductions then turn the remaining
comparisons into scalar inequalities independent of the
dimension. This separates the analytic mechanism from the
finite verification needed in the intermediate ranges; the
final approach to the noiseless endpoint is analytic.
Retaining quantitative margins in these comparisons leads
to the stability theorem proved in the companion paper
\cite{VuTranStability}.

\subsection{Outline of the proof}
The proof compares the uncertainty left after observing the noisy
vector with the uncertainty left by a dictator. For a fixed
$f:\{-1,1\}^n\to\{-1,1\}$ with mean $\mu=\E f$, write
\begin{equation}\label{eq:information-entropy}
 A_\rho(f)=I(f(X);Y_\rho),\qquad E_\rho(f)=H(\mu)-A_\rho(f).
\end{equation}
Thus $E_\rho(f)$ is the conditional
entropy of the bit $f(X)$ given $Y_\rho$. We must show that the gap
\begin{equation}\label{ub:gap-definitions}
 G_f(\rho)=A_\rho(f)-\Phi(\rho)
          =H(\rho)-E_\rho(f)-\Phi(\mu)
\end{equation}
is nonpositive. Keeping the term $\Phi(\mu)$ is essential: the
initial entropy of an unbalanced bit is smaller than that of a dictator.

A compression argument lets us assume that $f$ is increasing in each
coordinate. We then distinguish functions close to a single coordinate
from functions whose dependence is spread across several coordinates.
For the first group, conditioning on one or two coordinates gives a
direct entropy bound. For the second group, Fourier analysis measures
how much dependence remains outside those coordinates.

To control the sources whose dependence is spread across coordinates,
we study the rate of entropy change. Let $T_\rho f(y)=\E[f(X)\mid Y_\rho=y]$ be the noise operator.
Then $E_\rho(f)=\E H(T_\rho f)$. Differentiating this expression
reduces the problem to an inequality on each edge of the cube.
The edge inequality gives a lower bound for the rate at which noise
creates entropy, in terms of the current entropy and the coordinate
influences of $f$. Integrating that bound from the noiseless input gives
a lower bound for $E_\rho(f)$. Section~\ref{sec:profile-clocks}
writes this integration as a change of variables; the resulting
integral is called a \emph{profile clock} there. The middle range is covered by the local bounds, this integrated
entropy bound, and a direct Fourier-energy estimate.

At correlations close to one, the same comparison is used
differentially. A stronger edge inequality, combined with the spectral comparison
behind Theorem~\ref{fh:entropy-smoothing}, controls $G_f'(\rho)$.
If $G_f(\rho)>0$, the bound makes this derivative nonnegative.
Such a positive gap could not return to a nonpositive value at
$\rho=1$, where $G_f(1)=-\Phi(\mu)\le0$. The spectral estimate
reaches the neighborhood already covered by the local entropy bound,
so these two estimates account for every source. The source is
held fixed throughout this differential argument.

The same scalar entropy profile supports both arguments.
The decrease and log-concavity of $vH(\tanh v)/\tanh v$
(Lemma~\ref{as:profile-convex}) let us replace a long list of
influences by a few extremal values. They also allow an integrated
entropy bound checked at one correlation to cover every larger
correlation. A separate local bound propagates toward smaller
correlations. These are monotonicity statements for the bounds used
in the proof; they do not assert monotonicity of the conjecture itself.

These analytic reductions eliminate the dimension and the full list of influences, leaving explicit scalar inequalities. 
In the central range, affine interpolation reduces the scalar entropy
bound to two noise endpoints, and a series argument controls the
energy remainder throughout the interval. In the middle range, the integral is
convex in the retained squared influences on each packing region
(Lemmas~\ref{lem:source-convexity} and~\ref{lem:moving-cap}),
so its maximum occurs at a boundary
value. Supporting lines then make each integral explicit, without
certifying roots of the entropy profile. A fourth-moment inequality
absorbs the mean correction in the two-coordinate region, so no mean
partition remains.
At high correlation, convexity reduces the compact
certificate to four endpoint inequalities on each of 24 noise
intervals. For $\atanh\rho\ge8$, an explicit spectral choice and
the local entropy comparison complete the proof analytically.
The stronger marked-coordinate estimates needed for stability are
developed in the companion paper \cite{VuTranStability}.
The table summarizes the ranges covered by each argument. The remaining scalar checks are documented in Appendix~\ref{app:certificate}.

\begin{table}[htbp]\centering\small
\begin{tabular}{@{}lll@{}}
\toprule
Correlation range & Argument & Section\\
\midrule
$[0,3/5]$ & Entropy contraction and a local bound & \ref{ub:lower}\\
$[3/5,457/500]$ & Quadratic Fourier-energy bound & \ref{ub:lower}\\
$[457/500,39/40]$ & Cubic energy or integrated entropy & \ref{sec:cover}\\
$[39/40,49/50]$ & Two-coordinate or integrated entropy bound & \ref{sec:cover}\\
$[49/50,\tanh8]$ & Spectral bound and local entropy & \ref{ub:high-cover}\\
$[\tanh8,1]$ & Analytic spectral and local comparison & \ref{sec:qualitative-completion}\\
\bottomrule
\end{tabular}
\vspace*{3mm}
\caption{Each closed interval is covered for every mean and every dimension.}
\end{table}

\section{Preliminaries and monotone reduction}
\label{sec:preliminaries}

We begin with the representations of a Boolean function that will be used
throughout the proof. Fourier coefficients describe its dependence on sets
of coordinates, while the noise operator describes what an observer can
infer about its value. The connection between these two descriptions lets
us express conditional entropy and its rate of change in terms of the
original function. We follow the uniform-cube conventions in
\cite[Chapters~1--2]{ODonnell2014}; see also
\cite[Chapter~IV]{GarbanSteif2011} for the interpretation of the Fourier
weights as an energy spectrum.

Unless another measure is specified, expectations use the uniform
probability measure on $\{-1,1\}^n$, and $[n]=\{1,\ldots,n\}$. A Boolean function takes values
in $\{-1,1\}$; when it is the input to the noise operator, we also call it
the source. We allow real-valued functions in the definitions below,
since conditional expectations of a Boolean function need not be Boolean.

\subsection{Fourier--Walsh expansion and Fourier levels}

For $S\subseteq[n]$, define the parity function
$\chi_S(x)=\prod_{i\in S}x_i$, with $\chi_\varnothing=1$.
If $S\ne T$, the product $\chi_S\chi_T$ contains a coordinate to an
odd power, so independence and symmetry give $\E[\chi_S\chi_T]=0$.
Each parity has squared norm one. There are $2^n$ parities and the
space of real functions on the cube has dimension $2^n$, so they form
an orthonormal basis for the inner product $\langle h,k\rangle=\E[hk]$.
Consequently every such function has a unique Fourier--Walsh expansion
\[
 h(x)=\sum_{S\subseteq[n]}\widehat h(S)\chi_S(x),
 \qquad \widehat h(S)=\E[h(X)\chi_S(X)].
\]
The coefficient at the empty set is the mean. A singleton coefficient,
written $\widehat h(i)=\widehat h(\{i\})$, measures correlation with
one coordinate, while the coefficients with $|S|=k$ describe the
degree-$k$ part of the multilinear expansion.

For a Boolean source $f$, put $\mu=\E f$. The \emph{Fourier weight
at level $k$} is $W_k(f)=\sum_{|S|=k}\widehat f(S)^2$.
Orthogonality gives Parseval's identity
\[
 \sum_{k=0}^n W_k(f)=\E f^2=1,
 \qquad W_0(f)=\mu^2,\qquad
 \sum_{k\ge1}W_k(f)=\Var(f)=1-\mu^2.
\]
Thus these weights allocate the variance among Fourier degrees.
For example, a dictator has all its weight at level one, whereas
the parity of $k$ coordinates has all its weight at level $k$.
The weight $W_1$ records the combined contribution of all coordinates,
whereas distance to a dictator depends on the largest individual
coefficient. Booleanity links these quantities when the low-degree
weight is close to one: the Friedgut--Kalai--Naor theorem
\cite{FKN2002} says that a Boolean function with Fourier weight at
most $\eta$ above level one is within $C\eta$ in probability of a
constant or a signed coordinate, for an absolute constant $C$.

Write $\alpha(f)=\max_i|\widehat f(i)|$. For each sign $\sigma$,
$\E[f(X)\sigma X_i]=1-2\PP\big(f(X)\ne\sigma X_i\big)$, and therefore
\begin{equation}\label{eq:dictator-distance}
 \delta(f):=\min_{i\in [n],\,\sigma\in\{-1,1\}}
       \PP\big(f(X)\ne\sigma X_i\big)
 =\frac{1-\alpha(f)}2.
\end{equation}
This identity connects Fourier concentration with the distance used
in the stability theorem. We omit the argument $f$ when the source
is fixed and order coordinates by decreasing $|\widehat f(i)|$.
When two coordinates are needed and $n=1$, we append an unused
coordinate; this changes neither entropy nor Fourier weights.

\subsection{Noise, posterior means, and information}

The noise operator averages a function over independent perturbations
of its input. If $Z_i=y_i$ with probability $(1+\rho)/2$ and
$Z_i=-y_i$ otherwise, independently over $i$, then
$T_\rho h(y)=\E[h(Z)]$. The joint distribution of $X$ and $Y_\rho$
is symmetric in its two arguments, so the same formula gives
$T_\rho h(y)=\E[h(X)\mid Y_\rho=y]$.
Independence gives $\E[\chi_S(Z)]=\rho^{|S|}\chi_S(y)$, hence
\[
 T_\rho h=\sum_S\rho^{|S|}\widehat h(S)\chi_S,
 \qquad T_\eta T_\rho=T_{\eta\rho}.
\]
In particular, noise preserves the mean and attenuates each Fourier
degree separately. The second identity expresses the effect of applying
two successive rounds of noise; it follows by multiplying their
factors on every parity. For a Boolean source, Parseval now gives
\[
 \Var(T_\rho f)=\sum_{k\ge1}\rho^{2k}W_k(f).
\]
This formula explains why low-degree information is especially useful
at small correlation, and why discarding higher levels costs more as
$\rho$ approaches one.

Set $g_\rho=T_\rho f$. Since a sign-valued random variable is determined
in distribution by its mean,
$\PP\{f(X)=1\mid Y_\rho=y\}=(1+g_\rho(y))/2$.
Thus $g_\rho$ records the observer's posterior bias: values near zero
mean substantial uncertainty, while values near $\pm1$ mean that the
bit can be predicted with high confidence. By symmetry of the entropy
function in \eqref{eq:entropy-functions}, the posterior entropy at $y$
is $H(g_\rho(y))$. Thus the quantities in \eqref{eq:information-entropy}
can be written as
\begin{equation}\label{eq:posterior-entropy}
 E_\rho(f)=\E H(g_\rho),\qquad
 A_\rho(f)=H(\mu)-E_\rho(f)
          =\E\Phi(g_\rho)-\Phi(\mu).
\end{equation}
The source is \emph{balanced} when $\mu=0$. Its initial entropy is
then $\log2$; an unbalanced source starts with less entropy, which is
why the term $\Phi(\mu)$ must be retained in comparisons with dictators.
Constant sources have zero information and can always be treated directly.

We use the signed gap $G_f$ from \eqref{ub:gap-definitions} for the
extremal argument and the deficit $\Delta_\rho(f)=-G_f(\rho)$ for
stability. Both measure the same comparison, with opposite sign.
For a dictator, $g_\rho(y)=\pm\rho y_i$, so
$E_\rho(f)=H(\rho)$ and $A_\rho(f)=\Phi(\rho)$, confirming the
benchmark. The entropy function is even and concave, decreases on
$[0,1]$, and satisfies $H'(t)=-\atanh t$ on $(-1,1)$.

\subsection{Monotone compression and influences}

An increasing, or monotone, Boolean function satisfies $f(x)\le f(y)$
whenever $x_i\le y_i$ for every coordinate. To reduce to this case,
we sort the two values of $f$ along each coordinate fiber, placing
the smaller value at $-1$ and the larger at $1$. The next lemma
shows that this preserves the mean and can only increase information,
so an upper bound for increasing sources suffices for every source.
This qualitative reduction is due to Courtade and Kumar
\cite[Lemma~2 and Remark~2]{CourtadeKumar2014}. Li and M\'edard
\cite[Theorem~3.2]{LiMedard2021} state the corresponding result for
every convex function of the noisy output. We include the short proof
because the sorting step explains how the arbitrary-mean problem
reduces to increasing sources.

\begin{lemma}[Coordinate compression {\cite{CourtadeKumar2014}}]
\label{lem:compression}
Every $f:\{-1,1\}^n\to\{-1,1\}$ admits an increasing
$g:\{-1,1\}^n\to\{-1,1\}$ with $\E g=\E f$ such that
\[
A_\rho(g)\ge A_\rho(f)
\quad\text{for every }\rho\in[0,1],
\qquad
\widehat g(i)\ge|\widehat f(i)|
\quad\text{for every }i\in[n].
\]
In particular, $\delta(g)\le\delta(f)$.
\end{lemma}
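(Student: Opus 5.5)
We build $g$ by applying one-coordinate sorting steps until the function is increasing, and we check the three required properties after each step.

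\emph{The sorting step.} For a coordinate $i$ define $C_i f$ on each fiber $\{x^{(i\to-1)},x^{(i\to1)}\}$. It places $\min(f(x^{(i\to-1)}),f(x^{(i\to1)}))$ at $x_i=-1$ and the maximum at $x_i=1$.

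\emph{Mean and the sorted singleton coefficient.} Each fiber keeps its multiset of values, so $\E C_if=\E f$. Moreover
\[
\widehat{C_if}(i)=\tfrac12\E_{x_{-i}}|f(x^{(i\to1)})-f(x^{(i\to-1)})|\ge|\widehat f(i)|.
\]

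\emph{Other singleton coefficients.} For $j\ne i$ we write $\widehat h(j)=\E[x_j h]$. Fix $x_{-\{i,j\}}$; the four values on that 2-dimensional face are $a,b$ (at $x_j=-1$) and $c,d$ (at $x_j=1$). The $j$-coefficient contribution is proportional to $(c+d)-(a+b)$. The $i$-sorting only permutes values inside each pair $\{a,b\}$ and $\{c,d\}$, so these sums are unchanged. Hence $\widehat{C_if}(j)=\widehat f(j)$.

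\emph{Monotonicity in earlier coordinates.} Sorting in $i$ preserves monotonicity in $j$. If $a\le c$ and $b\le d$, then $\min(a,b)\le\min(c,d)$ and $\max(a,b)\le\max(c,d)$.

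Apply $C_1,\dots,C_n$ in turn. The result $g$ is increasing in every coordinate. Its singleton coefficients are $\widehat g(i)=\widehat{C_if}(i)\ge|\widehat f(i)|$, since the later sorts $C_{j}$, $j>i$, leave coordinate $i$'s coefficient unchanged. The earlier sorts also leave $|\widehat f(i)|$ unchanged. Because $\widehat g(i)\ge|\widehat f(i)|$ for every $i$, we get $\alpha(g)\ge\alpha(f)$, and \eqref{eq:dictator-distance} gives $\delta(g)\le\delta(f)$.

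\emph{Information.} By \eqref{eq:posterior-entropy}, and since the mean is preserved, it suffices to show
\[
\E\Phi(T_\rho C_if)\ge\E\Phi(T_\rho f)
\]
for the single step $C_i$. Decompose $f=f_0+x_ih$, where $f_0,h$ do not depend on $x_i$. Then
\[
T_\rho f=T_\rho f_0+\rho y_iT_\rho h.
\]
After sorting, the coefficient function becomes $h'=|h|$ pointwise, since $h=(f^{(i\to1)}-f^{(i\to-1)})/2$, while $f_0$ is unchanged. Pair the points $y$ with $y_i=\pm1$. Write $u=T_\rho f_0(y_{-i})$ and $v=\rho T_\rho h(y_{-i})$. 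The pair contributes $\tfrac12[\Phi(u+v)+\Phi(u-v)]$. Since $\Phi$ is even and convex, the map $s\mapsto\Phi(u+s)+\Phi(u-s)$ is even and convex in $s$, hence nondecreasing in $|s|$. So it suffices to show $|\rho T_\rho|h||\ge|\rho T_\rho h|$ pointwise, which holds because $T_\rho$ is an averaging operator with nonnegative weights.

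\emph{The main obstacle.} This is the information step, specifically reducing the comparison to the scalar function $s\mapsto\Phi(u+s)+\Phi(u-s)$. The subtle point is that sorting in coordinate $i$ changes $h$ to $|h|$ but does not change $f_0$. This is exactly what the identity $f_0=(f^{(i\to1)}+f^{(i\to-1)})/2$ guarantees, since sorting preserves each fiber sum.
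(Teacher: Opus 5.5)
Your proposal is correct and follows essentially the same route as the paper's proof. Both write $f=f_0+x_ih$ and note that sorting replaces $h$ by $|h|$. Both use that $s\mapsto\Phi(u+s)+\Phi(u-s)$ is even and convex together with $|T_\rho h|\le T_\rho|h|$, and both obtain monotonicity from pointwise min/max and check the singleton coefficients coordinate by coordinate; you simply spell out more detail, such as the two-dimensional face argument giving $\widehat{C_if}(j)=\widehat f(j)$.
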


\begin{proof}
Write $f(x_i,z)=u(z)+x_i v(z)$. Sorting the two values on each
$i$-fiber replaces $v$ by $|v|$ and preserves the mean.
For a fixed outside output $y$, put
$b=T_\rho u(y)$, $d=T_\rho v(y)$, and $e=T_\rho|v|(y)$,
where the noise operator acts on the remaining coordinates.
Positivity gives $|d|\le e$. By convexity of $\Phi$, the function
$\frac{\Phi(b+t)+\Phi(b-t)}2$
is even and nondecreasing for $t\ge0$ on its domain.
Its value at $\rho d$ is therefore at most its value at $\rho e$.
Averaging shows that sorting does not decrease $A_\rho$.

The sorted sections are the pointwise minimum and maximum of
the original sections, so sorting preserves monotonicity in
previously sorted coordinates. Sorting each coordinate once
therefore produces an increasing $g$, and the same construction
works for every $\rho\in[0,1]$.

Sorting any coordinate other than $i$ preserves the $i$th
singleton coefficient, while sorting $i$ replaces $\E v$
by $\E|v|\ge|\E v|$. This proves the coefficient comparison
and hence the assertion about $\delta$.
\end{proof}

We keep this compressed source fixed in each differential argument,
so its Fourier coefficients are constants as $\rho$ varies.
Monotonicity also makes the singleton coefficients especially useful,
because they equal coordinate influences, as we now explain.

Let $x^{\oplus i}$ denote $x$ with coordinate $i$ flipped. The
influence of $i$ on a Boolean function is
\[
\operatorname{Inf}_i(f)=\PP\big(f(X)\ne f(X^{\oplus i})\big).
\]
For a real-valued function $h$, its discrete derivative is the half
difference between the two values on an $i$-fiber:
\[
 \partial_i h(x_{-i})=
 \frac{h(x_{-i},1)-h(x_{-i},-1)}2.
\]
For Boolean $f$, this derivative lies in $\{-1,0,1\}$, and its
absolute value indicates whether flipping $i$ changes the output.
Therefore $\operatorname{Inf}_i(f)=\E|\partial_i f|
=\E(\partial_i f)^2$. For increasing $f$ the derivative is
nonnegative, giving $\operatorname{Inf}_i(f)=\E\partial_i f
=\widehat f(i)$. In particular some singleton coefficient is positive
for every nonconstant increasing source. This identification allows
the edge estimates below to retain information about individual
coordinates of the original function.

Two distinct coordinates also satisfy
\begin{equation}\label{eq:pair-bound}
 \widehat f(i)+\widehat f(j)=\E[f(X)(X_i+X_j)]
 \le\E|X_i+X_j|=1\qquad(i\ne j).
\end{equation}
For the ordered nonnegative coefficients of an increasing source,
this implies $\widehat f(2)\le1/2$. We will use such constraints
when deciding which coordinates to retain in an entropy estimate.

\subsection{The Laplacian and entropy production}

To study how entropy changes with the amount of noise, we use the
cube Laplacian. Its coordinate part is
$\mathcal L_i h(x)=[h(x)-h(x^{\oplus i})]/2
=x_i\partial_i h(x_{-i})$, and $\mathcal L=\sum_i\mathcal L_i$.
Thus $\mathcal L_i$ retains precisely the Fourier terms containing
$i$, while $\mathcal L\chi_S=|S|\chi_S$.
The associated Dirichlet energy has both an edge and a Fourier form:
\begin{equation}\label{eq:dirichlet-energy}
 \operatorname{Dir}(h)=\E[h\mathcal Lh]
 =\sum_i\E(\partial_i h)^2
 =\sum_S|S|\widehat h(S)^2.
\end{equation}
It measures variation across cube edges, assigning a Fourier mode
one unit of energy for each coordinate it uses. For Boolean $f$
it is the total influence $\sum_i\operatorname{Inf}_i(f)$.
We write $\operatorname{Dir}_i(h)=\E[h\mathcal L_i h]$ and
$\operatorname{Dir}_K(h)=\sum_{i\in K}\operatorname{Dir}_i(h)$.

Entropy has a corresponding nonlinear energy. For $|g|<1$, define
\begin{equation}\label{eq:entropy-production}
 \mathcal D_i(g)=\E[(\mathcal L_i g)\atanh g],\qquad
 D(g)=\sum_i\mathcal D_i(g).
\end{equation}
On a fiber with values $g_+,g_-$, its contribution is
$(g_+-g_-)(\atanh g_+-\atanh g_-)/4$, which is nonnegative
because $\atanh$ is increasing. These quantities are therefore
suited to measuring entropy production. For a nonconstant Boolean
source and $0<\rho<1$, the strictly positive noise kernel gives
$|T_\rho f|<1$, so all these expressions and derivatives are finite.

Fourier differentiation gives
$\rho\partial_\rho T_\rho f=\mathcal LT_\rho f$. Combining it
with $H'=-\atanh$ yields
\begin{equation}\label{eq:gap-derivative}
 -\rho E'_\rho(f)=D(T_\rho f),\qquad
 \rho G_f'(\rho)=D(T_\rho f)-\rho\atanh\rho.
\end{equation}
Equivalently, in noise time $\tau=-\log\rho$, the operator is
$T_{e^{-\tau}}=e^{-\tau\mathcal L}$ and
$dE_{e^{-\tau}}(f)/d\tau=D(T_{e^{-\tau}}f)$.
Increasing $\tau$ adds noise and creates conditional entropy.
The dictator has production $\rho\atanh\rho$, so the second
identity turns an entropy-production comparison into control of the
slope of the information gap. We often write $u=\atanh\rho$;
this parameter tends to infinity at the noiseless endpoint.

The differential argument uses the following elementary principle.
A positive gap cannot return to a nonpositive endpoint value if its
slope is nonnegative whenever the gap is positive.
\begin{lemma}\label{ub:barrier}
Let $G:[a,b]\to\R$ be continuous, differentiable on $(a,b)$, and
$G(b)\le0$. If $G'(x)\ge0$ whenever $G(x)>0$, then $G\le0$ on $[a,b]$.
\end{lemma}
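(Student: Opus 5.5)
We argue by contradiction, through the supremum of the set where $G$ is positive. Suppose some $x_0\in[a,b]$ has $G(x_0)>0$. Since $G(b)\le0$, we have $x_0<b$. Define
\[
 c=\inf\{x\in[x_0,b]: G(x)\le0\}.
\]
The set is nonempty because it contains $b$. By continuity of $G$ it is closed, so $G(c)\le0$. Also $c>x_0$, because $G(x_0)>0$ and continuity keep $G$ positive on a neighbourhood of $x_0$. By the definition of $c$, $G>0$ on the whole interval $[x_0,c)$.

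Next we use the hypothesis on this interval. For every $x\in(x_0,c)\subset(a,b)$ we have $G(x)>0$, and therefore $G'(x)\ge0$. The function $G$ is continuous on $[x_0,c]$ and differentiable on $(x_0,c)$, so the mean value theorem gives some $\xi\in(x_0,c)$ with
\[
 G(c)-G(x_0)=G'(\xi)(c-x_0)\ge0.
\]
Hence $G(c)\ge G(x_0)>0$, which contradicts $G(c)\le0$. So no such $x_0$ exists, and $G\le0$ on $[a,b]$.

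The only point that needs care is the endpoint case $x_0=a$. The mean value theorem needs differentiability only on the open interval $(x_0,c)$, and that interval lies inside $(a,b)$ because $c\le b$. Continuity is needed at the endpoints $x_0$ and $c$, and it is given on all of $[a,b]$. No other obstacle arises.
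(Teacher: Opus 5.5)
Your proof is correct and follows the same idea as the paper: on the stretch of positivity starting at $x_0$ the slope is nonnegative, yet $G$ must reach a value $\le0$ at the first exit point $c$, which is impossible. You make the paper's one-line ``positive component'' argument explicit through the first-exit point and the mean value theorem, and this also handles the case $x_0=a$ directly.
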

\begin{proof}
If a positive component exists, $G$ is nondecreasing on that
component, whereas its right endpoint has value zero (or at most
zero if the endpoint is $b$). This is impossible. Continuity covers
the left endpoint.
\end{proof}

At the endpoints, $G_f(0)=0$ and $G_f(1)=-\Phi(\mu)\le0$.
For interior correlations, a hypothetical positive gap gives
$E_\rho(f)<H(\rho)-\Phi(\mu)\le H(\rho)$.
We also use the scalar expansion
\begin{equation}\label{ub:entropy-series}
 \Phi(t)=\sum_{k\ge1}\frac{t^{2k}}{2k(2k-1)},\qquad
 \frac{t^2}{2}\le\Phi(t)\le(\log2)t^2.
\end{equation}
Indeed, differentiating the series gives $\atanh t$, and the value
at zero fixes the integration constant; continuity covers $|t|=1$.
It follows in particular that a positive gap requires
$|\mu|<\sqrt{2H(\rho)}$. Thus sources that could violate the bound
at very large correlation must have small mean. When applying
Lemma~\ref{ub:barrier}, we hold this source fixed on each positive
component. Auxiliary projections and scalar parameters may depend
on $\rho$, but their derivatives do not enter the argument.

\subsection{Finite reversible Markov semigroups}\label{sec:markov-semigroups}

The continuous noise parameter has a similar meaning on a general
finite state space. We explain this setting because the entropy
inequalities in Section~\ref{sec:spectral-completion} use only
reversibility and spectral decomposition. For background on transition
matrices, reversibility, and continuous-time chains, see
\cite[Chapters~1, 12, and 20]{LevinPeres2017}.

Let $\Omega$ be a finite set and let $\nu(x)>0$ for $x\in\Omega$,
with $\sum_x\nu(x)=1$. A transition matrix $p_s(x,y)$ records the
probability of moving from $x$ to $y$ in time $s\ge0$; its entries
are nonnegative and each row sums to one. It acts on a function by
averaging over the possible destinations:
\[
 P_s h(x)=\sum_{y\in\Omega}p_s(x,y)h(y)
         =\E[h(Z_s)\mid Z_0=x].
\]
Here $(Z_s)_{s\ge0}$ is the corresponding continuous-time Markov
chain. The family $(P_s)_{s\ge0}$ is a \emph{Markov semigroup}
when it is continuous in $s$, $P_0=I$, and $P_{s+t}=P_sP_t$.
The last identity says that running the chain for time $s$ and then
for time $t$ has the same effect as running it for time $s+t$.
In particular, these operators preserve constants and positivity.

The measure $\nu$ is \emph{stationary} if
$\sum_x\nu(x)p_s(x,y)=\nu(y)$: starting the chain with distribution
$\nu$ leaves its distribution unchanged. It is \emph{reversible}
with respect to $\nu$ if
$\nu(x)p_s(x,y)=\nu(y)p_s(y,x)$ for every $s,x,y$.
This detailed-balance identity implies stationarity and says that
$P_s$ is self-adjoint for $\langle h,k\rangle_\nu=\E_\nu[hk]$,
where $\E_\nu h=\sum_x\nu(x)h(x)$.

In finite dimension we can write $P_s=e^{-s\mathcal L}$.
Our sign convention makes $\mathcal L$ nonnegative; the usual Markov
generator is $-\mathcal L$. If $q(x,y)\ge0$ is the rate of jumping
from $x$ to $y\ne x$, then
\[
 \mathcal Lh(x)=\sum_{y\ne x}q(x,y)[h(x)-h(y)],
 \qquad \nu(x)q(x,y)=\nu(y)q(y,x).
\]
Pairing the terms for $(x,y)$ and $(y,x)$ gives
\[
 \langle h,\mathcal L k\rangle_\nu
 =\frac12\sum_{x\ne y}\nu(x)q(x,y)
       [h(x)-h(y)][k(x)-k(y)].
\]
Taking $k=h$ shows that
$\operatorname{Dir}(h)=\langle h,\mathcal Lh\rangle_\nu\ge0$.
Reversibility therefore gives an orthonormal eigenbasis with
eigenvalues $\lambda\ge0$. If $h_\lambda$ denotes the projection
of $h$ onto an eigenspace, then
$P_sh=\sum_\lambda e^{-s\lambda}h_\lambda$ and
$\operatorname{Dir}(h)=\sum_\lambda\lambda\|h_\lambda\|_{2,\nu}^2$.
These are the general counterparts of Fourier attenuation and
Fourier energy on the cube.

For the cube, take uniform $\nu$ and let each coordinate flip
independently at rate $1/2$. This gives the Laplacian defined above
and $P_s=T_{e^{-s}}$, so $s=\log(1/\rho)$.
The Walsh characters are its eigenfunctions, with eigenvalues $|S|$.
In the general statements, $\operatorname{Dir}$ and
$D(h)=\E_\nu[(\mathcal Lh)\atanh h]$ use the same definitions
with the specified stationary measure. No irreducibility assumption
is needed for those statements.

\section{Local entropy and Fourier bounds at low correlation}\label{ub:lower}

We prove CK through correlation $0.914$ while developing the local
estimates used in the rest of the paper. The balanced conclusion on
this range is due to Yu \cite{Yu2026Local}; the purpose here is to
keep the mean and the quantitative slack visible in a form compatible
with the later argument. We first retain one or two coordinates,
apply noise on them exactly, and bound the remaining noise by entropy
contraction. For sources without a dominant coordinate, an odd lift
allows Yu's first-level bound and a Khintchine-type estimate to control
the mean together with the level-one weight. A quadratic entropy bound
then completes the interval, with margins that also imply stability.

\subsection{Entropy contraction}

Entropy contraction quantifies how averaging reduces a function's
departure from its mean. For a nonnegative function $v$, write
$\Ent(v)=\E[v\log v]-\E v\log\E v$; if $\E v=1$, this is the
relative entropy of the probability density $v$ with respect to the
uniform measure. Applying it to $1+g$ and $1-g$ converts this functional
inequality into a statement about the conditional entropy of a bit.
We first collect the cube inequalities (see \cite[Sections~2.3, 9.3, and Exercises~10.23--10.24]{ODonnell2014}) needed for that conversion and
for the later influence estimate, using $\|h\|_p=(\E|h|^p)^{1/p}$.

\begin{theorem}[Cube functional inequalities
{\cite{Gross1975,ODonnell2014}}]\label{ext:cube-functional}
For every $h:\{-1,1\}^n\to\R$, with the uniform measure,
$\Var(h)\le\operatorname{Dir}(h)$ and
$\Ent(h^2)\le2\operatorname{Dir}(h)$.
For $0\le\eta\le1$, hypercontractivity gives
$\|T_\eta h\|_2\le\|h\|_{1+\eta^2}$.
\end{theorem}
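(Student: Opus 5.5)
Since Theorem~\ref{ext:cube-functional} collects three standard facts, the plan is to reduce each to the one-dimensional case and then tensorize. The reduction uses the product structure of the uniform measure and of the noise operator. The three claims are the Poincaré inequality, the log-Sobolev inequality and hypercontractivity, and we treat them in that order.

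\emph{Poincaré.} This follows at once from the Fourier forms. By Parseval,
\[
\Var(h)=\sum_{S\ne\varnothing}\widehat h(S)^2\le\sum_S|S|\widehat h(S)^2=\operatorname{Dir}(h),
\]
where the last identity is \eqref{eq:dirichlet-energy}.

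\emph{Log-Sobolev.}
\begin{itemize}
\item First prove the two-point inequality $\Ent(h^2)\le2\operatorname{Dir}(h)$ on $\{-1,1\}$ with uniform measure. Here $\operatorname{Dir}(h)=((a-b)/2)^2$ for $h(\pm1)=a,b$. By homogeneity and replacing $h$ by $|h|$, which does not increase the Dirichlet energy, we may take $h=1+sx$ with $0\le s\le1$.
\item The scalar inequality is then
\[
\tfrac12\big[(1+s)^2\log(1+s)^2+(1-s)^2\log(1-s)^2\big]-(1+s^2)\log(1+s^2)\le2s^2.
\]
This is Gross's two-point inequality. To verify it, check equality at $s=0$ and compare derivatives, or expand in powers of $s^2$ and compare coefficients.
\item Tensorize using the subadditivity of entropy on product spaces:
\[
\Ent(F)\le\sum_i\E\,\Ent_i(F),
\]
where $\Ent_i$ is the entropy in coordinate $i$ with the others fixed. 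Subadditivity itself follows from the variational formula $\Ent(F)=\sup\{\E[FG]:\E e^G\le1\}$ and a telescoping decomposition of $G$ by conditional expectations.
\item Apply the one-dimensional bound to each fiber, giving $\Ent_i(h^2)\le2(\partial_ih)^2$. Summing over $i$ yields $2\operatorname{Dir}(h)$.
\end{itemize}

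\emph{Hypercontractivity.}
\begin{itemize}
\item Invoke Gross's theorem that a log-Sobolev inequality with constant $2$ is equivalent to $\|T_\eta h\|_q\le\|h\|_p$ whenever $\eta\le\sqrt{(p-1)/(q-1)}$.
\item With $q=2$ and $p=1+\eta^2$, this is exactly the claimed statement.
\item The derivation: set $p(t)=1+(q-1)e^{-2t}$ with the time chosen so that the semigroup is $T_{e^{-t}}$. Differentiate $\log\|T_{e^{-t}}h\|_{p(t)}$ in $t$. The derivative is a combination of $\Ent(|T h|^{p})$ and $\E[(T h)^{p-1}\mathcal L Th]$.
\item Bound the cross term using
\[
\E[g^{p-1}\mathcal Lg]\ge\frac{4(p-1)}{p^2}\operatorname{Dir}(g^{p/2})
\]
for $g\ge0$. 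This is an elementary two-point inequality on each edge.
\item Together with the log-Sobolev inequality applied to $g^{p/2}$, this makes the norm nonincreasing. Reducing to $h\ge0$ is legitimate because $|T_\eta h|\le T_\eta|h|$.
\end{itemize}

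The main obstacle is the hypercontractivity step. Its derivative computation and the edge inequality for $g^{p-1}$ need care; the rest is routine. As the theorem is cited, it is equally acceptable simply to quote \cite{Gross1975} and \cite[Ch.~9--10]{ODonnell2014}, and that is what I would do in the paper.
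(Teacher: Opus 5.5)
The paper gives no proof of this statement; it simply cites Gross and O'Donnell, which is what you say you would do. Your Poincaré and log-Sobolev sketches (the Fourier comparison, and the two-point inequality plus subadditivity of entropy) are the standard ones and are correct.

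The hypercontractivity derivation, however, runs in the wrong direction as written. You take $p(t)=1+(q-1)e^{-2t}$, which decreases. Then $\|T_{e^{-t}}h\|_{p(t)}$ is nonincreasing for trivial reasons: the entropy term in its derivative has coefficient $p'/p^2<0$, so no log-Sobolev input is used. The resulting conclusion is only the reverse bound $\|T_\eta h\|_{1+\eta^2}\le\|h\|_2$. Your formula is the correct relation for the \emph{input} exponent in $\|T_{e^{-t}}h\|_q\le\|h\|_{p(t)}$. Gross's argument, however, must fix the input exponent $p=1+\eta^2$ and let the output exponent increase. With $g_t=T_{e^{-t}}h$ for $h>0$ and $r(t)=1+(p-1)e^{2t}$,
\[
\frac{d}{dt}\log\|g_t\|_{r(t)}=\frac{1}{\E g_t^{r}}\Bigl[\frac{r'}{r^2}\Ent(g_t^{r})-\E\bigl[g_t^{r-1}\mathcal Lg_t\bigr]\Bigr]\le\frac{2r'-4(r-1)}{r^2\,\E g_t^{r}}\operatorname{Dir}(g_t^{r/2})=0 .
\]
This uses $r'\ge0$, $\Ent(g^r)\le2\operatorname{Dir}(g^{r/2})$, and your edge inequality. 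Here $r=2$ exactly when $e^{-t}=\eta$. The case $\eta=0$ is just $|\E h|\le\|h\|_1$.

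A smaller inaccuracy concerns verifying your two-point inequality. In $x=s^2$ its left side is $2x-\frac23x^2+\frac2{15}x^3-\frac2{21}x^4+\cdots$, so the coefficients after $2x$ alternate in sign and termwise comparison fails. You must pair the terms $x^{2k}$ and $x^{2k+1}$, using $x\le1$. The derivative route needs a similar pairing, since it reduces to $\atanh s-s\le s\atanh(s^2)$.
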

The following standard consequence
of the cube log-Sobolev inequality is stated in our normalization; see
\cite{Gross1975} and \cite[Exercise~10.23]{ODonnell2014}.

\begin{lemma}[Entropy contraction]\label{jp:contraction}
For $f:\{-1,1\}^n\to[0,\infty)$, write
$\Ent(f)=\E[f\log f]-\E f\log\E f$.
For $0\le\eta\le1$, $\Ent(T_\eta f)\le\eta^2\Ent(f)$.
Consequently, for every $g:\{-1,1\}^n\to[-1,1]$,
\[
 \E H(T_\eta g)\ge \eta^2\E H(g)+(1-\eta^2)H(\E g).
\]
\end{lemma}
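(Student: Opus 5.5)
The first assertion, $\Ent(T_\eta f)\le\eta^2\Ent(f)$, is the standard entropy-decay consequence of the log-Sobolev inequality in Theorem~\ref{ext:cube-functional}. I would prove it along the semigroup $P_s=T_{e^{-s}}=e^{-s\mathcal L}$. By homogeneity, and after replacing $f$ by $f+\varepsilon$ and letting $\varepsilon\to0$, we may assume $f>0$ and $\E f=1$. Put $f_s=P_sf$. Since $\E f_s=1$ and $\mathcal L$ is self-adjoint, we have
\[
\frac{d}{ds}\Ent(f_s)=-\E[(\mathcal Lf_s)\log f_s]=-\E[(\mathcal L f_s)\log f_s].
\]
The right side is minus the entropy production of $f_s$. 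Applied edge by edge, the elementary inequality $(a-b)(\log a-\log b)\ge 4(\sqrt a-\sqrt b)^2$ gives
\[
\E[(\mathcal L v)\log v]\ge 4\operatorname{Dir}(\sqrt v).
\]
Applying the log-Sobolev inequality to $h=\sqrt{f_s}$ then yields
\[
\frac{d}{ds}\Ent(f_s)\le -2\Ent(f_s),
\]
so $\Ent(f_s)\le e^{-2s}\Ent(f)$. Setting $\eta=e^{-s}$ gives the first claim, and the endpoints $\eta\in\{0,1\}$ follow by continuity. The constant needs a check: the edge sum in $\E[(\mathcal L_i v)\log v]$ carries the factor $1/4$ noted after \eqref{eq:entropy-production}, and $(\partial_i\sqrt v)^2=(\sqrt{v_+}-\sqrt{v_-})^2/4$. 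With these factors the inequality reads $(a-b)(\log a-\log b)/4\ge 4(\sqrt a-\sqrt b)^2/4$, which is exactly the elementary bound. This normalization bookkeeping is the only delicate point.

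For the consequence, let $g$ take values in $[-1,1]$ and set $v_\pm=1\pm g\ge0$, so that $T_\eta v_\pm=1\pm T_\eta g$. The key identity, for a $[-1,1]$-valued $g$ with mean $m$, is
\[
\tfrac12\bigl[\Ent(1+g)+\Ent(1-g)\bigr]=H(m)-\E H(g).
\]
To verify it, use $H(t)=\log 2-\tfrac12[(1+t)\log(1+t)+(1-t)\log(1-t)]$. Then $\tfrac12\E[(1\pm g)\log(1\pm g)]$ summed over both signs equals $\log2-\E H(g)$, while the mean terms $\tfrac12(1\pm m)\log(1\pm m)$ summed give $\log 2-H(m)$. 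Subtracting the two yields the identity.

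Now apply the contraction to both $v_+$ and $v_-$ and average. Since $T_\eta$ preserves the mean $m$, the left side of the averaged inequality is $H(m)-\E H(T_\eta g)$. The right side is $\eta^2\bigl(H(m)-\E H(g)\bigr)$. Rearranging gives
\[
\E H(T_\eta g)\ge \eta^2\E H(g)+(1-\eta^2)H(\E g),
\]
which is the claim. Boundary values $g=\pm1$ cause no trouble because we use the convention $0\log 0=0$ and the functional $\Ent$ is continuous on nonnegative functions.
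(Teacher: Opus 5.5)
Your proof is correct, and its second half is exactly the paper's: you use the identity $\tfrac12[\Ent(1+g)+\Ent(1-g)]=H(\E g)-\E H(g)$, apply the contraction to $1\pm g$, and use that $T_\eta$ preserves the mean.

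For the contraction itself you reach the same differential inequality $-\frac{d}{ds}\Ent(f_s)\ge 2\Ent(f_s)$ by a different route. You go through Gross's inequality $\Ent(h^2)\le 2\operatorname{Dir}(h)$ from Theorem~\ref{ext:cube-functional} with $h=\sqrt{f_s}$. You convert entropy production into Dirichlet energy of $\sqrt{f_s}$ by the edgewise bound $(a-b)(\log a-\log b)\ge 4(\sqrt a-\sqrt b)^2$, and your normalization check of the factors $1/4$ is right.

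The paper instead proves this modified log-Sobolev inequality directly. It first shows that entropy tensorizes, $\Ent(f)\le\sum_i\E\Ent_i(f)$, via the chain rule and convexity of $\Ent$. On a two-point fiber $f_\pm=m(1\pm t)$ the entropy is $m\Phi(t)$ and the production is $m\,t\atanh t$. The needed inequality $t\atanh t\ge 2\Phi(t)$ then follows by comparing the two power series termwise.

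Your route is shorter once Gross's inequality is taken as a black box, which the paper does cite. The paper's route is self-contained and avoids the square-root substitution. It is also phrased in the same scalar quantities $\Phi$ and $t\atanh t$ that drive the later entropy-production arguments, such as the edge profiles $B_e$.

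One harmless typo: your displayed derivative formula repeats the same expression on both sides of the second equality sign.
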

\begin{proof}
Entropy tensorizes: $\Ent(f)\le\sum_i\E\Ent_i(f)$,
where $\Ent_i$ denotes entropy on the $i$th coordinate with the other
coordinates fixed. This follows inductively from the entropy chain
rule and convexity of $f\mapsto\Ent(f)$. For $f>0$, convexity
follows from the second variation
$\E[z^2/f]-(\E z)^2/\E f\ge0$, by Cauchy--Schwarz.

On a two-point fiber write $f_\pm=m(1\pm t)$. Its entropy is
$m\Phi(t)$ and its entropy Dirichlet form is $mt\atanh t$.
The power series of $\Phi$ and $\atanh$ give
$t\atanh t\ge2\Phi(t)$ for $|t|<1$. Now set
$f_\tau=T_{e^{-\tau}}f$. Integration by parts on the cube yields
\[
 -\frac{d}{d\tau}\Ent(f_\tau)
 =\sum_i\E[\mathcal L_i f_\tau\,\mathcal L_i\log f_\tau]
 \ge2\sum_i\E\Ent_i(f_\tau)
 \ge2\Ent(f_\tau),
\]
Integration proves the contraction inequality. Nonnegative $f$ is treated by applying
the result to $f+\epsilon$ and taking $\epsilon\downarrow0$;
the cases $\eta=0,1$ follow by continuity.

Finally, apply the contraction inequality to $1+g$ and $1-g$ and use
$\Ent(1+g)+\Ent(1-g)
 =2\{H(\E g)-\E H(g)\}$.
This gives the posterior-entropy bound.
\end{proof}

Let $T_\eta^K$ denote noise applied only to coordinates in a fixed core
$K$, and define $f_K:\{-1,1\}^K\to[-1,1]$ by
$f_K=\E[f\mid X_K]$. First apply noise on $K$; for
each core output, apply Lemma~\ref{jp:contraction} to the remaining
coordinates. Averaging over core outputs gives
\begin{equation}\label{jp:core-contraction}
 E_\eta(f)\ge \eta^2\E H(T_\eta^K f)
 +(1-\eta^2)\E_{X_K}H(T_\eta^K f_K).
\end{equation}
This also holds when the outside cube is a singleton.

\begin{corollary}
\label{ub:information-contraction}
For every $f:\{-1,1\}^n\to\{-1,1\}$ and
$0\le\eta,\rho_0\le1$,
$A_{\eta\rho_0}(f)\le\eta^2 A_{\rho_0}(f)$. In particular,
$A_\rho(f)\le\rho^2 H(\mu)$.
If $|\mu|\ge61/100$, the CK inequality holds at every correlation.
\end{corollary}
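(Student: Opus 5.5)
The plan is to apply Lemma~\ref{jp:contraction} to the posterior mean at correlation $\rho_0$ and then use the semigroup identity $T_\eta T_{\rho_0}=T_{\eta\rho_0}$. Put $g=T_{\rho_0}f$, which takes values in $[-1,1]$ and has mean $\E g=\mu$, since noise preserves the mean. Then $T_\eta g=T_{\eta\rho_0}f$, and the posterior-entropy form of Lemma~\ref{jp:contraction} gives
\[
 E_{\eta\rho_0}(f)=\E H(T_\eta g)\ge\eta^2\E H(g)+(1-\eta^2)H(\mu)
 =\eta^2E_{\rho_0}(f)+(1-\eta^2)H(\mu).
\]
Subtracting from $H(\mu)$ and using \eqref{eq:posterior-entropy}, we get
\[
 A_{\eta\rho_0}(f)=H(\mu)-E_{\eta\rho_0}(f)\le\eta^2\bigl(H(\mu)-E_{\rho_0}(f)\bigr)=\eta^2A_{\rho_0}(f).
\]
This proves the first assertion. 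The endpoint cases $\eta\in\{0,1\}$ and $\rho_0\in\{0,1\}$ are already covered by the lemma, which is stated for $0\le\eta\le1$.

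For the second assertion, take $\rho_0=1$ and $\eta=\rho$. Then $T_1f=f$ is $\{-1,1\}$-valued, and $H(\pm1)=0$, so $E_1(f)=0$ and $A_1(f)=H(\mu)$. The first assertion then yields $A_\rho(f)\le\rho^2H(\mu)$.

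For the last claim, combine this with the lower bound $\Phi(\rho)\ge\rho^2/2$ from \eqref{ub:entropy-series}. It then suffices to show $H(\mu)\le1/2$ whenever $|\mu|\ge61/100$. Since $H$ is even and decreasing on $[0,1]$, the only check is the single value
\[
 H(0.61)=H_{\rm b}(0.195)=-0.195\log0.195-0.805\log0.805\approx0.3188+0.1746\approx0.4934<1/2.
\]
This yields $A_\rho(f)\le\rho^2H(\mu)\le\rho^2/2\le\Phi(\rho)$ for every $\rho\in[0,1]$.

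The only step needing care is this numerical verification. The margin is about $0.007$, so a rigorous version should bound each logarithm to three or four digits, for example $\log 0.195>-1.635$ and $\log0.805>-0.2170$. Everything else follows directly from Lemma~\ref{jp:contraction} and the semigroup property.
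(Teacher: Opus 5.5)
Your proposal is correct and follows the paper's argument. The paper applies the contraction inequality to $1\pm T_{\rho_0}f$, which is exactly the posterior-entropy form you invoke, uses $T_\eta T_{\rho_0}=T_{\eta\rho_0}$, specializes to $\rho_0=1$, and finishes with $\Phi(\rho)\ge\rho^2/2$; the only difference is the numerical check, where the paper avoids estimating $\log 0.195$ and $\log 0.805$ by using the series bound $H(t)\le\log2-t^2/2-t^4/12$ at $t=61/100$ with $\log2<347/500$.
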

\begin{proof}
For $g:\{-1,1\}^n\to[-1,1]$ with $\E g=\mu$,
$\tfrac12[\Ent(1+g)+\Ent(1-g)]
 =\E\Phi(g)-\Phi(\mu)$.
Apply Lemma~\ref{jp:contraction} to $1\pm T_{\rho_0}f$ and
use the semigroup property. Taking $\rho_0=1$ gives the second
inequality. The entropy series and $\log2<347/500$ give
$H(61/100)<347/500-(61/100)^2/2-(61/100)^4/12<1/2$;
now use $\Phi(\rho)\ge\rho^2/2$.
\end{proof}

Consequently, a bound $A_{\rho_0}(f)\le\rho_0^2/2$ at one positive
correlation proves CK for the same source at every smaller correlation:
contraction gives $A_\rho(f)\le\rho^2/2\le\Phi(\rho)$. This is how
the estimate at $\rho=3/5$ will cover the interval down to zero.

\subsection{An exact one-coordinate inequality at arbitrary mean}

The balanced specialization of the next estimate is Yu's
one-coordinate bound \cite[Theorem~4.6 and Remark~4.7]{Yu2026Local}.
His Section~6 gives a broader framework for arbitrary means. We give
the direct entropy-contraction proof of the form needed here, including
the convex mean correction that permits comparison with a dictator.
For $0\le a\le1$, define
\begin{equation}\label{eq:local-deficit}
 M_\rho(a)=(1-\rho^2)H(\rho a)-(1-\rho^2a)H(\rho).
\end{equation}
This is the lower bound for the information deficit
obtained by retaining a coefficient of magnitude $a$. Its value is
zero at $a=1$, and concavity will allow one verified value to control
an entire neighborhood of the dictators.
\begin{theorem}[Local entropy comparison]\label{ub:local}
Let $f:\{-1,1\}^n\to\{-1,1\}$, $i\in[n]$, and $0\le\rho\le1$.
Then
\[
 E_\rho(f)\ge\frac{1-\rho^2}{2}
 \bigl[H(\mu+\rho\widehat f(i))+H(\mu-\rho\widehat f(i))\bigr]
 +\rho^2|\widehat f(i)|H(\rho).
\]
Moreover, $\Phi(\rho)-A_\rho(f)\ge M_\rho(|\widehat f(i)|)$.
For fixed $0<\rho<1$, $M_\rho$ is strictly concave on $[0,1]$
and $M_\rho(1)=0$. Consequently, if $M_\rho(A)\ge0$ for some $A\in[0,1]$, then
$A_\rho(f)\le\Phi(\rho)$ for every Boolean function $f$ with
$\max_{i\in[n]}|\widehat f(i)|\ge A$.
\end{theorem}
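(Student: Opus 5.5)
The plan is to apply the core contraction inequality \eqref{jp:core-contraction} with $\eta=\rho$ and the single-coordinate core $K=\{i\}$, evaluate both terms exactly, and then compare the result with the dictator by a convexity argument in the mean $\mu$. Throughout write $a=\widehat f(i)$ and $t=\rho a$.

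\emph{Step 1 (the entropy lower bound).}
\begin{itemize}[leftmargin=*]
\item The averaged function is affine in the core coordinate: $f_K(x_i)=\E[f\mid X_i=x_i]=\mu+a x_i$. Hence $T^K_\rho f_K=\mu\pm\rho a$, each with probability $1/2$. This gives the term $\tfrac{1-\rho^2}{2}[H(\mu+\rho a)+H(\mu-\rho a)]$.
\item For the other term, $T^K_\rho f$ applies noise only to coordinate $i$ of the Boolean $f$. Fix the outside coordinates. If the $i$-fiber is constant, $T^K_\rho f=\pm1$ and the entropy is $0$. Otherwise the two values are $\pm\rho$ and the entropy is $H(\rho)$.
\item Therefore $\E H(T^K_\rho f)=\operatorname{Inf}_i(f)\,H(\rho)$. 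Since $\operatorname{Inf}_i(f)=\E|\partial_i f|\ge|\E\partial_i f|=|a|$, the displayed bound follows.
\end{itemize}

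\emph{Step 2 (the deficit bound).} By \eqref{ub:gap-definitions}, $\Phi(\rho)-A_\rho(f)=\log2-H(\rho)-H(\mu)+E_\rho(f)$. Insert Step 1; by symmetry we may take $a\ge0$. After cancelling the $H(\rho)$ terms against $M_\rho(a)$, the claim reduces to
\[
 \Phi(\mu)\ \ge\ (1-\rho^2)\Bigl[H(t)-\tfrac12\bigl(H(\mu+t)+H(\mu-t)\bigr)\Bigr].
\]
Call the difference of the two sides $F(\mu)$. It is even in $\mu$ and vanishes at $\mu=0$.
\begin{itemize}[leftmargin=*]
\item It suffices to show that $F$ is convex on the admissible range. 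Then $F'(0)=0$ makes $\mu=0$ the minimum, so $F\ge0$.
\item Using $H''(s)=-1/(1-s^2)$, convexity amounts to
\[
\frac{1}{1-\mu^2}\ \ge\ \frac{1-\rho^2}{2}\Bigl[\frac1{1-(\mu+t)^2}+\frac1{1-(\mu-t)^2}\Bigr].
\]
\item The admissible range comes from Booleanity: $\PP(f=1,X_i=-1)\ge0$ and the analogous constraints give $|\mu|+a\le1$.
\end{itemize}

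This second-derivative comparison is the step I expect to be the main obstacle. The factor $1-\rho^2$ must absorb the growth of $1/(1-(|\mu|+\rho a)^2)$ near the boundary. My first attempt is to use $1-(|\mu|+\rho a)^2\ge 1-(|\mu|+\rho(1-|\mu|))^2=(1-\rho)(1-|\mu|)\bigl(1+|\mu|+\rho(1-|\mu|)\bigr)$. Together with $1-\rho^2=(1-\rho)(1+\rho)$, this should leave an elementary inequality in $(|\mu|,\rho)$. If convexity fails near the boundary, the fallback is to prove $F\ge0$ directly from the series \eqref{ub:entropy-series}, comparing coefficients of $\mu^{2j}t^{2k}$ under the constraint $|\mu|+a\le1$.

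\emph{Step 3 (the properties of $M_\rho$ and the conclusion).}
\begin{itemize}[leftmargin=*]
\item Substituting $a=1$ gives $M_\rho(1)=(1-\rho^2)H(\rho)-(1-\rho^2)H(\rho)=0$.
\item In $a$, the second term of $M_\rho$ is affine, and $\frac{d^2}{da^2}(1-\rho^2)H(\rho a)=-(1-\rho^2)\rho^2/(1-\rho^2a^2)<0$ for $0<\rho<1$. So $M_\rho$ is strictly concave on $[0,1]$.
\item Suppose $M_\rho(A)\ge0$. Concavity together with $M_\rho(1)=0$ gives $M_\rho\ge0$ on $[A,1]$.
\item For $f$ with $\max_i|\widehat f(i)|\ge A$, choose $i$ attaining the maximum. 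The deficit bound then gives $\Phi(\rho)-A_\rho(f)\ge M_\rho(|\widehat f(i)|)\ge0$.
\end{itemize}
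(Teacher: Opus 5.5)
Your Steps~1 and~3 are correct and coincide with the paper's proof. Step~2 also reduces to the right statement: your $F$ is the paper's $B_{\rho,a}$, and convexity in $\mu$ together with evenness is how the paper concludes.

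The gap is that the second-derivative comparison is never proved. That comparison is the entire content of the mean compensation, and the route you sketch does not close it as stated. Your lower bound for $1-(|\mu|+\rho a)^2$ controls only the larger of the two terms. Using it for both would require $(1+\rho)(1+|\mu|)\le 1+|\mu|+\rho(1-|\mu|)$, which fails for every $\mu\ne0$. The series fallback is not a plan as stated either. The inequality is false without the coupling $\rho a\le\rho(1-|\mu|)$: for instance $\mu=\rho a=1/2$ with $\rho=1/10$ gives a negative difference. So any argument must use that constraint essentially, and a coefficientwise comparison does not.

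The missing idea is to pair the partial fractions by the sign of $\mu$ rather than by the sign of $t$:
\[
 \frac{1}{1-(\mu+t)^2}+\frac{1}{1-(\mu-t)^2}
 =\sum_{\sigma\in\{-1,1\}}\frac{1-\sigma\mu}{(1-\sigma\mu)^2-t^2}.
\]
Feasibility gives $a\le1-\sigma\mu$ for both signs. Hence $t^2\le\rho^2(1-\sigma\mu)^2$, and each denominator is at least $(1-\rho^2)(1-\sigma\mu)^2$. The sum is therefore at most
\[
 \frac{1}{1-\rho^2}\Bigl[\frac{1}{1-\mu}+\frac{1}{1+\mu}\Bigr]=\frac{2}{(1-\rho^2)(1-\mu^2)},
\]
which is exactly your convexity condition. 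This is the paper's argument.

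You can also finish in the direction you started. The sum $\frac{1}{1-(\mu+t)^2}+\frac{1}{1-(\mu-t)^2}$ is even and convex in $t$, because $s\mapsto1/(1-s^2)$ is convex. It is therefore increasing for $t\ge0$, so you may set $t=\rho(1-|\mu|)$. Your factorization, together with
\[
 1-(|\mu|-t)^2=(1-|\mu|)(1+\rho)\bigl(1+|\mu|-\rho(1-|\mu|)\bigr),
\]
then reduces the condition, after clearing denominators, to $1-|\mu|\le1+|\mu|$.
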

\begin{proof}
Orient coordinate $i$ so that $a=\widehat f(i)\ge0$.
The restriction to coordinate $i$ is constant or a signed dictator.
Its averaged entropy after noise is $\operatorname{Inf}_i(f)H(\rho)$,
and $\operatorname{Inf}_i(f)\ge a$. The conditional mean is
$\mu+aX_i$. Formula~\eqref{jp:core-contraction} therefore gives the first inequality. Feasibility of the two conditional means
implies $|\mu|+a\le1$.

Define on this feasible interval
\[
 B_{\rho,a}(\mu)=\Phi(\mu)+(1-{\rho^2})
 \left\{\frac{H(\mu+\rho a)+H(\mu-\rho a)}2-H(\rho a)\right\}.
\]
This is even and $B_{\rho,a}(0)=B'_{\rho,a}(0)=0$.
The feasibility condition gives $a\le1-\mu$ and $a\le1+\mu$.
These two inequalities control the two reciprocal terms in the entropy
curvature. Indeed, since
$-H''(t)=\tfrac12[(1-t)^{-1}+(1+t)^{-1}]$, we have
\[
 -\frac{H''(\mu+\rho a)+H''(\mu-\rho a)}2
 =\frac12\sum_{\sigma\in\{-1,1\}}
   \frac{1-\sigma\mu}{(1-\sigma\mu)^2-\rho^2a^2}
 \le\frac{1}{(1-\rho^2)(1-\mu^2)}.
\]
Each denominator is bounded below by
$(1-\rho^2)(1-\sigma\mu)^2$. It follows that $B''(\mu)\ge0$,
so evenness and $B(0)=0$ give $B\ge0$. Adding
$\Phi(\mu)-H(\rho)$ to the first inequality now proves the second.
The calculation is for interior parameters; continuity covers the boundary.
Finally $M''_\rho(a)=-(1-{\rho^2}){\rho^2}/(1-{\rho^2}a^2)<0$, proving the last claim.
\end{proof}

\begin{lemma}\label{ub:local-propagation}
For fixed $a\in[0,1]$, if $M_{\rho_0}(a)\ge0$ at some
$0<\rho_0<1$, then
\[
 \frac{M_\rho(a)}{\rho^2}\ge
 \frac{M_{\rho_0}(a)}{\rho_0^2}\ge0
 \qquad(0<\rho\le\rho_0).
\]
In particular, $M_\rho(a)\ge0$ for $0\le\rho\le\rho_0$.
\end{lemma}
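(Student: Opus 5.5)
We will change variables to $s=\rho^2\in(0,1]$ and expand $M_\rho(a)/\rho^2$ as a power series in $s$. The aim is to show that this series is \emph{quasi-convex} in $s$: decreasing and then increasing. The endpoint value at $s=1$ then forces monotonicity on $[0,s_0]$, where $s_0=\rho_0^2$. The case $a=1$ is trivial because $M_\rho(1)\equiv0$, so assume $0\le a<1$.

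\textbf{Step 1: series expansion.} By \eqref{ub:entropy-series}, $H(t)=\log2-\sum_{k\ge1}c_kt^{2k}$ with $c_k=1/(2k(2k-1))$. Substituting into \eqref{eq:local-deficit} and dividing by $s$ gives
\[
 \varphi(s):=\frac{M_\rho(a)}{\rho^2}
 =-(1-a)\log2+\sum_{k\ge1}c_k s^{k-1}\bigl[(1-a^{2k})-s(a-a^{2k})\bigr].
\]
Collecting powers of $s$, the coefficient of $s^m$ for $m\ge1$ is
\[
 b_m=c_{m+1}(1-a^{2m+2})-c_m(a-a^{2m}).
\]
The series converges on $[0,1]$ and $\varphi$ is continuous there. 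Moreover $\varphi(1)=M_1(a)=0$.

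\textbf{Step 2: sign pattern of the coefficients.} We will show that $(b_m)_{m\ge1}$ changes sign at most once, from nonpositive to positive. Write
\[
 b_m=c_{m+1}\bigl[(1-a)+a^{2m}(1-a^2)\bigr]-(c_m-c_{m+1})\,a(1-a^{2m-1}).
\]
The natural route is to examine the ratio
\[
 \frac{b_m}{c_{m+1}}=(1-a)+a^{2m}(1-a^2)-\Bigl(\frac{c_m}{c_{m+1}}-1\Bigr)a(1-a^{2m-1}),
\]
where $c_m/c_{m+1}-1=(4m+1)/(m(2m-1))$ is decreasing in $m$. The task is to show that once this ratio is positive it stays positive. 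This reduces to checking that the function $m\mapsto(1-a)+a^{2m}(1-a^2)-\frac{4m+1}{m(2m-1)}a(1-a^{2m-1})$ has at most one sign change. The terms $a^{2m}(1-a^2)$ and $a^{2m-1}$ need care here. One could try bounding $a^{2m}$-terms against the decreasing rational factor, or exhibit the ratio as a difference of a decreasing and an increasing sequence. \emph{This verification is the main obstacle}, and we would try the latter decomposition first.

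\textbf{Step 3: conclusion.} Given the sign pattern, $\varphi'(s)=\sum_m mb_ms^{m-1}$ is a power series whose coefficients change sign at most once, from negative to positive. Dividing by $s^{m_0}$ at the switching index shows that $\varphi'$ has at most one zero on $(0,1)$ and changes sign from $-$ to $+$. So $\varphi$ is nonincreasing and then nondecreasing.

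Suppose $\varphi$ were strictly increasing somewhere in $(0,s_0)$. Then it would be strictly increasing on $[s_0,1]$, giving $0=\varphi(1)>\varphi(s_0)\ge0$, which is a contradiction. Hence $\varphi$ is nonincreasing on $(0,s_0]$. This gives $M_\rho(a)/\rho^2\ge M_{\rho_0}(a)/\rho_0^2\ge0$ for $0<\rho\le\rho_0$, and the case $\rho=0$ follows because $M_0(a)=0$.
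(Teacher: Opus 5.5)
\textbf{There is a genuine gap: Step~2 is never proved.} Your Steps~1 and~3 match the paper's argument. You expand $M_\rho(a)/\rho^2$ in powers of $\rho^2$ with coefficients $d_m=c_{m+1}(1-a^{2m+2})-c_m(a-a^{2m})$ (your $b_m$). You show these change sign at most once, from negative to positive. You conclude that the function first decreases and then increases, and the endpoint value $0$ at $\rho=1$ finishes the argument. But the single sign change of the coefficients is the whole content of the lemma, and you leave it open.

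Your suggested route is also unpromising as set up. After dividing by $c_{m+1}$, the positive part $(1-a)+a^{2m}(1-a^2)$ decreases in $m$. The subtracted part $\frac{4m+1}{m(2m-1)}\,a(1-a^{2m-1})$ also behaves like a decreasing sequence (roughly $2a/m$ for large $m$). So there is no evident monotone structure. Exhibiting the ratio as ``decreasing minus increasing'' would make it decreasing, which is the opposite of the sign pattern you need.

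\textbf{The missing idea is to compare the two positive terms multiplicatively rather than additively.} For $0<a<1$ put $b(x)=x/(1-a^x)$, so that $1-a^x=x/b(x)$. Then
\[
 d_m>0\iff R_m:=\frac{c_{m+1}(1-a^{2m+2})}{c_m\,a\,(1-a^{2m-1})}
 =\frac{2m}{a(2m+1)}\,\frac{b(2m-1)}{b(2m+2)}>1 .
\]
With $t=-\log a>0$,
\[
 (\log b)''(x)=-x^{-2}+\frac{t^2}{4\sinh^2(tx/2)}<0,
\]
because $\sinh y>y$. So $(\log b)'$ is decreasing, and $x\mapsto b(x)/b(x+3)$ is strictly increasing. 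Together with the increase of $2m/(2m+1)$, this makes $R_m$ strictly increasing, with $R_m\to 1/a>1$. Hence the $d_m$ are negative at first (with at most one zero) and positive thereafter, which is exactly the input your Step~3 needs.

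Your contradiction argument in Step~3 then works as written. It also covers $a=0$ automatically: there every $d_m=c_{m+1}>0$, so the function is strictly increasing and the hypothesis is vacuous.
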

\begin{proof}
Suppose first that $0<a<1$, and set $F(\rho)=M_\rho(a)/\rho^2$. With $c_n=1/[2n(2n-1)]$, the entropy
series gives
\[
 F(\rho)=(1-a)\left(\frac{1+a}{2}-{\log2}\right)+\sum_{n\ge1}d_n\rho^{2n},
 \qquad
 d_n=c_{n+1}(1-a^{2n+2})-c_n(a-a^{2n}).
\]
The coefficient sign is determined by the ratio of its two positive
terms. With $b(x)=x/(1-a^x)$, that ratio is
\[
 R_n=\frac{2n}{a(2n+1)}\frac{b(2n-1)}{b(2n+2)},
 \qquad \operatorname{sign}(d_n)=\operatorname{sign}(R_n-1).
\]
Writing $t=-\log a>0$, we have
$(\log b)''(x)=-x^{-2}+t^2/[4\sinh^2(tx/2)]<0$.
Thus $b(x)/b(x+3)$ increases strictly, and so does $R_n$.
Since $R_n\to1/a>1$, the coefficients $d_n$ change sign at
most once, from negative to positive.

The derivative series has the same sign pattern. If every coefficient
is nonnegative, then $F$ increases strictly. Otherwise let $m$ be
the last index with $d_m<0$. The series $F'(\rho)/\rho^{2m-1}$
increases strictly: its negative terms have nonpositive powers and
its positive terms have positive powers. Termwise differentiation
is valid on compact subintervals of $(0,1)$. Hence $F$ first decreases
and then increases, with either part possibly empty.
Since $F(1)=0$ by continuity, a nonnegative interior value lies on
the decreasing part. Thus $F(\rho)\ge F(\rho_0)$ for
$0<\rho\le\rho_0$, which proves the quantitative assertion. At $a=1$, $M_\rho(a)=0$ identically;
at $a=0$, $M_\rho(a)=\Phi(\rho)-\rho^2{\log2}<0$ for $0<\rho<1$,
so the hypothesis is vacuous. Finally, $M_0(a)=0$.
\end{proof}

Together with concavity in $a$, this means that one check of
$M_{\rho_0}(a_0)\ge0$ covers every smaller correlation and every
larger singleton coefficient. The endpoint $\rho_0=1$ is excluded:
$M_1(a)=0$ for every $a$, so that endpoint alone gives no information.

The following explicit region will remove the local computations from
the compact high-correlation interval.

\begin{lemma}\label{ub:exponential-local}
Let $f:\{-1,1\}^n\to\{-1,1\}$. If $u\ge\log9$ and
$\delta(f)\le e^{-u}$, then CK holds at
$\rho=\tanh u$.
\end{lemma}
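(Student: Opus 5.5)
The plan is to reduce the statement to a single scalar inequality via Theorem~\ref{ub:local}, and then verify that inequality by explicit asymptotics in $q=e^{-u}$.

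\textbf{Reduction.} By \eqref{eq:dictator-distance}, the hypothesis $\delta(f)\le e^{-u}$ means $\alpha(f)\ge A:=1-2e^{-u}$. Theorem~\ref{ub:local} says that $M_\rho(A)\ge0$ implies $A_\rho(f)\le\Phi(\rho)$ for every $f$ with $\max_i|\widehat f(i)|\ge A$. So it suffices to prove $M_{\tanh u}(1-2e^{-u})\ge0$ for all $u\ge\log 9$. Concavity of $M_\rho$ together with $M_\rho(1)=0$ is already built into that theorem, so larger coefficients need no separate treatment.

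\textbf{Exact parameters.} Write $q=e^{-u}\in(0,1/9]$. Then
\[
\rho=\frac{1-q^2}{1+q^2},\qquad 1-\rho^2=\frac{4q^2}{(1+q^2)^2},\qquad \frac{1-\rho}{2}=\frac{q^2}{1+q^2}.
\]
Moreover $1-\rho A$ is $2q$ plus terms of order $q^2$, and $1-\rho^2A$ is $2q+4q^2$ up to $O(q^3)$. Both are exact rational functions of $q$, which I would keep explicitly.

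\textbf{Leading-order comparison.} Using $H(t)=H_{\rm b}((1-t)/2)$ and $H_{\rm b}(p)=p\log(1/p)+p+O(p^2)$:
\begin{itemize}
\item The first term of \eqref{eq:local-deficit} is $(1-\rho^2)H(\rho A)\approx 4q^2\cdot q(\log(1/q)+1)=q^3(4u+4)$.
\item The subtracted term is $(1-\rho^2A)H(\rho)\approx 2q\cdot q^2(2u+1)=q^3(4u+2)$.
\end{itemize}
So $M_\rho(A)\approx 2q^3>0$. The factor $u$ cancels at leading order, and the margin is $2q^3$ against error terms of size $O(q^4u)$.

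\textbf{Rigorous version.} I would make the comparison rigorous with elementary two-sided bounds valid for $0<p\le1/2$:
\[
p\log(1/p)+p-p^2\le H_{\rm b}(p)\le p\log(1/p)+p,
\]
which follow from $-(1-p)\log(1-p)=p-\sum_{k\ge2}p^k/(k(k-1))$.
\begin{itemize}
\item Apply the lower bound to $H(\rho A)$, with $p=(1-\rho A)/2$. Bound $\log(1/p)$ below by $u-\log(1+cq)$ for an explicit constant $c$.
\item Apply the upper bound to $H(\rho)$, with $p=q^2/(1+q^2)$, so that $\log(1/p)\le 2u+q^2$.
\end{itemize}
After multiplying out, $M_\rho(A)/q^3$ is bounded below by $2$ minus a sum of terms of the form $C_1qu+C_2q$ with explicit small constants.

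\textbf{Main obstacle.} The hardest step is showing that this remainder stays below $2$ for all $q\le1/9$. The dangerous terms are those of type $qu=q\log(1/q)$. Since $q\log(1/q)$ is increasing on $(0,e^{-1})$, it suffices to check the inequality at the single endpoint $q=1/9$, where $q\log(1/q)\approx0.244$. I expect the bookkeeping of the $O(q)$ corrections in $1-\rho A$ and $1-\rho^2A$ to be the delicate part. I would track them as exact rational functions of $q$ rather than expanding, and confirm the endpoint value numerically as in Appendix~\ref{app:certificate}.

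If the margin turns out too thin at $q=1/9$, the fallback is Lemma~\ref{ub:local-propagation}. Verifying $M_{\rho_0}(A)\ge0$ at a larger $\rho_0$ with the same $A$ propagates to smaller correlations, so the endpoint case could be handled by one certified numerical evaluation while the asymptotic argument covers large $u$.
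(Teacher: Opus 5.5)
Your proposal follows the paper's proof. You reduce via Theorem~\ref{ub:local} to $M_{\tanh u}(1-2e^{-u})>0$, bound the two binary entropies by elementary $p\log(1/p)$ estimates, normalize by $2e^{-u}p$ with $p=e^{-2u}/(1+e^{-2u})$, and use monotonicity of $q^k\log(1/q)$ to reduce to the single check $q=1/9$. The paper handles the shifted argument $q+p(1-2q)$ of $H(\rho a)$ by integrating $H_{\rm b}'\ge u-7q/3$ from $q$, where you instead evaluate the lower bound at the shifted point.

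Be aware that the constants are not small, contrary to your description. After normalization the first-order loss is $2qu+4q\approx0.93$ at $q=1/9$, and the paper's final margin is only about $0.009$, so the $+qu$ gain from that shift is essential. With your cruder $\log(1/P)\ge u-\log(1+q)$, the normalized bound at $q=1/9$ is only about $0.04$--$0.05$. You therefore cannot drop the positive term $2q^2u\approx0.054$ as the paper drops its positive terms. Either pair it with $-2qu$, or use your numerical fallback near $q=1/9$.
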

\begin{proof}
Set $\delta=e^{-u}\le1/9$, $p=\delta^2/(1+\delta^2)$, and
$a=1-2\delta$. Then $\rho=1-2p$ and
$H(\rho a)=H_{\rm b}(\delta+p(1-2\delta))$.
By Theorem~\ref{ub:local} and concavity in the singleton coefficient,
it suffices to show $M_\rho(a)>0$.

The elementary bounds $H_{\rm b}(\delta)/\delta\ge u+1-\delta$ and
$H_{\rm b}(p)/p\le2u+\delta^2+1$ follow from
$x\le-\log(1-x)\le x/(1-x)$. Throughout
$\delta\le x\le\delta+\delta^2$, the same logarithm bounds give
$H_{\rm b}'(x)\ge u-7\delta/3>0$. Integrating from $\delta$ and using
$p\ge\delta^2(1-\delta^2)$ therefore gives
\[
 \frac{M_\rho(a)}{2\delta p}
 \ge 2(1-\delta^2)
 \left[u+1-\delta+\delta(1-\delta^2)(1-2\delta)
                      (u-7\delta/3)\right]
 -(1+2\delta-4\delta^2)(2u+\delta^2+1).
\]
After expansion, dropping positive terms leaves
$1-2u\delta-4u\delta^3-4u\delta^6-4\delta
 -11\delta^2/3-56\delta^5/3-14\delta^6/3$.
Each $\delta^k\log(1/\delta)$ increases on $(0,1/9]$.
Using $\log9<11/5$, this lower bound is at least
$71636/7971615>0$. Finally, $\log9<11/5$ follows from
$e>163/60$ and $e^{1/5}>61/50$, by the exponential series.
\end{proof}

\subsection{Two-coordinate sections with the actual mean}

The last intermediate correlation band needs two coordinates.  After
compression, every two-dimensional section is one of the six functions
in Table~\ref{tab:two-coordinate-sections}.  This finite list gives a
sharper core entropy bound, while the actual mean and interaction remain
inside a simple feasible polygon.

For an increasing source, the two largest singleton coefficients
satisfy $\widehat f(1)\ge\widehat f(2)\ge0$. The conditional mean
on these coordinates is
\begin{equation}\label{ub:core-mean}
 f_{\{1,2\}}(x,y)=\mu+\widehat f(1)x+\widehat f(2)y
                         +\widehat f(\{1,2\})xy.
\end{equation}
Let $J_{\rm AND}(\rho)$ be the average noisy entropy of the
two-coordinate AND function. With $p_0=(1-\rho)/2$, it equals
\begin{equation}\label{eq:and-entropy}
 J_{\rm AND}(\rho)=
 \frac{H_{\rm b}(p_0^2)+2H_{\rm b}(p_0(1-p_0))+H_{\rm b}(2p_0-p_0^2)}4.
\end{equation}

\begin{proposition}\label{jp:local2-exact}
Let $f:\{-1,1\}^n\to\{-1,1\}$ be increasing. For $0\le\rho\le1$,
\[
 E_\rho(f)\ge\rho^2\bigl[\widehat f(1)H(\rho)
            +\widehat f(2)[2J_{\rm AND}(\rho)-H(\rho)]\bigr]
 +(1-\rho^2)\E H(T_\rho f_{\{1,2\}}).
\]
\end{proposition}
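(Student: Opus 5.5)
The plan is to apply the core contraction inequality \eqref{jp:core-contraction} with $K=\{1,2\}$ and $\eta=\rho$. Its second term is already $(1-\rho^2)\E_{X_K}H(T_\rho^K f_{\{1,2\}})$, and since $T^K_\rho f_{\{1,2\}}$ is just $T_\rho f_{\{1,2\}}$ on the two-dimensional cube, this is exactly the last term of the proposition. So everything reduces to the lower bound
\[
 \E H(T_\rho^K f)\ge \widehat f(1)H(\rho)+\widehat f(2)\bigl[2J_{\rm AND}(\rho)-H(\rho)\bigr].
\]

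To prove this, I would condition on the outside coordinates $z$. Since $f$ is increasing, each section $f(\cdot,\cdot,z)$ is an increasing Boolean function of $(x_1,x_2)$. There are exactly six of these: the two constants, $x_1$, $x_2$, AND and OR. Let $p_1,p_2,p_\wedge,p_\vee$ be the probabilities, over $z$, that the section is $x_1$, $x_2$, AND, OR respectively. Noise on $K$ acts on each section separately, so the averaged entropy is the corresponding mixture:
\begin{itemize}
\item a constant section contributes $0$;
\item a dictator section contributes $H(\rho)$;
\item an AND section contributes $J_{\rm AND}(\rho)$, and an OR section contributes the same, because OR is the negation of AND composed with $x\mapsto -x$ and $H$ is even.
\end{itemize}
Hence $\E H(T^K_\rho f)=(p_1+p_2)H(\rho)+(p_\wedge+p_\vee)J_{\rm AND}(\rho)$.

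Next I would express the coefficients in the same parameters. For increasing $f$ we have $\widehat f(i)=\operatorname{Inf}_i(f)$. Coordinate $1$ is pivotal exactly in the $x_1$ sections (always) and in the AND and OR sections (half the time), so
\[
 \widehat f(1)=p_1+\tfrac12 q,\qquad \widehat f(2)=p_2+\tfrac12 q,\qquad q=p_\wedge+p_\vee.
\]
Substituting, the target right-hand side becomes $p_1H(\rho)+p_2[2J_{\rm AND}(\rho)-H(\rho)]+qJ_{\rm AND}(\rho)$. The difference from the left-hand side is $2p_2[H(\rho)-J_{\rm AND}(\rho)]$. The ordering $\widehat f(1)\ge\widehat f(2)$ is therefore not even needed, and it suffices to prove the scalar inequality $J_{\rm AND}(\rho)\le H(\rho)=H_{\rm b}(p_0)$.

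This scalar comparison is the only real step; both sides vanish to the same leading order $p_0\log(1/p_0)$ as $\rho\to1$, so crude bounds will not do. I would prove it directly from \eqref{eq:and-entropy}:
\begin{itemize}
\item The numbers $p_0^2$ and $2p_0-p_0^2$ average to $p_0$, so concavity of $H_{\rm b}$ gives $H_{\rm b}(p_0^2)+H_{\rm b}(2p_0-p_0^2)\le 2H_{\rm b}(p_0)$.
\item Since $p_0(1-p_0)\le p_0\le1/2$ and $H_{\rm b}$ increases on $[0,1/2]$, we get $H_{\rm b}(p_0(1-p_0))\le H_{\rm b}(p_0)$.
\end{itemize}
Adding and dividing by $4$ gives $J_{\rm AND}\le H_{\rm b}(p_0)$. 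The endpoints $\rho\in\{0,1\}$ follow by continuity, or directly.
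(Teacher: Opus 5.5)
Your proposal is correct and follows the paper's proof essentially step for step: the core contraction \eqref{jp:core-contraction} with $K=\{1,2\}$ and $\eta=\rho$, the six monotone sections, the identities $\widehat f(1)=p_1+q/2$ and $\widehat f(2)=p_2+q/2$, and the reduction (with slack $2p_2[H(\rho)-J_{\rm AND}(\rho)]$) to the scalar inequality $J_{\rm AND}(\rho)\le H(\rho)$. The only difference is in that last step. The paper derives $H_{\rm b}(st)\le tH_{\rm b}(s)+sH_{\rm b}(t)$ from $H(Z\mid U)+H(Z\mid V)-H(Z)=I(U;V\mid Z)\ge0$ for $Z=UV$ and averages it over the two independent posteriors. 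Your argument instead pairs $p_0^2$ with $2p_0-p_0^2$ by concavity of $H_{\rm b}$ and bounds $H_{\rm b}(p_0(1-p_0))\le H_{\rm b}(p_0)$ by monotonicity, which is a shorter, fully elementary proof of the same inequality.
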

\begin{proof}
Write $a=\widehat f(1)$ and $b=\widehat f(2)$ in this proof.
First, $J_{\rm AND}(\rho)\le H(\rho)$ at every correlation.
For independent Bernoulli variables $U,V$ with means $s,t$, put
$Z=UV$. The Shannon entropy identity
$H(Z\mid U)+H(Z\mid V)-H(Z)=I(U;V\mid Z)\ge0$
gives $H_{\rm b}(st)\le tH_{\rm b}(s)+sH_{\rm b}(t)$.
Apply this conditionally on the two noisy observations and average.
The posterior means are independent, each averages to $1/2$,
and both posterior entropies equal $H(\rho)$, proving the claim.

Every monotone two-coordinate section is listed in
Table~\ref{tab:two-coordinate-sections}.
\begin{table}[htbp]\centering\small
\begin{tabular}{@{}lccccc@{}}
\toprule
Section & Mean & Singleton 1 & Singleton 2 & Interaction & Entropy\\
\midrule
$+1$ & $1$ & $0$ & $0$ & $0$ & $0$\\
$-1$ & $-1$ & $0$ & $0$ & $0$ & $0$\\
$x$ & $0$ & $1$ & $0$ & $0$ & $H(\rho)$\\
$y$ & $0$ & $0$ & $1$ & $0$ & $H(\rho)$\\
AND & $-1/2$ & $1/2$ & $1/2$ & $1/2$ & $J_{\rm AND}(\rho)$\\
OR & $1/2$ & $1/2$ & $1/2$ & $-1/2$ & $J_{\rm AND}(\rho)$\\
\bottomrule
\end{tabular}
\vspace*{3mm}
\caption{Fourier data and averaged noisy entropy of each monotone section.}
\label{tab:two-coordinate-sections}
\end{table}

Let $p_x$ and $p_y$ be the probabilities, over the outside
coordinates, of the sections $x$ and $y$, respectively, and let
$q$ be the total probability of an AND or OR section. The table
gives $a=p_x+q/2$ and $b=p_y+q/2$, so $q\le2b$. The averaged
core entropy is $(a+b)H(\rho)-q[H(\rho)-J_{\rm AND}(\rho)]$,
which is at least $aH(\rho)+b[2J_{\rm AND}(\rho)-H(\rho)]$.
Now apply \eqref{jp:core-contraction} and \eqref{ub:core-mean}.
\end{proof}

The conditional mean is increasing and takes values in $[-1,1]$.
Its edge differences and extreme values imply
$\widehat f(1)+\widehat f(2)\le1$,
$|\widehat f(\{1,2\})|\le\widehat f(2)$, and
$|\mu|\le1-\widehat f(1)$.
We will control its mean through moments, without partitioning the
feasible mean--interaction domain.

\begin{lemma}[Fourth-moment mean compensation]\label{lem:mean-moments}
Let $(\Omega,\nu)$ be a probability space,
$g:\Omega\to[-1,1]$, and $h=g-\E_\nu g$. For $0\le\lambda\le1/20$,
\[
 \lambda\E_\nu H(g)+\Phi(\E_\nu g)
 \ge\lambda\left[\log2-\frac{\E_\nu h^2}{2}
          -(\log2-\tfrac12)\E_\nu h^4\right]
       -\frac25\lambda^2(\E_\nu h^3)^2.
\]
\end{lemma}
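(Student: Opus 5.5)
The plan is to combine a pointwise quartic lower bound for $H$ with an exact expansion of the moments of $g$ about its mean, and then to absorb the single cross term by completing the square.

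\textbf{Pointwise bound.} Write $m=\E_\nu g$ and $c=\log2-\tfrac12$. By the entropy series \eqref{ub:entropy-series}, $\Phi(t)-t^2/2=\sum_{k\ge2}\frac{t^{2k}}{2k(2k-1)}$. All coefficients are nonnegative, so for $|t|\le1$ this is at most
\[
t^4\sum_{k\ge2}\frac{1}{2k(2k-1)}=t^4\bigl(\Phi(1)-\tfrac12\bigr)=c\,t^4 .
\]
Hence $H(t)\ge \log2-t^2/2-c\,t^4$ on $[-1,1]$. For the mean term we use the lower bound $\Phi(m)\ge m^2/2+m^4/12$, again read off from the series.

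\textbf{Moment expansion.} Substituting $g=m+h$ with $\E_\nu h=0$ gives
\[
\E_\nu g^2=m^2+\E_\nu h^2,\qquad
\E_\nu g^4=m^4+6m^2\E_\nu h^2+4m\E_\nu h^3+\E_\nu h^4 .
\]
Therefore $\lambda\E_\nu H(g)+\Phi(m)$ is at least the target bracket $\lambda[\log2-\E_\nu h^2/2-c\E_\nu h^4]$ plus the remainder
\[
R=\Phi(m)-\lambda\Bigl[\tfrac{m^2}{2}+c\,m^4+6c\,m^2\E_\nu h^2\Bigr]-4\lambda c\,m\E_\nu h^3 .
\]
It remains to show $R\ge-\tfrac25\lambda^2(\E_\nu h^3)^2$.

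\textbf{Bounding the $m$-terms.} Since $|g|\le1$, we have $\E_\nu h^2=\E_\nu g^2-m^2\le1-m^2$. Inserting this and $\Phi(m)\ge m^2/2+m^4/12$, every term except the cross term is at least
\[
m^2\Bigl[\tfrac12-\lambda\bigl(\tfrac12+6c\bigr)\Bigr]+m^4\Bigl[\tfrac1{12}+5\lambda c\Bigr]\ \ge\ \kappa\,m^2,
\qquad \kappa:=\tfrac12-\tfrac1{20}\bigl(\tfrac12+6c\bigr).
\]
Here the $\lambda$-dependent $m^4$ terms combine as $-\lambda c\,m^4+6\lambda c\,m^4=5\lambda c\,m^4\ge0$, and the first bracket is smallest at $\lambda=1/20$. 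Numerically $c<0.1932$ and $\kappa>0.416$.

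\textbf{Cross term.} By AM--GM,
\[
-4\lambda c\,m\E_\nu h^3\ \ge\ -\kappa m^2-\frac{4\lambda^2c^2}{\kappa}(\E_\nu h^3)^2 .
\]
This gives $R\ge-\frac{4c^2}{\kappa}\lambda^2(\E_\nu h^3)^2$. So the proof reduces to the scalar check $4c^2/\kappa\le 2/5$, i.e.\ $10c^2\le\kappa$.

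The main obstacle is this last numerical margin. We have $4c^2/\kappa\approx0.358$, which leaves only modest room below $0.4$. I would certify it with rational bounds on $\log2$, for instance $0.693<\log2<0.6932$, which give $c<0.1932$ and hence $10c^2<0.374<0.416<\kappa$. Everything else (the sign of the discarded $m^4$ terms, the use of $\E_\nu h^2\le1-m^2$, and the restriction $\lambda\le1/20$ entering only through $\kappa$) is routine. No integrability issues arise, since $g$ is bounded.
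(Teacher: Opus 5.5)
Your proposal is correct and follows essentially the same route as the paper: the quartic pointwise bounds from the entropy series, the expansion of the second and fourth moments of $m+h$, and completing the square in the mean. The only differences are that you use $\E_\nu h^2\le 1-m^2$ instead of the paper's $\E_\nu h^2\le1$, and a sharper bound on $\log2$. The paper's cruder bound $\log2-\tfrac12<\tfrac15$ already makes the quadratic coefficient exceed $2/5$ and yields the constant $2/5$ exactly.
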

\begin{proof}
The entropy series gives
$H(t)\ge\log2-t^2/2-(\log2-1/2)t^4$ and
$\Phi(t)\ge t^2/2+t^4/12$ on $[-1,1]$.
Put $\mu=\E_\nu g$ and expand the second and fourth moments of
$\mu+h$. After subtracting the first term on the right, the
remaining lower bound is
\[
 \left[\frac{1-\lambda}{2}
       -6\lambda(\log2-\tfrac12)\E_\nu h^2\right]\mu^2
 -4\lambda(\log2-\tfrac12)\mu\E_\nu h^3
 +[\tfrac1{12}-\lambda(\log2-\tfrac12)]\mu^4.
\]
Since $\E_\nu h^2\le1$ and $\log2-1/2<1/5$, the quadratic
coefficient exceeds $2/5$ and the quartic coefficient is nonnegative.
Completing a square in $|\mu|$ gives the error
$-(2/5)\lambda^2(\E_\nu h^3)^2$.
\end{proof}

The following elementary entropy bounds will make the resulting
moment criterion polynomial.

\begin{lemma}\label{lem:and-rational}
For $39/40\le\rho\le49/50$,
$H(\rho)/(1-\rho^2)<17/12$ and
$J_{\rm AND}(\rho)\ge H(\rho)-(1-\rho^2)/10$.
\end{lemma}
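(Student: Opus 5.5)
Both inequalities are one-variable statements. I would prove them in the variable $p=(1-\rho)/2$, which ranges over $[1/100,1/80]$ when $39/40\le\rho\le49/50$. In this variable $H(\rho)=H_{\rm b}(p)$ and $1-\rho^2=4p(1-p)$. Every binary entropy involved is evaluated at a small argument $q\le 2p\le 1/40$. So the two-sided estimate
\[
q\log\frac1q+q-q^2\le H_{\rm b}(q)\le q\log\frac1q+q,
\]
which follows from $x\le-\log(1-x)\le x/(1-x)$ as in the proof of Lemma~\ref{ub:exponential-local}, reduces everything to elementary functions of $p$ and $\log(1/p)$.

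\emph{First inequality.} Using the upper bound above,
\[
\frac{H(\rho)}{1-\rho^2}\le\frac{\log(1/p)+1}{4(1-p)}.
\]
The derivative of $N/(1-p)$ with $N=\log(1/p)+1$ has the sign of $N-(1-p)/p$. This is negative on our range, since $N<6$ while $(1-p)/p\ge79$. Hence the right side is decreasing in $p$, and its maximum is at $p=1/100$, i.e.\ at $\rho=49/50$. There the bound equals $(\log100+1)/3.96$. A rational bound $\log10<2.3026$ gives less than $5.6052/3.96<1.4155<17/12$. The bound on $\log 10$ can be certified by the exponential series, in the same way as $\log9<11/5$ was certified earlier.

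\emph{Second inequality.} This is the delicate one. Expanding \eqref{eq:and-entropy} with the entropy estimates, the leading terms give
\[
J_{\rm AND}=p\log\frac1p+p-\frac{\log2}{2}\,p+O\!\left(p^2\log\frac1p\right).
\]
Hence $H(\rho)-J_{\rm AND}(\rho)\approx(\tfrac{\log2}{2})p\approx0.347p$, to be compared with $(1-\rho^2)/10=0.4p(1-p)$. The margin is about $0.05p$. The second-order errors, of size a few times $p^2\log(1/p)$, are of comparable order $p\cdot 0.05$ at $p=1/80$, so crude bounds will not suffice.

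My plan is to write $H_{\rm b}(p)-J_{\rm AND}$ exactly as a combination of terms $q\log q$ and $(1-q)\log(1-q)$ for $q\in\{p,p^2,p(1-p),2p-p^2\}$. I would group the $q\log q$ parts exactly. For instance,
\[
p(1-p)\log\bigl(p(1-p)\bigr)=p(1-p)\log p+p(1-p)\log(1-p),
\]
and $(2p-p^2)\log(2p-p^2)$ splits similarly into $\log p$, $\log 2$ and $\log(1-p/2)$ pieces. After this, the coefficient of $\log(1/p)$ is a polynomial in $p$ that can be checked to be small and of known sign. Only the $(1-q)\log(1-q)$ parts are then bounded with second-order Taylor remainders, since these carry no large logarithm.

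This yields an explicit lower bound of the form
\[
p\Bigl[\tfrac{\log2}{2}-c_1p\log\tfrac1p-c_2p\Bigr]
\]
with explicit rational $c_1,c_2$. I would verify
\[
\tfrac{\log2}{2}-c_1p\log\tfrac1p-c_2p>0.4(1-p)
\]
using the monotonicity of $p\log(1/p)$ on $[1/100,1/80]$ together with rational bounds on $\log2$ and $\log80$.

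The main obstacle is keeping the second-order bookkeeping tight enough to preserve the roughly $0.05p$ margin. If a single evaluation at the worst endpoint is too lossy, I would split $[1/100,1/80]$ into a few subintervals. On each subinterval the $\log(1/p)$ term can be bounded by its endpoint values and the polynomial terms by interval arithmetic.
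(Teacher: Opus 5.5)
Your first inequality is correct and is essentially the paper's argument. Both use $H_{\rm b}(p)\le p[1+\log(1/p)]$ at $p=1/100$. The paper reduces to that endpoint through the entropy series, which makes $H(\rho)/(1-\rho^2)$ increasing; you instead show that the bound itself decreases in $p$. Either works.

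\textbf{The gap is in the second inequality, where the decisive step runs in the wrong direction.}

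\begin{itemize}
\item The claim is $H_{\rm b}(p)-J_{\rm AND}\le\frac25p(1-p)$. So you need an \emph{upper} bound on $H_{\rm b}(p)-J_{\rm AND}$: an upper bound on $H_{\rm b}(p)$ and lower bounds on the three entropies inside $J_{\rm AND}$.
\item You instead propose a lower bound $p[\frac{\log2}{2}-c_1p\log\frac1p-c_2p]$ and the check $\frac{\log2}{2}-c_1p\log\frac1p-c_2p>0.4(1-p)$.
\item That check can never pass. Any valid lower bound is at most the true value of $(H_{\rm b}(p)-J_{\rm AND})/p$, which is about $0.36$ on this range, whereas $0.4(1-p)\ge0.395$. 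If it did pass, it would prove the opposite inequality.
\item The exact coefficient of $\log(1/p)$ in $H_{\rm b}(p)-J_{\rm AND}$ is $+p^2/4$. It must therefore enter the upper bound with a plus sign.
\end{itemize}

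\textbf{With the directions fixed, your decomposition works without subdivision.} The $\log(1-p)$ and $\log(1-p/2)$ pieces are nonpositive and can be dropped. Using $q-\frac{q^2}{2(1-q)}\le-(1-q)\log(1-q)\le q-\frac{q^2}2$, the linear terms cancel and you are left with
\[
 H_{\rm b}(p)-J_{\rm AND}\le\frac{p^2}4\log\frac1p+\frac{\log2}4(2p-p^2)
 +\frac18\Bigl[\frac{p^4}{1-p^2}+\frac{2p^2(1-p)^2}{1-p+p^2}+\frac{p^2(2-p)^2}{(1-p)^2}\Bigr].
\]
At $p=1/80$ this is about $0.368p$, against the target $0.395p$.

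\textbf{Comparison with the paper.} The paper avoids this bookkeeping entirely.
\begin{itemize}
\item Concavity of $H_{\rm b}$ with $H_{\rm b}(0)=0$ gives $H_{\rm b}(p^2)\ge pH_{\rm b}(p)$, $H_{\rm b}(p-p^2)\ge(1-p)H_{\rm b}(p)$ and $H_{\rm b}(2p-p^2)\ge(1-\frac p2)H_{\rm b}(2p)$.
\item A second-derivative bound gives $H_{\rm b}(2p)\ge2H_{\rm b}(p)-2p\log2-\frac{p^2}{1-2p}$.
\item Together these yield $H(\rho)-J_{\rm AND}(\rho)\le p[\frac{H_{\rm b}(p)+\log2}2+\frac p{4(1-2p)}]$. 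This needs only $H_{\rm b}(p)<7/100$ and $\log2<7/10$.
\end{itemize}
So your expectation that crude bounds cannot suffice is too pessimistic. The true second-order term is only $\frac{p^2}4\log\frac1p$, not ``a few times'' $p^2\log\frac1p$. The paper's bound roughly doubles that term and still leaves a margin of about $0.012p$ at $p=1/80$. Your corrected route is sharper but longer.
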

\begin{proof}
The entropy series makes $H(\rho)/(1-\rho^2)$ increasing.
At $\rho=49/50$, use
$H_{\rm b}(p)\le p[1+\log(1/p)]$ with $p=1/100$ and
$\log100<461/100$ to get $17/12$.

For the second bound put $p=(1-\rho)/2\le1/80$.
Concavity of $H_{\rm b}$ gives
$H_{\rm b}(p^2)\ge pH_{\rm b}(p)$,
$H_{\rm b}(p-p^2)\ge(1-p)H_{\rm b}(p)$, and
$H_{\rm b}(2p-p^2)\ge(1-p/2)H_{\rm b}(2p)$.
The second derivative of $-(1-p)\log(1-p)$ gives
$H_{\rm b}(2p)\ge2H_{\rm b}(p)-2p\log2-p^2/(1-2p)$.
Substituting these bounds in \eqref{eq:and-entropy} yields
\[
 H(\rho)-J_{\rm AND}(\rho)
 \le p\left[\frac{H_{\rm b}(p)+\log2}{2}
                  +\frac{p}{4(1-2p)}\right].
\]
Now $H_{\rm b}(p)<7/100$, using $\log80<9/2$, and $\log2<7/10$.
The bracket is at most $7/200+7/20+1/312<79/200$
and hence is less than $(2/5)(1-p)$, as required.
The exponential series through degree 11 proves both logarithm
bounds used here; the logarithm series gives $9/13<\log2<7/10$.
\end{proof}

Two easy source classes are already controlled: large $|\mu|$ by
information contraction and large $\alpha$ by the local theorem.  For
the remaining class we need a dimension-free upper bound on
$W_1+\mu^2$.  The following lift turns the mean into one more singleton
coefficient, so balanced Fourier estimates apply without discarding it.

\subsection{Retaining the mean as a singleton coefficient}

The entropy estimate below depends on $W_1(f)+\mu^2$. To control
these two quantities together, we place the mean into the first level
of an auxiliary balanced function on one more coordinate. The
resulting lift lets us apply Yu's balanced Fourier estimate directly
to both the original singleton coefficients and the mean. This use
of the lift concerns Fourier weights; no comparison between the
mutual informations of the original and lifted functions is required.

For $0\le a\le1$, define
\begin{equation}\label{eq:yu-fourier-cap}
 Q_{\rm LP}(a)=\begin{cases}
 a^2+1-a,&0\le a\le1/2,\\
 a^2+2\sqrt2(1-a)^{3/2}-2(1-a)^2,&1/2\le a\le1.
 \end{cases}
\end{equation}

\begin{lemma}[Yu's first-level bound
{\cite[Proposition~2 and Remark~2]{Yu2023}}]\label{ext:yu-bound}
For every $g:\{-1,1\}^n\to\{-1,1\}$ with $\E g=0$ and every
$i\in[n]$, $W_1(g)\le Q_{\rm LP}(|\widehat g(i)|)$.
The coordinate $i$ need not have the largest singleton coefficient.
\end{lemma}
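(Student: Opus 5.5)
The paper cites this bound from Yu's work rather than proving it. What follows is the route I would take to reprove it, with the second branch left as the expected obstacle.

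\textbf{Reduction.} Flip the sign of coordinate $i$ if needed, relabel it as coordinate $1$, and set $a=\widehat g(1)\ge0$. Write $g(x_1,z)=u(z)+x_1v(z)$ as in the proof of Lemma~\ref{lem:compression}. Since $u\pm v\in\{-1,1\}$ pointwise, on each outside point $z$ exactly one of $u(z),v(z)$ vanishes and the other equals $\pm1$. Let $D=\{z:v(z)\ne0\}$ and $p=1-\PP(D)=\PP(u\ne0)$. Then $a=\E v\le\E|v|=1-p$, so $p\le1-a$. Balancedness of $g$ gives $\E u=0$, so $u$ takes the values $\pm1$ each with probability $p/2$ and vanishes on $D$. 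Every singleton coefficient at $j\ne1$ comes from $u$ alone, so
\[
 W_1(g)=a^2+W_1(u).
\]
The statement therefore reduces to bounding the level-one weight of a mean-zero ternary function $u$ with support of measure $p\le1-a$.

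\textbf{First branch.} Parseval gives $W_1(u)\le\E u^2=p\le1-a$, which is exactly $Q_{\rm LP}(a)-a^2$ for $a\le1/2$.

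\textbf{Second branch ($a\ge1/2$, so $p\le1/2$).} Here the target is
\[
 W_1(u)\le 2\sqrt2\,p^{3/2}-2p^2,
\]
evaluated at $p=1-a$; the right side is increasing in $p$ on $[0,1/2]$, so the constraint $p\le1-a$ suffices. I would write $W_1(u)=(\E[u\ell])^2$, where $\ell$ is the normalised linear part of $u$. The key is to exploit both the zero mean of $u$ and the fact that $u$ is supported on a set of measure $p$. The plan is to set this up as a linear programme, as the name $Q_{\rm LP}$ suggests: the variables are the distribution of $\ell$ restricted to $\{u=1\}$ and to $\{u=-1\}$, subject to these constraints.
\begin{itemize}
 \item $\E\ell=0$ and $\E\ell^2=1$.
 \item A hypercontractive or two-point control of the tails of $\ell$, from Theorem~\ref{ext:cube-functional} with $\|T_\eta h\|_2\le\|h\|_{1+\eta^2}$.
 \item Pairwise constraints of the type \eqref{eq:pair-bound}, namely $\widehat u(j)+\widehat u(k)\le\E|X_j+X_k|\cdot\PP(u\ne0\mid\cdot)$.
\end{itemize}
Optimising the dual should yield $\E[u\ell]\le\sqrt2\,p^{3/4}\cdot(\ldots)$ and hence the closed form. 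A boundary check is that at $p=1/2$ both branches give $1/2$, so the two pieces join continuously at $a=1/2$.

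\textbf{Main obstacle.} The hard step is the second branch. Plain Cauchy--Schwarz only recovers $W_1(u)\le p$, and Gaussian level-one inequalities give $p^2\log(1/p)$ rather than the required $p^{3/2}$ profile. If the LP route does not close, I would instead compress $u$ via Lemma~\ref{lem:compression}, so that its singleton coefficients become influences. I would then combine the edge-isoperimetric bound on the total influence of the set $\{u\ne0\}$ with $\sum_j\widehat u(j)^2\le\max_j\widehat u(j)\sum_j\widehat u(j)$, optimising over the maximal coefficient subject to $\widehat u(j)\le p/2$-type constraints. The aim is to extract the form $2\sqrt2\,p^{3/2}-2p^2$ as the value of that one-parameter optimisation.
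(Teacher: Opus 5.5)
\textbf{Gap: the case $a>1/2$ is not proved.} Your reduction is correct. Writing $g=u+x_iv$ gives $W_1(g)=a^2+W_1(u)$, where $u$ is $\{-1,0,1\}$-valued with $\E u=0$ and $p=\PP(u\ne0)\le1-a$, and Parseval then settles $a\le1/2$. The case $a>1/2$ is the entire content of the lemma. There you need $W_1(u)\le2\sqrt2p^{3/2}-2p^2=p-p(1-\sqrt{2p})^2$ for $p\le1/2$, and the proposal neither formulates the linear programme nor solves it (``optimising the dual should yield''). The paper does not reprove this bound either: it imports Yu's Proposition~2 and Remark~2, stated for $0/1$-valued functions, and transfers them by applying them to $(1+g)/2$. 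As written, your argument proves the lemma only for $|\widehat g(i)|\le1/2$.

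\textbf{The listed ingredients cannot close it.} This holds at least for $1/2<a\le2/3$.
\begin{itemize}
\item \emph{Moment and tail constraints.} When $1/3\le p<1/2$, the profile $\ell=p^{-1/2}u$ satisfies $\E\ell=0$, $\E\ell^2=1$ and $\E\ell^4=1/p\le3$. It even satisfies $\|\ell\|_q=p^{1/q-1/2}\le\sqrt{q-1}$ for every $q\ge2$. So it meets the Khintchine and hypercontractive constraints you propose, yet $\E[u\ell]=\sqrt p$. Such an LP therefore certifies only $W_1(u)\le p$, which is strictly above the target for every $p<1/2$.
\item \emph{Pairwise constraints.} Their valid form is $\widehat u(j)+\widehat u(k)=\E[u(X_j+X_k)]\le2\PP(u\ne0,\,X_j=X_k)\le2p$. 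When $p\le1/2$ this already follows from $|\widehat u(j)|\le\E|u|=p$, so it adds nothing.
\item \emph{The isoperimetric fallback.} Edge isoperimetry bounds total influence from \emph{below}, and $\sum_j\widehat u(j)$ admits no dimension-free upper bound. The cap $\widehat u(j)\le p/2$ is also false: $u(z)=z_2\mathbf 1_S(z_3,\ldots)$ with $\PP(S)=p$ has $\widehat u(2)=p$.
\item \emph{The level-one remark is inverted.} For small $p$, $p^2\log(1/p)$ is smaller than $p^{3/2}$, so a level-one bound would be stronger there. The real difficulty is the sharp constant at moderate $p$; the section of $\mathrm{Maj}_3$, $u=(z_2+z_3)/2$, gives equality at $p=1/2$.
\end{itemize}

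A proof of this case must use discrete structure of the cube beyond the moments and tails of the linear part, which Yu's argument supplies. Either cite that result, as the paper does, or give such an argument.
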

Yu states the result for $0/1$-valued functions. Applying it to
$(1+g)/2$ gives the sign-valued normalization above.

\begin{theorem}[K\"onig--Sch\"utt--Tomczak-Jaegermann
{\cite{KonigSchuttTomczak1999}}]
For independent uniform signs $X_1,\ldots,X_n$ and real
$c_1,\ldots,c_n$,
\[
 \E\left|\sum_i c_iX_i\right|
 \le\sqrt{2/\pi}\,\|c\|_2+(1-\sqrt{2/\pi})\|c\|_\infty.
\]
\end{theorem}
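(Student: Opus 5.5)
The plan is to prove the inequality analytically through the Fourier integral for the first absolute moment, comparing the Rademacher sum with a Gaussian of the same variance and charging the discrepancy to the largest coefficient. Since the statement is homogeneous and invariant under sign changes and permutations of the $c_i$, I would normalize $\|c\|_2=1$, take $c_i\ge0$, and order so that $m=c_1=\|c\|_\infty$. For $S=\sum_i c_iX_i$ I would start from the classical identity
\[
 \E|S|=\frac2\pi\int_0^\infty\frac{1-\prod_i\cos(c_it)}{t^2}\,dt,
\]
which follows from $|s|=\frac2\pi\int_0^\infty(1-\cos(st))t^{-2}\,dt$ and independence. For a standard Gaussian the same formula gives $\sqrt{2/\pi}=\frac2\pi\int_0^\infty(1-e^{-t^2/2})t^{-2}\,dt$. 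The target therefore becomes
\[
 \frac2\pi\int_0^\infty\frac{e^{-t^2/2}-\prod_i\cos(c_it)}{t^2}\,dt
 \le(1-\sqrt{2/\pi})\,m .
\]

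I would split the integral at $t_0=\pi/(2m)$.

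On $[0,t_0]$ every factor $\cos(c_it)$ is nonnegative. I would use $\log\cos x\ge -x^2/2-x^4/12-\cdots$, or the cruder $\cos x\ge e^{-x^2/2-x^4/6}$ for $|x|\le\pi/2$. Together with $\sum_i c_i^4\le m^2\sum_i c_i^2=m^2$, this gives
\[
 \prod_i\cos(c_it)\ge e^{-t^2/2}e^{-Cm^2t^4}.
\]
The integrand on this range is then at most $e^{-t^2/2}(1-e^{-Cm^2t^4})/t^2$, whose integral is $O(m^2)$, hence $O(m)$.

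On $[t_0,\infty)$ I would bound $-\prod_i\cos(c_it)\le1$ crudely and integrate $(1+e^{-t^2/2})/t^2$, which gives a contribution of order $1/t_0\asymp m$.

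Together these steps yield $\E|S|\le\sqrt{2/\pi}+Km$ for some explicit $K$. The difficulty is that the sharp constant $K=1-\sqrt{2/\pi}$ must hold uniformly, including near the extreme $m=1$ where equality holds for a dictator.

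To reach the sharp constant I would handle large $m$ by conditioning on the top coordinate. Write $R=\sum_{i\ge2}c_iX_i$, which is symmetric with $\E R^2=1-m^2$. Symmetry gives
\[
 \E|mX_1+R|=\E\max(m,|R|).
\]
This reduces the claim to showing
\[
 \E\max(m,|R|)\le\sqrt{2/\pi}+(1-\sqrt{2/\pi})m .
\]
Since the right side is affine in $m$ and the left side is convex in $m$, it would suffice to combine two ingredients: the estimate at $m=1$, where both sides equal $1$, and a bound on the slope of the left side in $m$ at the other end of the relevant range. For the slope I would use $\frac{d}{dm}\E\max(m,|R|)=\PP(|R|<m)$, which I would control by an anticoncentration or Chebyshev-type bound using $\E R^2=1-m^2$.

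For small $m$ I would use the Fourier argument above, tightening the Taylor constants as needed. This leaves an intermediate range of $m$ where neither argument is sharp on its own.

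I expect that intermediate range to be the main obstacle. The clean identity $\E\max(m,|R|)$ does not immediately give the sharp affine bound, because $R$ itself is a Rademacher sum whose coefficients are all at most $m$. The natural remedy is an induction on $n$ with this inequality as the hypothesis, applied to $R$. If that fails, the fallback is simply to invoke the published proof of K\"onig, Sch\"utt and Tomczak-Jaegermann, as the paper does.
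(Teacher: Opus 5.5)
Your proposal does not prove the statement. Its whole content is the sharp constant $1-\sqrt{2/\pi}$, and no step of your plan reaches it. The paper itself gives no proof: it imports the inequality from K\"onig--Sch\"utt--Tomczak-Jaegermann as a black box to derive \eqref{ub:Khintchine}. So your fallback is exactly the paper's treatment, and everything before it is an incomplete attempt.

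The Fourier split cannot give the sharp constant for any value of $m$, and tightening Taylor constants will not fix this. On $[0,t_0]$ the true integrand is nonnegative, since $\log\cos x\le-x^2/2$, so nothing can be recovered there. On $[t_0,\infty)$ the crude bound $-\prod_i\cos(c_it)\le1$ costs at least $\frac{2}{\pi t_0}$. For $t_0=\pi/(2m)$ this is $4m/\pi^2\approx0.405\,m$, about twice the allowance $(1-\sqrt{2/\pi})m\approx0.202\,m$. Both quantities are linear in $m$, so small $m$ gives no slack. To make the tail fit you would need $t_0\ge 2/[\pi(1-\sqrt{2/\pi})m]\approx3.15/m$. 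That lies beyond $\pi/m$, where the top factor $\cos(mt)$ returns to $-1$. In addition, your bound $\cos x\ge e^{-x^2/2-x^4/6}$ fails near $x=\pi/2$, where the left side vanishes, so your split must come even earlier. The missing ingredient is a genuine decay estimate for $|\prod_i\cos(c_it)|$ when $t\gtrsim1/m$. It must cope with near-returns to $\pm1$, as happens for equal coefficients at $t=k\pi/m$.

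Your other two steps also fail as described. After normalizing $\|c\|_2=1$, the tail $R$ changes with $m$. So $m\mapsto\E\max(m,|R|)$ is not a single convex function to compare with a fixed affine right side, and $\frac{d}{dm}\E\max(m,|R|)=\PP(|R|<m)$ holds only for fixed $R$. If you fix $R$ instead, the right side becomes $\sqrt{2/\pi}\sqrt{m^2+\E R^2}+(1-\sqrt{2/\pi})m$, which is not affine.

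The induction does not close either. Write $\E|S|=\E|R|+\E(m-|R|)_+$, apply the inequality to $R$, and take $c_i=n^{-1/2}$ with $n\ge3$ odd. Then $\E(m-|R|)_+=m\,\PP(R=0)\ge\sqrt{2/\pi}\,m^2$. The resulting bound exceeds the target by at least $\sqrt{2/\pi}\,(\sqrt{1-m^2}-1+m^2)>0$, because the $\|c\|_\infty$ term is not additive when a coordinate is peeled off.

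The large-$m$ end does have an elementary repair. The inequality $(x-m)_+\le x^2/(4m)$ gives $\E\max(m,|R|)\le m+(1-m^2)/(4m)$. This is at most $\sqrt{2/\pi}+(1-\sqrt{2/\pi})m$ precisely when $m\ge1/(4\sqrt{2/\pi}-1)\approx0.456$. Below that threshold nothing in your plan yields the sharp constant, and that is where the real difficulty of the theorem lies.
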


Let $g_1:\{-1,1\}^n\to\R$ be the degree-one part of $g$.
Then $W_1(g)=\E[gg_1]\le\E|g_1|$. Solving the resulting quadratic
in $\sqrt{W_1(g)}$ gives
\begin{equation}\label{ub:Khintchine}
 W_1(g)\le K_{\rm Kh}(\alpha(g)),\qquad
 K_{\rm Kh}(a)=
 \frac{(1+\sqrt{1+2(\pi-\sqrt{2\pi})a})^2}{2\pi}.
\end{equation}
This inequality does not require balance.

\begin{lemma}[Odd-lift bound]\label{ub:lift}
For every $f:\{-1,1\}^n\to\{-1,1\}$, write
$\mu=\E f$, $\alpha=\max_i|\widehat f(i)|$, and
$W_1=\sum_i\widehat f(i)^2$. With the bounds defined in
\eqref{eq:yu-fourier-cap} and \eqref{ub:Khintchine},
\[
 W_1+\mu^2\le\min\{K_{\rm Kh}(\max(\alpha,|\mu|)),
                   Q_{\rm LP}(\alpha),Q_{\rm LP}(|\mu|)\}.
\]
In particular, if $\max(\alpha,|\mu|)\le69/100$, then
\[
 W_1+\mu^2<\frac{193027}{250000}.
\]
\end{lemma}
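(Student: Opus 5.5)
The plan is to pass to an odd lift on one extra coordinate. Define $F:\{-1,1\}^{n+1}\to\{-1,1\}$ by
\[
 F(x,x_0)=x_0 f(x_0x),
\]
where $x_0x$ multiplies every coordinate of $x$ by $x_0$.

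\emph{Step 1: Fourier data of $F$.} Writing $f=\sum_S\widehat f(S)\chi_S$, we get $F=\sum_S\widehat f(S)x_0^{1+|S|}\chi_S(x)$. So each set $S$ contributes the character $\chi_S$ when $|S|$ is odd and $x_0\chi_S$ when $|S|$ is even. Three facts follow.
\begin{itemize}
\item $F$ is odd under $(x,x_0)\mapsto(-x,-x_0)$, hence balanced.
\item Its singleton coefficients are $\widehat F(i)=\widehat f(i)$ for $i\in[n]$, together with $\widehat F(0)=\mu$.
\item Therefore $W_1(F)=W_1+\mu^2$ and $\alpha(F)=\max(\alpha,|\mu|)$.
\end{itemize}

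\emph{Step 2: the three bounds.}
\begin{itemize}
\item Applying \eqref{ub:Khintchine} to $F$ gives $W_1+\mu^2\le K_{\rm Kh}(\max(\alpha,|\mu|))$; this step does not need balance.
\item Lemma~\ref{ext:yu-bound} applies because $F$ is balanced. Taking $i$ to be a coordinate of $f$ with $|\widehat f(i)|=\alpha$ gives $Q_{\rm LP}(\alpha)$.
\item Taking $i=0$ in the same lemma gives $Q_{\rm LP}(|\mu|)$. This is where the remark that $i$ need not be the maximizing coordinate matters.
\end{itemize}

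\emph{Step 3: the numerical consequence.} Put $m=\max(\alpha,|\mu|)\le 69/100$. Since $K_{\rm Kh}$ is increasing, the first bound gives $W_1+\mu^2\le K_{\rm Kh}(69/100)$. A crude estimate shows $K_{\rm Kh}(0.69)\approx(1+\sqrt{1+2(0.6350)(0.69)})^2/(2\pi)$, which is about $0.786$, slightly too large. So the minimum with the $Q_{\rm LP}$ terms must be used.

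$Q_{\rm LP}$ is not monotone: it decreases from $1$ and then rises back to $1$ at $a=1$. The plan is therefore to split on which of $\alpha,|\mu|$ equals $m$. The other quantity is at most $m$ but may be small, so I would use $Q_{\rm LP}$ at the coordinate attaining $m$ together with $K_{\rm Kh}(m)$. This gives the bound
\[
 \sup_{0\le m\le 69/100}\min\{K_{\rm Kh}(m),Q_{\rm LP}(m)\}.
\]
For small $m$, $K_{\rm Kh}$ is near $2/\pi+\dots$ and wins. For $m$ near $0.69$, $Q_{\rm LP}(0.69)=0.4761+2\sqrt2(0.31)^{3/2}-2(0.0961)\approx0.4761+0.4882-0.1922=0.772$ wins. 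The stated constant $193027/250000\approx0.772108$ appears to be exactly $Q_{\rm LP}(69/100)$, rounded up.

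\emph{Step 4: locating the supremum (main obstacle).} The key claim is that on $[0,69/100]$ the function $\min\{K_{\rm Kh},Q_{\rm LP}\}$ is maximized at the right endpoint. This needs two facts:
\begin{itemize}
\item the crossing point $m^*$ of the increasing $K_{\rm Kh}$ and the decreasing-then-increasing $Q_{\rm LP}$;
\item that $Q_{\rm LP}$ on $[m^*,0.69]$ stays below its value at $0.69$.
\end{itemize}
The second fact uses that $Q_{\rm LP}$ has its minimum on $[1/2,1]$ before $0.69$, or at least that $Q_{\rm LP}(m)\le Q_{\rm LP}(0.69)$ wherever $K_{\rm Kh}(m)$ exceeds it.

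To carry this out, I would compute $Q_{\rm LP}'(a)=2a-3\sqrt2(1-a)^{1/2}+4(1-a)$ and find its sign change. I would then check that $K_{\rm Kh}(m)<Q_{\rm LP}(69/100)$ up to the point where $Q_{\rm LP}\le Q_{\rm LP}(69/100)$ takes over. Both checks are rational-bound verifications of square roots and of $\pi$, in the style used elsewhere in the paper.

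If the endpoint value is not in fact the maximum, the fallback is to bound the supremum at the crossing point $m^*$ directly. I expect this crossing analysis to be the only delicate part.
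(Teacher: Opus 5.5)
Your odd lift, its Fourier data, and the three bounds in Steps 1--2 are exactly the paper's argument. Reducing the numerical claim to bounding $\sup_{0\le m\le 69/100}\min\{K_{\rm Kh}(m),Q_{\rm LP}(m)\}$ with $m=\max(\alpha,|\mu|)$ is also what the paper does.

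The problem is in Steps 3--4. Your key claim is that this minimum is maximized at $m=69/100$, so that $193027/250000$ is just $Q_{\rm LP}(69/100)$ rounded up. That claim is false.
\begin{itemize}
\item At the right endpoint, $Q_{\rm LP}(69/100)\approx 0.7720885$.
\item The increasing $K_{\rm Kh}$ crosses the decreasing branch $m^2+1-m$ at $m^*\approx 0.35132$. The common value there is $\approx 0.772105$, which is strictly larger.
\item Hence the maximum of $\min\{K_{\rm Kh},Q_{\rm LP}\}$ on $[0,69/100]$ is attained at $m^*$. Your ``second fact'', that $Q_{\rm LP}$ on $[m^*,69/100]$ stays below $Q_{\rm LP}(69/100)$, fails at $m^*$ itself.
\item The constant $0.772108$ is dictated by the crossing. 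The slack there is only about $3\cdot 10^{-6}$, compared with about $2\cdot10^{-5}$ at $69/100$.
\end{itemize}
Incidentally, $K_{\rm Kh}(0.69)\approx 0.894$, not $0.786$. This slip does not affect your conclusion that $K_{\rm Kh}$ alone is insufficient.

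So your fallback is not optional; it is the proof, and it needs care because the margin is tiny. Choose a rational split point $b_*$ at which both $K_{\rm Kh}(b_*)$ and $Q_{\rm LP}(b_*)$ are below $193027/250000$; the paper takes $b_*=87831/250000$. The admissible window for $b_*$ has width only about $2\cdot10^{-5}$, so $\pi$ and the square root must be bounded to roughly seven significant digits. Then:
\begin{itemize}
\item $K_{\rm Kh}$ is increasing on $[0,b_*]$;
\item $Q_{\rm LP}(x)=x^2+1-x$ is decreasing on $[b_*,1/2]$;
\item $Q_{\rm LP}$ is convex on $[1/2,69/100]$, and in fact increasing, since $Q_{\rm LP}'(1/2)=0$.
\end{itemize}
It therefore suffices to check $K_{\rm Kh}(b_*)$, $Q_{\rm LP}(b_*)$, $Q_{\rm LP}(1/2)=3/4$ and $Q_{\rm LP}(69/100)$ against the constant. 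This avoids having to locate $m^*$ exactly, which your plan would otherwise require.
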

\begin{proof}
Let $x_0$ be an additional uniform sign, and define
$g(x_0,x)=x_0f(x_0x)$, with $x_0$ multiplying every coordinate of $x$.
This function is odd under simultaneous sign reversal, hence
balanced. Changing variables $y=x_0x$ gives
$\widehat g(0)=\mu$ and
$\widehat g(i)=\widehat f(i)$ for $i\ge1$.
Thus $W_1(g)=W_1(f)+\mu^2$ and
$\alpha(g)=\max(\alpha,|\mu|)$.
Apply Lemma~\ref{ext:yu-bound} to both retained coefficients and
\eqref{ub:Khintchine} to the maximum.

For the numerical constant, set $x=\max(\alpha,|\mu|)$ and
$b_*=87831/250000$.  On $0\le x\le b_*$ we use the increasing
bound $K_{\rm Kh}(x)$.  On $b_*\le x\le1/2$ we use the decreasing
bound $Q_{\rm LP}(x)=x^2+1-x$.  Finally, on
$1/2\le x\le69/100$, the function $Q_{\rm LP}$ is convex because
$Q_{\rm LP}''(x)=-2+3/[2\sqrt{(1-x)/2}]>0$, so its maximum occurs
at an endpoint.  It therefore suffices to check
$K_{\rm Kh}(b_*),Q_{\rm LP}(b_*),Q_{\rm LP}(1/2),
Q_{\rm LP}(69/100)<{\frac{193027}{250000}}$.
The rational and Arb checks give respective margins exceeding
$2.5920\cdot10^{-6}$, $3.4470\cdot10^{-6}$, $0.022108$, and
$1.9511\cdot10^{-5}$. The increasing, decreasing and convex
pieces cover $[0,69/100]$, proving the stated Fourier cap.
\end{proof}

\subsection{A scalar entropy inequality and a quadratic energy bound}

We now combine the Fourier cap with a pointwise lower bound for
$H(|g|)$, where $g=T_\rho f$. The chosen quadratic factor has an
expectation that can be controlled by $W_1$ and the mean, leaving a
one-variable tangent inequality for verification. The following
Fourier truncation also explains the cubic estimate used later:
choosing a larger degree keeps more low-level information before
bounding the remaining levels together.

\begin{lemma}\label{ub:degree-energy}
Let $f:\{-1,1\}^n\to\{-1,1\}$ and $g=T_\rho f$,
where $0\le\rho\le1$. For every integer $d\ge1$,
\[
 \E g^2\le \rho^d\E[fg]+(1-\rho^d)\mu^2
   +\sum_{k=1}^{d-1}(\rho^{2k}-\rho^{d+k})W_k,
\]
and consequently
\[
 \E[(1-|g|)(1+|g|-\rho^d)]
 \ge (1-\rho^d)(1-\mu^2)
   -\sum_{k=1}^{d-1}(\rho^{2k}-\rho^{d+k})W_k.
\]
\end{lemma}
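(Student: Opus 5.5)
Both inequalities are Parseval identities combined with a single coefficient-wise comparison. Write everything in terms of the weights $W_k=W_k(f)$ and use $\widehat g(S)=\rho^{|S|}\widehat f(S)$. This gives
\[
 \E g^2=\mu^2+\sum_{k\ge1}\rho^{2k}W_k,\qquad
 \E[fg]=\mu^2+\sum_{k\ge1}\rho^{k}W_k .
\]
The right-hand side of the first inequality then becomes
\[
 \rho^d\mu^2+\sum_{k\ge1}\rho^{d+k}W_k+(1-\rho^d)\mu^2
 +\sum_{k=1}^{d-1}(\rho^{2k}-\rho^{d+k})W_k
 =\mu^2+\sum_{k=1}^{d-1}\rho^{2k}W_k+\sum_{k\ge d}\rho^{d+k}W_k .
\]
Subtracting $\E g^2$ leaves $\sum_{k\ge d}(\rho^{d+k}-\rho^{2k})W_k$. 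Each term is nonnegative, because $k\ge d$ and $0\le\rho\le1$ give $\rho^{d+k}\ge\rho^{2k}$, and $W_k\ge0$. This proves the first inequality.

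For the second inequality, expand the product:
\[
 (1-|g|)(1+|g|-\rho^d)=1-\rho^d-g^2+\rho^d|g|.
\]
Since $f$ is $\pm1$-valued, $|g|\ge fg$ pointwise, so $\E|g|\ge\E[fg]$. Combining this with the first inequality (used as an upper bound on $\E g^2$) gives
\[
 \E[(1-|g|)(1+|g|-\rho^d)]\ge 1-\rho^d-\E g^2+\rho^d\E[fg]
 \ge 1-\rho^d-(1-\rho^d)\mu^2-\sum_{k=1}^{d-1}(\rho^{2k}-\rho^{d+k})W_k .
\]
The right-hand side is exactly $(1-\rho^d)(1-\mu^2)$ minus the stated sum, as claimed.

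No step is a real obstacle. The only points needing care are the correct sign of the tail comparison $\rho^{d+k}\ge\rho^{2k}$ for $k\ge d$, and the pointwise inequality $|g|\ge fg$. The latter uses Booleanity of $f$, which is why $\E[fg]$ rather than $\E g^2$ appears with the factor $\rho^d$.
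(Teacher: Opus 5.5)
Your proposal is correct and follows the same route as the paper: the levels below $d$ agree exactly, each level $k\ge d$ contributes $(\rho^{d+k}-\rho^{2k})W_k\ge0$, and the second inequality follows by expanding the product and using $\E[fg]\le\E|g|$. You have simply written out the Fourier bookkeeping that the paper's two-line proof leaves implicit.
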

\begin{proof}
In the first inequality, the Fourier coefficients agree through
level $d-1$, while on every level $k\ge d$ one has
$\rho^{2k}\le\rho^{d+k}$.  Next expand the left side of
the second inequality as
$1-\rho^d+\rho^d\E|g|-\E g^2$ and use
$\E[fg]\le\E|g|$.
\end{proof}

Put $g=T_\rho f$. Taking $d=2$ in Lemma~\ref{ub:degree-energy} gives
\begin{equation}\label{ub:quadratic-product}
 \E[(1-|g|)(1+|g|-{\rho^2})]
 \ge(1-{\rho^2})(1-\mu^2)-{\rho^2}(1-\rho)W_1.
\end{equation}
Assume now that $\max(\alpha,|\mu|)\le69/100$, so
$W_1\le{\frac{193027}{250000}}-\mu^2$ by Lemma~\ref{ub:lift}.  If a scalar $c>0$ satisfies
\begin{equation}\label{ub:quadratic-gate}
 H(t)\ge c(1-t)(1+t-{\rho^2})\qquad(0\le t\le1),
\end{equation}
then Lemma~\ref{ub:lift} and \eqref{ub:quadratic-product} imply
\begin{equation}\label{ub:quadratic-budget}
 \begin{split}
 \Phi(\rho)-A_\rho(f)\ge{}&
 c[(1-{\rho^2})-{\rho^2}(1-\rho){\frac{193027}{250000}}]-H(\rho)\\
 &+\Phi(\mu)-c(1-2{\rho^2}+\rho^3)\mu^2.
 \end{split}
\end{equation}

\begin{lemma}\label{ub:tangent}
For $c>0$ and $b<1$, the function
$\chi(t)=-2H(t)/(1-t)+2c(1+t-b)$ is concave on $[0,1)$. If its tangent at some
$t_0\in(0,1)$ is nonpositive at both $0$ and $1$, then
$H(t)\ge c(1-t)(1+t-b)$ on $[0,1]$.
\end{lemma}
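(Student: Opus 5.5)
The plan has two parts: show concavity of $\chi$ on $[0,1)$, then use the tangent line at $t_0$ as a linear upper bound for $\chi$ on $[0,1)$. The claimed inequality is equivalent to $\chi\le0$ on $[0,1)$, since $1-t>0$ there. At $t=1$ it reads $0\ge0$ and holds trivially.

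\textbf{Concavity.} The term $2c(1+t-b)$ is affine, so it suffices to show that $\psi(t)=H(t)/(1-t)$ is convex on $[0,1)$. I would argue this by a series expansion rather than by differentiating directly. Write $H(t)=\log2-\Phi(t)$ and use \eqref{ub:entropy-series} to get
\[
 H(t)=\sum_{k\ge1}\frac{1-t^{2k}}{2k(2k-1)},
\]
which uses $\log2=\sum_{k\ge1}1/[2k(2k-1)]$. Then
\[
 \frac{1-t^{2k}}{1-t}=\sum_{j=0}^{2k-1}t^j,
\]
so $\psi$ is a power series in $t$ with nonnegative coefficients. Such a series is convex on $[0,1)$. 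Termwise differentiation is valid on compact subintervals of $[0,1)$, because the series converges for $|t|<1$: it is dominated by $\sum_k t^{2k}$ up to factors. Hence $\chi=-2\psi+\text{affine}$ is concave on $[0,1)$. Note that $\chi$ extends continuously to $t=1$, since $H(t)/(1-t)\to+\infty$ as $t\to1$; then $\chi(t)\to-\infty$, which is harmless for the bound.

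\textbf{Tangent line bound.} Let $\ell$ be the tangent line to $\chi$ at $t_0\in(0,1)$. Concavity gives $\chi\le\ell$ on $[0,1)$. The hypothesis says $\ell(0)\le0$ and $\ell(1)\le0$. Since $\ell$ is affine, it is nonpositive on all of $[0,1]$. Therefore $\chi(t)\le0$ for $t\in[0,1)$, that is, $H(t)\ge c(1-t)(1+t-b)$. The endpoint $t=1$ holds trivially as noted above.

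The only step needing care is the convexity of $H(t)/(1-t)$, and the series rewriting settles it cleanly; I do not expect any real obstacle. The assumptions $c>0$ and $b<1$ are not needed for the concavity argument. They only ensure that the target inequality is meaningful, with a positive quadratic factor on $[0,1)$.
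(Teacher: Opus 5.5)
Your proof is correct. The tangent-line half is exactly the paper's argument; the difference lies in how you prove that $H(t)/(1-t)$ is convex on $[0,1)$.

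The paper substitutes $p=(1-t)/2$, so that $-2H(t)/(1-t)=-q(p)$ with $q(p)=H_{\rm b}(p)/p$. It computes $q'(p)=\log(1-p)/p^2$ and $q''(p)=[-p/(1-p)-2\log(1-p)]/p^3$. The numerator is nonnegative because it vanishes at $p=0$ and has derivative $(1-2p)/(1-p)^2\ge0$ for $p\le 1/2$.

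You instead write $H(t)=\sum_{k\ge1}(1-t^{2k})/[2k(2k-1)]$ and divide by $1-t$ termwise, which gives a power series with nonnegative coefficients. Your route reuses \eqref{ub:entropy-series}, needs no derivative computation, and proves more: every derivative of $H(t)/(1-t)$ is nonnegative on $[0,1)$. The paper's route has a practical advantage. It yields the closed form $\chi'(t_0)=\log(1-p_0)/(2p_0^2)+2c$, and this is what the tangent checks in Proposition~\ref{ub:lower-interval} evaluate. The same formula gives the value $\chi_\rho'(1/2)=8\log(3/4)+2\kappa_\rho$ in the proof of Lemma~\ref{ub:cubic-endpoints}.

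Two small points of tidiness:
\begin{enumerate}
\item Your justification for termwise differentiation is loose. The clean version is this: after the (Tonelli) rearrangement, the coefficient of $t^j$ is a tail of $\sum_k 1/[2k(2k-1)]=\log2$, hence at most $\log 2$. So the radius of convergence is at least $1$, and the power series is differentiable termwise on $[0,1)$.
\item $\chi$ does not extend continuously to $t=1$ as a real-valued function, since its limit there is $-\infty$. This is harmless, because you settle $t=1$ directly from $H(1)=0$.
\end{enumerate}
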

\begin{proof}
For $p=(1-t)/2$ and $q(p)=H_{\rm b}(p)/p$,
$q''(p)=[-p/(1-p)-2\log(1-p)]/p^3\ge0$: the numerator vanishes at zero and has derivative
$(1-2p)/(1-p)^2\ge0$ for $0<p\le1/2$. Thus $-q((1-t)/2)$
is concave. A concave function lies below each tangent, and a
linear function nonpositive at both endpoints is nonpositive
throughout. Multiplication by $(1-t)/2$ proves the assertion for
$t<1$; continuity gives $t=1$. For evaluation one may use
$\chi'(t_0)=\log(1-p_0)/(2p_0^2)+2c$, where $p_0=(1-t_0)/2$.
\end{proof}

\subsection{Completing the low-correlation bound}

An entropy bound whose reciprocal coefficient is affine in $\rho^2$
can be interpolated from two correlations. This removes the need to
partition the central range. A decreasing energy remainder and a
positive mean correction then give the information bound; entropy
contraction covers all smaller correlations.

\begin{proposition}\label{ub:lower-interval}
Theorem~\ref{jp:main} holds on $0\le\rho\le457/500$.
\end{proposition}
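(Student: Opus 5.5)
By Lemma~\ref{lem:compression} we may assume $f$ is increasing; constants are trivial. By Corollary~\ref{ub:information-contraction} we may assume $|\mu|<61/100$. Write $\alpha=\widehat f(1)$.

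The plan has three regimes.

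\emph{Low correlation.} Since contraction gives $A_\rho(f)\le(\rho/\rho_0)^2A_{\rho_0}(f)$, it suffices to prove $A_{3/5}(f)\le(3/5)^2/2$ for every admissible source. Then $A_\rho(f)\le\rho^2/2\le\Phi(\rho)$ on $[0,3/5]$.

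\emph{Dictator-like sources.} For $\alpha\ge69/100$ I use Theorem~\ref{ub:local}. I check $M_{457/500}(69/100)\ge0$ numerically. Lemma~\ref{ub:local-propagation} then gives $M_\rho(69/100)\ge0$ for all $\rho\le457/500$. Strict concavity of $M_\rho$ with $M_\rho(1)=0$ extends this to $M_\rho(a)\ge0$ for every $a\ge69/100$. This settles $\alpha\ge69/100$ on the whole interval.

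For the low-correlation step I need a slightly stronger statement at $\rho=3/5$, namely $A_{3/5}(f)\le 9/50$ rather than $\le\Phi(3/5)$. I get it from the first inequality of Theorem~\ref{ub:local} together with the convex mean correction $B_{\rho,a}\ge0$. That yields $A_{3/5}\le \Phi(3/5)-M_{3/5}(\alpha)$, so I need $M_{3/5}(69/100)\ge \Phi(3/5)-9/50$. This is a single scalar check. If it fails, I fall back on the quadratic route below at $\rho=3/5$, where the slack is large.

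\emph{Spread sources.} The remaining case is $\max(\alpha,|\mu|)\le69/100$. Here I use \eqref{ub:quadratic-budget}. It suffices to find, for each $\rho\in[3/5,457/500]$, a constant $c=c(\rho)$ that meets two requirements. First, $c$ must satisfy the gate \eqref{ub:quadratic-gate}, which I certify through Lemma~\ref{ub:tangent} by choosing a tangent point $t_0$ and checking the tangent's values at $0$ and $1$. Second, the right side of \eqref{ub:quadratic-budget} must be nonnegative. For the mean terms I use $\Phi(\mu)\ge\mu^2/2$. This handles the mean provided $c(1-2\rho^2+\rho^3)\le1/2$, which I expect to hold because $1-2\rho^2+\rho^3$ is small on this range. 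The same argument at $\rho=3/5$, with the target $9/50$ in place of $\Phi(3/5)$, completes the low-correlation step.

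To avoid a partition of the interval, I take $1/c$ affine in $\rho^2$, so $c$ is determined by its values at the two endpoints $3/5$ and $457/500$. I then need two things:
\begin{itemize}
\item the gate inequality, which I rewrite as $(1-t)(1+t-\rho^2)/H(t)\le 1/c$; both sides are affine in $\rho^2$ for fixed $t$, so checking it at the two endpoints suffices;
\item the budget term, which becomes
\[
\frac{(1-\rho^2)-\rho^2(1-\rho)\kappa}{1/c}-H(\rho),\qquad \kappa=\tfrac{193027}{250000}.
\]
\end{itemize}

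\emph{Main obstacle.} The hardest step is showing that this budget term stays nonnegative throughout the interval, not just at the endpoints. I would expand $H(\rho)$ through the entropy series and write the difference as a series in $\rho$ with a decreasing remainder. Truncating after a few terms leaves a polynomial inequality, which I would verify by exact rational arithmetic.

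The margin is expected to be tightest near $457/500$, where the Fourier cap $\kappa$ was chosen to be barely sufficient. If a single affine choice fails there, I would split $[3/5,457/500]$ into two pieces and repeat the interpolation on each.
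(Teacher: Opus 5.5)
Your overall plan matches the paper's proof. You use compression, then Corollary~\ref{ub:information-contraction} for $|\mu|\ge 61/100$. For $\alpha\ge 69/100$ you check the local deficit $M_{457/500}(69/100)>0$ and propagate it downward by Lemma~\ref{ub:local-propagation}. For the remaining sources you use the budget \eqref{ub:quadratic-budget} with $1/c$ affine in $\rho^2$ (the paper takes $c=57/(76-49\rho^2)$), tangent checks at the two endpoints, an energy-remainder inequality on the whole interval (Lemma~\ref{app:central-energy}), and finally $A_{3/5}(f)<9/50$ plus contraction below $3/5$.

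The paragraph where you try to obtain $A_{3/5}(f)\le 9/50$ for dictator-like sources is wrong and should be deleted, because that bound is false there. A dictator has $A_{3/5}=\Phi(3/5)\ge 9/50+(3/5)^4/12$. For the same reason, the check $M_{3/5}(69/100)\ge\Phi(3/5)-9/50$ cannot be extended by concavity to all $a\ge 69/100$, since $M_{3/5}(1)=0$. The quadratic fallback is also unavailable there, because the Fourier cap of Lemma~\ref{ub:lift} needs $\alpha\le 69/100$. Deleting the paragraph costs nothing. Your local-propagation paragraph already covers $\alpha\ge 69/100$ at every $\rho\le 457/500$, so the contraction step at $3/5$ is needed only for spread sources, exactly as in the paper.

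Two smaller remarks. First, the mean coefficient $c(1-2\rho^2+\rho^3)$ is not small at the left end: it equals $3534/7295\approx 0.484$ at $\rho=3/5$. Its bound by $1/2$ therefore needs an argument; the paper shows the coefficient decreases in $\rho$. Second, for the energy remainder, the paper proves the difference is decreasing via its derivative series and checks only the endpoint $457/500$, where the margin is about $2\cdot 10^{-5}$. Truncating the $\Phi$-series directly does give a valid lower bound, since those terms enter with a positive coefficient. However, it would need roughly 25 terms to beat that margin, followed by a positivity certificate for a high-degree polynomial on the whole interval.
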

\begin{proof}
For $3/5\le\rho\le457/500$, use
$c=57/(76-49\rho^2)$ in
\eqref{ub:quadratic-gate}. After multiplication by $1/c$, this
inequality is affine in $\rho^2$, so it suffices to check the two
correlation endpoints. Lemma~\ref{ub:tangent} does so with
$t_0=39/100$ at $\rho=3/5$ and $t_0=829/1250$ at
$\rho=457/500$. All four tangent values are less than $-1/2000$,
as checked by the explicit logarithm series in the appendix.

Lemma~\ref{app:central-energy} makes the constant term in
\eqref{ub:quadratic-budget} positive. The coefficient
$c(1-2\rho^2+\rho^3)$ decreases with $\rho$: its derivative
has the sign of
$-206+228\rho-49\rho^3$, which increases to a
negative value at $\rho=1$. At $\rho=3/5$ the coefficient is
$3534/7295<49/100$. Thus the mean correction in
\eqref{ub:quadratic-budget} is at least $\mu^2/100$.
This proves CK throughout the central interval when
$\alpha\le69/100$ and $|\mu|\le61/100$.

At $\rho=3/5$ the same energy bound gives
$A_{3/5}(f)<9/50=(3/5)^2/2$, using $c=1425/1459$ and
$\log2<347/500$.
Corollary~\ref{ub:information-contraction} therefore proves CK
for these sources at every smaller correlation.

For the remaining sources, the same corollary covers
$|\mu|\ge61/100$ at every correlation. The fixed comparison
$M_{457/500}(69/100)>0.0036195$, together with
Lemma~\ref{ub:local-propagation} and Theorem~\ref{ub:local},
covers $\alpha\ge69/100$ on the whole interval.
Thus all sources are covered.
\end{proof}

\section{Scalar entropy inequalities and profile clocks}\label{sec:profile-clocks}

The low-correlation argument retained total Fourier weights. We now
keep the individual singleton coefficients, which record how the
first-level dependence is distributed among coordinates. The bridge
to entropy is an inequality on one edge: its production can be bounded
below using only its average entropy and the half difference of its
endpoint values. We encode that bound in a scalar function $B_e(q)$.
Convexity will allow us to average edges in each coordinate direction,
and then to integrate the resulting production bound.

\subsection{The scalar entropy profile}

Consider first an edge with posterior values $-t,t$, where $0<t<1$.
Its average entropy is $H(t)$, its half difference is $t$, and its
entropy production is $t\atanh t$. Writing $v=\atanh t$, we denote the ratio of entropy to edge size
by $F(v)$ and use its inverse to define the production profile:
for $v,e,q>0$, put
\begin{equation}\label{eq:entropy-profile}
 F(v)=\frac{H(\tanh v)}{\tanh v},\qquad B_e(q)=qF^{-1}(e/q).
\end{equation}
The next lemma shows that $F$ decreases bijectively from
$(0,\infty)$ onto $(0,\infty)$, so the inverse is well defined.
Equivalently, choose the unique $t\in(0,1)$ satisfying $et=qH(t)$
and set $B_e(q)=q\atanh t$, with $B_e(0)=0$ by continuity.
For an actual centered edge this recovers its production exactly;
Lemma~\ref{lem:edge-entropy-profile} proves that it is a lower bound
for every edge with the prescribed entropy and half difference.
The convexity properties established first are what allow that
pointwise inequality to survive averaging.

\begin{lemma}[Profile convexity]\label{as:profile-convex}
The function $F$ is strictly decreasing and strictly convex on
$(0,\infty)$, while $v\mapsto vF(v)$ is strictly decreasing and
strictly log-concave. Consequently $1/[vF(v)^r]$ is convex for
$0\le r\le1$. Also $-(\log F)'(v)>1$ everywhere,
and this derivative exceeds $7/4$ for $v\ge4$.
\end{lemma}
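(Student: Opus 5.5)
The plan is to work from a closed form for $F$. With $t=\tanh v$ we have $(1\pm t)/2=e^{\pm v}/(2\cosh v)$. Substituting into the binary entropy gives
\[
 H(\tanh v)=\log(2\cosh v)-v\tanh v,\qquad
 F(v)=\frac{\log(2\cosh v)}{\tanh v}-v .
\]
Alternatively, in the variable $p=(1-t)/2=1/(1+e^{2v})$ one has $F=H_{\rm b}(p)/(1-2p)$. All claims then become one-variable inequalities, and I would handle each either through the chain rule in $t$ or through power series in $v$ near $0$ and in $e^{-2v}$ near $\infty$.

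\emph{Monotonicity and the bound $-(\log F)'>1$.} Use $H'(t)=-\atanh t=-v$ and $dt/dv=1-t^2$. These give
\[
 -F'(v)=\frac{(1-t^2)(vt+H(t))}{t^2}>0,\qquad
 -(\log F)'(v)=\frac{(1-t^2)(vt+H(t))}{tH(t)} .
\]
So $F$ is strictly decreasing. Since $F\to\infty$ as $v\downarrow0$ and $F\to0$ as $v\to\infty$, $F$ is also a bijection onto $(0,\infty)$. The inequality $-(\log F)'>1$ is equivalent to $(1-t^2)vt>H(t)\,(t-1+t^2)$. For $t^2+t\le1$ this is trivial. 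Otherwise I pass to $p$, where $1-t^2=4p(1-p)$ and $v=\tfrac12\log((1-p)/p)$, and compare with the bound $H_{\rm b}(p)\le p[1+\log(1/p)]$ used earlier. Near $v=\infty$ the two sides behave like $4ve^{-2v}$ versus $(2v+1)e^{-2v}$, which indicates the slack. The threshold $7/4$ for $v\ge4$ follows from the same expression, using $e^{-8}$-size error terms: asymptotically the ratio is $4v/(2v+1)+O(1/v)$, which already exceeds $7/4$ once $v\ge4$. I would certify this by monotone bounds in $p\le e^{-8}$.

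\emph{Convexity and log-concavity.} For strict convexity of $F$, I would differentiate $F=\log(2\cosh v)\coth v-v$ twice. The result is
\[
 F''=\frac{2\cosh v}{\sinh^3 v}\log(2\cosh v)-\frac{2}{\sinh^2 v}\cdot\frac{\sinh v}{\cosh v}\cdot\frac{\cosh v}{\sinh v}\cdot\tfrac12\cdot 2+\dots,
\]
and I would reduce $F''>0$ to positivity of an explicit combination of $\log(2\cosh v)$, $v$, $\tanh v$ and $\operatorname{sech}^2v$. Alternatively, and probably more cleanly, I would use the substitution of Lemma~\ref{ub:tangent}: $q(p)=H_{\rm b}(p)/p$ is convex, and $F$ is $q$ times a positive factor composed with the increasing map $v\mapsto p$. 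This may still need a separate check.

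For $vF$, strict log-concavity means $-1/v^2+(\log F)''<0$. I would prove $(\log F)''\le0$ from the explicit derivative $-(\log F)'=(1-t^2)(vt+H)/(tH)$ by showing that this ratio increases in $v$. The remaining $-1/v^2$ then gives strictness. Strict decrease of $vF$ requires $-(\log F)'>1/v$, which is not implied by the bound $>1$ when $v<1$. Near $0$ I would use the series
\[
 vF(v)=\log2+\Bigl(\tfrac12-\log2\Bigr)v^2+O(v^4),
\]
which decreases since $\log2>1/2$. For $v\ge1$ the bound $-(\log F)'>1\ge1/v$ suffices, and on $(0,1)$ I would show that $v\,(1-t^2)(vt+H)/(tH)>1$ holds by series.

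\emph{Consequence for $1/[vF(v)^r]$.} Write $\varphi=\log v+r\log F=(1-r)\log v+r\log(vF)$. This is a convex combination of concave functions, so $\varphi$ is concave, and therefore $e^{-\varphi}$ has second derivative $(\varphi'^2-\varphi'')e^{-\varphi}\ge0$.

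The main obstacle is $(\log F)''\le0$, equivalently the monotonicity of $(1-t^2)(vt+H)/(tH)$, uniformly on all of $(0,\infty)$. Here neither endpoint expansion suffices. I would first attempt it in the variable $p$, reducing to a sign condition on an explicit function of $p$ and $\log(1/p)$, as in the proof of Lemma~\ref{ub:tangent}, where the numerator vanishes at an endpoint and has a derivative of constant sign. If that fails, I would split at $v=1$, use series on each side, and add an interval check in the middle.
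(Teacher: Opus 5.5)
\textbf{The log-concavity step reduces to a false claim.} You propose to prove $(\log F)''\le0$, i.e.\ that $-(\log F)'=(1-t^2)(vt+H)/(tH)$ increases in $v$, with the term $-1/v^2$ only supplying strictness. But $F(v)\sim(\log2)/v$ as $v\downarrow0$, so $-(\log F)'\sim1/v\to\infty$ there. As $v\to\infty$ it tends to $2$ from below, since its leading behaviour is $4v/(2v+1)$. So $-(\log F)'$ first decreases (numerically to about $1.62$ near $v\approx1.5$) and only then increases. Thus $F$ is strictly log-convex on an initial interval, and the $-1/v^2$ coming from $\log v$ is exactly what cancels this. No splitting, series, or interval check can establish your intermediate claim.

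The missing idea is a factorization that isolates the singularity at $0$. Put $B=\log(2\cosh v)$ and $\mathcal H(v)=\cosh v\,H(\tanh v)=B\cosh v-v\sinh v$. Then $F=\mathcal H/\sinh v$, $\mathcal H''=\mathcal H-\operatorname{sech}v$, and $\mathcal H\mathcal H''-(\mathcal H')^2=B^2-v^2-H(\tanh v)$. The $v$-derivative of the right side is $2tH(t)-(1-t^2)\atanh t$. This vanishes at $t=0,1$ and is strictly concave in $t$, hence positive. Since the right side tends to $0$ as $v\to\infty$, it is negative, so $\mathcal H$ is strictly log-concave. Then $vF=(v/\sinh v)\,\mathcal H$ is a product of strictly log-concave even functions. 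Evenness at $0$ gives strict decrease of $vF$ directly, replacing your series argument on $(0,1)$. The same factorization gives $\mathcal H'<0$, hence $-(\log F)'=\coth v-\mathcal H'/\mathcal H>1$ in one line.

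\emph{Smaller points.}
\begin{itemize}
\item Your convexity computation for $F$ is unfinished, and its displayed second term is wrong. It becomes immediate once you use $vt+H(t)=B$, which your closed form already contains: $F'=-B/\sinh^2v$ and $F''=(1-t^2)(2B-t^2)/t^3>0$, because $2B\ge2\log2>1>t^2$.
\item The map $v\mapsto p$ is decreasing, not increasing.
\item The $v^2$-coefficient of $vF$ is $\tfrac13\log2-\tfrac12$, although its sign is as you said.
\item For the constant $7/4$, the error relative to $4v/(2v+1)$ is exponentially small, not $O(1/v)$. An $O(1/v)$ error would not fit in the margin $16/9-7/4=1/36$ at $v=4$. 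The clean route is $-(\log F)'=B/[\sinh v\cosh v\,H(\tanh v)]$ with $B>v$ and $H(\tanh v)\le(1+2v)e^{-2v}$, which gives $-(\log F)'>4v/(2v+1)\ge16/9$ for $v\ge4$.
\end{itemize}
Your $p$-variable argument for $-(\log F)'>1$ via $H_{\rm b}(p)\le p[1+\log(1/p)]$ does go through. Your deduction of the convexity of $1/[vF(v)^r]$ is the same as the paper's.
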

\begin{proof}
Put $t=\tanh v$ and $B(v)=\log(2\cosh v)$. Direct differentiation gives
\[
 F'=-\frac{(1-t^2)B}{t^2}<0,\qquad
 F''=\frac{(1-t^2)(2B-t^2)}{t^3}>0,
\]
since $B\ge\log2>1/2$. The endpoint limits of $F$ are $+\infty$ and $0$.
Now set
$\mathcal H(v)=\cosh v\,H(\tanh v)=B(v)\cosh v-v\sinh v$.
Then $\mathcal H'=B\sinh v-v\cosh v$ and
$\mathcal H''=\mathcal H-\operatorname{sech}v$, so
$\mathcal H\mathcal H''-(\mathcal H')^2=B^2-v^2-H(\tanh v)$.
The derivative of the right side is
$2tH(t)-(1-t^2)\atanh t$ at $t=\tanh v$.
As a function of $t\in[0,1]$, this has zero endpoint values and
second derivative $-2\atanh t<0$ in the interior. It is therefore
positive. Since $B^2-v^2-H(\tanh v)$ tends to zero as $v\to\infty$,
it is negative everywhere, proving strict log-concavity of $\mathcal H$.
Its even extension has derivative zero at zero, so $\mathcal H'<0$.
Since $F=\mathcal H/\sinh v$, this also gives
$-F'/F=\coth v-\mathcal H'/\mathcal H>1$.
Now $vF(v)=(v/\sinh v)\mathcal H(v)$, and
$(\log(v/\sinh v))''=-1/v^2+1/\sinh^2v<0$.
The product is even at zero with derivative zero, hence strictly
decreasing on $(0,\infty)$.
Finally $1/[vF(v)^r]=v^{r-1}[vF(v)]^{-r}$ is log-convex.

For the derivative bound use
$-F'/F=B/[\sinh v\cosh v\,H(\tanh v)]$, $B>v$, and the identity
$H(\tanh v)=\log(1+e^{-2v})+2v/(e^{2v}+1)
\le e^{-2v}(1+2v)$.
This gives $-F'/F>4v/[(1-e^{-4v})(1+2v)]>4v/(1+2v)$.
The bounds $x/(1+x)\le\log(1+x)\le x$ also give
\begin{equation}\label{rt:scaled-profile}
 2v+1\le e^{2v}F(v)
 \le\frac{1+e^{-2v}}{1-e^{-2v}}(2v+1).\qedhere
\end{equation}
\end{proof}

\begin{lemma}\label{u963:scalar}
The function $B_e(q)$ increases with $q$ and decreases with $e$.
For each $e>0$, the map $v\mapsto B_e(\sqrt v)$ is increasing and
strictly concave on $[0,\infty)$. The map
$(e,q)\mapsto B_e(|q|)$ is jointly convex on $(0,\infty)\times\R$.
Moreover, $B_e(cq)=cB_{e/c}(q)$ for $c>0$.
\end{lemma}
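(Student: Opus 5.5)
The plan is to work with the inverse $G=F^{-1}\colon(0,\infty)\to(0,\infty)$, so that $B_e(q)=q\,G(e/q)$ for $q>0$. By Lemma~\ref{as:profile-convex}, $F$ is a strictly decreasing, strictly convex bijection of $(0,\infty)$. Hence $G$ is also strictly decreasing and convex: the inverse of a decreasing convex function is decreasing and convex, as one sees from $G'=1/F'(G)$ and $G''=-F''(G)G'/F'(G)^2>0$.

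\emph{Monotonicity and scaling.} Since $G$ is decreasing, $B_e(q)$ decreases in $e$. For $q$ we compute
\[
\partial_qB_e(q)=G(s)-sG'(s),\qquad s=e/q,
\]
which is positive because $G>0$ and $G'<0$. The identity $B_e(cq)=cq\,G(e/(cq))=cB_{e/c}(q)$ is immediate.

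For continuity at $q=0$, note that $s=e/q\to\infty$ forces $v=G(s)\to0$. With $t=\tanh v$ this gives $qv=e\,v/F(v)=e\,vt/H(t)\to0$. This confirms $B_e(0)=0$.

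\emph{Joint convexity.} On $q>0$, $(e,q)\mapsto qG(e/q)$ is the perspective of the convex function $G$, hence jointly convex. It extends continuously by $0$ to $q=0$, so it is convex on $(0,\infty)\times[0,\infty)$. It is also nondecreasing in its second argument. Composing with $(e,q)\mapsto(e,|q|)$, whose second component is convex, therefore gives joint convexity of $(e,q)\mapsto B_e(|q|)$ on $(0,\infty)\times\R$.

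\emph{Concavity in $q^2$.} I would parametrize by $x=G(e/q)\in(0,\infty)$. Then $q=e/F(x)$, $v:=q^2=e^2/F(x)^2$ and $B=e\,x/F(x)$. As $x$ increases, $v$ increases from $0$ to $\infty$. Differentiating,
\[
\frac{dB}{dv}=\frac{e(F-xF')/F^2}{-2e^2F'/F^3}
=\frac1{2e}\Bigl(xF(x)+\frac{F(x)^2}{-F'(x)}\Bigr)>0,
\]
so $B_e(\sqrt v)$ is increasing in $v$. Since $x$ increases with $v$, strict concavity amounts to showing that this bracket is strictly decreasing in $x$. The term $xF(x)$ is strictly decreasing by Lemma~\ref{as:profile-convex}. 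The term $F^2/(-F')=1/(1/F)'$ is nonincreasing exactly when $1/F$ is convex, that is, when $FF''\le2F'^2$.

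Using the formulas from the proof of Lemma~\ref{as:profile-convex}, with $t=\tanh x$ and $B=\log(2\cosh x)$, this condition reduces to the scalar inequality
\[
H(t)\,(2B-t^2)\le2(1-t^2)B^2 .
\]
Proving this inequality is the main obstacle. I would first express everything in $x$ via $H(\tanh x)=B-x\tanh x$. I would then try to compare both sides by means of the log-concavity of $\mathcal H=\cosh x\,H(\tanh x)$, in the form $B^2-x^2-H(\tanh x)<0$ already established. Failing that, I would compare power series near $x=0$ and use the asymptotics \eqref{rt:scaled-profile} for large $x$, with a monotonicity argument in between.

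Finally, continuity extends concavity to the endpoint $v=0$.
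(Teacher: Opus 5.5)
Your arguments for monotonicity, scaling and joint convexity are correct and coincide with the paper's. That covers the perspective of the convex inverse $G$, the continuous extension to $q=0$, and the composition with $|q|$ using monotonicity in $q$. Your derivative formula $dB/dv=\frac1{2e}\bigl(xF(x)+F(x)^2/(-F'(x))\bigr)$ is also exactly the paper's. The gap is the step you flag yourself. Strict concavity of $v\mapsto B_e(\sqrt v)$ requires the bracket to be strictly decreasing in $x$. For the term $F^2/(-F')$ you reduce this to the scalar inequality $H(t)(2B-t^2)\le2(1-t^2)B^2$, i.e.\ convexity of $1/F$, and then leave it unproved. The series-plus-asymptotics route is only a plan, so the concavity assertion of the lemma is not yet established.

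The missing idea is to factor the bracket instead of treating its two terms separately. Set $\mathcal K(x)=-xF'(x)/F(x)$. Then $F^2/(-F')=xF/\mathcal K$, so the bracket equals $xF(x)\bigl(1+1/\mathcal K(x)\bigr)$. Put $j=\log(xF(x))$; Lemma~\ref{as:profile-convex} gives $j'<0$ and $j''<0$. Since $\mathcal K=1-xj'$, it follows that $\mathcal K>1$ and $\mathcal K'=-j'-xj''>0$. Thus $xF$ and $1+1/\mathcal K$ are both positive and strictly decreasing, and so is their product. This proves strict concavity for $v>0$, and continuity handles $v=0$. As a by-product, $xF/\mathcal K$ is strictly decreasing, so your scalar inequality holds with strict sign. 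No separate verification of it is needed: it is already contained in the decrease and log-concavity of $xF$.
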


\begin{proof}
The defining formula gives monotonicity and scaling.
Put
\begin{equation}\label{eq:profile-elasticity}
 \mathcal K(v)=-\frac{vF'(v)}{F(v)}.
\end{equation}
If $j(v)=\log(vF(v))$, Lemma~\ref{as:profile-convex} gives
$j',j''<0$, hence $\mathcal K=1-vj'>1$ and
$\mathcal K'=-j'-vj''>0$.
For fixed $e$, parametrize $q=e/F(v)$ and put $x=q^2$.
Then
\[
 \frac{d}{dx}B_e(\sqrt x)
 =\frac{vF(v)}{2e}\left(1+\frac1{\mathcal K(v)}\right).
\]
Both factors are positive and strictly decreasing in $v$, while
$x$ increases. This proves strict concavity for $x>0$.
The expansion $B_e(\sqrt x)=(\log2/e)x+O(x^2)$ extends it to zero.

Since $F$ is decreasing and convex, its inverse is convex.
Thus $B_e(q)=qF^{-1}(e/q)$ is jointly convex for $q>0$:
if $q=\theta q_1+(1-\theta)q_2$, write
$[\theta e_1+(1-\theta)e_2]/q$ as the convex combination of
$e_1/q_1,e_2/q_2$ with weights $\theta q_1/q,(1-\theta)q_2/q$,
and apply convexity of $F^{-1}$. This is the perspective argument.
Continuity extends it to $q=0$, and monotonicity in $q$ gives
convexity of $B_e(|q|)$.
\end{proof}

\subsection{An edge inequality and entropy production}

We now turn the scalar profile into a lower bound for the rate at
which information increases with correlation.

\begin{proposition}\label{prop:profile-production}
For every nonconstant $f:\{-1,1\}^n\to\{-1,1\}$ and $0<\rho<1$,
\[
 \rho A'_\rho(f)\ge\sum_i B_{E_\rho(f)}(\rho |\widehat f(i)|).
\]
\end{proposition}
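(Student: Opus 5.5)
We combine the derivative identity \eqref{eq:gap-derivative} with an edge-by-edge lower bound for entropy production, then average using the joint convexity in Lemma~\ref{u963:scalar}. Since $A_\rho(f)=H(\mu)-E_\rho(f)$, \eqref{eq:gap-derivative} gives
\[
 \rho A'_\rho(f)=D(g)=\sum_i\mathcal D_i(g),\qquad g=T_\rho f,
\]
with $|g|<1$ because $f$ is nonconstant. It therefore suffices to show, for each coordinate $i$ and with $E_\rho(f)=\E H(g)$,
\[
 \mathcal D_i(g)\ge B_{\E H(g)}(\rho|\widehat f(i)|).
\]

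\textbf{Edge inequality (the deferred Lemma~\ref{lem:edge-entropy-profile}).} Take a single edge with values $g_\pm$. Put $m=(g_++g_-)/2$ and $q=|g_+-g_-|/2$. Its average entropy is $e=[H(g_+)+H(g_-)]/2$, and its production is $\pi=(g_+-g_-)(\atanh g_+-\atanh g_-)/4$. The claim is $\pi\ge B_e(q)$. Since $B_e(q)$ decreases in $e$, a lower bound is equivalent to the statement that the centered edge ($m=0$) minimizes production among edges with given $q$ and given entropy. Equivalently, fix $q$ and production, and show entropy is maximized at $m=0$.

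I would prove this as follows. The centered edge with the same half difference $q$ has values $\pm q$, entropy $H(q)\ge e$ by concavity of $H$, and production $q\atanh q$. So it suffices to compare through the profile. Let $t$ be defined by $et=qH(t)$. I need $\pi\ge q\atanh t$, that is,
\[
 \frac{\atanh g_+-\atanh g_-}{2}\ge \atanh t .
\]
Parametrize by $v_\pm=\atanh g_\pm$ and $s=(v_+-v_-)/2$. The inequality reduces to showing $e\le qF(s)$, i.e.
\[
 \frac{H(\tanh v_+)+H(\tanh v_-)}{2}\le\frac{\tanh v_+-\tanh v_-}{2}\cdot\frac{H(\tanh s)}{\tanh s}.
\]
Here $F$ is decreasing, so $e\le qF(s)$ gives $F^{-1}(e/q)\ge s$. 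I expect this two-variable scalar inequality, shifting $v_\pm=c\pm s$ and showing the ratio of the two sides is maximized at $c=0$, to be the main obstacle. I would try to use $\mathcal H(v)=\cosh v\,H(\tanh v)$ and its log-concavity from Lemma~\ref{as:profile-convex}, together with the identity $\tanh v_+-\tanh v_-=\sinh(2s)/(\cosh v_+\cosh v_-)$. The inequality then reads
\[
 \frac{\mathcal H(v_+)\cosh v_-+\mathcal H(v_-)\cosh v_+}{2}\le\frac{\sinh 2s}{2}\cdot\frac{\mathcal H(s)}{\sinh s}=\mathcal H(s)\cosh s.
\]
At $c=0$ this is an equality. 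I would then show the left side is maximized at $c=0$, using evenness in $c$ and the log-concavity of $\mathcal H$ (and of $\cosh$ against it), via a derivative-in-$c$ sign argument.

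\textbf{Averaging.} Within direction $i$, the pair $(e_x,q_x)$ varies over fibers $x_{-i}$. By joint convexity of $(e,q)\mapsto B_e(|q|)$ and Jensen,
\[
 \mathcal D_i(g)\ge\E_{x_{-i}}B_{e_x}(|q_x|)\ge B_{\E e_x}(\E|q_x|).
\]
Averaging fiber entropies gives $\E e_x=\E H(g)=E_\rho(f)$. Also $\E|q_x|\ge|\E q_x|=|\E\partial_i g|=\rho|\widehat f(i)|$, since $\widehat{g}(i)=\rho\widehat f(i)$. Monotonicity of $B_e$ in $q$ then finishes the per-coordinate bound, and summing over $i$ gives the proposition.
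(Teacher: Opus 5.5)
Your outer structure matches the paper: $\rho A'_\rho(f)=D(g)$, a lower bound on each edge, then Jensen with the joint convexity of $(e,q)\mapsto B_e(|q|)$, using $\E e_x=E_\rho(f)$ and $\E q_x=\rho\widehat f(i)$. Going through $\E|q_x|\ge|\E q_x|$ and monotonicity in $q$ is equivalent to the paper's use of the even extension.

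The gap is in the edge inequality, which carries all the content, and there your reduction runs the wrong way. With $v_\pm=c\pm s$ the production is $qs$, so you need $s\ge F^{-1}(e/q)$. Since $F$ is decreasing, this is equivalent to $e\ge qF(s)$, not $e\le qF(s)$. Your own next sentence exposes the slip: from $e\le qF(s)$ you correctly deduce $F^{-1}(e/q)\ge s$, that is, $B_e(q)\ge qs$, which is the reverse of the claim.

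Worse, the inequality you target is false for every off-centre edge. Fix $c\ne0$ and let $s\downarrow0$. With $m=\tanh c$ one gets $e\to H(m)$ and $qF(s)\to(1-m^2)\log2$, and $H(m)>(1-m^2)\log2$ by the entropy series \eqref{ub:entropy-series}. Similarly, for fixed $s$ the left side of your $\mathcal H$-form grows linearly in $c$, because $\mathcal H(v)\sim(2v+1)e^{-v}/2$. So $c=0$ is a minimum, not a maximum, and no log-concavity argument can show that the left side is maximized there.

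What must be proved is $e\ge qF(s)$, with equality at $c=0$. The paper holds fixed the midpoint $m=(g_++g_-)/2$ of the values, rather than the midpoint $c$ of the $\atanh$ values, and varies $q$. Put $b=\tanh s$ and $\ell=q/b\in(0,1]$. The addition formula gives $\ell+q^2/\ell=1-m^2+q^2$, hence $\partial_q\ell=-2b(1-\ell)/(1-b^2)\le0$. Together with $\partial_qe=-s$ and $\partial_q(\ell H(b))=\partial_q\ell\,\log(2\cosh s)-s$, this shows that $e-\ell H(b)$ is nondecreasing in $q$. Its limit as $q\downarrow0$ is $H(m)-(1-m^2)\log2\ge0$. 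Hence $e\ge\ell H(b)=qF(s)$, and therefore $qs=B_{\ell H(b)}(q)\ge B_e(q)$.

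Your preliminary observation $H(q)\ge e$ cannot replace this step. It only gives $B_e(q)\ge q\atanh q$, and the production also exceeds $q\atanh q$, so the two sides are never compared with each other.
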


The proof reduces to the following inequality on a single cube edge.

\begin{lemma}\label{lem:edge-entropy-profile}
For $x_-,x_+\in(-1,1)$, put $e=[H(x_+)+H(x_-)]/2$ and
$d=(x_+-x_-)/2$. Then
\[
 \frac{(x_+-x_-)(\atanh x_+-\atanh x_-)}4
 \ge B_e(|d|).
\]
\end{lemma}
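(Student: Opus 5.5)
The plan is to reduce the edge inequality to a one-parameter family of edges with fixed ``logit half-width'' and show that the centered edge is extremal. First I would unwind the definition of $B_e$. Write $P$ for the left side, and assume $d>0$ by symmetry (the case $d=0$ is trivial since $B_e(0)=0$). Put $w=(\atanh x_+-\atanh x_-)/2>0$, so that $P=dw$. Since $B_e(d)=dF^{-1}(e/d)$ and $F$ is strictly decreasing (Lemma~\ref{as:profile-convex}), the claim $P\ge B_e(d)$ is equivalent to $F^{-1}(e/d)\le w$, that is, to
\[
 e\ge d\,F(w)=d\,\frac{H(\tanh w)}{\tanh w}.
\]
For a centered edge $x_\pm=\pm t$ we have $w=\atanh t$ and $d=t$, so equality holds, consistent with the remark after \eqref{eq:entropy-profile}.

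Next I would parametrize in logit coordinates: $\atanh x_\pm=c\pm w$, with $c\in\R$ the center. The identity $\tanh a-\tanh b=\sinh(a-b)/(\cosh a\cosh b)$ gives
\[
 d=\frac{\sinh 2w}{2\cosh(c+w)\cosh(c-w)}, \qquad \frac{d}{\tanh w}=\frac{\cosh^2 w}{\cosh(c+w)\cosh(c-w)}.
\]
Introduce $\mathcal H(v)=\cosh v\,H(\tanh v)=B(v)\cosh v-v\sinh v$, with $B(v)=\log(2\cosh v)$, as in the proof of Lemma~\ref{as:profile-convex}. Multiplying through by $\cosh(c+w)\cosh(c-w)$, the target becomes
\[
 \Psi_w(c):=\tfrac12\bigl[\cosh(c-w)\,\mathcal H(c+w)+\cosh(c+w)\,\mathcal H(c-w)\bigr]\ \ge\ \cosh w\,\mathcal H(w)=\Psi_w(0),
\]
using that $\mathcal H$ is even. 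The function $\Psi_w$ is even in $c$, so it suffices to show $\Psi_w'(c)\ge0$ for $c>0$, or alternatively that $\Psi_w$ is convex in $c$.

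The main obstacle is this monotonicity in $c$. The first approach I would try is a direct expansion. Using $\mathcal H=B\cosh-v\sinh$ together with the product-to-sum formulas, the terms $\cosh(c\mp w)\cosh(c\pm w)B(c\pm w)$ combine into $\tfrac12(\cosh 2c+\cosh 2w)$ times the average of $B(c+w)$ and $B(c-w)$. This average is increasing in $|c|$ by convexity of $B$, and the prefactor also increases. The remaining terms, $-(c\pm w)\sinh(c\pm w)\cosh(c\mp w)$, collapse to $-\tfrac12\bigl[c\sinh 2c+w\sinh 2w\bigr]$. Differentiating in $c$ would then leave an explicit inequality involving $\tanh(c\pm w)$ and $B(c\pm w)$, which I would aim to prove from $B(v)>|v|$ and $B'=\tanh$, in the style of the computations in Lemma~\ref{as:profile-convex}. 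If that direct computation becomes unwieldy, my fallback is to write $\Psi_w(c)/\bigl(\cosh(c+w)\cosh(c-w)\bigr)$ as the average of $\mathcal H(a)/\cosh a$ over $a=c\pm w$, and to exploit the strict log-concavity of $\mathcal H$ proved in Lemma~\ref{as:profile-convex}, which controls how $\mathcal H(c+w)\mathcal H(c-w)$ varies with $c$. Finally, continuity handles the boundary cases, and the sign reduction $d\mapsto|d|$ completes the proof.
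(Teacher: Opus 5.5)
Your reduction is correct. The claim $dw\ge B_e(d)$ is equivalent to $e\ge dF(w)$, which is exactly the paper's target $e\ge\ell H(b)$: there $u$ is your $w$, and $\ell H(b)=dF(u)$. Your rewriting as $\Psi_w(c)\ge\Psi_w(0)$ is also accurate. The gap is that the monotonicity of $\Psi_w$ in $c$ carries the entire content of the lemma, and you leave it unproved. The tools you name would also not close it.

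If you carry out your expansion, the $\tanh$ terms cancel exactly, because $\tanh(c+w)+\tanh(c-w)=2\sinh 2c/(\cosh 2c+\cosh 2w)$. You get
\[
 \Psi_w'(c)=\sinh 2c\,\bar B(c)-c\cosh 2c,\qquad \bar B(c)=\tfrac12\bigl[B(c+w)+B(c-w)\bigr],
\]
so what you need is $\bar B(c)\ge c\coth 2c$ for $c>0$. The bound $B(v)>|v|$ gives only $\bar B(c)\ge\max(c,w)$. When $w\le c$ this equals $c$, which is less than $c\coth 2c$. So it fails in the small-$w$ regime, which is the worst case. Your fallback points the wrong way. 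Because $\mathcal H$ is even and log-concave, $\mathcal H(c+w)\mathcal H(c-w)$ is nonincreasing in $c\ge0$. That gives an upper bound on the $\mathcal H$ factors as $c$ grows, whereas you need a lower bound.

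The missing idea is a sharper scalar inequality.
\begin{itemize}
\item Convexity of $B$ gives $\bar B(c)\ge B(c)$, which reduces you to $w=0$.
\item Put $t=\tanh c$. Then $B(c)=H(t)+t\atanh t$ and $\coth 2c=(1+t^2)/(2t)$.
\item With these, $B(c)\ge c\coth 2c$ becomes $2tH(t)\ge(1-t^2)\atanh t$.
\item This is precisely the concave function with zero endpoint values that is shown to be positive in the proof of Lemma~\ref{as:profile-convex}.
\end{itemize}

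With that step supplied, your argument becomes a complete proof, along a different path from the paper's. You start from the centered edge, where equality holds, and move the logit center at fixed logit half-width. The paper instead fixes the probability midpoint $m$ and varies $d$. Its base case is the degenerate edge $d\downarrow0$, where the inequality reduces to $H(m)\ge(1-m^2)\log2$. Its monotonicity comes from $\partial_d\bigl(e-\ell H(b)\bigr)=-\ell_d\log(2\cosh u)\ge0$.
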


\begin{proof}
Write $m=(x_++x_-)/2$. Changing the signs and interchanging the
endpoints allows us to assume $m,d\ge0$. The cases $d=0$ and
$m=0$ follow directly from the definition of $B$; hence suppose
$m,d>0$. Set $u=[\atanh(m+d)-\atanh(m-d)]/2$, $b=\tanh u$,
and $\ell=d/b$. The addition formula for $\tanh$ gives
$0<\ell\le1$ and $\ell+d^2/\ell=1-m^2+d^2$.
Keep $m$ fixed and differentiate with respect to $d$. The preceding
identity and $d=\ell b$ give
$\ell_d=-2b(1-\ell)/(1-b^2)$ and $e_d=-u$.
Also $(\ell H(b))_d=\ell_d\log(2\cosh u)-u$.
Thus $\partial_d(e-\ell H(b))\ge0$. As $d\downarrow0$, we have $\ell\to1-m^2$ and $b\to0$,
so $e-\ell H(b)\to H(m)-(1-m^2)\log2\ge0$ by
\eqref{ub:entropy-series}.
It follows that $e\ge\ell H(b)$. Since $B_e(d)$ decreases with $e$,
$du=B_{\ell H(b)}(d)\ge B_e(d)$,
which proves the claim.
\end{proof}

\begin{proof}[Proof of Proposition~\ref{prop:profile-production}]
Let $g=T_\rho f$. On a coordinate fiber, write $g_+,g_-$ for
its endpoint values, let $e_i$ be their average entropy, and set
$q_i=(g_+-g_-)/2$.
Fourier differentiation and summation over coordinate fibers give
\begin{align*}
 \rho A'_\rho(f)
 &=\sum_i\E_{x_{-i}}
    \left[q_i\frac{\atanh g_+-\atanh g_-}2\right]\\
 &\ge\sum_i\E_{x_{-i}} B_{e_i}(|q_i|).
\end{align*}
Here the last step is Lemma~\ref{lem:edge-entropy-profile}.
Since $\E e_i=E_\rho(f)$ and
$\E q_i=\rho\widehat f(i)$, joint convexity from
Lemma~\ref{u963:scalar} proves the proposition.
\end{proof}

\subsection{Integrating from the Boolean endpoint}

The production bound can be integrated because the original source
has zero conditional entropy at $\rho=1$. Use noise time
$\tau=-\log\rho$ and rescale entropy as $y=E_\rho(f)/\rho$.
The scaling identity for $B$ will give a differential inequality of
the form $dy/d\tau\ge y+P_f(y)$, where the function $P_f$ depends
only on the source's singleton coefficients. Separation of variables
then leads to the integral below, called a \emph{profile clock}:
it measures the time in the scalar comparison equation needed to
reach entropy level $y$.

Define the singleton profile and its clock by
\begin{equation}\label{eq:singleton-clock}
 P_f(y)=\sum_iB_y(|\widehat f(i)|),\qquad
 T_{P_f}(v)=\int_0^v\frac{dy}{y+P_f(y)}.
\end{equation}
The condition $W_1>0$ makes this integral finite near zero and holds
for every nonconstant increasing source. A dictator has one singleton
coefficient of magnitude one, so its profile is simply $B_y(1)$;
this gives the reference clock for the comparisons below.

\begin{theorem}[Profile clock]\label{thm:profile-clock}
If $f:\{-1,1\}^n\to\{-1,1\}$ and $W_1(f)>0$, then
for $0<\rho<1$,
\[
 T_{P_f}\bigl(E_\rho(f)/\rho\bigr)\ge-\log\rho.
\]
\end{theorem}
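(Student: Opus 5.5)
We convert the production bound of Proposition~\ref{prop:profile-production} into a differential inequality for the rescaled entropy in noise time, and then integrate it from the Boolean endpoint $\tau=0$.

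Put $\tau=-\log\rho$, $\rho=e^{-\tau}$, and $y(\tau)=E_\rho(f)/\rho$. Since $A_\rho=H(\mu)-E_\rho$, we have $\rho A'_\rho=-\rho E'_\rho$. By \eqref{eq:gap-derivative} this quantity equals $dE/d\tau$. Proposition~\ref{prop:profile-production} gives
\[
 \frac{dE}{d\tau}\ge\sum_i B_{E}(\rho|\widehat f(i)|).
\]
The scaling identity $B_e(cq)=cB_{e/c}(q)$ of Lemma~\ref{u963:scalar}, applied with $c=\rho$, turns the right side into $\rho\sum_iB_{E/\rho}(|\widehat f(i)|)=\rho P_f(y)$. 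On the other side, $dy/d\tau=e^{\tau}(dE/d\tau)+y$. Combining the two gives
\[
 \frac{dy}{d\tau}\ge y+P_f(y)\qquad(\tau>0).
\]
The right side is strictly positive for $y>0$, because $W_1>0$ makes some $B_y(|\widehat f(i)|)>0$. For $0<\rho<1$ and nonconstant $f$ we have $y>0$: indeed $|T_\rho f|<1$, so the posterior entropy is positive.

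Next we separate variables. Since $T_{P_f}$ is increasing with derivative $1/(y+P_f(y))$, the chain rule gives $\frac{d}{d\tau}T_{P_f}(y(\tau))\ge1$ on $(0,\infty)$. Integrating from $\tau_0$ to $\tau$ yields
\[
 T_{P_f}(y(\tau))\ge T_{P_f}(y(\tau_0))+\tau-\tau_0\ge \tau-\tau_0.
\]
We then let $\tau_0\downarrow0$. This gives $T_{P_f}(y(\tau))\ge\tau=-\log\rho$, which is the claim.

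The main obstacle is justifying this endpoint limit, that is, that the clock starts at zero. There are two points to check.
\begin{itemize}
\item $y(\tau_0)\to0$ as $\tau_0\to0$. This holds because $E_\rho(f)\to E_1(f)=\E H(f)=0$ continuously, since $f$ is Boolean.
\item $T_{P_f}(v)\to0$ as $v\to0$, i.e.\ the integral in \eqref{eq:singleton-clock} converges at $0$. For this we need a lower bound on $P_f$ near zero. Since $B_y(q)\ge0$, it suffices to show $y+P_f(y)$ is not too small. Already $\int_0 dy/y$ diverges, so the $P_f$ term must supply the integrability. I would use a coordinate with $a=|\widehat f(i)|>0$. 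Then $B_y(a)=aF^{-1}(y/a)$, and $F^{-1}(s)\to\infty$ as $s\to0$, roughly like $\tfrac12\log(1/s)$ by \eqref{rt:scaled-profile}. Hence $y+P_f(y)\ge aF^{-1}(y/a)$. The integral $\int_0 dy/\log(1/y)$ converges, so $T_{P_f}$ is finite and continuous at $0$ with $T_{P_f}(0)=0$.
\end{itemize}
Finally, differentiability of $y(\tau)$ on $(0,\infty)$ is clear, since $T_\rho f$ is smooth in $\rho$ with values in $(-1,1)$. This makes the integration step rigorous (alternatively, one can use a comparison argument for differential inequalities).
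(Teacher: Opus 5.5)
Your proposal is correct and follows the paper's proof. You rescale to $y=E_\rho(f)/\rho$ in noise time, use the scaling identity $B_e(cq)=cB_{e/c}(q)$ to get $dy/d\tau\ge y+P_f(y)$, and integrate $T_{P_f}(y(\tau))$ from the Boolean endpoint. The paper checks finiteness of the clock at zero more quickly, via $P_f(y)\ge P_f(1)>0$ for $y\le1$ (monotonicity of $B_e$ in $e$), rather than through your $F^{-1}$ asymptotics. Also, since you already drop $T_{P_f}(y(\tau_0))\ge0$, you do not actually need the limit $y(\tau_0)\to0$.
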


\begin{proof}
Set $\tau=-\log\rho$ and $y(\tau)=E_\rho(f)/\rho$.
By the scaling identity in Lemma~\ref{u963:scalar},
$\sum_i B_{E_\rho}(\rho |\widehat f(i)|)=\rho P_f(y)$.
Since $E'_\rho=-A'_\rho$, Proposition~\ref{prop:profile-production} yields
$dy/d\tau=y+A'_\rho(f)\ge y+P_f(y)$.
Hence $\frac{d}{d\tau}T_{P_f}(y(\tau))\ge1$ for $\tau>0$.
Booleanity and finite dimension give $y(\tau)\to0$ as
$\tau\downarrow0$. The clock is finite and continuous at zero:
for $0<y\le1$, $P_f(y)\ge P_f(1)>0$.
Integrating on $[\eta,\tau]$ and taking $\eta\downarrow0$ proves the clock inequality.
\end{proof}

More generally, let
\begin{equation}\label{eq:fixed-profile-envelope}
 Q(y)=\sum_{j=1}^J c_jB_y(q_j),
 \qquad J\ge1,\quad c_j>0,\quad 0<q_j\le1,
 \qquad T_Q(v)=\int_0^v\frac{dy}{y+Q(y)}.
\end{equation}
The weights $c_j$ and arguments $q_j$ are fixed as $\rho$ varies;
they may depend on the source cell under consideration.
The clock $T_Q$ is finite at zero and strictly increasing.
If $Q\le P_f$, then $T_{P_f}\le T_Q$.

\subsection{Propagation of an integrated entropy bound}

One clock comparison at the left endpoint will cover every larger
correlation. The envelope must stay fixed while the correlation varies.

\begin{theorem}[Propagation to larger correlations]\label{thm:clock-propagation}\label{st:clock}
Let $f:\{-1,1\}^n\to\{-1,1\}$ satisfy $W_1(f)>0$,
and let $Q$ have the fixed finite-sum form
\eqref{eq:fixed-profile-envelope}, with $Q\le P_f$ on $(0,\infty)$.
Fix $0<\rho_0<1$ and a constant $\epsilon\in[0,1)$, and suppose that
\[
 T_Q\bigl(H(\rho_0)/\rho_0\bigr)
 \le(1-\epsilon)\log(1/\rho_0).
\]
Then, for $\rho_0\le\rho<1$,
\[
 T_Q\bigl(H(\rho)/\rho\bigr)\le(1-\epsilon)\log(1/\rho),
 \qquad E_\rho(f)\ge H(\rho)\rho^{-\epsilon}.
\]
In particular, the choice $\epsilon=0$ gives
$E_\rho(f)\ge H(\rho)$ for $\rho_0\le\rho\le1$.
\end{theorem}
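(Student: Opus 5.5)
The plan is to work in noise time $\tau=-\log\rho$, put $\tau_0=-\log\rho_0$, and follow the dictator trajectory $z(\tau)=H(e^{-\tau})e^{\tau}$, that is, $z=H(\rho)/\rho$. For a dictator every edge is centered, so Lemma~\ref{lem:edge-entropy-profile} holds with equality and the profile is exactly $B_y(1)$. The computation in the proof of Theorem~\ref{thm:profile-clock} is then an identity:
\[
 \frac{dz}{d\tau}=z+B_z(1).
\]
The first assertion is that $\phi(\tau)=T_Q(z(\tau))-(1-\epsilon)\tau$ satisfies $\phi\le0$ on $(0,\tau_0]$, given $\phi(\tau_0)\le0$. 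Since $z(\tau)\to0$ as $\tau\downarrow0$ and $T_Q(0)=0$, the change of variables $y=z(s)$ gives
\[
 T_Q(z(\tau))=\int_0^\tau R(z(s))\,ds,\qquad R(z)=\frac{z+B_z(1)}{z+Q(z)}.
\]
Because $z$ increases with $\tau$ (entropy grows with noise), it suffices to show that $R$ is nondecreasing in $z$. The map $\tau\mapsto T_Q(z(\tau))/\tau$ is then an average of a nondecreasing function, hence nondecreasing. This yields $T_Q(z(\tau))/\tau\le T_Q(z(\tau_0))/\tau_0\le1-\epsilon$ for $\tau\le\tau_0$, i.e.\ for every $\rho\ge\rho_0$.

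The monotonicity of $R$ is the main obstacle, and I would attack it through the elasticity $\mathcal K$ from \eqref{eq:profile-elasticity}. Write $B_z(q)=qF^{-1}(z/q)$ and set $G(s)=\log F^{-1}(e^s)$. Then $G'(s)=-1/\mathcal K(v)$ with $v=F^{-1}(e^s)$. Since $\mathcal K$ increases in $v$ and $v$ decreases in $s$, $G'$ is decreasing. For $q\le1$ this gives
\[
 \frac{d}{dz}\log\frac{B_z(q)}{B_z(1)}=\frac1z\bigl[G'(\log z-\log q)-G'(\log z)\bigr]\le0.
\]
So each ratio $B_z(q_j)/B_z(1)$, and hence $Q(z)/B_z(1)$, is nonincreasing. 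Next write
\[
 \frac1R=\frac{z}{z+B_z(1)}+\frac{B_z(1)}{z+B_z(1)}\cdot\frac{Q(z)}{B_z(1)},
\]
where $B_z(1)/z=F^{-1}(z)/z$ decreases in $z$. Whenever $Q(z)\ge B_z(1)$, both effects push $1/R$ down, so $R$ is nondecreasing there. The regime $Q<B_z(1)$ is where the argument may fail. There I would first try to compare directly the elasticities of $z\mapsto z+Q(z)$ and $z\mapsto z+B_z(1)$, using $\mathcal K>1$ and the log-concavity of $vF(v)$ from Lemma~\ref{as:profile-convex}. Failing that, I would replace the averaging argument by Lemma~\ref{ub:barrier} applied to $\phi$ on $[\tau,\tau_0]$ in reversed time, which needs only $\phi'\ge0$ where $\phi>0$, and then exploit that on that set $R(z)>1-\epsilon$ holds on average.

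The entropy bound then follows from the clock inequalities. Theorem~\ref{thm:profile-clock} and $Q\le P_f$ give
\[
 T_Q(E_\rho/\rho)\ge T_{P_f}(E_\rho/\rho)\ge\log(1/\rho).
\]
It remains to show $T_Q(z\rho^{-\epsilon})\le\log(1/\rho)$, after which strict monotonicity of $T_Q$ gives $E_\rho\ge H(\rho)\rho^{-\epsilon}$. I would get this from the first part together with the subadditive growth of $T_Q$: since $Q(y)\ge0$,
\[
 T_Q(z e^{\epsilon\tau})-T_Q(z)\le\int_z^{ze^{\epsilon\tau}}\frac{dy}{y}=\epsilon\tau,
\]
and therefore $T_Q(z\rho^{-\epsilon})\le(1-\epsilon)\tau+\epsilon\tau=\tau$. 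The case $\epsilon=0$, and $\rho=1$ by continuity, gives $E_\rho(f)\ge H(\rho)$.
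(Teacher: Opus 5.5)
Your framework matches the paper's. The identity $dz/d\tau=z+B_z(1)$ is the paper's $T_{\rm dic}(H(\rho)/\rho)=-\log\rho$. Your ratio $R(z)=(z+B_z(1))/(z+Q(z))$ is the paper's derivative ratio $T_Q'/T_{\rm dic}'$. Your elasticity computation is Lemma~\ref{lem:profile-ratio}, and your final $\epsilon$-step is an equivalent rearrangement of the paper's subtraction.

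The gap is in the decisive step. You reduce the first assertion to ``$R$ is nondecreasing in $z$,'' but this is false whenever the regime $Q(z)<B_z(1)$ that you flag is nonempty. There $R>1$, while $z/B_z(1)\to\infty$ forces $R(z)\to1$ as $z\to\infty$, so $R$ must eventually decrease. This regime is typically nonempty, since $Q(y)/B_y(1)\to\sum_jc_jq_j^2$ as $y\to\infty$. So no comparison of elasticities can rescue the reduction. Your barrier fallback, as phrased, uses only that $R\circ z$ exceeds $1-\epsilon$ ``on average'' where $\phi>0$. An average yields the pointwise inequality $\phi'(s)\ge0$ only where $R\circ z$ is monotone on all of $[0,s]$, which is exactly what fails.

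The missing observation is one line. Where $Q(z)<B_z(1)$, the definition gives $R(z)>1\ge1-\epsilon$ outright. Because $Q(z)/B_z(1)$ is nonincreasing, this set is a terminal interval $\{z>z_1\}$ (possibly empty). On the complementary initial interval $\{z\le z_1\}$, your argument does show $R$ nondecreasing.

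Now suppose $\phi(s)>0$. If $z(s)\le z_1$, then $R\circ z$ is nondecreasing on $[0,s]$, so
$R(z(s))\ge\frac1s\int_0^sR(z(\sigma))\,d\sigma>1-\epsilon$. If $z(s)>z_1$, then $R(z(s))>1$. Either way $\phi'(s)>0$, and Lemma~\ref{ub:barrier} on $[\tau,\tau_0]$ gives $\phi\le0$; no time reversal is needed.

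This is exactly the paper's argument. The running average $T_Q(z(\tau))/\tau$ is nondecreasing while it is at most one, and it cannot return to one after exceeding it. Hence the hypothesis at $\tau_0$ propagates to every smaller $\tau$; note that $1-\epsilon\le1$ is used essentially here. With this fix your proof is complete and coincides with the paper's.
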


Here $\epsilon$ specifies a guaranteed fractional saving relative to
the dictator's clock value $\log(1/\rho)$; it remains fixed as $\rho$
varies. The value $\epsilon=0$ gives the non-strict bound, while
$\epsilon>0$ yields the factor $\rho^{-\epsilon}>1$ in the entropy
estimate. We first record the monotonicity of the individual profile
ratios.

\begin{lemma}\label{lem:profile-ratio}
For $0<a<1$, the function $u\mapsto F^{-1}(F(u)/a)/u$ is
strictly increasing on $(0,\infty)$. For $a=1$ it equals one.
\end{lemma}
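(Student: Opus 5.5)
The plan is to differentiate the ratio directly and reduce its monotonicity to the monotonicity of the profile elasticity $\mathcal K(v)=-vF'(v)/F(v)$ from \eqref{eq:profile-elasticity}. The proof of Lemma~\ref{u963:scalar} already showed that $\mathcal K>1$ and $\mathcal K'>0$ on $(0,\infty)$.

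\emph{Setting up.} Fix $0<a<1$ and put $w(u)=F^{-1}(F(u)/a)$. By Lemma~\ref{as:profile-convex}, $F$ is a strictly decreasing smooth bijection of $(0,\infty)$ onto itself with $F'<0$. Hence $w$ is well defined and differentiable by the inverse function theorem. Since $F(w)=F(u)/a>F(u)$ and $F$ is decreasing, we get $0<w(u)<u$ for every $u>0$. For $a=1$ we simply have $w=u$, so the ratio is identically one.

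\emph{Differentiating.} Differentiate the identity $F(w(u))=F(u)/a$ to obtain $F'(w)\,w'=F'(u)/a$. Substitute $1/a=F(w)/F(u)$ and rearrange to get
\[
 \frac{u\,w'(u)}{w(u)}
 =\frac{uF'(u)/F(u)}{wF'(w)/F(w)}
 =\frac{\mathcal K(u)}{\mathcal K(w(u))}.
\]
Therefore
\[
 \frac{d}{du}\log\frac{w(u)}{u}
 =\frac{w'}{w}-\frac1u
 =\frac1u\left(\frac{\mathcal K(u)}{\mathcal K(w)}-1\right).
\]

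\emph{Concluding.} Since $\mathcal K$ is positive and strictly increasing, and $w(u)<u$, we have $\mathcal K(u)>\mathcal K(w(u))>0$. The derivative above is therefore strictly positive on $(0,\infty)$, so $w(u)/u$ is strictly increasing.

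The only substantive input is the strict increase of $\mathcal K$. That follows from the strict log-concavity and strict decrease of $vF(v)$: with $j=\log(vF)$ we have $\mathcal K=1-vj'$ and $\mathcal K'=-j'-vj''>0$. I expect no real obstacle beyond keeping the signs straight, since $F'<0$ makes both $F'(u)$ and $F'(w)$ negative and their quotient positive.
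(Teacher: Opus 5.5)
Your proposal is correct and matches the paper's proof. The paper likewise sets $v=F^{-1}(F(u)/a)<u$, differentiates implicitly to obtain $d\log v/d\log u=\mathcal K(u)/\mathcal K(v)$, and concludes from the strict increase of $\mathcal K$ established in the proof of Lemma~\ref{u963:scalar}.
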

\begin{proof}
Put $v=F^{-1}(F(u)/a)$, so $0<v<u$.
The function $\mathcal K$ in \eqref{eq:profile-elasticity} is strictly
increasing by the proof of Lemma~\ref{u963:scalar}.
Implicit differentiation gives
$d\log v/d\log u=\mathcal K(u)/\mathcal K(v)>1$.
Thus $v/u$ increases strictly with $u$.
\end{proof}

\begin{proof}[Proof of Theorem~\ref{thm:clock-propagation}]
Let $T_{\rm dic}$ be the clock of the dictator profile $B_y(1)$.
The substitution $y=H(t)/t$ gives
$dy/[y+B_y(1)]=-dt/t$, and therefore
\[
 T_{\rm dic}(H(\rho)/\rho)=-\log\rho.
\]
Compare the profile $Q$ with the dictator profile by putting
$R(y)=Q(y)/B_y(1)$ and $x(y)=y/B_y(1)$.
Lemma~\ref{lem:profile-ratio} shows that $R$ is nonincreasing.
Also $x$ is increasing: with $y=F(v)$ it equals $F(v)/v$,
which decreases with $v$. Consequently the derivative ratio
\[
 \frac{T'_Q(y)}{T'_{\rm dic}(y)}
 =\frac{x(y)+1}{x(y)+R(y)}
\]
is nondecreasing on the initial interval where $R\ge1$.
On the remaining interval it is at least one.

Both clocks vanish at zero. Their quotient $T_Q(y)/T_{\rm dic}(y)$
is thus the average of the displayed derivative ratio with respect
to $dT_{\rm dic}$ on $(0,y)$. This average is nondecreasing while
it is below one: first it averages an increasing function, and later
all new values are at least one. If the average ever exceeds one,
every later average remains strictly above one. The assumed
fractional bound at $H(\rho_0)/\rho_0$
therefore holds at every smaller positive $y$. Since $H(\rho)/\rho$
decreases with $\rho$, this proves the first assertion.

The profile-clock theorem and $Q\le P_f$ give
$T_Q(E_\rho(f)/\rho)\ge-\log\rho$, hence $E_\rho(f)\ge H(\rho)$.
Subtracting the clock bound just proved yields
\[
 \epsilon\log(1/\rho)
 \le\int_{H(\rho)/\rho}^{E_\rho(f)/\rho}\frac{dy}{y+Q(y)}
 \le\log\frac{E_\rho(f)}{H(\rho)}.
\]
Exponentiation proves the entropy estimate. At $\rho=1$, both
entropies in the non-strict conclusion vanish.
\end{proof}

\subsection{Profiles controlled by one or two coordinates}

To use the clock in a finite certificate, we replace the full
singleton profile by a lower bound depending on one or two retained
coefficients and a lower bound for $W_1$. Concavity shows that the
remaining squared mass may be concentrated at its allowed cap.

\begin{lemma}\label{jp:profile}
Let $f:\{-1,1\}^n\to\{-1,1\}$ be increasing.
Suppose $\widehat f(1)\in[a,A]$, where $0\le a\le A\le1$, and
$W_1\ge w$. Set
$c_1=\min(A,1-a)$. If $c_1>0$, write $(w-A^2)_+=k_1c_1^2+v_1$ with
$k_1\in\mathbb Z_{\ge0}$ and $0\le v_1<c_1^2$.
Then, for every $y>0$,
\[
 P_f(y)\ge B_y(a)+k_1B_y(c_1)+B_y(\sqrt{v_1}).
\]
If, in addition, $\widehat f(2)\in[s,S]$ with $0\le s\le S\le1$ and $S>0$, write
$(w-A^2-S^2)_+=k_2S^2+v_2$ with
$k_2\in\mathbb Z_{\ge0}$ and $0\le v_2<S^2$.
Then
\[
 P_f(y)\ge B_y(a)+B_y(s)+k_2B_y(S)+B_y(\sqrt{v_2}).
\]
Zero summands are omitted. If a tail cap is zero, its actual mass
is zero; for a nonempty source cell the corresponding lower mass
is then zero, and no division by the cap is needed.
\end{lemma}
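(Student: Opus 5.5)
The plan is to reduce everything to a concave-maximization statement about the function $\psi(x)=B_y(\sqrt x)$. Lemma~\ref{u963:scalar} says this function is increasing and strictly concave on $[0,\infty)$, with $\psi(0)=0$. Since $f$ is increasing, $\widehat f(i)=\operatorname{Inf}_i(f)\ge0$, so
\[
P_f(y)=\sum_i\psi(\widehat f(i)^2).
\]
We order coordinates so that $\widehat f(1)\ge\widehat f(2)\ge\cdots$.

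For the first term, monotonicity of $B_y$ gives $B_y(\widehat f(1))\ge B_y(a)$. For the tail, we first record the constraints on $x_i=\widehat f(i)^2$ for $i\ge2$:
\begin{itemize}
\item each $x_i$ satisfies $x_i\le\widehat f(1)^2\le A^2$, by the ordering;
\item by \eqref{eq:pair-bound}, $\widehat f(i)\le1-\widehat f(1)\le1-a$;
\item hence $0\le x_i\le c_1^2$;
\item the total satisfies $\sum_{i\ge2}x_i=W_1-\widehat f(1)^2\ge w-A^2$.
\end{itemize}
Because $\psi$ is increasing, we may lower the tail mass to exactly $m=(w-A^2)_+$, decreasing individual $x_i$ while keeping them in $[0,c_1^2]$. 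This only decreases the sum. If $m=0$ the claim is trivial.

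The core step is then the standard fact that a concave $\psi$ with $\psi(0)=0$, minimized over vectors in $[0,c^2]^N$ with prescribed total $m$, attains its minimum at the most concentrated configuration. That configuration is $k$ entries equal to $c^2$ and one entry equal to the remainder $v$. We prove this by a pairwise smoothing argument. If two entries $x_i,x_j$ both lie strictly inside $(0,c^2)$, move mass from the smaller to the larger, keeping the sum fixed, until one hits $0$ or $c^2$. Concavity of $\psi$ along this segment (a concave function of one variable on an interval) means the sum $\psi(x_i)+\psi(x_j)$ is minimized at an endpoint of the feasible segment, so it does not increase. Finitely many such moves leave at most one fractional entry. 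If there are not enough coordinates to host the concentrated configuration, we pad with zero coordinates; this does not change $P_f$, as for the appended unused coordinate earlier. This yields $P_f(y)\ge B_y(a)+k_1B_y(c_1)+B_y(\sqrt{v_1})$.

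The two-coordinate version follows the same pattern:
\begin{itemize}
\item use $B_y(\widehat f(1))\ge B_y(a)$ and $B_y(\widehat f(2))\ge B_y(s)$;
\item for $i\ge3$, bound $x_i\le\widehat f(2)^2\le S^2$;
\item the tail mass is at least $W_1-\widehat f(1)^2-\widehat f(2)^2\ge w-A^2-S^2$;
\item the same concentration argument with cap $S^2$ gives the stated bound.
\end{itemize}

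The degenerate cases are handled by the remark in the statement. If a cap is zero, every tail coefficient vanishes. Nonemptiness of the cell then forces the lower mass bound to be nonpositive, so the tail contributes nothing.

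The main obstacle is bookkeeping rather than analysis. We must make sure the cap $c_1$ really bounds every tail coefficient, which requires both constraints, $\widehat f(i)\le \widehat f(1)\le A$ and $\widehat f(i)\le 1-\widehat f(1)\le1-a$. We must also ensure that reducing the mass to exactly $(w-A^2)_+$ is legitimate, and it is, because $\psi$ is nondecreasing.
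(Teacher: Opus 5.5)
Your proposal is correct and matches the paper's proof essentially step for step. The shared steps are: the cap $c_1$ from the ordering together with the pair bound \eqref{eq:pair-bound}; lowering the tail mass to $(w-A^2)_+$ using monotonicity; and pairwise smoothing via concavity of $v\mapsto B_y(\sqrt v)$ until $k_1$ masses sit at the cap and at most one residual remains, repeated with cap $S^2$ for the two-coordinate bound. Two of your remarks can be tightened. Padding is never needed, since smoothing preserves the number of entries and $m\le Nc^2$ automatically. The endpoint you reach by moving mass from the smaller entry to the larger one is indeed a minimizer, because the two-entry sum is symmetric about their midpoint.
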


\begin{proof}
Ordering and the pair bound \eqref{eq:pair-bound} give $\widehat f(i)\le c_1$
for $i\ge2$ and $\sum_{i\ge2}\widehat f(i)^2\ge(w-A^2)_+$.
Reduce tail masses to total $(w-A^2)_+$.
Since $v\mapsto B_y(\sqrt v)$ is concave and vanishes at zero,
any two interior masses can be moved to an endpoint, preserving
their sum and without increasing their contribution to the profile. After finitely many moves, $k_1$ masses
equal $c_1^2$ and at most one equals $v_1$.
Together with $B_y(\widehat f(1))\ge B_y(a)$, this proves the first bound.

For the second bound, keep the first two coefficients. Every
remaining coefficient is at most $\widehat f(2)\le S$, and their squared
mass is at least $(W_1-\widehat f(1)^2-\widehat f(2)^2)\ge(w-A^2-S^2)_+$. Apply the same
argument with cap $S^2$, then use $B_y(\widehat f(2))\ge B_y(s)$.
\end{proof}

This redistribution relaxes the Fourier constraints; its concentrated
profile need not be realizable by a Boolean source. The resulting
bound depends only on the cell parameters.

The same concavity also reduces the number of source values
that need checking. Recall that a \emph{polytope} is the convex
hull of finitely many points in Euclidean space. Every point
is a convex combination of its vertices, so a convex function
on a polytope attains its maximum at a vertex. We apply this
observation when the remaining squared masses depend affinely
on the retained squared coefficients.

\begin{lemma}[Convexity in squared influences]\label{lem:source-convexity}
Let $K$ be a polytope and let $\ell_1,\ldots,\ell_m:K\to[0,\infty)$
be affine functions with $\sum_j\ell_j>0$ on $K$. For positive weights
$c_1,\ldots,c_m$, put $Q_x(y)=\sum_jc_jB_y(\sqrt{\ell_j(x)})$.
For every $Y>0$, the function
\[
 x\longmapsto\int_0^Y\frac{dy}{y+Q_x(y)}
\]
is convex on $K$ and attains its maximum at a vertex.
\end{lemma}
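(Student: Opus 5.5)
The integrand is $1/(y+Q_x(y))$. I will show it is convex in $x$ for each fixed $y>0$, and then integrate over $y$. The argument has three steps.

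First, for fixed $y>0$ the map $s\mapsto B_y(\sqrt s)$ is concave on $[0,\infty)$ by Lemma~\ref{u963:scalar}. Composing a concave function with the affine map $\ell_j$ gives a concave function of $x$ on $K$. Summing with positive weights $c_j$ and adding the constant $y$ keeps concavity, so $x\mapsto y+Q_x(y)$ is concave on $K$.

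Second, this function is strictly positive. Since $y>0$, the sum $y+Q_x(y)\ge y>0$, because each $B_y(\sqrt{\ell_j})\ge 0$. Now $t\mapsto 1/t$ is convex and decreasing on $(0,\infty)$. A convex nonincreasing function composed with a concave function is convex. Hence $x\mapsto 1/(y+Q_x(y))$ is convex on $K$ for each $y\in(0,Y]$.

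Third, I integrate in $y$ over $(0,Y]$. An integral of convex functions is convex provided the integral is finite, so the main point is finiteness near $y=0$. Here $\sum_j\ell_j(x)>0$ guarantees that some $\ell_j(x)>0$. For that $j$, $B_y(q)$ with $q>0$ increases to $+\infty$ as $y\downarrow0$: $B_e(q)$ decreases in $e$, and $F^{-1}(e/q)\to\infty$. This is the same finiteness used for the clocks after \eqref{eq:fixed-profile-envelope}. In particular $Q_x(y)\ge c_jB_Y(\sqrt{\ell_j(x)})>0$ on $(0,Y]$, so the integrand is bounded and the integral is finite for every $x\in K$. Convexity then passes to the integral via the pointwise inequality at each $y$ (Tonelli for nonnegative integrands suffices). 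Finally, every point of $K$ is a convex combination of its vertices, so a convex function attains its maximum over $K$ at a vertex.

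The only delicate step is finiteness and positivity near $y=0$, which the hypothesis $\sum_j\ell_j>0$ supplies. If some $\ell_j$ vanishes at points of $K$, the corresponding terms are simply zero there, and concavity of $s\mapsto B_y(\sqrt s)$ on the closed half-line $[0,\infty)$ still applies, since Lemma~\ref{u963:scalar} gives concavity including the endpoint zero.
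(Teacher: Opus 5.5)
Your proposal is correct and follows the paper's proof: concavity of $Q_x(y)$ in $x$ from Lemma~\ref{u963:scalar}, convexity of the reciprocal because it is convex and decreasing, and finiteness of the integral. Your finiteness bound $Q_x(y)\ge c_jB_Y(\sqrt{\ell_j(x)})>0$ is the same monotonicity-in-$e$ argument as the paper's $Q_x(y)\ge Q_x(Y)>0$, and writing each point as a convex combination of vertices gives the maximum at a vertex, exactly as in the paper.
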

\begin{proof}
Lemma~\ref{u963:scalar} makes $Q_x(y)$ concave in $x$.
Since the reciprocal is convex and decreasing on $(0,\infty)$,
$1/(y+Q_x(y))$ is convex in $x$. Integration preserves convexity;
the integral is finite because $Q_x(y)\ge Q_x(Y)>0$ for $0<y\le Y$.
If $x=\sum_r\theta_r v_r$ is a convex combination of vertices,
convexity bounds the integral at $x$ by the corresponding weighted
average at the $v_r$, hence by their largest value. This proves
the last assertion.
\end{proof}

The next observation allows the cap itself to vary. It removes
artificial subdivisions when a source constraint gives an exact cap.

\begin{lemma}[Concavity with a varying cap]\label{lem:moving-cap}
Let $b:[0,\infty)\to\R$ be continuous, concave, and twice
continuously differentiable on $(0,\infty)$. On an interval $I$,
let $c:I\to(0,\infty)$ be convex and twice continuously
differentiable, and let $w:I\to\R$ be affine. If $k\ge0$ and
$0\le w-kc\le c$ on $I$, then $kb(c)+b(w-kc)$ is concave.
\end{lemma}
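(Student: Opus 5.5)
The plan is a direct second-derivative computation on the interior of $I$, followed by a sign argument for each term. Write $\psi(x)=k\,b(c(x))+b(r(x))$ with $r=w-kc$. Since $w$ is affine, $r$ is twice continuously differentiable, with $r'=w'-kc'$ and $r''=-kc''\le0$. Assume first that $r>0$ on the interior of $I$, so that $b$ is $C^2$ at $r(x)$. Differentiating twice gives
\[
 \psi''=k\,b''(c)(c')^2+k\,b'(c)c''+b''(r)(r')^2+b'(r)r''
       =k\,b''(c)(c')^2+b''(r)(r')^2+k\,c''\bigl[b'(c)-b'(r)\bigr].
\]
The first two terms are nonpositive by concavity of $b$ and $k\ge0$.

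The only remaining term is the cross term $k\,c''[b'(c)-b'(r)]$. Here $c''\ge0$ by convexity of $c$. Because $b$ is concave, $b'$ is nonincreasing, and the hypothesis $r=w-kc\le c$ then gives $b'(c)\le b'(r)$. So the cross term is also nonpositive, and $\psi''\le0$ wherever $r>0$. This is exactly where the constraint $w-kc\le c$ enters: the remainder mass must not exceed the cap. I expect this to be the main point of the proof.

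The obstacle is the boundary set where $r=0$, where $b$ need only be continuous. I would handle it by approximation. For $\epsilon>0$, replace $b$ by $b_\epsilon(s)=b(s+\epsilon)$. This function is concave and $C^2$ on $[0,\infty)$, and $b_\epsilon'$ is still nonincreasing. The computation above then applies at every point of $I$ and shows that $k\,b_\epsilon(c)+b_\epsilon(r)$ is concave on $I$. Letting $\epsilon\downarrow0$, continuity of $b$ gives pointwise convergence, and pointwise limits of concave functions are concave. The same limit covers the endpoints of $I$, so $\psi$ is concave on all of $I$. When $k=0$ the function is simply $b(w)$ with $w$ affine, which is concave directly, so no separate treatment is needed.
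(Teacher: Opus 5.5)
Your proof is correct and follows the paper's: the same second-derivative identity, with the cross term $kc''[b'(c)-b'(r)]$ made nonpositive by $r\le c$ and the monotonicity of $b'$. The only difference is where the residual vanishes. The paper notes that the concave, nonnegative $r=w-kc$ must vanish identically if it vanishes at an interior point, so $kc$ is affine. Your shift $b(\cdot+\epsilon)$ followed by a pointwise limit avoids that case split and is equally valid.
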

\begin{proof}
Write $v=w-kc$. Where $v>0$, the second derivative is
\[
 kb''(c)(c')^2+b''(v)(v')^2
       +kc''\bigl(b'(c)-b'(v)\bigr)\le0.
\]
Here $v\le c$ and $b'$ is decreasing. Continuity handles zero
residuals at the endpoints. If $v$ vanishes at an interior point,
its concavity and nonnegativity force $v$ to vanish identically;
then $kc$ is affine and the assertion follows directly.
\end{proof}

\section{Entropy and energy at intermediate correlations}
\label{sec:cover}

This section covers $457/500\le\rho\le49/50$.
The local bounds from Section~\ref{ub:lower} handle functions close
to one or two coordinates; a fourth-moment estimate pays the mean
correction uniformly. For the remaining sources, we split
according to the first-level energy $W_1$. If it is small, Harris's
inequality controls $W_2$ and a cubic energy bound proves CK.
If it is large, the integrated entropy bound proves the stronger
inequality $E_\rho(f)\ge H(\rho)$. The certificate assigns each
source cell to one of these arguments and checks that the cells
cover every source.

\subsection{Conditional Harris inequalities without balance}

Harris association bounds $W_2$ in terms of $W_1$ and one or two
retained coefficients. These are the source constraints needed by
the cubic energy bound in the next subsection.

\begin{theorem}[Harris inequality {\cite{Harris1960}}]
Let $X_1,\ldots,X_n$ be independent signs, with arbitrary individual
biases. For increasing functions $f,g:\{-1,1\}^n\to\R$,
$\Cov(f(X),g(X))\ge0$.
\end{theorem}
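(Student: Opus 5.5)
We will prove the inequality by induction on $n$, using the law of total covariance to peel off one coordinate at a time. The only structural input is that the law of $X$ is a product measure, so conditioning on one coordinate does not change the distribution of the others. Arbitrary biases cause no difficulty, because at no step do we use uniformity.

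\emph{Base case $n=1$.} Let $X'$ be an independent copy of $X=X_1$. Then
\[
 \Cov(f(X),g(X))=\tfrac12\E\bigl[(f(X)-f(X'))(g(X)-g(X'))\bigr].
\]
Since $f$ and $g$ are both increasing in the single variable, the two differences $f(X)-f(X')$ and $g(X)-g(X')$ always have the same sign (or one vanishes). Hence the integrand is pointwise nonnegative. Explicitly, if $\PP(X=1)=p$, the covariance equals $p(1-p)\bigl(f(1)-f(-1)\bigr)\bigl(g(1)-g(-1)\bigr)\ge0$. The case $n=0$, with constant functions, is trivial.

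\emph{Inductive step.} Suppose the claim holds in dimension $n-1$, and write $x=(z,x_n)$ with $z\in\{-1,1\}^{n-1}$. The law of total covariance gives
\[
 \Cov(f,g)=\E\bigl[\Cov(f,g\mid X_n)\bigr]
 +\Cov\bigl(\E[f\mid X_n],\E[g\mid X_n]\bigr).
\]

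For the first term, fix $x_n$. By independence, the conditional law of $Z=(X_1,\dots,X_{n-1})$ given $X_n=x_n$ is the original product law of independent biased signs. The sections $z\mapsto f(z,x_n)$ and $z\mapsto g(z,x_n)$ are increasing on $\{-1,1\}^{n-1}$. The induction hypothesis therefore makes each conditional covariance nonnegative, and so is its average.

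For the second term, set $\bar f(x_n)=\E[f(Z,x_n)]$, again using that $Z$ has the same law whatever the value of $x_n$. Because $f(z,-1)\le f(z,1)$ for every $z$, we get $\bar f(-1)\le\bar f(1)$, and likewise $\bar g(-1)\le\bar g(1)$. The base case applied to the single biased sign $X_n$ then shows that the second term is nonnegative.

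Adding the two terms completes the induction. No step is a genuine obstacle. The one point that needs care is the use of independence in both terms: it is what lets the conditional law of $Z$ be a product measure to which the induction hypothesis applies, and what makes the averaged functions $\bar f,\bar g$ increasing. Without independence, $\E[f\mid X_n]$ need not be monotone, and the argument fails.
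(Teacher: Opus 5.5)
Your proof is correct: induction on the number of coordinates through the law of total covariance, with the explicit one-variable identity as the base case, is the standard argument for Harris's inequality. The paper gives no proof of its own and only cites Harris; the property it uses later, that the inequality survives conditioning because the remaining coordinates still have a product law, is exactly the independence point your proof relies on.
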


We apply this theorem to the monotone source obtained by compression.
It remains valid after conditioning on any coordinates, because the
unconditioned coordinates still have a product law.

Pairing $f$ with its noisy dual alternates the signs of the Fourier
levels. Conditional Harris association turns that alternation into an
upper bound on $W_2$.

The dual $f^{\rm d}(x)=-f(-x)$ is increasing and satisfies
$\widehat {f^{\rm d}}(S)=(-1)^{|S|+1}\widehat f(S)$, including
the constant coefficient $-\mu$. Conditional association of
$f$ and $T_\eta f^{\rm d}$ yields
\begin{equation}\label{jp:core-harris}
 \sum_{S\ne\varnothing}(-1)^{|S|+1}\eta^{|S|}\widehat f(S)^2
 \ge\sum_{\varnothing\ne S\subseteq K}
       (-1)^{|S|+1}\eta^{|S|}\widehat f(S)^2.
\end{equation}
Indeed, the difference is the mean of their conditional
covariances outside $K$. The global covariance and that of the
conditional means both remove the constant term $-\mu^2$.

\begin{proposition}
Let $f:\{-1,1\}^n\to\{-1,1\}$ be increasing. For $0<\eta\le1$, we have
\begin{align*}
 \eta(1+\eta)W_2
 &\le(1-\eta^2)W_1+\eta^2(1-\mu^2)-\widehat f(1)^2,
 \\
 \eta(1+\eta)W_2
 &\le(1-\eta^2)W_1+\eta^2(1-\mu^2)-\widehat f(1)^2-\widehat f(2)^2+\eta \widehat f(\{1,2\})^2
 \\
 &\le(1-\eta^2)W_1+\eta^2-\widehat f(1)^2-(1-\eta)\widehat f(2)^2.
\end{align*}
\end{proposition}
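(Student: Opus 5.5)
The plan is to specialize the conditional Harris inequality \eqref{jp:core-harris} to the cores $K=\{1\}$ and $K=\{1,2\}$, and then bound the alternating Fourier series on its left side from above. The bound keeps only levels one and two explicitly.

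First I bound the left side of \eqref{jp:core-harris} by grouping it by Fourier level:
\[
 \sum_{S\ne\varnothing}(-1)^{|S|+1}\eta^{|S|}\widehat f(S)^2
 =\eta W_1-\eta^2W_2+\sum_{k\ge3}(-1)^{k+1}\eta^kW_k .
\]
For $k\ge3$ and $0<\eta\le1$, each term is at most $\eta^3W_k$: odd levels have coefficient $\eta^k\le\eta^3$, and even levels contribute a nonpositive amount. Parseval gives $\sum_{k\ge3}W_k=1-\mu^2-W_1-W_2$. Hence the left side is at most
\[
 \eta W_1-\eta^2W_2+\eta^3(1-\mu^2-W_1-W_2)
 =\eta(1-\eta^2)W_1-\eta^2(1+\eta)W_2+\eta^3(1-\mu^2).
\]

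Next I evaluate the right side of \eqref{jp:core-harris}. For $K=\{1\}$ it equals $\eta\widehat f(1)^2$. For $K=\{1,2\}$ it equals
\[
 \eta\bigl(\widehat f(1)^2+\widehat f(2)^2\bigr)-\eta^2\widehat f(\{1,2\})^2 .
\]
When $n=1$, I append an unused coordinate as in the conventions of Section~\ref{sec:preliminaries}; this changes nothing. Combining each right side with the upper bound above and dividing by $\eta>0$ gives the first two displayed inequalities directly. The interaction term enters as $+\eta\widehat f(\{1,2\})^2$ after it is moved to the other side.

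For the final inequality I use two facts. The first is the constraint $|\widehat f(\{1,2\})|\le\widehat f(2)$ on the increasing conditional mean, recorded after Proposition~\ref{jp:local2-exact}; it gives
\[
 -\widehat f(2)^2+\eta\widehat f(\{1,2\})^2\le-(1-\eta)\widehat f(2)^2 .
\]
The second is simply dropping $-\eta^2\mu^2\le0$.

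The only real point to check is the level-$k\ge3$ tail. It requires $\eta\le1$ and the sign pattern of the alternating sum, so that the whole tail can be replaced by $\eta^3$ times its total Parseval mass. The rest is bookkeeping with the identities of the dual $f^{\rm d}$ that are already built into \eqref{jp:core-harris}.
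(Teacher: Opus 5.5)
Your proof is correct and follows the paper's own argument essentially line for line. You bound the level-$k\ge3$ tail on the left of \eqref{jp:core-harris} by $\eta^3(1-\mu^2-W_1-W_2)$, specialize to $K=\{1\}$ and $K=\{1,2\}$, and divide by $\eta$; the last inequality then follows from $|\widehat f(\{1,2\})|\le\widehat f(2)$ and dropping $-\eta^2\mu^2$.
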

\begin{proof}
For $k\ge3$, $(-1)^{k+1}\eta^k\le\eta^3$. Thus the left
side of \eqref{jp:core-harris} is at most
$\eta W_1-\eta^2W_2+\eta^3(1-\mu^2-W_1-W_2)$.
Take $K=\{1\}$ or $K=\{1,2\}$, divide by $\eta$, and rearrange.
These choices give the first two inequalities directly. For the last,
use $|\widehat f(\{1,2\})|\le\widehat f(2)$ and discard
$-\eta^2\mu^2$.
\end{proof}

The interaction constraint is
$|\mu+\widehat f(\{1,2\})|\le1-\widehat f(1)-\widehat f(2)$;
both Harris bounds hold simultaneously for every feasible mean.

\subsection{Cubic energy and the prior-entropy credit}

We now give the low-energy half of the dichotomy. The relevant
quantity is
\begin{equation}\label{eq:cubic-weight}
 \mathcal W_\rho=W_1+\frac{\rho^2}{1+\rho}W_2.
\end{equation}
Taking $d=3$ in Lemma~\ref{ub:degree-energy} and factoring $1-\rho$ gives
\begin{equation}\label{ub:cubic-product}
 \begin{split}
 \E[(1-|g|)(1+|g|-\rho^3)]\ge(1-\rho)
 \{(1+\rho+\rho^2)(1-\mu^2)-\rho^2(1+\rho)\mathcal W_\rho\}.
 \end{split}
\end{equation}

For a cap $M$ define
\begin{equation}\label{eq:cubic-parameters}
 C_\rho=1+\rho+\rho^2-\rho^2(1+\rho)M,
 \qquad\kappa_\rho=\frac{H(\rho)}{(1-\rho)C_\rho}.
\end{equation}
Suppose $C_\rho>0$ and the scalar conditions
\begin{equation}\label{ub:cubic-gates}
 \begin{gathered}
 H(t)\ge\kappa_\rho(1-t)(1+t-\rho^3)\quad(0\le t\le1),\\
 \kappa_\rho(1-\rho^3)<1/2
 \end{gathered}
\end{equation}
hold. On the branch $\mathcal W_\rho\le M$,
Equation~\eqref{ub:cubic-product} gives
\begin{equation}\label{ub:cubic-bias}
 E_\rho(f)-H(\rho)+\Phi(\mu)
 \ge\Phi(\mu)-\kappa_\rho(1-\rho^3)\mu^2\ge0.
\end{equation}
The inequality for the posterior mean in \eqref{ub:cubic-gates} is checked
by Lemma~\ref{ub:tangent}, with $c=\kappa_\rho$ and $b=\rho^3$.

Convexity in the correlation reduces each scalar inequality to the two
endpoints of its band.

\begin{lemma}[Endpoint propagation]\label{ub:cubic-endpoints}
Let $3/4\le M\le83/100$ and $457/500\le\rho_-<\rho_+<1$.
Let $C_\rho$ and $\kappa_\rho$ be as in \eqref{eq:cubic-parameters}. If
$H(t)\ge\kappa_\rho(1-t)(1+t-\rho^3)$ for every $0\le t\le1$
at both $\rho=\rho_-$ and $\rho=\rho_+$, then the same inequality
holds for every $\rho\in[\rho_-,\rho_+]$.
Moreover, $C_\rho>0$ and
$\kappa_\rho(1-\rho^3)<30/67<1/2$ throughout $457/500\le\rho<1$.
\end{lemma}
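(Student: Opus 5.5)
The approach mirrors Proposition~\ref{ub:lower-interval}, where $c=57/(76-49\rho^2)$ made $1/c$ affine in $\rho^2$. Here we divide the scalar inequality by $\kappa_\rho$ and write it as
\[
 \frac{H(t)}{(1-t)(1+t-\rho^3)}\ge\kappa_\rho ,
\]
or equivalently
\[
 \frac{(1-\rho)C_\rho\,H(t)}{H(\rho)}\ge(1-t)(1+t-\rho^3).
\]
For fixed $t$ we will show that the difference between the two sides, viewed as a function of $\rho$ on $[\rho_-,\rho_+]$, satisfies an endpoint principle: it is concave in $\rho$, or, more robustly, the set of $\rho$ where it is nonnegative is an interval. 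The endpoint hypotheses at $\rho_\pm$ then give the inequality for every $t$ and every $\rho$ in the band.

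The cleanest route seems to be to prove that $\rho\mapsto\kappa_\rho$ is convex and that $\rho\mapsto(1-t)(1+t-\rho^3)$ is concave, since $-\rho^3$ is concave for $\rho\ge0$. Then
\[
 \kappa_\rho(1-t)(1+t-\rho^3)
\]
is a product of two positive functions of $\rho$. This product is not convex in general, so we will instead use the ratio form. We set
\[
 r_t(\rho)=\frac{(1-t)(1+t-\rho^3)\,H(\rho)}{(1-\rho)C_\rho},
\]
which must stay below $H(t)$, and aim to show that $r_t$ is quasiconvex in $\rho$ on $[457/500,1)$; a quasiconvex function attains its maximum over an interval at an endpoint.

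The key computation is to show that $\log r_t$ is convex in $\rho$, by summing the contributions of its factors.
\begin{itemize}
\item The factor $-\log(1-\rho)$ is convex.
\item The factor $-\log C_\rho$ is convex where $C_\rho$ is concave and positive. Here $C_\rho=1+\rho+\rho^2-M(\rho^2+\rho^3)$ has $C''=2-M(2+6\rho)<0$ for $M\ge3/4$ and $\rho\ge457/500$, so $C_\rho$ is concave.
\item The factor $\log H(\rho)$ is the delicate one. Substituting $p=(1-\rho)/2$, the function $\log H_{\rm b}(p)-\log(2p)$ behaves like $\log\log(1/p)$, whose convexity in $\rho$ near $1$ is only marginal.
\item The factor $\log(1+t-\rho^3)$ is concave in $\rho$, with second derivative bounded by roughly $-9\rho^4/(1+t-\rho^3)^2-6\rho/(1+t-\rho^3)$. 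This is worst at $t=0$.
\end{itemize}
It is therefore better to group terms. We combine $\log H(\rho)-\log(1-\rho)$ into $\log\bigl(H(\rho)/(1-\rho)\bigr)$, which in the variable $p$ is $\log q(p)$ with $q(p)=H_{\rm b}(p)/p$, as in the proof of Lemma~\ref{ub:tangent}. We then show that its curvature, together with the strongly convex term $-\log C_\rho$ (which is large because $M\ge 3/4$ makes $C_\rho$ small, of order $3-2M\cdot$, near $\rho=1$), dominates the concave curvature of $\log(1+t-\rho^3)$ uniformly in $t\in[0,1]$.

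This domination is the main obstacle: it needs an explicit lower bound on $-(\log C_\rho)''=\bigl(C'^2-CC''\bigr)/C^2$ over the stated ranges of $M$ and $\rho$. If pure log-convexity fails at small $t$, the fallback is to treat $t$ near $0$ separately. For such $t$, $H(t)$ is close to $\log2$, which leaves a large margin by the second assertion, and only $t$ in a range where $1+t-\rho^3$ is bounded below needs the curvature argument.

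The second assertion, $C_\rho>0$ and $\kappa_\rho(1-\rho^3)<30/67$, is easier. We write
\[
 \kappa_\rho(1-\rho^3)=\frac{H(\rho)(1+\rho+\rho^2)}{C_\rho}.
\]
Because $C_\rho$ is decreasing in $M$, it suffices to take $M=83/100$; at $\rho=1$ this gives $C_1=3-2M>0$. On $[457/500,1)$ we have $C'_\rho=1+2\rho-M(2\rho+3\rho^2)<0$, so $C_\rho$ decreases to its positive value at $1$, while $H(\rho)$ decreases to $0$. Hence the ratio is at most
\[
 \frac{3H(457/500)}{C_1}=\frac{3H(457/500)}{1.34},
\]
and it remains to check this numerically against $30/67$ using the logarithm bounds already in the appendix.
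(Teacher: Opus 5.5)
The second assertion is fine and is essentially the paper's argument: $C_\rho\ge 3-2M\ge 67/50$, and $H(\rho)(1+\rho+\rho^2)\le 3H(457/500)<3/5$.

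The gap is in the propagation step for small $t$. Log-convexity of $\rho\mapsto\kappa_\rho(1+t-\rho^3)$ genuinely fails there. At $\rho=457/500$ the second derivative of $\log(1+t-\rho^3)$ is about $-135$ at $t=0$ and about $-32$ at $t=3/10$. Against this, $\log(H(\rho)/(1-\rho))$ contributes only about $+24.5$, and $-\log C_\rho$ between about $+2.4$ and $+3.3$. The useful convexity comes from $\log(H(\rho)/(1-\rho))$, whose curvature is of order $1/((1-\rho)^2\log\frac{2}{1-\rho})$; $C_\rho\ge 67/50$ is not small, and $-\log C_\rho$ adds only a bounded amount. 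So $\log r_t$ is strictly concave near $\rho_-=457/500$ for every $t\le 3/10$ and every admissible $M$.

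Your fallback cannot cover this range unconditionally. The lemma allows $\rho_+$ arbitrarily close to $1$. Since $1+t-\rho^3\ge t$ while $\kappa_\rho\sim\log(1/(1-\rho))/(2(3-2M))\to\infty$, the inequality is false near $\rho=1$ for every fixed $t\in(0,1)$. For instance, at $t=0.36$ and $\rho=0.999$ the right side exceeds $0.66$ for every admissible $M$, whereas $H(0.36)<0.63$. The margin $\log2-30/67$ at $t=0$ is used up on a $t$-interval of length of order $1/\kappa_\rho$. Hence no fixed split $t_0$ works: the curvature argument needs $t_0\gtrsim 0.35$ at $\rho_-$, but unconditional validity at such $t_0$ fails near $\rho=1$. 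A $\rho$-dependent split would need an unconditional check on the whole two-parameter region where log-convexity fails. The second assertion does not give this, and you have not supplied it. For small $t$ you must still use the endpoint hypotheses.

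The missing idea is to move in $t$, not in $\rho$. By Lemma~\ref{ub:tangent}, $\chi_\rho(t)=-2H(t)/(1-t)+2\kappa_\rho(1+t-\rho^3)$ is concave in $t$. Also $\chi_\rho'(1/2)=8\log(3/4)+2\kappa_\rho>0$, because $\kappa_\rho>6/5$ throughout the range while $\log(4/3)<3/10$. So $\chi_\rho$ increases on $[0,1/2]$, and $\chi_\rho(t)\le\chi_\rho(1/2)$ reduces every $t\le1/2$ to $t=1/2$ at the same $\rho$.

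Only $t\ge1/2$ then needs an endpoint principle in $\rho$, and there the paper uses plain convexity. It writes $\kappa_\rho(1+t-\rho^3)=\kappa_\rho(3/2-\rho^3)+(t-\tfrac12)\kappa_\rho$. Here $\kappa_\rho$ is convex as a product of the positive, increasing, convex functions $H(\rho)/(1-\rho)$ and $1/C_\rho$. Convexity of $\kappa_\rho(3/2-\rho^3)$ is an explicit curvature estimate (Lemma~\ref{app:cubic-curvature}). Your log-convexity route could plausibly replace this for $t\ge1/2$: at $\rho=457/500$, $t=1/2$, the convex terms give about $27$ against about $19$. It would still need explicit bounds of the same kind.
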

\begin{proof}
The proof uses convexity in the correlation above $t=1/2$ and
monotonicity in $t$ below that point. We first verify the curvature.

Put $x=1-\rho$, so $0<x\le43/500$, and write
\[
 \frac{H(\rho)}{1-\rho}
 =\frac12\log\frac{2e}{x}-R(x),\qquad
 R(x)=\sum_{k\ge1}\frac{x^k}{2^{k+1}k(k+1)},\qquad
 V(x)=\frac{1/2+3x-3x^2+x^3}{C_\rho}.
\]
Thus $\kappa_\rho(3/2-\rho^3)
=[\frac12\log(2e/x)-R(x)]V(x)$.
Lemma~\ref{app:cubic-curvature} gives
$-7+30x<V''(x)<0$ and $V'(x)<2$.
Also $V(0)=1/[2(3-2M)]\ge1/3$ and $V(x)<3/5$.
The latter follows from $C_\rho\ge67/50$ and
$1/2+3x-3x^2+x^3\le1/2+3x$.
The series gives
$0<R'(x)\le1/8+x/[24(1-x/2)]<13/100$ and
$0<R''(x)\le1/[24(1-x/2)^2]<1/20$.
By concavity, $V(x)-2xV'(x)\ge V(0)-xV'(x)\ge1/3-2x$.
Differentiating twice and discarding the positive term $-R(x)V''(x)$
therefore gives
\[
 2x^2\frac{d^2}{d\rho^2}
       [\kappa_\rho(3/2-\rho^3)]
 >\frac13-2x-x^2\left[(7-30x)\log\frac{2e}{x}
                                      +\frac{11}{10}\right]>0.
\]
For the last inequality, the right side decreases on $(0,43/500]$:
the derivative of $x^2(7-30x)\log(2e/x)$ is
$x[(14-90x)\log(2e/x)-7+30x]>0$ because $\log(2e/x)>4$.
At $x=43/500$, the bound $\log(2e/x)<21/5$ leaves
$745237/46875000>0$. Both logarithm bounds follow from
$20<e^3<21$; for the upper bound also use $e^{1/5}>6/5$.

As a function of $\rho$, $C_\rho$ is positive, decreasing, and concave
on the stated range. The function $H(\rho)/(1-\rho)$ is increasing
and strictly convex: its first derivative is
$\log(2/(1+\rho))/(1-\rho)^2$, and its second derivative is at least
$\rho/[(1-\rho)^2(1+\rho)]>0$.
Consequently $\kappa_\rho''>0$. The curvature just proved then shows
that $\rho\mapsto\kappa_\rho(1+t-\rho^3)$ is strictly convex for
every $t\ge1/2$.

Let $\chi_\rho(t)=-2H(t)/(1-t)+2\kappa_\rho(1+t-\rho^3)$.
For $t\ge1/2$, convexity in $\rho$ and the endpoint hypotheses give
$\chi_\rho(t)\le0$. For $t\le1/2$, concavity in $t$ from
Lemma~\ref{ub:tangent} gives
$\chi_\rho'(t)\ge\chi_\rho'(1/2)=8\log(3/4)+2\kappa_\rho>0$.
Indeed, $C_\rho<8/5$, $R(x)<1/16$, and
$\log(2e/x)>4$ give $\kappa_\rho>155/128>6/5$, whereas
$\log(4/3)<3/10$. Hence
$\chi_\rho(t)\le\chi_\rho(1/2)\le0$ there as well.
Multiplying by $(1-t)/2$ proves the inequality for $t<1$;
the case $t=1$ follows by continuity.

Finally, $H(\rho)<H(9/10)<1/5$ and $C_\rho\ge67/50$, so
$\kappa_\rho(1-\rho^3)=H(\rho)(1+\rho+\rho^2)/C_\rho<30/67$.
To see $H(9/10)<1/5$, use $\log20<3$ and
$\log(20/19)<1/19$ in the binary entropy formula.
\end{proof}

\subsection{Combining the entropy and energy bounds}

In the scalar comparisons below, $a$ and $b$ range over the possible
values of $\widehat f(1)$ and $\widehat f(2)$, respectively;
$c$ denotes the possible value of $\widehat f(\{1,2\})$.
We choose a first-level threshold so that the energy bound covers
one side and a fixed profile clock covers the other. This is the
criterion checked on each source cell. Here a \emph{source cell} consists of increasing Boolean
functions whose one or two largest singleton coefficients
lie in prescribed intervals. We use the singleton profile $P_f$
from \eqref{eq:singleton-clock} and the envelope clock $T_Q$
from \eqref{eq:fixed-profile-envelope}.

Fix a band $[\rho_-,\rho_+]$ and put
$\beta_+=\rho_+^2/(1+\rho_+)$. For a source cell
$a\in[a_0,A]$, $b\in[b_0,B]$, choose rational $w,\eta$ with
$0<\eta\le1$ such that
\begin{equation}\label{ub:threshold}
 w+\beta_+
 \frac{(1-\eta^2)w+\eta^2-a_0^2-(1-\eta)b_0^2}
      {\eta(1+\eta)}\le M.
\end{equation}
For a one-coordinate cell omit $b_0$. The right side of the
Harris bound increases with $W_1$ and decreases with $a,b$.
Also $\mathcal W_\rho\le W_1+\beta_+W_2$.
Consequently $W_1\le w$ implies $\mathcal W_\rho\le M$ on the
whole band.
If the resulting upper bound on $W_2$ is negative, that energy
branch has no feasible source. The separate record $\eta=0$
means the Parseval test
\begin{equation}\label{ub:parseval-threshold}
 \beta_++(1-\beta_+)w\le M;
\end{equation}
it is not a substitution into \eqref{ub:threshold}.

On the complementary branch $W_1\ge w$, use one of the fixed
profile envelopes in Lemma~\ref{jp:profile}. If its clock satisfies
\begin{equation}\label{ub:clock-trigger}
 T_Q(H(\rho_-)/\rho_-)<-\log\rho_-,
\end{equation}
Theorem~\ref{thm:clock-propagation} proves the stronger
$E_\rho(f)\ge H(\rho)$ for all $\rho\ge\rho_-$. It therefore
proves CK for every mean on that branch. At $W_1=w$ both
alternatives are valid.

The threshold condition gives CK when $W_1\le w$, while the
clock condition covers $W_1\ge w$. The following lemma combines
these alternatives into a criterion for an entire source cell.

\begin{lemma}\label{ub:cell-dichotomy}
Fix a correlation band, a source cell, and an energy cap $M$.
Suppose $C_\rho>0$ and \eqref{ub:cubic-gates} hold throughout the band. Assume that for each source in the
cell there is a threshold $w$ such that
\begin{enumerate}[label=(\alph*)]
 \item $W_1\le w$ forces $\mathcal W_\rho\le M$ by
       \eqref{ub:threshold} or \eqref{ub:parseval-threshold}; and
 \item $W_1\ge w$ forces an envelope $Q\le P_f$ of the fixed form
       \eqref{eq:fixed-profile-envelope}, independent of $\rho$,
       whose clock satisfies \eqref{ub:clock-trigger}.
\end{enumerate}
Then CK holds for every source in the cell throughout the band.
\end{lemma}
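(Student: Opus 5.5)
Fix a source $f$ in the cell and a correlation $\rho$ in the band $[\rho_-,\rho_+]$, and let $w$ be the threshold supplied by the hypothesis for this source. The argument splits on the sign of $W_1(f)-w$, which depends only on $f$ and not on $\rho$. Constant sources carry no information, so we may assume $f$ is nonconstant and increasing, as in the cell definition.

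\emph{Low-energy branch, $W_1\le w$.} By (a), either \eqref{ub:threshold} together with the Harris proposition, or the Parseval test \eqref{ub:parseval-threshold}, gives $\mathcal W_\rho\le M$ throughout the band. For the Harris route, use three facts. The right side of the last Harris inequality increases in $W_1$ and decreases in $\widehat f(1),\widehat f(2)$. Also $\rho^2/(1+\rho)\le\beta_+$. Hence $\mathcal W_\rho\le W_1+\beta_+W_2\le M$. For the Parseval route, note that $W_2\le 1-W_1$ gives $\mathcal W_\rho\le W_1+\beta_+(1-W_1)$, which is increasing in $W_1$ and so is at most $M$.

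With $\mathcal W_\rho\le M$ in hand, substitute into \eqref{ub:cubic-product}. This gives
\[
\E[(1-|g|)(1+|g|-\rho^3)]\ge(1-\rho)\{C_\rho-(1+\rho+\rho^2)\mu^2\},
\]
where $g=T_\rho f$. The first condition in \eqref{ub:cubic-gates} holds pointwise at $t=|g|$, and averaging it gives
\[
E_\rho(f)\ge H(\rho)-\kappa_\rho(1-\rho^3)\mu^2.
\]
The second gate gives $\kappa_\rho(1-\rho^3)<1/2$, while \eqref{ub:entropy-series} gives $\Phi(\mu)\ge\mu^2/2$. Together these yield \eqref{ub:cubic-bias}, that is, $G_f(\rho)\le0$. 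The one point to check is that $C_\rho>0$ makes $\kappa_\rho$ well defined. If the Harris bound forces $W_2<0$, the branch is empty and there is nothing to prove.

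\emph{High-energy branch, $W_1\ge w$.} By (b) there is a fixed envelope $Q\le P_f$ of the form \eqref{eq:fixed-profile-envelope} satisfying \eqref{ub:clock-trigger}. Also $W_1\ge w>0$ holds, or at least $W_1>0$ holds because $f$ is nonconstant and increasing. Theorem~\ref{thm:clock-propagation} with $\epsilon=0$ and $\rho_0=\rho_-$ then gives $E_\rho(f)\ge H(\rho)$ for every $\rho\in[\rho_-,1)$. Hence $G_f(\rho)=H(\rho)-E_\rho(f)-\Phi(\mu)\le-\Phi(\mu)\le0$.

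When $W_1=w$ both branches apply, so every source in the cell is covered at every $\rho$ in the band. The main subtlety is that the envelope $Q$ must not depend on $\rho$, so that the clock propagation theorem can be applied. The cell-level hypothesis guarantees this, since $Q$ is determined by the cell parameters and $w$ alone.
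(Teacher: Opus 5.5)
Your proof is correct and follows the paper's argument. On the branch $W_1\le w$ you obtain $\mathcal W_\rho\le M$, then \eqref{ub:cubic-bias} from \eqref{ub:cubic-product} and the gates; on the branch $W_1\ge w$ you apply Theorem~\ref{thm:clock-propagation} with $\epsilon=0$ to get $E_\rho(f)\ge H(\rho)$, having written out the threshold and averaging details the paper states just before the lemma.
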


\begin{proof}
The first alternative gives \eqref{ub:cubic-bias}.  In the second,
Theorem~\ref{thm:clock-propagation} gives the stronger conclusion
$E_\rho(f)\ge H(\rho)$.  The alternatives meet at $W_1=w$.
\end{proof}

The entropy profile $B_e$ in \eqref{eq:entropy-profile} is convex
in $e$ by Lemma~\ref{u963:scalar}. Its supporting lines give explicit
bounds for the clock integral without evaluating $F^{-1}$.

\begin{lemma}[Supporting-line integration]\label{lem:clock-tangent}
For every $e,q,v>0$,
\[
 B_e(q)\ge qv+(qF(v)-e)\frac{\sinh^2v}{\log(2\cosh v)}.
\]
Let $Q(e)=\sum_{j=1}^m c_jB_e(q_j)$, with $c_j,q_j>0$.
For fixed $v_j>0$, let $L_j(e)$ be the right-hand side above with
$(q,v)=(q_j,v_j)$, and set $\ell(e)=e+\sum_jc_jL_j(e)$.
On any interval $0\le a<b$ where
$\ell(a),\ell(b)>0$, the integral
$\int_a^b de/(e+Q(e))$ is at most
$\log(\ell(b)/\ell(a))/\ell'$.
If $\ell'=0$, the bound is $(b-a)/\ell(a)$.
The integral at $a=0$ is improper.
\end{lemma}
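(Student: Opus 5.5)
The first inequality is a supporting-line statement for the convex function $e\mapsto B_e(q)$. The integral bound then follows by replacing $e+Q(e)$ with an affine minorant $\ell$ and integrating $1/\ell$ exactly.

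\emph{Supporting line.} Fix $q>0$. Since $B_e(q)=qF^{-1}(e/q)$, and $F^{-1}$ is convex and decreasing by Lemma~\ref{as:profile-convex} (the inverse of a decreasing convex bijection of $(0,\infty)$), the map $e\mapsto B_e(q)$ is convex and differentiable on $(0,\infty)$. I take its tangent at the point $e_0=qF(v)$, where $B_{e_0}(q)=qv$. The slope there is
\[
 \frac{d}{de}B_e(q)\Big|_{e=e_0}=\frac1{F'(v)} .
\]
The proof of Lemma~\ref{as:profile-convex} gives $F'(v)=-(1-t^2)\log(2\cosh v)/t^2$ with $t=\tanh v$. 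Since $(1-t^2)/t^2=1/\sinh^2v$, this slope equals $-\sinh^2v/\log(2\cosh v)$. Convexity says the graph lies above this tangent line:
\[
 B_e(q)\ge qv+(e-qF(v))\Big(-\frac{\sinh^2v}{\log(2\cosh v)}\Big),
\]
which is exactly the displayed inequality. It holds for every $e>0$, and $e_0$ itself is an arbitrary point of $(0,\infty)$ because $F$ is a bijection onto $(0,\infty)$.

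\emph{Integration.} Summing the tangent bounds with the weights $c_j>0$ gives $e+Q(e)\ge\ell(e)$ for every $e>0$, where $\ell$ is affine. Because $\ell$ is affine and $\ell(a),\ell(b)>0$, it is positive on all of $[a,b]$. Hence on $(a,b]$ we have
\[
 0<\frac1{e+Q(e)}\le\frac1{\ell(e)} .
\]
Integrating, $\int_a^b de/\ell(e)=\log(\ell(b)/\ell(a))/\ell'$ when $\ell'\ne0$. This formula is valid for either sign of $\ell'$, since the logarithm and the slope then have the same sign. When $\ell'=0$ the integral of $1/\ell$ is simply $(b-a)/\ell(a)$.

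\emph{Endpoint $a=0$.} Here $B_0$ is defined only by continuity, so I treat the integral as improper: integrate over $[\eta,b]$ and let $\eta\downarrow0$. The bound on $[\eta,b]$ converges to the stated one by continuity of $\ell$ and positivity of $\ell(0)$. The integrand is in fact bounded near zero, since $Q(e)\ge Q(b)>0$ there by monotonicity of $B_e$ in $e$. So there is no convergence issue, and the limit also gives finiteness of the clock.

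The only step needing care is the derivative computation, and it is already supplied by the formula for $F'$ in Lemma~\ref{as:profile-convex}. No real obstacle is expected.
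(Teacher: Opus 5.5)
Your proposal is correct and follows the paper's proof essentially step by step. You take the tangent to the convex map $e\mapsto B_e(q)$ at $e_0=qF(v)$ with slope $1/F'(v)=-\sinh^2v/\log(2\cosh v)$, sum with the weights to get the affine minorant $\ell$, integrate $1/\ell$ exactly, and pass to the limit at $a=0$. The only cosmetic differences are two: you derive convexity in $e$ directly from convexity of $F^{-1}$ rather than citing the joint convexity in Lemma~\ref{u963:scalar}, which rests on the same fact; and you justify the endpoint $a=0$ by the bound $Q(e)\ge Q(b)>0$ near zero, whereas the paper invokes the positive affine minorant.
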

\begin{proof}
For fixed $q>0$, differentiating \eqref{eq:entropy-profile} at
$e_0=qF(v)$ gives
\[
 B_{e_0}(q)=qv,\qquad
 \left.\frac{d}{de}B_e(q)\right|_{e=e_0}
 =\frac1{F'(v)}=-\frac{\sinh^2v}{\log(2\cosh v)}.
\]
By Lemma~\ref{u963:scalar}, $e\mapsto B_e(q)$ is convex, so its
tangent at $e_0$ lies below it for every $e>0$. This is the first inequality.

Summing with the same weights gives $e+Q(e)\ge\ell(e)$.
The function $\ell$ is affine, so positivity at both endpoints
implies positivity throughout $[a,b]$. Hence
$1/(e+Q(e))\le1/\ell(e)$ for $e>0$ in this interval.
Writing $s=\ell'$ and integrating $1/[\ell(a)+s(e-a)]$ gives
$\log(\ell(b)/\ell(a))/s$ when $s\ne0$, and
$(b-a)/\ell(a)$ when $s=0$.
If $a=0$, first integrate from $\varepsilon>0$ to $b$ and let
$\varepsilon\downarrow0$. The positive affine lower bound makes
the limiting upper integral finite, justifying the improper integral.
\end{proof}

Choose rational $Y>H(\rho_-)/\rho_-$ and partition $[0,Y]$.
On each interval choose any rational $t\in(0,1)$ for each profile
term and apply Lemma~\ref{lem:clock-tangent} with $v=\atanh t$. These choices need not
solve any equation: convexity certifies every supporting line.
The verifier checks positivity at both interval endpoints and sums
the explicit integrals. This bounds $T_Q(Y)$, including the
interval adjacent to zero. All source-envelope data remain fixed
as the noise varies.

\subsection{Four one-coordinate bands}

The dichotomy covers $457/500\le\rho\le39/40$ using one retained
coefficient. We keep its exact value in the energy threshold and
profile, then use convexity to check only finitely many boundary
values of its square.

\begin{theorem}\label{ub:upper-middle}
Theorem~\ref{jp:main} holds on $457/500\le\rho\le39/40$.
\end{theorem}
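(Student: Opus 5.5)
By Lemma~\ref{lem:compression} we may assume $f$ is increasing, and constant sources are trivial. We may also assume $|\mu|<61/100$, since Corollary~\ref{ub:information-contraction} handles larger means at every correlation. The plan has three steps: split the correlation range into four bands, dispose of sources with a large singleton coefficient by the local comparison, and cover the rest by the energy/entropy dichotomy of Lemma~\ref{ub:cell-dichotomy}. The band endpoints are $457/500=\rho_0<\rho_1<\rho_2<\rho_3<\rho_4=39/40$, with intermediate rationals to be fixed by the certificate.

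First, the local step. On each band $[\rho_-,\rho_+]$ we find a rational $A_+$ with $M_{\rho_+}(A_+)\ge0$. Lemma~\ref{ub:local-propagation} then gives $M_\rho(A_+)\ge0$ on the whole band, and Theorem~\ref{ub:local} covers every source with $\widehat f(1)\ge A_+$. So only sources with $a=\widehat f(1)\in[0,A_+]$ remain. We partition this range into finitely many one-coordinate source cells $[a_0,A]$.

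Second, the dichotomy on each cell. Fix an energy cap $M\in[3/4,83/100]$ for the band. Lemma~\ref{ub:cubic-endpoints} shows $C_\rho>0$ and $\kappa_\rho(1-\rho^3)<1/2$ throughout. It also reduces the scalar gate \eqref{ub:cubic-gates} to the two band endpoints, where we check it with Lemma~\ref{ub:tangent} at chosen tangent points $t_0$.
\begin{itemize}
\item On each cell, choose $w,\eta$ satisfying \eqref{ub:threshold}, or \eqref{ub:parseval-threshold} when Parseval suffices. This makes the branch $W_1\le w$ satisfy $\mathcal W_\rho\le M$, and \eqref{ub:cubic-bias} then gives CK.
\item On the branch $W_1\ge w$, use the first envelope of Lemma~\ref{jp:profile} with cap $c_1=\min(A,1-a_0)$ and tail mass $(w-A^2)_+$. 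We then verify the clock trigger \eqref{ub:clock-trigger} at $\rho_-$ by supporting-line integration, Lemma~\ref{lem:clock-tangent}, on a partition of $[0,Y]$ with $Y>H(\rho_-)/\rho_-$.
\end{itemize}
Theorem~\ref{thm:clock-propagation} propagates this to all larger $\rho$ in the band, and Lemma~\ref{ub:cell-dichotomy} concludes for the cell.

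To keep the cell list short, we would keep the exact value of $a$ rather than cell endpoints. The threshold condition is monotone in $a$. In the clock, the retained masses $a^2$ and $(w-a^2)_+$ are affine in $a^2$, and the cap $1-a$ varies; Lemmas~\ref{lem:source-convexity} and~\ref{lem:moving-cap} make the clock convex in $a^2$. It therefore suffices to check the vertices in $a^2$, at points where the integer $k_1$ changes.

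The main obstacle is choosing the numerical data so that both alternatives close simultaneously near the top band $\rho\approx39/40$. There the cubic gate forces $M$ small, so $w$ must be small, while the clock needs substantial $W_1$ spread over coordinates capped by $1/2$. For the cells with $a$ just below $A_+$, where $M_\rho$ is barely positive, the first attempt would be to refine the bands and use a larger Harris parameter $\eta$ to enlarge $w$. If that fails, the fallback is the two-coordinate envelope, which is deferred to the next band in the paper.
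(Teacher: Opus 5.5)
Your plan is essentially the paper's proof: four bands, the local sign $M_{\rho_+}(A_*)>0$ propagated downward by Lemma~\ref{ub:local-propagation}, the cubic gates reduced to band endpoints by Lemma~\ref{ub:cubic-endpoints}, and the threshold/clock dichotomy of Lemma~\ref{ub:cell-dichotomy} with the exact coefficient $a$ retained, so that Lemmas~\ref{lem:source-convexity} and~\ref{lem:moving-cap} reduce the clock checks to boundary values of $a^2$.

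The paper differs only in implementation details. It uses the single threshold rule $w(a)=\max\{(M-\beta_+)/(1-\beta_+),\,M-\tfrac{\beta_+}{2}(1-a^2)\}$ (Parseval or $\eta=1$). It treats all $a\le 2/5$ with one envelope, $(25w(0)/4)B_y(2/5)$. This works because every coefficient, including the first, is then at most $2/5$. It avoids both the mass loss $A^2-a_0^2$ in Lemma~\ref{jp:profile} and the infinitely many packing transitions that your exact cap $\min(a,1-a)$ would produce as $a\to0$. Finally, the paper finds that one-coordinate envelopes already close through $39/40$, so your two-coordinate fallback is not needed on this range.
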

\begin{proof}[Computer-assisted proof]
The following finite decimals are exact rational parameters.
\begin{center}\small
\begin{tabular}{@{}cccc@{}}
\toprule
Band & $M$ & $A_*$ & Clock checks\\
\midrule
$[.914,.95]$ & $.82523844$ & $.66996$ & 7\\
$[.95,.97]$ & $.787706016$ & $.72245$ & 5\\
$[.97,.974]$ & $.77592593$ & $.73625$ & 4\\
$[.974,.975]$ & $.772609873$ & $.73996$ & 4\\
\bottomrule
\end{tabular}
\end{center}
For each band set $\beta_+=\rho_+^2/(1+\rho_+)$ and choose the threshold
\[
 w(a)=\max\left\{\frac{M-\beta_+}{1-\beta_+},\,
                  M-\frac{\beta_+}{2}(1-a^2)\right\}.
\]
If $W_1\le w(a)$, either Parseval or \eqref{ub:threshold} with
$\eta=1$ gives $\mathcal W_\rho\le M$ throughout the band.
This single rule replaces individually chosen thresholds and Harris
parameters.

For $W_1\ge w(a)$ and $a\le2/5$, concavity in squared
influences gives a single envelope for the whole region:
$P_f(y)\ge(25w(0)/4)B_y(2/5)$.
Indeed, every squared coefficient is at most $4/25$, and
$B_y(\sqrt s)\ge(25s/4)B_y(2/5)$ there. Sum and use
$W_1\ge w(a)\ge w(0)$.

For $2/5\le a\le A_*$, retain the exact cap $C=\min(a,1-a)$
on the other coefficients. Write $(w(a)-a^2)_+=kC^2+v$, where
$0\le v<C^2$. Packing gives
$Q_a(y)=B_y(a)+kB_y(C)+B_y(\sqrt v)$.
Split only at $a=1/2$, the change of threshold formula, and
packing transitions $w(a)-a^2=kC^2$.
On each region, $w(a)-a^2$ is affine in $a^2$.
The squared cap is affine in $a^2$ below $1/2$; above it,
$(1-a)^2$ is convex as a function of $a^2$.
Lemma~\ref{lem:moving-cap}, applied to $b(s)=B_y(\sqrt s)$,
shows that the packed profile is concave in $a^2$.
The reciprocal integral is therefore convex, as in
Lemma~\ref{lem:source-convexity}, so its maximum occurs at a
boundary value. These boundaries solve explicit linear or quadratic
equations in $a$ and are reconstructed exactly. The initial envelope
also covers the boundary $a=2/5$.
The resulting 20 clocks pass \eqref{ub:clock-trigger}, using
36 integration intervals and 91 supporting-line evaluations.

For $a\ge A_*$, each right endpoint verifies $M_\rho(A_*)>0$
for the local deficit in \eqref{eq:local-deficit};
Lemma~\ref{ub:local-propagation} and Theorem~\ref{ub:local} cover
the band. Lemma~\ref{ub:cubic-endpoints} reduces the scalar gates to
eight correlation endpoint checks and supplies the prior credit
analytically. The replay checks those gates, the four local signs,
and exact adjacency of the source and correlation intervals.
Lemma~\ref{ub:cell-dichotomy} now covers every source.
\end{proof}

\subsection{The two-coordinate band}

In the last intermediate band, convexity still handles $a\le2/3$
with one retained coordinate. Above that cutoff we use both largest
singleton coefficients, through either a profile comparison or the
exact two-coordinate entropy bound.

\begin{theorem}\label{ub:last-middle}
Theorem~\ref{jp:main} holds on $39/40\le\rho\le49/50$.
\end{theorem}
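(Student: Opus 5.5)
We will follow the template of Theorem~\ref{ub:upper-middle}, now on the single band $[\rho_-,\rho_+]=[39/40,49/50]$, and close the top of the $\widehat f(1)$ range with the two-coordinate entropy bound. By Lemma~\ref{lem:compression} we may take $f$ increasing, and by Corollary~\ref{ub:information-contraction} we may assume $|\mu|<61/100$. We then fix an energy cap $M$ in $[3/4,83/100]$, somewhat below $0.7726$. Lemma~\ref{ub:cubic-endpoints} then reduces the scalar gates \eqref{ub:cubic-gates} to the two endpoint checks at $39/40$ and $49/50$ via Lemma~\ref{ub:tangent}, and it supplies the prior-entropy credit analytically. Cells are indexed by $a=\widehat f(1)$ and, where needed, $b=\widehat f(2)\le\min(a,1-a)$.

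\emph{Case $a\le2/3$.} We use the threshold
\[
w(a)=\max\left\{\frac{M-\beta_+}{1-\beta_+},\;M-\frac{\beta_+}{2}(1-a^2)\right\}
\]
together with the same envelopes as before: the uniform envelope $(25w(0)/4)B_y(2/5)$ for $a\le2/5$, and the packed profile $B_y(a)+kB_y(C)+B_y(\sqrt v)$ with $C=\min(a,1-a)$ for $2/5\le a\le2/3$. Lemmas~\ref{lem:moving-cap} and~\ref{lem:source-convexity} reduce each region to its boundary values of $a^2$. At those values we certify \eqref{ub:clock-trigger} at $\rho_-=39/40$ by supporting-line integration (Lemma~\ref{lem:clock-tangent}). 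Lemma~\ref{ub:cell-dichotomy} then covers these cells.

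\emph{Case $a\ge2/3$.} Here the one-coordinate profile is too weak, since $C=1-a\le1/3$. We therefore split on $b$.

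If $b$ is small, the source is close to the dictator in coordinate~1, and we use Proposition~\ref{jp:local2-exact}. Lemma~\ref{lem:and-rational} gives
\[
2J_{\rm AND}(\rho)-H(\rho)\ge H(\rho)-(1-\rho^2)/5 ,
\]
so the core term is at least $\rho^2(a+b)H(\rho)-\rho^2 b(1-\rho^2)/5$. For the remaining term $(1-\rho^2)\E H(T_\rho f_{\{1,2\}})+\Phi(\mu)$, we apply Lemma~\ref{lem:mean-moments} on the four-point space with $g=T_\rho f_{\{1,2\}}$. We must check that $\lambda=(1-\rho^2)$, after normalisation, is at most $1/20$; on this band $1-\rho^2\le 1-(39/40)^2<0.05$, so this holds. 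The centred moments are explicit polynomials in $\rho a$, $\rho b$, and $\rho^2c$, where $c=\widehat f(\{1,2\})$ satisfies $|c|\le b$ and $|\mu+c|\le1-a-b$. Adding $\Phi(\mu)-H(\rho)$ turns the deficit into a polynomial inequality in $(a,b,c,\rho)$. We bound $H(\rho)$ using $H(\rho)/(1-\rho^2)<17/12$, which reduces the claim to positivity of a polynomial on a box. This is the fourth-moment absorption the outline mentions, and it needs no partition in $\mu$.

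If $b$ is not small, we return to the dichotomy with two retained coefficients. The threshold comes from \eqref{ub:threshold} with the two-coordinate Harris bound, which includes the $(1-\eta)b_0^2$ gain. The envelope is the second form in Lemma~\ref{jp:profile}, namely $B_y(a)+B_y(s)+k_2B_y(S)+B_y(\sqrt{v_2})$, and its clock is checked on finitely many $(a,b)$ cells using convexity in the squared influences. For $a$ near $1$, Theorem~\ref{ub:local} together with a check $M_{49/50}(A_*)>0$ and Lemma~\ref{ub:local-propagation} covers everything, whatever $b$ is.

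The main obstacle is choosing the $b$ split and verifying the two-coordinate moment polynomial. Near $a\approx0.7$–$0.8$ with moderate $b$, the clock margin and the local margin are both thin at $\rho=49/50$. I would first try a single $b$ cutoff of about $0.1$ and subdivide the $a$-range only where the replayed certificate fails.
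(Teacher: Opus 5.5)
\textbf{There is a genuine gap: the $b$-split on $2/3<a<A_*$ is reversed.}

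Your overall architecture matches the paper's. You use the one-coordinate dichotomy on $a\le 2/3$ and the local sign $M_{49/50}(A_*)>0$ above $A_*$. On $2/3<a<A_*$ you split by $b=\widehat f(2)$ between the fourth-moment two-coordinate bound and a two-coordinate dichotomy. But you assign the two sides of that split the wrong way round, and the side you give to the local bound cannot be closed.

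Small $b$ does not make $f$ close to a dictator: $\delta(f)=(1-a)/2\ge 0.12$ throughout this range. The only gain of Proposition~\ref{jp:local2-exact} over the one-coordinate bound comes from $b$. At $b=0$ the constraint $|\widehat f(\{1,2\})|\le b$ forces $c=0$, so $f_{\{1,2\}}=\mu+ax$. The bound then collapses exactly to Theorem~\ref{ub:local}, whose deficit lower bound at $\mu=0$ is $M_\rho(a)$. On this band the root of $M_\rho$ is about $0.74$ at $\rho=39/40$ and about $0.76$ at $\rho=49/50$. For instance, $M_{39/40}(2/3)\approx-0.0017$ and $M_{49/50}(2/3)\approx-0.0019$. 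By continuity the bound stays negative for small $b>0$. Your moment relaxation only loses more: the polynomial obtained from $H(t)\ge\log2-t^2/2-(\log2-\tfrac12)t^4$, $17/12$ and $9/13$ is roughly $-0.07$ at $(a,b)=(2/3,0)$. So no certificate can close the cells with $b\le 0.1$ this way, and your plan covers them by nothing else.

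\textbf{The fix is to swap the two branches.} The two-coordinate bound gains margin as $b$ grows, through $\rho^2 b[2J_{\rm AND}(\rho)-H(\rho)]$ and the term $\tfrac{17}{12}\rho^2(a+b)$. The paper therefore applies it on the large-$b$ half-plane $3a+7b\ge 64/25$, which means $b\ge 2/25$ at $a=2/3$ and $b\ge 0.04$ at $a=A_*$. It first bounds the third-moment error via $|\E h^3|=6\rho^4ab|c|\le 6a(1-a)^2\le 4/9$. It then checks the resulting polynomial, which is concave in $(a,b)$ and in $\rho^2$, at the four vertices of the actual feasible polygon (not a box) and at the two endpoint values of $\rho^2$.

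The small-$b$ cells go to Lemma~\ref{ub:cell-dichotomy}, where a small cap works in your favour. Since $v\mapsto B_y(\sqrt v)$ is concave and vanishes at $0$, the ratio $B_y(S)/S^2$ decreases in $S$. Hence the packed envelope $k_2B_y(S)+B_y(\sqrt{v_2})$ is strongest when every tail coefficient is at most a small $b$. With the branches swapped, your outline becomes the paper's proof.
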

\begin{proof}[Computer-assisted proof]
Set $A_*=76043/100000$ and $M=150614067/200000000$.
Use the preceding threshold and the same cap rule on
$0\le a\le2/3$. Four clock comparisons suffice, using nine
integration intervals and 21 supporting-line evaluations.
The one-coordinate local sign $M_\rho(A_*)>0$ covers $a\ge A_*$.
It remains to cover $2/3<a<A_*$. Restrict the certified rational
partition of $[0,A_*]\times[0,1/2]$ to this region.
Before each split, a rectangle $[a_0,A]\times[b_0,B]$ is contracted
as follows: first replace $a_0$ by $\max(a_0,b_0)$, then $A$
by $\min(A,1-b_0)$, and finally $B$ by $\min(B,A,1-a_0)$.
Every feasible point with $a\ge b$ and $a+b\le1$ survives this
operation. Each binary split retains both closed children.

We first cover $3a+7b\ge64/25$ by a uniform moment bound. Apply Lemma~\ref{lem:mean-moments} to the noisy conditional
mean from Proposition~\ref{jp:local2-exact}, with
$\lambda=1-\rho^2<1/20$. Its centered third moment is
$6\rho^4abc$. Since $|c|\le b\le1-a$ and $a\ge2/3$,
its absolute value is at most $6a(1-a)^2\le4/9$.
Thus the compensation error is less than $1/5000$.
The second and fourth centered moments increase with $|c|$, so we
may replace it by $b$ in their negative terms.

For these moment bounds define $h:\{-1,1\}^2\to\R$ by
$h(x,y)=\rho ax+\rho by+\rho^2bxy$, with uniform expectation.
Lemma~\ref{lem:and-rational}
and $9/13<\log2<7/10$ give
\begin{equation}\label{eq:local-moment-polynomial}
 \frac{\Phi(\rho)-A_\rho(f)}{1-\rho^2}
 \ge\frac{17}{12}[\rho^2(a+b)-1]-\frac{\rho^2b}{5}
       +\frac9{13}-\frac{\E h^2}{2}-\frac{\E h^4}{5}-\frac1{198}.
\end{equation}
Here $1/[5000(1-\rho^2)]\le1/198$.
For fixed $\rho$, the right side is concave in $(a,b)$.
Its minimum on $2/3\le a\le A_*$, $a+b\le1$, and $3a+7b\ge64/25$ therefore occurs at one of the four vertices
$(2/3,2/25)$, $(2/3,1/3)$, $(A_*,1-A_*)$, and $(A_*,(64/25-3A_*)/7)$.
For fixed $a,b$, both moments are polynomials in $\rho^2$ with
nonnegative coefficients, so the right side is also concave in
$\rho^2$. Evaluation at the two correlation endpoints gives eight
rational values, all greater than $1/1000$.
This proves the local claim without a mean partition.

Exactly 22 certified leaves meet $a>2/3$. Seventeen satisfy
$3\max(a_0,2/3)+7b_0\ge64/25$ and are covered by the moment bound.
The remaining five use Lemma~\ref{ub:cell-dichotomy}, with the
two-coordinate threshold \eqref{ub:threshold} and a one- or
two-coordinate profile. Their clocks use 11 integration intervals
and 40 supporting-line evaluations. Lemma~\ref{ub:cubic-endpoints}
reduces the scalar conditions to two correlation endpoint checks,
and the one-coordinate local sign propagates from the right endpoint.

The replay reconstructs the full source tree, verifies the
classification of every retained leaf, and checks all thresholds,
envelopes, and integrals. The script \path{two_core_moment_verify.py}
checks the elementary constants and the eight rational values in
\eqref{eq:local-moment-polynomial}. A separate exact audit expands
the moments and checks the polynomial on the whole squared-correlation
interval using Bernstein coefficients. The one-coordinate argument
covers the omitted region, including $a=2/3$.
Compression completes the assertion for arbitrary sources.
\end{proof}

\section{Entropy interpolation and spectral transfer}
\label{sec:full-head}

By \eqref{eq:gap-derivative}, $\rho G_f'(\rho)=D(T_\rho f)-\rho\atanh\rho$.
Thus, at
a hypothetical positive gap, it is enough to rule out
$D(T_\rho f)\le\rho u$, where $u=\atanh\rho$.  We do this by
transforming $g=T_\rho f$ to $h=\arcsin g$. The choice reflects
entropy curvature: $(\arcsin)'(t)=\sqrt{\Phi''(t)}$, so
Cauchy--Schwarz gives
$(\arcsin b-\arcsin a)^2\le(b-a)(\atanh b-\atanh a)$ for $a<b$.
On each cube edge,
entropy production exceeds the Dirichlet energy of $h$ by a
nonnegative remainder. Globally, $h$ is close in $L_2$ to
$(\pi/2)g$, so spectral completion exposes the Fourier energy of the
noisy source.  The profile budget converts the remaining coordinate
sum into a scalar inequality.

Throughout this section $0<\rho<1$ and
$f:\{-1,1\}^n\to\{-1,1\}$ is nonconstant.
Write $u=\atanh\rho$, $g=T_\rho f:\{-1,1\}^n\to(-1,1)$, and
$h=\arcsin g:\{-1,1\}^n\to(-\pi/2,\pi/2)$.
Define the scalar entropy correction and its average by
\begin{equation}\label{eq:entropy-correction}
 \Psi(t)=t^2\log2-\Phi(t)=H(t)-(1-t^2)\log2,
 \qquad \mathcal J=\E\Psi(g).
\end{equation}
Thus $\mathcal J$ measures
the excess of conditional entropy above its elementary quadratic
lower bound. It is nonnegative by the entropy series and vanishes
at the scalar values $0,\pm1$. The next lemma uses this same excess
to control the error in replacing $\arcsin g$ by $(\pi/2)g$.
All expectations use the uniform probability measure. By \eqref{ub:gap-definitions},
$G_f(\rho)>0$ implies
$\E H(g)<H(\rho)-\Phi(\mu)\le H(\rho)$.

\subsection{A sharp scalar error estimate}

The arcsine transform is close to $\frac{\pi}{2}g$ at a cost measured by the
entropy remainder $\mathcal J$. This bound lets the later spectral
argument work directly with the Fourier coefficients of $f$.

\begin{lemma}\label{fh:arcsine-error}
For every $-1\le t\le1$,
\begin{equation}\label{eq:arcsine-constant}
 (\arcsin t-\frac{\pi}{2}t)^2\le C_*\Psi(t),
 \qquad C_*:=\frac{(\frac{\pi}{2}-1)^2}{{\log2}-1/2}.
\end{equation}
In particular,
\[
 \|h-\frac{\pi}{2}g\|_2^2\le C_*\mathcal J.
\]
\end{lemma}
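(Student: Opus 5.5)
The plan is to reduce the scalar inequality to a one-variable comparison on $[0,1]$ and then average. Both sides are even in $t$, so it suffices to treat $0\le t\le1$. At $t=1$ the left side is $(\pi/2-\pi/2)^2=0$ and $\Psi(1)=0$, so equality holds there. At $t=0$ both sides vanish as well. The constant $C_*$ is exactly the ratio of leading behaviours at $t=0$: there $\arcsin t-\frac\pi2 t=(1-\frac\pi2)t+O(t^3)$ and $\Psi(t)=(\log2-\tfrac12)t^2-t^4/12-\cdots$ by \eqref{ub:entropy-series}. Hence the inequality is sharp as $t\to0$, and we must show that the ratio
\[
 r(t)=\frac{(\frac\pi2 t-\arcsin t)^2}{\Psi(t)}
\]
never exceeds its limit $C_*$ at zero on $(0,1)$. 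The averaged statement then follows immediately by applying the scalar bound pointwise with $t=g(y)$ and taking expectations, since $\|h-\frac\pi2 g\|_2^2=\E(\arcsin g-\frac\pi2 g)^2$ and $\mathcal J=\E\Psi(g)$.

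For the scalar claim I would prove the square-root form $\frac\pi2t-\arcsin t\le\sqrt{C_*\Psi(t)}$. Here the left side is nonnegative by concavity of $\arcsin$ on $[0,1]$ relative to its chord. The cleanest route is to show that $r$ is nonincreasing on $(0,1)$. Writing $p(t)=\frac\pi2 t-\arcsin t$, this amounts to showing that $2p'\Psi-p\Psi'\le0$, i.e.\ that $p/\sqrt\Psi$ is nonincreasing. With $\Psi'(t)=2t\log2-\atanh t$ and $p'(t)=\frac\pi2-(1-t^2)^{-1/2}$, both functions are explicit.

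An alternative decomposition, which I would try first, bounds each side by series with comparable coefficients. We have $\Psi(t)=\sum_{k\ge1}\bigl(\log2\,[k=1]-\tfrac1{2k(2k-1)}\bigr)t^{2k}$, which is $(\log2-\tfrac12)t^2$ minus positive terms, while $p(t)=(\frac\pi2-1)t-\sum_{k\ge1}a_kt^{2k+1}$ with $a_k>0$. Since the subtracted terms have opposite effects on the two sides, a pure coefficient comparison does not close directly. The monotonicity route is therefore the main line.

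The main obstacle is the region near $t=1$. There $\Psi(t)\approx H(t)$ behaves like $\frac{1-t}{2}\log\frac{1}{1-t}$, while $p(t)\approx\sqrt{2(1-t)}-\frac\pi2(1-t)$, so $p^2\approx2(1-t)$. The ratio $r$ therefore tends to $0$ logarithmically, and the bound holds comfortably. Near $t=0$ the ratio decreases at order $t^2$, with an explicit negative coefficient computed from the $t^3$ term $-t^3/6$ of $\arcsin$ and the $t^4/12$ term of $\Phi$. The middle range is the delicate part. There I would substitute $t=\sin\theta$ and verify $p(t)^2\le C_*\Psi(t)$ on a finite rational grid, using monotonicity of each side on subintervals (both $p$ and $\Psi$ have controlled derivatives) with interval arithmetic as in the appendix certificates. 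I expect the checks to have a visible margin away from $t=0$ and to need the analytic expansion only on a small interval $[0,t_1]$.
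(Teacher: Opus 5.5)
Your reduction to $0\le t\le1$, the identification of $C_*$ as the sharp constant at $t=0$, and the averaging step are fine. One slip: $\arcsin t\le\frac\pi2 t$ on $[0,1]$ follows from \emph{convexity} of $\arcsin$ there, not concavity. The gap is that the scalar inequality itself is never proved. The monotonicity of $r$, which you call the main line, is only asserted. The fallback certified-grid check breaks down exactly at the two ends, where the margin $C_*\Psi-p^2$ tends to zero: at $t=0$ the ratio $r$ tends to $C_*$ itself, and at $t=1$ both sides vanish. For these ends you give only leading-order asymptotics, namely $r\approx C_*-\kappa t^2$ with $\kappa>0$ near $0$, and $r\approx 4/\log\frac1{1-t}$ near $1$. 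These yield no inequality on an explicit interval $[0,t_1]$ or $[t_2,1]$ without remainder bounds that you have not supplied. Near $t=0$, such a bound essentially requires controlling the whole tail of the series anyway.

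The missing idea lies in the series route you dismissed. Expand the difference as a single even series, $U(t)=C_*\Psi(t)-(\arcsin t-\frac\pi2t)^2=\sum_{k\ge1}b_kt^{2k}$, using the classical series for $\arcsin t$ and $\arcsin^2t$. The choice of $C_*$ makes $b_1=0$. For $n\ge1$ one finds $b_{n+1}=(c_n-C_*)/[2(n+1)(2n+1)]$, where $c_n=2\pi(n+1)a_n-2/a_n$ and $a_n=4^{-n}\binom{2n}{n}$. The Wallis bound $\pi na_n^2<1$ makes $c_n$ strictly decreasing, and $c_3<C_*<c_2$. Hence $b_2,b_3>0$ while $b_k<0$ for all $k\ge4$, a single sign change. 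The mixed signs are harmless because the endpoint identity $U(1)=0$, valid by absolute convergence at $t=1$, says $\sum_kb_k=0$. With $x=t^2\in[0,1]$ one has $b_kx^k\ge b_kx^3$ for $k\ge4$, so $U(t)\ge b_2x^2+\bigl(b_3+\sum_{k\ge4}b_k\bigr)x^3=b_2x^2(1-x)\ge0$. This argument is fully analytic and handles both endpoints at once, with no numerical verification.
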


\begin{proof}
By \eqref{ub:entropy-series}, $\Psi\ge0$ on $[-1,1]$.
We compare the two sides by their even power series. Put
$a_n=4^{-n}\binom{2n}{n}$ and write
\[
 U(t)=C_*\Psi(t)-(\arcsin t-\tfrac\pi2t)^2
     =\sum_{k\ge1}b_kt^{2k}.
\]
The choice of $C_*$ makes $b_1=0$. The standard series for
$\arcsin t$ and its square give, for $n\ge1$,
\begin{equation}\label{fh:error-coefficients}
 b_{n+1}=\frac{c_n-C_*}{2(n+1)(2n+1)},
 \qquad c_n=2\pi(n+1)a_n-\frac2{a_n}.
\end{equation}
Thus $c_n$ is the threshold determining the sign of the coefficient.
These thresholds decrease with $n$. Indeed, the Wallis integrals give
\[
 \frac\pi2a_n=\int_0^{\pi/2}\sin^{2n}x\,dx
 <\int_0^{\pi/2}\sin^{2n-1}x\,dx=\frac1{2na_n},
\]
so $\pi n a_n^2<1$. Using
$a_{n+1}/a_n=(2n+1)/(2n+2)$, we obtain
\[
 c_{n+1}-c_n
 =\frac{\pi n(2n+1)a_n^2-2(n+1)}{(n+1)(2n+1)a_n}<0.
\]

It remains to locate the single sign change. We have
\[
 c_3=\frac{5\pi}{2}-\frac{32}{5}
 <C_*<
 \frac{9\pi}{4}-\frac{16}{3}=c_2.
\]
For the logarithm bounds, retain the first two terms in
$\log2=2\sum_{j\ge0}((2j+1)3^{2j+1})^{-1}$ and bound the
remaining denominators below by $5$.
Consequently $b_2,b_3>0$ and $b_k<0$ for every $k\ge4$.

The series converge absolutely at $t=1$, where $U(1)=0$.
This endpoint identity and the single sign change determine the
sign on the whole interval: for $x=t^2\in[0,1]$, each negative
term satisfies $b_kx^k\ge b_kx^3$ when $k\ge4$. Hence
\[
 U(t)\ge b_2x^2+
       \left(b_3+\sum_{k\ge4}b_k\right)x^3
       =b_2x^2(1-x)\ge0.
\]
Evenness proves the scalar inequality on $[-1,1]$.
The ratio of its left-hand side to $\Psi(t)$ tends to $C_*$ as
$t\to0$, proving optimality. Averaging proves the function inequality.
\end{proof}

\subsection{An exact edge refinement}

We now bound the excess of entropy production over arcsine energy.
The main result below combines these edge remainders over any chosen
set of coordinates and gives a common budget for their profile heights.

For $v>0$, define
\begin{equation}\label{eq:arcsine-remainder}
 \zeta(v)=\arcsin(\tanh v)=\arctan(\sinh v),\qquad
 R(v)=v-\frac{\zeta(v)^2}{\tanh v},
\end{equation}
with $R(0)=0$. On a centered edge with endpoint magnitude $\tanh v$,
the production is $v\tanh v$ and the arcsine energy is $\zeta(v)^2$.
Thus $R(v)$ is the excess production per unit of edge size.
To evaluate this excess at the dictator entropy, use $F$ from
\eqref{eq:entropy-profile} and put
\begin{equation}\label{eq:profile-height}
 \ell_u(a)=F^{-1}(F(u)/a)\quad(a>0),\qquad \ell_u(0)=0.
\end{equation}
Indeed, $H(\rho)/(\rho a)=F(u)/a$, so $\ell_u(a)$ is precisely
the profile parameter associated with entropy $H(\rho)$ and
coordinate size $\rho a$. We call it a profile height. The next
proposition bounds the sum of the corresponding coordinate costs
by the total entropy production, which is the common budget used
in the spectral argument.

\begin{proposition}[Simultaneous arcsine refinement]
\label{fh:simultaneous-refund}
Let $f:\{-1,1\}^n\to\{-1,1\}$ be nonconstant,
$0<\rho<1$, $g=T_\rho f$, and $h=\arcsin g$.
For every $A\subseteq[n]$,
\[
 {D(g)}\ge\operatorname{Dir}(h)
 +\rho\sum_{i\in A\colon |\widehat f(i)|>0}
     |\widehat f(i)|R\left(F^{-1}\left(\frac{E_\rho(f)}{\rho |\widehat f(i)|}\right)\right).
\]
Put $u=\atanh\rho$ and $\ell_i=\ell_u(|\widehat f(i)|)$,
using \eqref{eq:profile-height}. If $G_f(\rho)>0$, then
\[
 {D(g)}\ge\operatorname{Dir}(h)
       +\rho\sum_{i\in A}|\widehat f(i)|R(\ell_i).
\]
If $G_f(\rho)>0$, the common profile budget satisfies
\[
 \sum_i |\widehat f(i)|\ell_i\le D(g)/\rho.
\]
\end{proposition}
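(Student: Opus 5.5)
The plan is to prove an edge-level refinement of Lemma~\ref{lem:edge-entropy-profile}, average it over each coordinate direction using a perspective-convexity argument like the one in Lemma~\ref{u963:scalar}, and then use $G_f(\rho)>0$ only through the monotone substitution $E_\rho(f)<H(\rho)$. Throughout, $D(g)=\rho A'_\rho(f)$ by \eqref{eq:gap-derivative}, and $\operatorname{Dir}(h)=\sum_i\E(\partial_ih)^2$ by \eqref{eq:dirichlet-energy}.

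\textbf{Edge inequality.} Take a fiber with values $x_\pm=g_\pm$, and let $e$, $d$ be as in Lemma~\ref{lem:edge-entropy-profile}. The claim is
\[
 \frac{(x_+-x_-)(\atanh x_+-\atanh x_-)}4
 \ge\Bigl(\frac{\arcsin x_+-\arcsin x_-}2\Bigr)^2+|d|\,R\bigl(F^{-1}(e/|d|)\bigr).
\]
For a centered edge $x_\pm=\pm\tanh v$ we have $e=H(\tanh v)$ and $F^{-1}(e/d)=v$, so the claim is an equality by the definition of $R$. For a general edge I would follow the proof of Lemma~\ref{lem:edge-entropy-profile}. With $m,d\ge0$, the rapidity half-gap $u$ and $b=\tanh u$ give the production $du$, and the entropy satisfies $e\ge\ell H(b)$. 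The arcsine gap would be compared with that of the centered edge $\pm b$ scaled by $\ell=d/b$. Concretely, I would show that $du-\bigl(\tfrac12(\arcsin x_+-\arcsin x_-)\bigr)^2\ge d\,R(u')$, where $u'=F^{-1}(\ell H(b)/d)$, either by differentiating in $d$ at fixed $m$ as before or by a direct Cauchy--Schwarz comparison using $(\arcsin)'=\sqrt{\Phi''}$. I would then conclude with $e\ge\ell H(b)$, $F^{-1}$ decreasing, and $R$ increasing.

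\textbf{Scalar facts and averaging.} Next I would check that $R$ is nonnegative and increasing, and that $s\mapsto R(F^{-1}(s))$ is convex and decreasing. Then $(e,q)\mapsto qR(F^{-1}(e/q))$ is a perspective, hence jointly convex, and it is increasing in $q$. Summing the edge inequality over fibers of each $i\in A$ and applying Jensen with $\E e_i=E_\rho(f)$ and $|\E q_i|=\rho|\widehat f(i)|$, exactly as in Proposition~\ref{prop:profile-production}, gives the first display. For $i\notin A$, the Cauchy--Schwarz edge bound stated at the start of Section~\ref{sec:full-head} suffices.

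\textbf{Using the positive gap.} If $G_f(\rho)>0$, then $E_\rho(f)<H(\rho)$. Hence $E_\rho(f)/(\rho a)<F(u)/a$, so $F^{-1}(E_\rho(f)/(\rho a))\ge\ell_u(a)$, and monotonicity of $R$ yields the second display. For the budget, Proposition~\ref{prop:profile-production} gives
\[
 D(g)\ge\sum_i\rho|\widehat f(i)|\,F^{-1}\bigl(E_\rho(f)/(\rho|\widehat f(i)|)\bigr)\ge\rho\sum_i|\widehat f(i)|\,\ell_i,
\]
by the same monotonicity.

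\textbf{Main obstacle.} I expect the hardest step to be the non-centered edge refinement, together with the convexity of $R\circ F^{-1}$. For the latter I would try the parametrization $s=F(v)$, which makes the question one about the sign of $R''F'-R'F''$, and then use the bounds from Lemma~\ref{as:profile-convex}.
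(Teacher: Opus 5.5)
Your overall structure matches the paper's proof. You prove an edge inequality, average it by Jensen using the jointly convex perspective $(e,q)\mapsto|q|R(F^{-1}(e/|q|))$, use the positive gap only through $E_\rho(f)<H(\rho)$ and monotonicity, and read the budget off Proposition~\ref{prop:profile-production}. Those steps are fine.

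The gap is the non-centered edge inequality. It carries all the new content, and you only name it. Note first that your $u'$ is simply $u$: since $\ell/d=1/b$, we have $\ell H(b)/d=H(\tanh u)/\tanh u=F(u)$. So what must be shown, with $u$ your rapidity half-gap, is
\[
 A^2\le d\,\frac{\zeta(u)^2}{\tanh u},\qquad A=\tfrac12(\arcsin x_+-\arcsin x_-).
\]
In words: at fixed rapidity half-gap, the ratio $A^2/d$ is largest for the centered edge. Neither route you suggest delivers this. Edge Cauchy--Schwarz with $(\arcsin)'=\sqrt{\Phi''}$ yields only $A^2\le du$, which is exactly the bound with the term $dR(u)$ discarded. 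Differentiating in $d$ at fixed arithmetic midpoint $m$ is left unspecified. It also moves $u$ at the same time, so it is not clear what quantity would be monotone; the entropy argument of Lemma~\ref{lem:edge-entropy-profile} has no evident analogue here.

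The missing idea is to hold $u$ fixed and vary the rapidity midpoint $a=\tfrac12(\atanh x_++\atanh x_-)$. From $x_\pm=\tanh(a\pm u)$ one gets
\[
 d=\frac{\sinh u\cosh u}{\cosh^2a+\sinh^2u},\qquad
 A=\arctan\frac{\sinh u}{\cosh a},\qquad
 \frac{A^2}{d}=\tanh u\;q\Bigl(\frac{\sinh u}{\cosh a}\Bigr),
\]
where $q(t)=(1+t^2)(\arctan t/t)^2$. Since $(\log q)'(t)=2[(\arctan t)^{-1}-t^{-1}]/(1+t^2)>0$ and $\cosh a\ge1$, you get $A^2/d\le\tanh u\,q(\sinh u)=\zeta(u)^2/\tanh u$. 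Hence $du-A^2\ge dR(u)$, and your final step via $e\ge\ell H(b)$ then finishes.

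The other step you flag, convexity of $R\circ F^{-1}$, is not a real obstacle. Direct differentiation gives $R'(v)=\bigl(1-\arctan(\sinh v)/\sinh v\bigr)^2$, and $\arctan s/s$ decreases, so $R$ is increasing and convex. Your sign condition $R''F'-R'F''\le0$ then follows at once from $F'<0<F''$.
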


We prove this by comparing each edge with a centered edge and then
averaging. The following convexity allows the edge entropies to be
replaced by their common mean.

\begin{lemma}\label{fh:refund-convexity}
The function $R$ is strictly increasing and strictly convex on
$(0,\infty)$. Define
\begin{equation}\label{eq:edge-remainder}
 \mathcal R(e,d)=dR(F^{-1}(e/d))\qquad(e,d>0).
\end{equation}
This function is jointly convex, increasing in $d$, and decreasing in $e$.
It extends continuously to $d=0$ with value zero, and
$(e,q)\mapsto\mathcal R(e,|q|)$ is jointly convex.
\end{lemma}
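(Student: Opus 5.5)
The plan is to handle the scalar function $R$ by direct differentiation, and then obtain every property of $\mathcal R$ from a perspective argument like the one in Lemma~\ref{u963:scalar}.

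\emph{Step 1: monotonicity and convexity of $R$.} Write $s=\sinh v$ and $\zeta=\arctan s$. Then $\zeta'(v)=\cosh v/(1+s^2)=\operatorname{sech}v$ and $1/\tanh v=\cosh v/s$. Differentiating $\zeta^2/\tanh v=\zeta^2\cosh v/s$ gives
\[
 \frac{d}{dv}\frac{\zeta^2}{\tanh v}
 =\frac{2\zeta}{s}+\zeta^2\,\frac{\sinh v\cdot s-\cosh^2v}{s^2}
 =\frac{2\zeta}{s}-\frac{\zeta^2}{s^2},
\]
using $\cosh^2v-\sinh^2v=1$. This should give the perfect square
\[
 R'(v)=\Bigl(1-\frac{\zeta(v)}{\sinh v}\Bigr)^2 .
\]
Since $\arctan x<x$ for $x>0$, we get $R'>0$ on $(0,\infty)$. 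Moreover $x\mapsto\arctan(x)/x$ is strictly decreasing on $(0,\infty)$, and $\sinh v$ is strictly increasing. Hence $1-\zeta/\sinh v$ is positive and strictly increasing, so $R'$ is strictly increasing and $R$ is strictly convex. Because $\zeta\sim v$ and $\tanh v\sim v$ as $v\downarrow0$, we have $R(v)\to0$, consistent with $R(0)=0$.

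\emph{Step 2: the one-variable function.} Put $\varphi(s)=R(F^{-1}(s))$ for $s>0$. By Lemma~\ref{as:profile-convex}, $F^{-1}$ is decreasing and convex, mapping $(0,\infty)$ onto $(0,\infty)$. Composing the increasing convex $R$ with the convex $F^{-1}$ makes $\varphi$ convex. Composing the increasing $R$ with the decreasing $F^{-1}$ makes $\varphi$ decreasing, and $\varphi>0$. As $s\to\infty$, $F^{-1}(s)\to0$, so $\varphi(s)\to0$.

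\emph{Step 3: the perspective.} We have $\mathcal R(e,d)=d\,\varphi(e/d)$, the perspective of $\varphi$, so it is jointly convex on $(0,\infty)^2$. To see this, take $d=\theta d_1+(1-\theta)d_2$ and write $e/d$ as the convex combination of $e_1/d_1$ and $e_2/d_2$ with weights $\theta d_1/d$ and $(1-\theta)d_2/d$, exactly as in Lemma~\ref{u963:scalar}.

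The remaining properties follow from $\varphi$:
\begin{itemize}
 \item Monotonicity in $e$ holds because $\varphi$ is decreasing.
 \item Monotonicity in $d$ comes from $\partial_d\mathcal R=\varphi(s)-s\varphi'(s)$ with $s=e/d$. This is nonnegative since $\varphi\ge0$ and $\varphi'\le0$; at points where $\varphi$ is not differentiable, use one-sided derivatives.
 \item Continuity at $d=0$: for fixed $e>0$, $0\le\mathcal R(e,d)\le d\,\varphi(e/d)\le d\,R(F^{-1}(e/d))$, and $F^{-1}(e/d)\to0$, so the value tends to $0$. This holds locally uniformly in $e$, by monotonicity in $e$.
\end{itemize}

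Finally, $(e,q)\mapsto\mathcal R(e,|q|)$ is the composition of a jointly convex function that is nondecreasing in its second argument with the convex map $q\mapsto|q|$. With the continuous extension by zero at $q=0$, it is jointly convex on $(0,\infty)\times\R$.

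The only place needing care is Step 1, and there the main point is noticing that $R'$ collapses to a perfect square. Once that identity is checked, everything else is formal and mirrors Lemma~\ref{u963:scalar}.
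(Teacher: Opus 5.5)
Your proof is correct and follows the paper's argument essentially step for step: the perfect-square formula $R'(v)=(1-\arctan(\sinh v)/\sinh v)^2$, convexity and decrease of $R\circ F^{-1}$ by composing $R$ with the convex decreasing $F^{-1}$, the perspective argument from Lemma~\ref{u963:scalar} for joint convexity, and then monotonicity in $d$ together with $|q|\le\theta|q_1|+(1-\theta)|q_2|$ for the even extension. The paper additionally justifies why $\arctan x/x$ strictly decreases (its derivative has the sign of $x/(1+x^2)-\arctan x<0$), which you should include; your one-sided-derivative caveat is unnecessary, since $F'<0$ makes $\varphi$ smooth.
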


\begin{proof}
Direct differentiation gives
$R'(v)=[1-\arctan(\sinh v)/\sinh v]^2$.
For $s>0$, $\arctan s/s$ strictly decreases from $1$ to $0$,
because $\arctan s>s/(1+s^2)$.  Thus $R',R''>0$.
By Lemma~\ref{as:profile-convex}, the inverse of $F$ is convex and
decreasing. Composition with the increasing convex function $R$
shows that $R\circ F^{-1}$ is convex and decreasing.
The perspective argument in the proof of Lemma~\ref{u963:scalar}
therefore gives joint convexity of $\mathcal R$; its decrease in $e$
follows directly from the defining formula.
Increasing $d$ increases both nonnegative factors in $\mathcal R$.
Also $F^{-1}(e/d)\to0$ as $d\downarrow0$, which proves the
continuous extension. Monotonicity in $d$ and the triangle inequality
give convexity of the even extension.
\end{proof}

\begin{lemma}\label{fh:edge-inequality}
Let $x=c+d$, $y=c-d$ belong to $(-1,1)$, with $d>0$, and put
\[
 b=\frac{\atanh x-\atanh y}{2},\qquad
 A=\frac{\arcsin x-\arcsin y}{2},\qquad
 e=\frac{H(x)+H(y)}2.
\]
Then
\[
 db-A^2\ge dR(F^{-1}(e/d)).
\]
Equality holds when $c=0$, and the inequality is strict when
$c\ne0$.  A zero-length edge contributes zero to both sides, with
the extension in Lemma~\ref{fh:refund-convexity}.
\end{lemma}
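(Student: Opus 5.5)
The plan mirrors the proof of Lemma~\ref{lem:edge-entropy-profile}: fix the mean $c$, vary the half difference $d$, and compare the edge with a centered edge carrying the same transformed data. By the symmetries $t\mapsto-t$ (which negates $x,y$ and swaps them, preserving $b,A,e,d$) we may assume $c\ge0$.

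\textbf{Centered case $c=0$.} Here $x=t:=d$ and $y=-t$. Writing $v=\atanh t$, we have $b=v$, $A=\arcsin t=\zeta(v)$, and $e=H(t)$. Hence $F^{-1}(e/d)=F^{-1}(H(t)/t)=v$, and
\[
 db-A^2=tv-\zeta(v)^2=t\,R(v)=d\,R(F^{-1}(e/d)).
\]
This gives the equality case.

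\textbf{General case $c>0$.} As in Lemma~\ref{lem:edge-entropy-profile}, introduce the centered edge with the same hyperbolic half-gap: set $w=b$, $s=\tanh w$ and $\ell=d/s\in(0,1]$. The addition formula gives
\[
 \ell+d^2/\ell=1-c^2+d^2 ,
\]
and the earlier lemma shows $e\ge\ell H(s)$. Since $\mathcal R(e,d)$ is decreasing in $e$ (Lemma~\ref{fh:refund-convexity}), it suffices to prove
\[
 db-A^2\ge \mathcal R(\ell H(s),d)=\ell\,\mathcal R(H(s),s)=d\,R(w),
\]
using $d=\ell s$ and the homogeneity $\mathcal R(\lambda e,\lambda d)=\lambda\mathcal R(e,d)$. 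Because $db=\ell s w$ and $dR(w)=\ell s w-\ell\zeta(w)^2$, the target reduces to the pure arcsine comparison
\[
 A^2\le \ell\,\zeta(w)^2,\qquad\text{i.e.}\qquad \frac{\arcsin x-\arcsin y}{2}\le\sqrt{\ell}\,\arcsin(\tanh w).
\]

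\textbf{The arcsine comparison.} Fix $w$ and let $c$ increase from $0$. Along this curve, $x=\tanh(m+w)$ and $y=\tanh(m-w)$ with $m=(\atanh x+\atanh y)/2$ increasing. In this parametrisation $\ell=\cosh^2 w/\cosh(m+w)\cosh(m-w)$, since $d=(\tanh(m+w)-\tanh(m-w))/2=\sinh(2w)/(2\cosh(m+w)\cosh(m-w))$. Also $\arcsin\tanh=\operatorname{gd}$, the Gudermannian, with derivative $\operatorname{sech}$. The inequality becomes
\[
 \tfrac12\int_{m-w}^{m+w}\operatorname{sech}r\,dr\le \frac{\cosh w}{\sqrt{\cosh(m+w)\cosh(m-w)}}\int_0^{w}\operatorname{sech}r\,dr,
\]
with equality at $m=0$. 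I would prove it by showing that the log-derivative in $m$ of the left side is at most that of the right side. Equivalently, normalise both sides by $\sqrt{\operatorname{sech}(m+w)\operatorname{sech}(m-w)}$ and use log-concavity of $\operatorname{sech}$: the left side is an average of $\operatorname{sech}$ over a window, and by the Prékopa/AM–GM type bound for log-concave functions the window average decays relative to the geometric mean of the endpoint values. Strictness for $c\ne0$ comes from strict log-concavity of $\operatorname{sech}$ (since $(\log\operatorname{sech})''=-\operatorname{sech}^2<0$).

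This scalar monotonicity in $m$ is the main obstacle. My first attempt is the substitution above, differentiating $\log(\text{LHS})-\log(\text{RHS})$ in $m$. This reduces the claim to
\[
 \frac{\operatorname{sech}(m+w)-\operatorname{sech}(m-w)}{\int_{m-w}^{m+w}\operatorname{sech}}\le-\frac{\tanh(m+w)+\tanh(m-w)}{2},
\]
a Hermite–Hadamard-type inequality for the log-concave function $\operatorname{sech}$. I expect to settle it via convexity of $-\log\operatorname{sech}$, namely the chord-slope versus averaged-tangent comparison.

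The zero-length edge statement follows from the continuous extension in Lemma~\ref{fh:refund-convexity}.
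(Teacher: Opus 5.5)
Your reduction is correct and is essentially the paper's. Lemma~\ref{lem:edge-entropy-profile} gives $b\ge F^{-1}(e/d)$, i.e.\ $e\ge \ell H(\tanh b)$ with $\ell=d/\tanh b$. Since $R$ is increasing, the lemma reduces to $A^2\le \ell\,\zeta(b)^2$, equivalently $A^2/d\le\zeta(b)^2/\tanh b$. Your centered case and the formula $\ell=\cosh^2w/[\cosh(m+w)\cosh(m-w)]$ are also right.

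The gap is that this arcsine comparison is the whole content of the lemma, you leave it open, and the mechanism you propose for it does not work. Write $\phi=\log\cosh$, $q=m-w$, $p=m+w$. Your log-derivative condition is then equivalent to
\[
 \int_q^p\phi'e^{-\phi}\ge\frac{\phi'(p)+\phi'(q)}2\int_q^pe^{-\phi}.
\]
This does not follow from convexity of $\phi$, i.e.\ from log-concavity of $\operatorname{sech}$. For the log-concave Gaussian, $\phi=r^2/2$, take $q=0$, $p=2$: the left side is $1-e^{-2}\approx0.86$, while the right side is $\int_0^2e^{-r^2/2}\,dr\approx1.20$. Equivalently, for the Gaussian the ratio of the window average to the geometric mean of the endpoint values \emph{increases} in $m$. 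So the Pr\'ekopa/AM--GM heuristic points the wrong way, and strict log-concavity is not what gives strictness.

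What is needed is a property specific to $\operatorname{sech}$. The paper uses the closed forms
\[
 A=\arctan\Bigl(\frac{\sinh b}{\cosh a}\Bigr),\qquad d=\frac{\sinh b\cosh b}{\cosh^2a+\sinh^2b},
\]
where $a=(\atanh x+\atanh y)/2$ is your $m$. These give $A^2/d=\tanh b\;q(\sinh b/\cosh a)$ with $q(t)=(1+t^2)(\arctan t/t)^2$. The logarithmic derivative of $q$ is $2[(\arctan t)^{-1}-t^{-1}]/(1+t^2)>0$, so $\cosh a\ge1$ yields $A^2/d\le\tanh b\;q(\sinh b)=\zeta(b)^2/\tanh b$, strictly when $a\ne0$.

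Your differential route can also be completed, but through the Gudermannian substitution rather than log-concavity. With $s=\operatorname{gd}r$ one has $\operatorname{sech}r\,dr=ds$ and $\tanh r=\sin s$. Set $\alpha=\operatorname{gd}q$, $\beta=\operatorname{gd}p$, $\sigma=(\alpha+\beta)/2$ and $\delta=(\beta-\alpha)/2\in(0,\pi/2)$; when $m>0$ we have $\sigma\in(0,\pi/2)$. The displayed condition then becomes
\[
 \sin\sigma\Bigl(\frac{\sin\delta}{\delta}-\cos\delta\Bigr)\ge0,
\]
i.e.\ $\tan\delta\ge\delta$. This holds strictly for $m>0$, which also gives the strictness claim.
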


\begin{proof}
Put $a=(\atanh x+\atanh y)/2$ and $\ell=F^{-1}(e/d)$.
Lemma~\ref{lem:edge-entropy-profile} gives $b\ge\ell$.
The identities $x=\tanh(a+b)$ and $y=\tanh(a-b)$ imply
\[
 d=\frac{\sinh b\cosh b}{\cosh^2a+\sinh^2b},\qquad
 A=\arctan\!\left(\frac{\sinh b}{\cosh a}\right).
\]
For $t>0$, the function $q(t)=(1+t^2)(\arctan t/t)^2$
is strictly increasing: its logarithmic derivative is
$2[(\arctan t)^{-1}-t^{-1}]/(1+t^2)>0$. Hence
\[
 \frac{A^2}{d}
 =\tanh b\,q\!\left(\frac{\sinh b}{\cosh a}\right)
 \le\tanh b\,q(\sinh b)
 =\frac{\zeta(b)^2}{\tanh b}.
\]
Hence $db-A^2\ge dR(b)\ge dR(\ell)$ by
Lemma~\ref{fh:refund-convexity}.
The first inequality is strict unless $a=0$, equivalently $c=0$;
in that case $e/d=F(b)$ and both inequalities are equalities.
\end{proof}

\begin{proof}[Proof of Proposition~\ref{fh:simultaneous-refund}]
Write $e=E_\rho(f)$. On coordinate fibers let $e_i$ be the average
endpoint entropy and $q_i=(g_+-g_-)/2$ the signed half-difference.
Then $\E e_i=e$ and $\E q_i=\rho\widehat f(i)$.
Average the inequality in Lemma~\ref{fh:edge-inequality} and apply Jensen's inequality
to the convex even extension of $\mathcal R$ from \eqref{eq:edge-remainder}. For $|\widehat f(i)|>0$ this gives
\[
 \mathcal D_i(g)-\operatorname{Dir}_i(h)
 \ge\rho |\widehat f(i)|R\left(F^{-1}\left(\frac{e}{\rho |\widehat f(i)|}\right)\right)
 \quad(|\widehat f(i)|>0).
\]
The other coordinates have nonnegative remainders by edge
Cauchy--Schwarz, including those with $|\widehat f(i)|=0$. Summing proves the first inequality.
Positive gap gives $e<H(\rho)$; monotonicity in $e$ therefore gives the positive-gap bound.

Finally, Proposition~\ref{prop:profile-production} gives
$D(g)\ge\sum_i B_e(\rho |\widehat f(i)|)$. Positive gap and monotonicity in $e$
bound this sum below by $\sum_i B_{H(\rho)}(\rho |\widehat f(i)|)
=\rho\sum_i |\widehat f(i)|\ell_i$, proving the profile budget.  
\end{proof}

\subsection{Spectral completion}\label{sec:spectral-completion}

Completing a square transfers transformed energy to the noisy
source. It also yields entropy inequalities for the finite reversible
Markov semigroups defined in Section~\ref{sec:markov-semigroups}.
We retain their general form because the argument uses the reversible
edge identity and the spectrum of the generator. We then return to
the cube and the projection argument.

\begin{theorem}[Entropy bounds under reversible noise]\label{fh:entropy-smoothing}
Let $P_s=e^{-s\mathcal L}$ be a Markov semigroup on a finite
probability space $(\Omega,\nu)$, reversible with stationary measure $\nu$.
Expectations use $\nu$, and
$\operatorname{Dir}(h)=\E[h\mathcal Lh]$ for $h:\Omega\to\R$.
For every $f:\Omega\to[-1,1]$ and $s>0$,
\[
\begin{split}
 \E H(P_sf)&\ge(\log2)(1-\E f^2)
   +\frac{9}{2s^{3/2}}\int_0^s t^{3/2}\operatorname{Dir}(P_tf)\,dt,\\
 \E\sqrt{1-(P_sf)^2}&\ge1-\E f^2
   +\frac5{2\sqrt s}\int_0^s\sqrt t\,\operatorname{Dir}(P_tf)\,dt.
\end{split}
\]
\end{theorem}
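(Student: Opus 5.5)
The plan is a time-weighted integration of the entropy production along the semigroup. Then I would use the spectral structure of $\operatorname{Dir}(P_tf)$ to redistribute the resulting weights into the form $(t/s)^{3/2}$ or $(t/s)^{1/2}$. I treat both inequalities at once. Let $\psi$ be either $H$ or $x\mapsto\sqrt{1-x^2}$, let $c_\psi$ be $\log 2$ or $1$ respectively, and put $\varphi(t)=\E\psi(P_tf)$. For a strictly interior $f$, reversibility and the edge form of $\langle\cdot,\mathcal L\cdot\rangle_\nu$ give
\[
 \varphi'(t)=\tfrac12\sum_{x\ne y}\nu(x)q(x,y)\,[g(x)-g(y)]\,[\psi'(g(y))-\psi'(g(x))],\qquad g=P_tf .
\]
The general $f:\Omega\to[-1,1]$ then follows by approximating with $(1-\epsilon)f$. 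Since $-\psi'$ has derivative at least $1$ on $(-1,1)$ (it is $(1-x^2)^{-1}$ for $H$ and $(1-x^2)^{-3/2}$ for the square root), each edge gives $\varphi'(t)\ge\operatorname{Dir}(P_tf)$. The quadratic part is exact:
\[
 1-\E(P_tf)^2=1-\E f^2+2\int_0^t\operatorname{Dir}(P_rf)\,dr .
\]
This follows from $\frac{d}{dt}\E(P_tf)^2=-2\operatorname{Dir}(P_tf)$.

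Fix an exponent $\alpha>0$ (to be tuned, presumably $\alpha=3/2$ for $H$ and $1/2$ for the square root). Write
\[
 s^\alpha\varphi(s)=\int_0^s\bigl[\alpha t^{\alpha-1}\varphi(t)+t^\alpha\varphi'(t)\bigr]dt .
\]
In the first term use a pointwise lower bound $\psi(x)\ge c_\psi(1-x^2)+\Psi_\psi(x)$, with $\Psi_\psi\ge0$ the series remainder; for $H$ this is \eqref{eq:entropy-correction} and \eqref{ub:entropy-series}. Insert the exact quadratic identity and apply Fubini. This yields
\[
 \varphi(s)\ge c_\psi(1-\E f^2)+\int_0^s w(r)\operatorname{Dir}(P_rf)\,dr ,
\]
with a weight of the shape $w(r)=2c_\psi\bigl(1-(r/s)^\alpha\bigr)+\kappa(r/s)^\alpha$. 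Here $\kappa$ is the edge constant from the second term. I will also keep the remainder terms $\alpha s^{-\alpha}\int t^{\alpha-1}\E\Psi_\psi(P_tf)\,dt$ and the surplus of the edge production over $\operatorname{Dir}$, for instance the arcsine comparison $D(g)\ge\operatorname{Dir}(\arcsin g)$ for $H$.

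The transfer step uses the spectral decomposition
\[
 \operatorname{Dir}(P_rf)=\sum_\lambda\lambda e^{-2r\lambda}\|f_\lambda\|_{2,\nu}^2 .
\]
This is nonincreasing in $r$ (indeed completely monotone). Hence $\int w\operatorname{Dir}\ge\int\tilde w\operatorname{Dir}$ whenever $\int_0^x w\ge\int_0^x\tilde w$ for all $x\le s$. More sharply, it suffices to check the inequality eigenvalue by eigenvalue: $\int_0^s(w-\tilde w)(r)e^{-2r\lambda}dr\ge0$ for every $\lambda\ge0$, with $\tilde w(r)=\tfrac92(r/s)^{3/2}$ or $\tfrac52(r/s)^{1/2}$. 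Each such check is a one-variable inequality in $\lambda s$, which I would verify by the cumulative-mass criterion where possible and by direct evaluation otherwise.

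I expect the main obstacle to be that the crude bound with $\kappa=1$ does not suffice. With $\kappa=1$ the total mass of $w$ is $2c_\psi-(2c_\psi-1)/(\alpha+1)$, which falls short of $\int_0^s\tilde w=9s/5$ (respectively $5s/3$). The missing mass must therefore come from the retained nonlinear terms, namely the curvature surplus of $-\psi'$ and the remainder $\Psi_\psi$ evaluated along the flow. My first attempt would be to lower-bound those terms also by quadratic spectral quantities. I would use $\E\Psi_\psi(P_tf)\ge\beta\,\E\bigl(1-(P_tf)^2\bigr)^2$-type estimates, together with Cauchy--Schwarz between $\operatorname{Dir}(P_tf)$ and $\E(\mathcal LP_tf)^2=-\tfrac12\frac{d}{dt}\operatorname{Dir}(P_tf)$. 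I would then integrate these by parts against $t^\alpha$, which naturally produces the increasing weights $t^{3/2}$ and $t^{1/2}$ together with the constants $9/2=\tfrac32\cdot3$ and $5/2=\tfrac12\cdot5$. Matching these constants exactly is the delicate part of the argument.
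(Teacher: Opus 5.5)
Your outer framework matches the paper's: it also reaches the theorem by integrating with the factor $t^{3/2}$ (resp.\ $t^{1/2}$). Your Fubini treatment of the quadratic part is equivalent to the paper's spectral bookkeeping $\log 2\sum_\lambda(1-e^{-2t\lambda})\|f_\lambda\|_{2,\nu}^2$. You also correctly see that $D\ge\operatorname{Dir}$ leaves a low-frequency deficit: total mass about $1.23$ against $9/5$.

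Closing that deficit is the whole content of the theorem, and the proposal does not do it. The estimate you suggest, $\E\Psi_\psi(P_tf)\ge\beta\,\E(1-(P_tf)^2)^2$, is false: $\Psi_\psi(0)=0$ for both choices of $\psi$, so it fails for $f$ of small amplitude. Neither nonlinear term can carry the load alone. $\E\Psi_\psi(g)$ vanishes on $\pm1$-valued $g$, while $D(g)-\operatorname{Dir}(g)=o(\operatorname{Dir}(g))$ as the amplitude of $g$ tends to zero. You do mention $D(g)\ge\operatorname{Dir}(\arcsin g)$, but the slope bound $\arcsin'\ge1$ only returns $\operatorname{Dir}(\arcsin g)\ge\operatorname{Dir}(g)$, which is $\kappa=1$ again. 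The proposed Cauchy--Schwarz and integration by parts do not by themselves produce $9/2$ or $5/2$.

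The missing idea is to couple the production and the remainder at the same time $t$ through a spectral square completion. Put $g=P_tf$ and $h=\arcsin g$. Lemma~\ref{fh:arcsine-error} gives $\|h-\tfrac\pi2g\|_2^2\le C_*\E\Psi(g)$ with $C_*<12/7$. Lemma~\ref{fh:square-completion}, applied with $A=\mathcal L$ and $\gamma=7/(8t)$, then gives
\[
 D(g)+\frac{3}{2t}\E\Psi(g)\ge\operatorname{Dir}(h)+\gamma\|h-\tfrac\pi2 g\|_2^2
 \ge\frac{\pi^2}{4}\sum_\lambda\frac{\gamma\lambda}{\gamma+\lambda}e^{-2t\lambda}\|f_\lambda\|_{2,\nu}^2 .
\]
The $L^2$ closeness of $h$ to $\tfrac\pi2 g$, paid for by $\Psi$, is what raises the low-frequency constant from $1$ to $\pi^2/4$.

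Add $\frac{3\log2}{2t}\sum_\lambda(1-e^{-2t\lambda})\|f_\lambda\|_{2,\nu}^2$ and check the scalar inequality $\frac{7\pi^2/4}{7+4x}+3\log2\,\frac{e^x-1}{x}\ge\frac92$ for $x=2t\lambda\ge0$. This gives the pointwise-in-time inequality of Theorem~\ref{fh:markov-energy}, and your integrating factor then finishes the proof.

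The Hellinger case runs the same way with these substitutions:
\begin{itemize}
\item use $\Theta(x)=\int_0^x(1-y^2)^{-3/4}\,dy$ in place of $\arcsin$, so that $\operatorname{Dir}(\Theta(g))\le D_J(g)$;
\item use $|\Theta(x)-\Theta(1)x|^2\le 6(J-J^2)$ from Lemma~\ref{fh:hellinger-interpolation};
\item take $\gamma=1/(12t)$ and use $\Theta(1)^2>25/4$;
\item check $\frac{25/4}{1+6x}+\frac{e^x-1}{x}\ge\frac52$.
\end{itemize}

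Without this transform-plus-square-completion step, or some substitute yielding a comparable constant, your eigenvalue-by-eigenvalue weight comparison cannot be closed.
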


\begin{lemma}[Spectral transfer]\label{fh:square-completion}
Let $A$ be self-adjoint on a finite-dimensional real Hilbert space.
If $\gamma>0$ and $A+\gamma I$ is positive definite, then
\[
 \langle v,Av\rangle+\gamma\|v-w\|^2
 \ge\langle w,\gamma A(\gamma I+A)^{-1}w\rangle
\]
for all $v,w$.
\end{lemma}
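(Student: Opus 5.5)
The plan is to minimize the left side over $v$ for fixed $w$ and show that the minimum equals the right side. Write $F(v)=\langle v,Av\rangle+\gamma\|v-w\|^2=\langle v,(A+\gamma I)v\rangle-2\gamma\langle v,w\rangle+\gamma\|w\|^2$. Since $A+\gamma I$ is positive definite, $F$ is a strictly convex quadratic in $v$. Its unique minimizer solves $(A+\gamma I)v=\gamma w$, namely $v_*=\gamma(A+\gamma I)^{-1}w$.

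Next I will substitute $v_*$ back and complete the square. Writing $B=A+\gamma I$, we get
\[
F(v)=\langle v-v_*,B(v-v_*)\rangle+\gamma\|w\|^2-\gamma^2\langle w,B^{-1}w\rangle\ge\gamma\|w\|^2-\gamma^2\langle w,B^{-1}w\rangle .
\]
It then remains to identify this constant with the claimed right side. Since $A$ and $B^{-1}$ commute, the identity $\gamma I-\gamma^2B^{-1}=\gamma B^{-1}(B-\gamma I)=\gamma A(\gamma I+A)^{-1}$ converts the constant into $\langle w,\gamma A(\gamma I+A)^{-1}w\rangle$.

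An alternative route is to diagonalize $A$ in an orthonormal eigenbasis and reduce to the scalar inequality $\lambda x^2+\gamma(x-y)^2\ge\gamma\lambda y^2/(\gamma+\lambda)$ for $\lambda+\gamma>0$. That inequality follows from the same one-variable minimization.

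I expect no real obstacle. The only point requiring care is that $A$ need not be positive semidefinite. The positivity of $A+\gamma I$ is exactly what guarantees that the quadratic in $v$ is convex and that $(A+\gamma I)^{-1}$ exists, so I will invoke it explicitly at the completion-of-squares step and nowhere else. Self-adjointness is used in two places: to write the quadratic form in symmetric form, and to justify the commutation of $A$ with $B^{-1}$ in the final identity.
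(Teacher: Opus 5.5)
Your proposal is correct and is essentially the paper's argument: the paper diagonalizes $A$ and applies the scalar identity $\lambda x^2+\gamma(x-y)^2=(\lambda+\gamma)\bigl(x-\tfrac{\gamma y}{\lambda+\gamma}\bigr)^2+\tfrac{\gamma\lambda}{\gamma+\lambda}y^2$, discarding the square. Your coordinate-free completion of the square with $B=A+\gamma I$ and $v_*=\gamma B^{-1}w$ is the same computation done without choosing a basis, and your stated alternative route is exactly the paper's proof.
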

\begin{proof}
In an orthonormal eigenbasis of $A$, use the scalar identity
\[
 \lambda x^2+\gamma(x-y)^2
 =(\lambda+\gamma)\left(x-\frac{\gamma y}{\lambda+\gamma}\right)^2
   +\frac{\gamma\lambda}{\gamma+\lambda}y^2,
\]
and discard the square. Negative eigenvalues are allowed as long
as $\lambda> -\gamma$.
\end{proof}

\begin{theorem}[Entropy and smoothed energy]\label{fh:markov-energy}
Let $P_s=e^{-s\mathcal L}$ be a Markov semigroup on a finite
probability space $(\Omega,\nu)$, reversible with stationary measure $\nu$.
Write $\operatorname{Dir}(h)=\E[h\mathcal Lh]$ for $h:\Omega\to\R$,
and $D(h)=\E[(\mathcal Lh)\atanh h]$ for $h:\Omega\to(-1,1)$.
For every $f:\Omega\to[-1,1]$ and $s>0$,
\[
 \frac92\operatorname{Dir}(P_sf)
 \le D(P_sf)+\frac{3}{2s}
 \left[\E H(P_sf)-(\log2)(1-\E f^2)\right].
\]
Endpoint values are understood by approximation.
\end{theorem}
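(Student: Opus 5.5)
The plan is to prove the stated inequality directly, eigenvalue by eigenvalue in the spectral decomposition of $\mathcal L$. I do not derive it from Theorem~\ref{fh:entropy-smoothing}. Rather, the statement is the differential form of the first inequality there: with $\Lambda(s)=\E H(P_sf)-(\log2)(1-\E f^2)$ and $\Lambda'(s)=D(P_sf)$, it reads $(s^{3/2}\Lambda)'\ge\frac92 s^{3/2}\operatorname{Dir}(P_sf)$. Integrating it would give that theorem.

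\emph{Setup.} Put $g=P_sf$, $h=\arcsin g$ and $\mathcal J=\E\Psi(g)$. By approximation assume $|f|<1$, so that $|g|<1$. The argument mirrors the cube arguments of this section, using only reversibility.

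\emph{Production side.} On each pair $(x,y)$ in the reversible edge form, Cauchy--Schwarz with $(\arcsin)'=\sqrt{\Phi''}$ gives $(\arcsin b-\arcsin a)^2\le(b-a)(\atanh b-\atanh a)$. Summing over pairs yields $D(g)\ge\operatorname{Dir}(h)$.

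\emph{Spectral transfer.} Apply Lemma~\ref{fh:square-completion} with $A=\mathcal L\ge0$, $v=h$ and $w=\frac\pi2 g$. The scalar bound of Lemma~\ref{fh:arcsine-error} is pointwise, so it holds verbatim under $\nu$, giving $\|h-\frac\pi2g\|_2^2\le C_*\mathcal J$. Therefore
\[
 D(g)+\gamma C_*\mathcal J\ge\frac{\pi^2}{4}\sum_\lambda\frac{\gamma\lambda}{\gamma+\lambda}\|g_\lambda\|_2^2 .
\]

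\emph{Entropy side.} By definition of $\Psi$,
\[
 \E H(g)-(\log2)(1-\E f^2)=\mathcal J+(\log2)(\E f^2-\E g^2).
\]
Since $f_\lambda=e^{s\lambda}g_\lambda$, we have $\E f^2-\E g^2=\sum_\lambda(e^{2s\lambda}-1)\|g_\lambda\|_2^2$. Now choose $\gamma=3/(2sC_*)$, so that the term $\frac{3}{2s}\mathcal J$ is exactly absorbed by the transfer inequality.

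\emph{Reduction.} The right side of the theorem is then bounded below by $\sum_\lambda c(\lambda)\|g_\lambda\|_2^2$, while $\frac92\operatorname{Dir}(g)=\frac92\sum_\lambda\lambda\|g_\lambda\|_2^2$. Setting $x=2s\lambda\ge0$ and multiplying by $2s$, it suffices to prove the scalar inequality
\[
 \frac{\pi^2}{4}\,\frac{3x}{3+C_*x}+3\log2\,(e^x-1)\ge 9x\qquad(x\ge0).
\]

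\emph{Main obstacle.} This scalar inequality is where I expect the difficulty, because the constants are nearly tight at $x=0$. There, after dividing by $x$, the left side is $\pi^2/4+3\log2\approx4.547$ against $9/2$. I would first use $e^x-1\ge x+x^2/2+x^3/6$. Clearing the denominator $3+C_*x$ then turns the claim into positivity of an explicit polynomial with coefficients involving $\pi$, $\log2$ and $C_*$. These can be certified by the rational bounds for $\log2$ already used (e.g.\ $9/13<\log2<7/10$) and standard bounds for $\pi$. The linear coefficient has margin about $0.047$. The quadratic coefficient should be handled by the $x^2/2$ term dominating the loss $\frac{\pi^2}{4}\cdot\frac{C_*}{3}x^2$. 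Note that $C_*\approx1.71$, so $\frac{\pi^2C_*}{12}\approx1.41<\frac32\log2\cdot\ldots$ needs checking. If this check is tight, I would keep the exact exponential and split into $x\le x_0$ and $x\ge x_0$. The large-$x$ range is trivial, since $3\log2\,(e^x-1)$ eventually dominates $9x$.
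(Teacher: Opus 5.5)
Your argument has the same structure as the paper's proof. You use edgewise Cauchy--Schwarz for $\operatorname{Dir}(\arcsin g)\le D(g)$, then square completion (Lemma~\ref{fh:square-completion}) with $v=\arcsin g$, $w=\frac\pi2 g$ and the sharp arcsine error. This is followed by the identity $\E H(g)-(\log2)(1-\E f^2)=\mathcal J+(\log2)(\E f^2-\E g^2)$ and an eigenvalue-wise scalar inequality. The only difference is that you take $\gamma=3/(2sC_*)$ exactly. The paper instead uses $C_*<12/7$ and takes $\gamma=7/(8s)$. The left side of the scalar inequality decreases in $C_*$, so you need an upper bound on $C_*$ anyway, and the exact choice gains nothing.

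The gap is the scalar inequality itself. It is the only nontrivial content left, and you leave it open.

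\emph{A slip.} The right-hand side should be $\frac92x$, not $9x$, since $2s\cdot\frac92\lambda=\frac92x$. Your check at $x=0$ already compares with $9/2$.

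\emph{The sketched mechanism fails.} Suppose you bound $\frac{3x}{3+C_*x}\ge x-\frac{C_*}{3}x^2$ and compare $x^2$ coefficients. You would need $\frac32\log2\ge\frac{\pi^2C_*}{12}$, that is, about $1.04\ge1.39$, which is false. Worse, the resulting lower bound is negative on an interval even with $e^x$ kept exact; at $x=1/2$ it is about $-0.014$. So splitting at some $x_0$ cannot rescue it.

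\emph{The fix.} Keep the rational term exact. Multiply by $(3+C_*x)/x$ and use $(e^x-1)/x\ge1+x/2+x^2/6$. This leaves $c_0+c_1x+c_2x^2+c_3x^3$ with
\[
 c_0=3\Bigl(\tfrac{\pi^2}{4}+3\log2-\tfrac92\Bigr)\approx0.14,\quad
 c_1=3\log2\,\bigl(C_*+\tfrac32\bigr)-\tfrac92C_*\approx-0.96,\quad
 c_2=\tfrac32\log2\,(1+C_*)\approx2.79,\quad
 c_3=\tfrac12C_*\log2>0.
\]
The linear coefficient is negative, so there is no termwise domination. Positivity for $x\ge0$ comes instead from $c_1^2<4c_0c_2$ (about $0.93<1.57$), together with $c_3>0$. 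The margin is comfortable, so crude rational bounds for $\pi$, $\log2$ and $C_*$ suffice. The paper does exactly this with $C_*$ replaced by $12/7$ and with $\pi^2/4>123/50$, $\log2>9/13$. It obtains $3(300x^3+1425x^2-525x+56)/650$, whose quadratic part has discriminant $-43575$.
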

\begin{proof}
Set $g=P_sf$. Edgewise Cauchy--Schwarz gives
$\operatorname{Dir}(\arcsin g)\le D(g)$ for every reversible
Markov generator. Lemma~\ref{fh:arcsine-error} gives
$\|\arcsin g-\frac\pi2g\|_2^2\le(12/7)\E\Psi(g)$.
Use Lemma~\ref{fh:square-completion} with $\gamma=7/(8s)$.
This choice makes the interpolation penalty
$\gamma(12/7)\E\Psi(g)=3\E\Psi(g)/(2s)$ exactly match the
entropy coefficient in the conclusion. The identity
\[
 \E H(g)-(\log2)(1-\E f^2)
 =\E\Psi(g)+(\log2)(\|f\|_2^2-\|g\|_2^2)
\]
then separates the transform error from the squared norm lost under
noise. If $f_\lambda$ is the projection onto the eigenspace of
$\mathcal L$ with eigenvalue $\lambda\ge0$, square completion gives
\[
 D(g)+\frac3{2s}\bigl[\E H(g)-(\log2)(1-\E f^2)\bigr]
 \ge\sum_\lambda\left[
 \frac{\pi^2}{4}\frac{\gamma\lambda}{\gamma+\lambda}e^{-2s\lambda}
 +\frac{3\log2}{2s}(1-e^{-2s\lambda})\right]\|f_\lambda\|_2^2.
\]
For a positive eigenvalue, put $x=2s\lambda$ and factor out
$\lambda e^{-2s\lambda}$. The remaining scalar multiplier is
\[
 \frac{7\pi^2/4}{7+4x}
       +3(\log2)\frac{e^x-1}{x}\ge\frac92
 \qquad \text{for } x\ge0.
\]
Indeed, use $\pi^2/4>123/50$, $\log2>9/13$, and
$(e^x-1)/x\ge1+x/2+x^2/6$.
After multiplication by $7+4x$, the difference from $9/2$ is
at least $3(300x^3+1425x^2-525x+56)/650>0$;
the quadratic part has negative discriminant $-43575$.
Summing the scalar bound gives the claimed estimate.
The zero eigenspace contributes zero. Apply the argument first to
$(1-\varepsilon)f$, then let $\varepsilon\downarrow0$.
\end{proof}

\begin{lemma}[Hellinger interpolation]\label{fh:hellinger-interpolation}
Let $J(t)=\sqrt{1-t^2}$ and
$\Theta(t)=\int_0^t(1-x^2)^{-3/4}\,dx$.
For $-1\le t\le1$,
\[
 |\Theta(t)-\Theta(1)t|^2
 \le2[\Theta(1)-1]^2[J(t)-J(t)^2]
 \le6[J(t)-J(t)^2].
\]
\end{lemma}
\begin{proof}
The substitution $x=1-y^4$ gives
$\Theta(1)=4\int_0^1(2-y^4)^{-3/4}\,dy$.
Using $2^{3/4}<17/10$ and the tangent bound
$(1-y^4/2)^{-3/4}\ge1+3y^4/8$ gives
$\Theta(1)>43/17>5/2$.
The chord bound for the convex function $(2-x)^{-3/4}$,
together with $2^{3/4}>5/3$, gives $\Theta(1)<68/25<3$.
In particular, $2[\Theta(1)-1]^2<3698/625<6$.

By symmetry take $0\le t\le1$, and put
$\varphi(t)=\sqrt{J(t)-J(t)^2}$.
Direct differentiation gives
$-\varphi''(t)=3\sqrt{1-J(t)}/[4J(t)^{7/2}]$ and
$\Theta''(t)=3t/[2J(t)^{7/2}]$. Hence
$\Theta''(t)/[-\varphi''(t)]=2\sqrt{1+J(t)}$ decreases strictly
from $2\sqrt2$ to $2$.
It follows that the second derivative of
$w(t)=\sqrt2[\Theta(1)-1]\varphi(t)-\Theta(1)t+\Theta(t)$
changes sign once, from positive to negative.
Since $w(0)=w'(0)=w(1)=0$, its derivative first increases and
then decreases; thus $w$ increases and then decreases, and $w\ge0$.
Convexity gives $\Theta(t)\le\Theta(1)t$, proving the bound.
Finally, the ratio $|\Theta(t)-\Theta(1)t|^2/[J(t)-J(t)^2]$ tends to
$2[\Theta(1)-1]^2$ as $t\to0$, proving optimality.
\end{proof}

\begin{proposition}\label{fh:hellinger-energy}
In the setting of Theorem~\ref{fh:markov-energy}, let
$D_J(h)=\E[(\mathcal Lh)h/J(h)]$.
Then
\[
 \frac52\operatorname{Dir}(P_sf)
 \le D_J(P_sf)+\frac1{2s}\left[\E J(P_sf)-1+\E f^2\right].
\]
\end{proposition}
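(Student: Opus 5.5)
The proof follows the template of Theorem~\ref{fh:markov-energy}, with the arcsine transform replaced by $\Theta$ from Lemma~\ref{fh:hellinger-interpolation} and the entropy correction $\Psi$ replaced by the Hellinger excess $J-J^2$.

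\emph{Edge comparison.} Note that
\[
 \Theta'(t)^2=(1-t^2)^{-3/2}=\frac{d}{dt}\,\frac{t}{J(t)} .
\]
For $a<b$ in $(-1,1)$, Cauchy--Schwarz therefore gives
\[
 (\Theta(b)-\Theta(a))^2\le(b-a)\Bigl(\frac{b}{J(b)}-\frac{a}{J(a)}\Bigr).
\]
Pairing the terms of the reversible edge identity from Section~\ref{sec:markov-semigroups} then yields $\operatorname{Dir}(\Theta(g))\le D_J(g)$ for $g=P_sf$. I first take $|f|\le1-\varepsilon$, so that $|g|<1$ and everything is finite.

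\emph{Interpolation error.} Lemma~\ref{fh:hellinger-interpolation} gives
\[
 \|\Theta(g)-\Theta(1)g\|_2^2\le 6\,\E[J(g)-J(g)^2].
\]
Since $J^2=1-g^2$, we have the separating identity
\[
 \E J(g)-1+\E f^2=\E[J(g)-J(g)^2]+\bigl(\|f\|_2^2-\|g\|_2^2\bigr),
\]
in which both terms are nonnegative.

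\emph{Spectral transfer.} I apply Lemma~\ref{fh:square-completion} with $A=\mathcal L$, $v=\Theta(g)$, $w=\Theta(1)g$ and $\gamma=1/(12s)$. This choice makes $6\gamma=1/(2s)$, so the interpolation penalty exactly matches the coefficient in the conclusion. Writing $f_\lambda$ for the spectral projections, so that $g_\lambda=e^{-s\lambda}f_\lambda$, we get
\[
 D_J(g)+\frac1{2s}\bigl[\E J(g)-1+\E f^2\bigr]
 \ge\sum_\lambda\Bigl[\Theta(1)^2\frac{\gamma\lambda}{\gamma+\lambda}e^{-2s\lambda}
 +\frac{1}{2s}\bigl(1-e^{-2s\lambda}\bigr)\Bigr]\|f_\lambda\|_2^2 .
\]
With $x=2s\lambda$ we have $\gamma\lambda/(\gamma+\lambda)=\lambda/(1+6x)$. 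After factoring out $\lambda e^{-x}$, it suffices to show
\[
 \frac{\Theta(1)^2}{1+6x}+\frac{e^x-1}{x}\ge\frac52\qquad(x>0).
\]
Summing against $\operatorname{Dir}(g)=\sum_\lambda\lambda e^{-2s\lambda}\|f_\lambda\|_2^2$ then gives the claim. The zero eigenspace contributes nothing, and $\varepsilon\downarrow0$ removes the truncation.

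\emph{Main obstacle: the scalar inequality.} This is the only delicate step, because the margin is small: numerically the minimum is about $2.55$, near $x\approx1$. I would use $\Theta(1)>5/2$, i.e.\ $\Theta(1)^2>25/4$, from the proof of Lemma~\ref{fh:hellinger-interpolation}, together with $(e^x-1)/x\ge1+x/2+x^2/6$. Multiplying by $1+6x$ then reduces the claim to the positivity of an explicit cubic on $[0,\infty)$. That cubic is
\[
 (x/2)^2\cdot 6\cdot\tfrac{1}{3}\,x+\ldots
\]
in expanded form
\[
 \tfrac{25}{4}+(1+6x)\Bigl(1+\tfrac x2+\tfrac{x^2}6\Bigr)-\tfrac52(1+6x)
 = x^3+\tfrac{19}{6}x^2-\tfrac{23}{2}x+\tfrac{19}{4}.
\]
I would certify its positivity as in Theorem~\ref{fh:markov-energy}: split at $x=1$ and check each piece (for instance by a discriminant or tangent-line argument). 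If the truncated exponential series leaves too little slack, I would retain the $x^3/24$ term, or sharpen the lower bound $\Theta(1)>43/17$.
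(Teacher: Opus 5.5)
You are following the paper's proof step for step: edgewise Cauchy--Schwarz via $\Theta'(t)^2=\frac{d}{dt}\bigl(t/J(t)\bigr)$, the interpolation constant $6$ from Lemma~\ref{fh:hellinger-interpolation}, and square completion with $\gamma=1/(12s)$ so that $6\gamma=1/(2s)$. You then reduce to $\Theta(1)^2/(1+6x)+(e^x-1)/x\ge5/2$ using $\Theta(1)>5/2$ and $(e^x-1)/x\ge1+x/2+x^2/6$.

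The one step that does not work as written is the final expansion. The linear coefficient is $6x+\tfrac x2-15x=-\tfrac{17}{2}x$, not $-\tfrac{23}{2}x$. Your displayed cubic equals $-31/12$ at $x=1$, so the positivity check you propose would fail on it. This is an arithmetic slip, not a lack of slack: neither keeping the $x^3/24$ term nor using $43/17$ would rescue the wrong polynomial. The correct cubic is
\[
 x^3+\tfrac{19}{6}x^2-\tfrac{17}{2}x+\tfrac{19}{4}
 =\tfrac1{12}\bigl(12x^3+38x^2-102x+57\bigr),
\]
which is exactly the paper's polynomial. It is positive on $[0,\infty)$: since $x^3-3x+2=(x-1)^2(x+2)\ge0$, it is at least $\tfrac1{12}(38x^2-66x+33)$, and that quadratic has discriminant $-660<0$. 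With this correction, and the garbled intermediate line removed, your argument is complete.
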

\begin{proof}
Put $g=P_sf$. Edgewise Cauchy--Schwarz gives
$\operatorname{Dir}(\Theta(g))\le D_J(g)$.
Use the same cancellation as in Theorem~\ref{fh:markov-energy},
now with $\gamma=1/(12s)$ and interpolation constant $6$.
Their product is $1/(2s)$, the coefficient of the Hellinger
remainder in the conclusion.
For $x\ge0$, the scalar multiplier is greater than
$(25/4)/(1+6x)+(e^x-1)/x\ge5/2$.
To verify the last inequality, use
$(e^x-1)/x\ge1+x/2+x^2/6$ and multiply by $12(1+6x)$.
The resulting polynomial is $12x^3+38x^2-102x+57$.
Since $x^3\ge3x-2$, it is at least
$38x^2-66x+33>0$, whose discriminant is $-660$.
The spectral summation from Theorem~\ref{fh:markov-energy}
proves the proposition.
\end{proof}

For Boolean sources on the cube, $P_s=T_{e^{-s}}$ and the endpoint
correction vanishes. Thus
$\frac92\operatorname{Dir}(T_\rho f)
\le D(T_\rho f)+3\E H(T_\rho f)/[-\log(\rho^2)]$.
More generally, differentiating $\E H(P_sf)$ gives $D(P_sf)$.

\begin{proof}[Proof of Theorem~\ref{fh:entropy-smoothing}]
Multiply the inequality in Theorem~\ref{fh:markov-energy} by $s^{3/2}$ and
integrate from zero. Since $\frac d{ds}\E H(P_sf)=D(P_sf)$,
the first inequality follows by an integrating factor.
For the second, use $\frac d{ds}\E J(P_sf)=D_J(P_sf)$ and
multiply the inequality in Proposition~\ref{fh:hellinger-energy} by $\sqrt s$.
The boundary terms at zero vanish because both entropies are bounded.
\end{proof}

\subsection{From variance to entropy production}

The next lemma removes the unknown variance $\Var(g)$ and entropy
error $\mathcal J$ from a spectral bound. At positive gap, it
replaces their combination by an explicit baseline and a nonnegative
correction for the source mean. Here $\mathcal J$ is the entropy
correction defined in \eqref{eq:entropy-correction}.

\begin{lemma}[Transfer of a spectral baseline]\label{ub:transfer}
Let $f:\{-1,1\}^n\to\{-1,1\}$, $0<\rho<1$, and $g=T_\rho f$.
For $\Lambda,\Gamma>0$, define
\begin{equation}\label{eq:mean-transfer}
 \eta_{\Lambda,\Gamma}(\rho)
 =\Gamma\bigl[\tfrac12-(1-{\rho^2})\log2\bigr]-\Lambda {\rho^2}.
\end{equation}
If $G_f(\rho)>0$ and $\eta_{\Lambda,\Gamma}(\rho)\ge0$, then
\[
 \Lambda {\Var(g)}-\Gamma\mathcal J\ge \Lambda {\rho^2}-\Gamma\Psi(\rho)
                    +\eta_{\Lambda,\Gamma}(\rho)\mu^2.
\]
\end{lemma}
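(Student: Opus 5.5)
The plan is to expand both sides in terms of the two scalars $x=\E g^2$ and $e=\E H(g)=E_\rho(f)$. After that, the positive gap and a Parseval bound on $x$ should do everything. By \eqref{eq:entropy-correction}, $\mathcal J=e-(1-x)\log2$. Since $\E g=\mu$, we have $\Var(g)=x-\mu^2$. Hence
\[
 \Lambda\Var(g)-\Gamma\mathcal J
 =(\Lambda-\Gamma\log2)\,x-\Lambda\mu^2+\Gamma\log2-\Gamma e .
\]
The right-hand side of the claim is $\Lambda\rho^2-\Gamma H(\rho)+\Gamma(1-\rho^2)\log2+\eta_{\Lambda,\Gamma}(\rho)\mu^2$.

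\emph{Step 1 (use the gap).} The coefficient of $e$ is $-\Gamma<0$. The positive gap gives $e<H(\rho)-\Phi(\mu)$, as recorded after \eqref{ub:gap-definitions}, so we may replace $e$ by $H(\rho)-\Phi(\mu)$. The $H(\rho)$ terms then cancel, and $\Gamma\log2-\Gamma(1-\rho^2)\log2=\Gamma\rho^2\log2$. Subtracting the right side, the difference exceeds
\[
 (\Lambda-\Gamma\log2)(x-\rho^2)+\Gamma\Phi(\mu)-(\Lambda+\eta_{\Lambda,\Gamma}(\rho))\mu^2 .
\]

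\emph{Step 2 (mean term).} Use $\Phi(\mu)\ge\mu^2/2$ from \eqref{ub:entropy-series}. From \eqref{eq:mean-transfer}, $\Lambda+\eta_{\Lambda,\Gamma}(\rho)=\Gamma/2-\Gamma(1-\rho^2)\log2+\Lambda(1-\rho^2)$. It follows that $\Gamma/2-\Lambda-\eta=(1-\rho^2)(\Gamma\log2-\Lambda)$. The difference is therefore at least
\[
 (\Gamma\log2-\Lambda)\bigl[\rho^2+(1-\rho^2)\mu^2-x\bigr].
\]

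\emph{Step 3 (signs).} The bracket is nonnegative, because Parseval gives $x=\mu^2+\sum_{k\ge1}\rho^{2k}W_k\le\mu^2+\rho^2(1-\mu^2)$. The prefactor is nonnegative as well. Indeed, $\eta_{\Lambda,\Gamma}(\rho)\ge0$ gives $\Lambda\rho^2\le\Gamma[\tfrac12-(1-\rho^2)\log2]\le\Gamma\rho^2\log2$, using $\tfrac12<\log2$, so $\Lambda\le\Gamma\log2$. This yields the claim, even with strict inequality.

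The only real obstacle is noticing the sign of $\Lambda-\Gamma\log2$. A naive bound $x\ge\rho^2$ is false, so one needs the hypothesis $\eta\ge0$ to force $\Lambda\le\Gamma\log2$. Once that is seen, the upper Parseval bound on $\E g^2$, rather than a lower one, is the correct ingredient. The mean slack $\Gamma\Phi(\mu)-\Gamma\mu^2/2$ is simply discarded.
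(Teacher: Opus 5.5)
Your proof is correct and is essentially the paper's argument. The positive gap bounds the conditional entropy (equivalently $\mathcal J$) from above, $\Phi(\mu)\ge\mu^2/2$ absorbs the mean term, and $\eta_{\Lambda,\Gamma}(\rho)\ge0$ forces $\Lambda<\Gamma\log2$, so the Parseval upper bound $\E g^2\le\mu^2+\rho^2(1-\mu^2)$ applies; the only difference is that you track $\E g^2$ and $\E H(g)$ where the paper tracks $\Var(g)$ and $\mathcal J$.
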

\begin{proof}
Parseval gives $\Var(g)\le\rho^2(1-\mu^2)$, and positive gap gives
$0\le\mathcal J<\Var(g)\log2+\Psi(\mu)-\Phi(\rho)$.
The latter inequality uses $\E g^2={\Var(g)}+\mu^2$ and the actual
information inequality $\E\Phi(g)>\Phi(\rho)+\Phi(\mu)$.
The assumption $\eta_{\Lambda,\Gamma}(\rho)\ge0$ implies
$\Lambda-\Gamma {\log2}\le\Gamma(1/2-{\log2})/{\rho^2}<0$.
Since $\Psi(\mu)\le({\log2}-1/2)\mu^2$,
\begin{align*}
 \Lambda {\Var(g)}-\Gamma\mathcal J
 &\ge(\Lambda-\Gamma {\log2}){\Var(g)}+\Gamma\Phi(\rho)-\Gamma\Psi(\mu)\\
 &\ge(\Lambda-\Gamma {\log2}){\rho^2}(1-\mu^2)+\Gamma\Phi(\rho)
       -\Gamma({\log2}-1/2)\mu^2\\
 &=\Lambda {\rho^2}-\Gamma\Psi(\rho)
      +\eta_{\Lambda,\Gamma}(\rho)\mu^2.\qedhere
\end{align*}
\end{proof}

Only $\Var(g)$ is centered; $\mathcal J$ remains the actual
interpolation error. The explicit compact parameters below satisfy the
transfer condition throughout their range by an analytic estimate.

\section{Coordinate projections and the spectral bound}

We prove the uniform spectral criterion that handles every source
below the local cutoff. Retain the coordinates above a threshold and
charge the discarded coordinates to the same profile budget. We first bound the variance of the retained
source and the energy of the discarded arcsine field. Combining these
bounds gives the scalar criterion in Theorem~\ref{as:gate}.

Throughout this section, write $\rho=\tanh u$, $g=T_\rho f$,
and $h=\arcsin g$. We retain the entropy correction $\mathcal J$
from \eqref{eq:entropy-correction} and the profile height $\ell_u$
from \eqref{eq:profile-height}. For $K\subseteq[n]$, define
\begin{equation}\label{eq:coordinate-projection}
 P_Kh=\E[h\mid X_K].
\end{equation}
This averages over the coordinates outside $K$ and is the orthogonal
projection onto functions of $X_K$. In Fourier language it keeps
exactly the terms with $S\subseteq K$; the remainder $h-P_Kh$
contains every term using an outside coordinate. This decomposition
allows separate estimates for dependence carried by the retained
coordinates and for the energy needed to support the remainder.
Throughout this section,
$f:\{-1,1\}^n\to\{-1,1\}$ is nonconstant and increasing.
We retain the full budget $D(g)/\rho$ from
Proposition~\ref{fh:simultaneous-refund}. This gives a direct lower bound for entropy
production at every positive gap.
\subsection{An influence bound for bounded functions}

Conditional sources take values in $[-1,1]$. The following theorem
bounds their variance by an explicit cost for each influence, without
assuming monotonicity.

We express the variance bound in terms of the function
\begin{equation}\label{eq:influence-cost}
 \psi(a)=\min\left\{\frac{a+a^2}{2},
             \frac{2a(1-a/4)}{\log(4/a)}\right\}\quad(0<a\le1),
 \qquad \psi(0)=0.
\end{equation}

\begin{theorem}[Variance and influences]\label{rt:source-projection}
For every $g:\{-1,1\}^n\to[-1,1]$ and every $K\subseteq[n]$,
\[
 \Var(P_Kg)\le\sum_{i\in K}\psi(\|\partial_i g\|_1).
\]
In particular, $\Var(g)\le\sum_i\psi(\|\partial_i g\|_1)$.
\end{theorem}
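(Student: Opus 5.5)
We will reduce to the case $K=[n]$ and then prove the influence bound via a martingale (Efron--Stein type) decomposition along coordinates, in the spirit of Falik--Samorodnitsky. Conditional expectation can only decrease discrete derivatives: for $i\in K$, $\partial_i P_Kg=P_K\partial_i g$, so $\|\partial_i P_Kg\|_1\le\|\partial_i g\|_1$, while $\partial_iP_Kg=0$ for $i\notin K$. Since $P_Kg$ still takes values in $[-1,1]$, it suffices to prove $\Var(g)\le\sum_i\psi(\|\partial_i g\|_1)$ for every $g$ with values in $[-1,1]$, provided $\psi$ is nondecreasing (which holds for both branches on $(0,1]$; we check this briefly). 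Also $\|\partial_i g\|_1\le1$ since $|g|\le1$.

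For the general bound, order the coordinates and write $g-\E g=\sum_i d_i$ with martingale differences $d_i=\E[g\mid X_{\le i}]-\E[g\mid X_{<i}]$. Orthogonality gives $\Var(g)=\sum_i\E d_i^2$. Each $d_i=X_i\,\E[\partial_i g\mid X_{<i}]$, so $\E d_i^2=\|\E[\partial_ig\mid X_{<i}]\|_2^2$. Set $\phi_i=\E[\partial_ig\mid X_{<i}]$. Then $|\phi_i|\le1$ and $\|\phi_i\|_1\le a_i:=\|\partial_ig\|_1$. The first branch of $\psi$ then needs $\|\phi_i\|_2^2\le(a_i+a_i^2)/2$. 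The trivial bound $\|\phi\|_2^2\le\|\phi\|_1\le a$ is too weak by a factor two, so we instead use that the martingale sum telescopes. We bound $\sum_i\E d_i^2$ via $\E[g\,d_i]$ and average over the two orderings (or over a random ordering) to extract the factor $1/2$. The $a^2/2$ term should come from the mean part $(\E\phi_i)^2\le a_i^2$ halved. Concretely, we would write $\E d_i^2=\E[\partial_i g\,\phi_i]$ and use a symmetrization over $X_i$ on edges where $|g|\le1$ forces $|g_+-g_-|\le 2$.

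The second branch, $2a(1-a/4)/\log(4/a)$, is a level-one/hypercontractive gain for small influences. Here we apply the hypercontractive inequality of Theorem~\ref{ext:cube-functional} to $\phi_i$, a function bounded by $1$ with small $L_1$ norm. For a noise parameter $\eta$, $\|T_\eta\phi\|_2^2\le\|\phi\|_{1+\eta^2}^2\le a^{2/(1+\eta^2)}$. We would split the Fourier weight of $\phi_i$ (or of $g$ restricted to sets containing $i$) into low and high levels and use the Dirichlet form. Equivalently, we would run the Falik--Samorodnitsky argument, $\Var(g)\log(\Var g/\sum_i\|\partial_ig\|_1^2)\lesssim\sum\E(\partial_ig)^2$, coordinatewise, using the log-Sobolev inequality $\Ent(\phi^2)\le2\operatorname{Dir}(\phi)$. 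Optimizing the noise parameter should produce the denominator $\log(4/a)$, with the factor $(1-a/4)$ coming from the exact two-point entropy computation.

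The main obstacle is obtaining the exact constants in both branches, especially the $\log(4/a)$ form. The plan is to reduce the second branch to a one-dimensional extremal problem: maximize $\|\phi\|_2^2$-type energy given $\|\phi\|_1=a$ and $|\phi|\le1$ under the entropy/Dirichlet constraint. We expect the extremizer to be an indicator-like function, for which $\psi$ is attained up to the stated elementary inequality. We would first verify this scalar inequality at the extremal configuration, then justify the reduction via convexity.
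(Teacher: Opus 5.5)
Your reduction to $K=[n]$ (derivatives commute with $P_K$, contract in $L^1$, and $\psi$ is nondecreasing) matches the paper's last step. The core inequality, however, has a genuine gap.

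\textbf{The fixed-ordering quantity is too large.} For a fixed ordering, the per-coordinate martingale energy $\E d_i^2=\|\phi_i\|_2^2$ cannot be bounded by $\psi(a_i)$. Take $g=\mathrm{AND}(x_1,x_2)$ in $\pm1$ form with the order $(1,2)$. Then $\phi_2=\partial_2 g\in\{0,1\}$ has $\|\phi_2\|_2^2=a_2=1/2$, whereas $\psi(1/2)=3/8$.

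You notice this for the quadratic branch. For the logarithmic branch, though, you go back to bounding $\phi_i$ itself: hypercontractivity at one noise level, a level split, or an extremal problem over $|\phi|\le1$, $\|\phi\|_1=a$. The maximum of $\|\phi\|_2^2$ in that extremal problem is $a$ itself, so none of these can give $2a(1-a/4)/\log(4/a)$.

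The Falik--Samorodnitsky inequality does not rescue this. It bounds the total variance through $\sum_i\|\partial_ig\|_2^2$ and $\sum_i a_i^2$, not as a sum of individual costs $\psi(a_i)$ with these constants.

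There is also a structural issue. The statement takes the minimum of the two formulas separately for each coordinate. You therefore need one decomposition $\Var(g)=\sum_iV_i$ in which every $V_i$ obeys both bounds; two different mechanisms for the two branches do not supply this. Finally, the factor $1-a/4$ does not come from a two-point entropy computation.

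\textbf{The missing idea is the correct per-coordinate quantity.} Average your Doob decomposition over a uniformly random ordering. A set $T\not\ni i$ precedes $i$ with probability $1/(|T|+1)=\int_0^1 s^{|T|}\,ds$, so
\[
 \Var(g)=\sum_i\int_0^1\|T_{\sqrt s}\,\partial_i g\|_2^2\,ds .
\]
This is exactly the identity the paper starts from; it can also be read off directly from the Fourier expansion.

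Now fix $i$, write $a=\|\partial_ig\|_1$, and use $|\partial_ig|\le1$. Bound the integrand pointwise in $s$ in two ways:
\begin{enumerate}
\item By $(1-s)a^2+sa$, since the mean coefficient of $\partial_ig$ is at most $a$ and $\|\partial_ig\|_2^2\le a$. Integrating gives $(a+a^2)/2$.
\item By $a^{2/(1+s)}$, using hypercontractivity $\|T_{\sqrt s}\phi\|_2\le\|\phi\|_{1+s}$ at every level $s$ together with $\E|\phi|^{1+s}\le a$. The substitution $r=(1-s)/(1+s)$ turns the integral into $2a\int_0^1a^r(1+r)^{-2}\,dr$, and $(1+r)^{-2}\le4^{-r}$ gives $2a(1-a/4)/\log(4/a)$.
\end{enumerate}
Both bounds apply to the same per-coordinate integral, so the minimum can be taken before summing.

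Averaging over just one ordering and its reverse does give the quadratic branch, because a nonempty $T$ cannot lie both before and after $i$. It does not give the logarithmic branch.
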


The estimate is a quantitative form of the following influence inequality.

\begin{theorem}[Talagrand's influence inequality
{\cite{Talagrand1994,CorderoEskenazis2023}}]
There is a universal constant $C>0$ such that, for every
$f:\{-1,1\}^n\to\R$,
\[
 \Var(f)\le C\sum_{i=1}^n
 \frac{\|\partial_i f\|_2^2}
      {1+\log(\|\partial_i f\|_2/\|\partial_i f\|_1)}.
\]
\end{theorem}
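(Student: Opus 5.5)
We would prove the displayed influence inequality by a semigroup interpolation in the Fourier representation, using only the hypercontractive estimate $\|T_\eta h\|_2\le\|h\|_{1+\eta^2}$ from Theorem~\ref{ext:cube-functional} and Hölder's inequality. In this form the argument even gives the explicit constant $C=2$.

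\textbf{Step 1: a Fourier integral formula for the variance.} The discrete derivative $\partial_i f$ has Fourier coefficient $\widehat f(S)$ at $S\setminus\{i\}$ for each $S\ni i$, and these sets have level $|S|-1$. Hence for $0\le r\le1$,
\[
 \|T_{\sqrt r}\,\partial_i f\|_2^2=\sum_{S\ni i}r^{|S|-1}\widehat f(S)^2 .
\]
Integrating over $r\in[0,1]$ gives $\sum_{S\ni i}\widehat f(S)^2/|S|$. Summing over $i$, each nonempty $S$ is counted $|S|$ times, so by Parseval
\[
 \Var(f)=\sum_{i=1}^n\int_0^1\|T_{\sqrt r}\,\partial_i f\|_2^2\,dr .
\]

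\textbf{Step 2: hypercontractivity and interpolation for each coordinate.} Fix $i$ and put $h=\partial_i f$. We may assume $h\ne0$, since otherwise both sides of the $i$th term vanish. Set $R=\|h\|_2/\|h\|_1\ge1$. Hypercontractivity with $\eta=\sqrt r$ gives $\|T_{\sqrt r}h\|_2\le\|h\|_p$ with $p=1+r$.

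Since $1\le p\le2$, Hölder's inequality (log-convexity of $L^p$ norms) gives
\[
 \|h\|_p\le\|h\|_1^{\theta}\,\|h\|_2^{1-\theta},\qquad \theta=\frac2p-1=\frac{1-r}{1+r}.
\]
Therefore
\[
 \|T_{\sqrt r}h\|_2^2\le\|h\|_2^2R^{-2\theta}\le\|h\|_2^2R^{-(1-r)},
\]
where the last step uses $2\theta\ge1-r$ on $[0,1]$ and $R\ge1$.

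\textbf{Step 3: the scalar integral.} It remains to integrate in $r$. If $R>1$,
\[
 \int_0^1R^{-(1-r)}\,dr=\frac{1-R^{-1}}{\log R}.
\]
This is at most $\min\{1,1/\log R\}$: the bound by $1/\log R$ is immediate, and the bound by $1$ follows from $1-e^{-x}\le x$. For $R=1$ the integral equals $1$ directly. In either case it is at most $2/(1+\log R)$, since $\min\{1,1/x\}(1+x)\le2$ for $x\ge0$. Substituting into Step~1 and summing over $i$ yields the theorem with $C=2$.

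\textbf{Expected difficulty.} We expect no serious obstacle. The only point needing care is the interpolation exponent in Step~2: one must check that $2\theta$ dominates $1-r$ uniformly, so that the resulting integral is elementary. If a sharper constant were wanted, one would keep the exact exponent $2(1-r)/(1+r)$ instead of its linear lower bound. The quantitative cost $\psi$ used in Theorem~\ref{rt:source-projection} would then come from running this same chain with $[-1,1]$-valued functions, for which $\|\partial_i g\|_2^2\le\|\partial_i g\|_1$.
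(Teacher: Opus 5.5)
Your proof is correct, and the constant $C=2$ it yields is valid. The paper gives no proof of this theorem; it is cited from Talagrand and Cordero-Erausquin--Eskenazis. However, the proof the paper gives for its explicit bounded-function version, Theorem~\ref{rt:source-projection}, runs exactly your chain: first the identity $\Var(g)=\sum_i\int_0^1\|T_{\sqrt s}\partial_i g\|_2^2\,ds$, then hypercontractivity $\|T_{\sqrt s}\partial_i g\|_2\le\|\partial_i g\|_{1+s}$.

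The only difference is how $\|\partial_i g\|_{1+s}$ is bounded.
\begin{itemize}
\item You interpolate between $L^1$ and $L^2$ by H\"older. This works for arbitrary real $f$ and produces the ratio $\|\partial_i f\|_2/\|\partial_i f\|_1$ in the statement.
\item The paper instead uses $|\partial_i g|\le1$ to get $\|\partial_i g\|_{1+s}^2\le a^{2/(1+s)}$ with $a=\|\partial_i g\|_1$. Your H\"older bound gives the same thing once you use $\|\partial_i g\|_2^2\le a$ and keep the exact exponent $2(1-s)/(1+s)$. The paper then takes the minimum with the Fourier bound $(1-s)a^2+sa$ to obtain the explicit cost $\psi$.
\end{itemize}
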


We now prove the explicit bounded-function estimate used here.
\begin{proof}[Proof of Theorem~\ref{rt:source-projection}]
Fourier expansion gives
$\Var(g)=\sum_i\int_0^1\|T_{\sqrt s}\partial_i g\|_2^2\,ds$.
Indeed, a mode of degree $j$ contributes $s^{j-1}$ in each of its
$j$ coordinates. Fix $i$ and write $a=\|\partial_i g\|_1$.
Since $|\partial_i g|\le1$, the integrand is at most
$(1-s)a^2+sa$ by its Fourier expansion, and at most
$a^{2/(1+s)}$ by hypercontractivity.
Integrating the first bound gives $(a+a^2)/2$.
For the second, substitute $r=(1-s)/(1+s)$ and use
$\log(1+r)\ge r\log2$ on $[0,1]$:
\[
 \int_0^1a^{2/(1+s)}\,ds
 =2a\int_0^1\frac{a^r}{(1+r)^2}\,dr
 \le2a\int_0^1(a/4)^r\,dr
 =\frac{2a(1-a/4)}{\log(4/a)}.
\]
Both bounds apply to each coordinate's integral separately, so
we may take their minimum before summing. Zero influences contribute
zero. For a projection, apply the bound to $P_Kg$: conditional
expectation contracts each derivative in $L^1$, and $\psi$ is increasing.
\end{proof}

For monotone $g$, $\partial_i g\ge0$, so
$\|\partial_i g\|_1=\widehat g(i)$. This is the form used for the Boolean source below.

\subsection{Retaining the pivot energy in a martingale decomposition}

We next bound the energy of the part discarded by projection,
using the martingale entropy argument of Falik and Samorodnitsky
\cite[Theorem~2.2 and Lemma~2.3]{FalikSamorodnitsky2007}.
Reveal the coordinates one at a time and subtract successive
conditional expectations. These differences are the Doob martingale
increments: each records the additional dependence revealed at that
step. Its Fourier terms all contain the newly revealed coordinate,
so it already contributes one unit of degree before the other
coordinates are counted. Keeping that contribution strengthens the
log-Sobolev estimate used below.

\begin{lemma}[Martingale energy]
\label{as:strengthened-split}
Let $h:\{-1,1\}^n\to\R$ and $K\subseteq[n]$.
Set 
\[
V=\|h-P_Kh\|_2^2 \enskip \text{and }
b_K=\sum_{i\notin K}(\E|\partial_i h|)^2.
\]
For every real $\lambda$, we have
\[
 \operatorname{Dir}(h-P_Kh)
 \ge\left(1+\frac\lambda2\right)V
       -\frac{e^{\lambda-1}}2 b_K.
\]
When $V$ and $b_K$ are positive, the stronger nonlinear form is
\[
 \operatorname{Dir}(h-P_Kh)
 \ge V+\frac V2\log\frac V{b_K}.
\]
\end{lemma}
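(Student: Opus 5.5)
The plan is to prove the nonlinear form first and then deduce the linear form by a Legendre-type inequality. For $x,b>0$ and real $\lambda$, minimizing $x\log(x/b)-\lambda x$ in $x$ gives
\[
 x\log\frac{x}{b}\ge\lambda x-e^{\lambda-1}b .
\]
Taking $x=V$ and $b=b_K$, multiplying by $1/2$, and adding $V$ gives
\[
 \operatorname{Dir}(h-P_Kh)\ge\Bigl(1+\tfrac\lambda2\Bigr)V-\tfrac{e^{\lambda-1}}2\,b_K .
\]
The degenerate cases are direct. If $V=0$, the left side is $\operatorname{Dir}(0)=0$ and the right side is $\le0$. If $b_K=0$, every $\partial_i h$ with $i\notin K$ vanishes, so $h$ depends only on $X_K$ and $V=0$.

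For the nonlinear form, follow Falik--Samorodnitsky. List the outside coordinates as $i_1,\dots,i_m$ and set $\mathcal F_j=\sigma(X_K,X_{i_1},\dots,X_{i_j})$. Define the Doob increments $d_j=\E[h\mid\mathcal F_j]-\E[h\mid\mathcal F_{j-1}]$, so that $h-P_Kh=\sum_j d_j$.

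In Fourier terms, $d_j$ keeps exactly the sets $S\subseteq K\cup\{i_1,\dots,i_j\}$ that contain $i_j$. These families are disjoint for different $j$. Hence both $V=\sum_j v_j$, with $v_j=\|d_j\|_2^2$, and $\operatorname{Dir}(h-P_Kh)=\sum_j\operatorname{Dir}(d_j)$ hold exactly, since $\operatorname{Dir}$ is diagonal in the Walsh basis.

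Every term of $d_j$ contains $i_j$, so we can write $d_j=x_{i_j}e_j$ with $e_j=\E[\partial_{i_j}h\mid\mathcal F_j]$, a function not depending on $x_{i_j}$. Removing $i_j$ lowers each degree by one, which gives $\operatorname{Dir}(d_j)=\|e_j\|_2^2+\operatorname{Dir}(e_j)$. This is the retained pivot energy. Theorem~\ref{ext:cube-functional} then gives $\operatorname{Dir}(e_j)\ge\frac12\Ent(e_j^2)$.

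The step needing care is the lower bound
\[
 \Ent(e^2)\ge\|e\|_2^2\log\bigl(\|e\|_2^2/\|e\|_1^2\bigr).
\]
I would prove it by writing $\Ent(e^2)=-\|e\|_2^2\,\E_\pi\log\bigl(\|e\|_2^2/(e^2\cdot\|e\|_2^2/\|e\|_2^2)\bigr)$ with the probability measure $d\pi=e^2\,d\PP/\E e^2$, and applying Jensen's inequality:
\[
 \E_\pi\log\frac{\E e^2}{|e|^2}\le\log\E_\pi\frac{\E e^2}{|e|}\cdot\frac1{1}\quad\text{(in the form)}\quad \E_\pi\log\frac{1}{|e|}\le\log\frac{\E|e|}{\E e^2}.
\]
Rearranged, this says $\E_\pi\log|e|\ge\log(\E e^2/\E|e|)$, which is exactly the displayed bound.

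Conditional expectation contracts $L^1$, so $\|e_j\|_1\le\E|\partial_{i_j}h|=:c_j$. Hence
\[
 \operatorname{Dir}(d_j)\ge v_j+\tfrac{v_j}{2}\log\frac{v_j}{c_j^2}.
\]
Increments with $v_j=0$ contribute zero, and $v_j>0$ forces $c_j>0$.

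Finally, sum over $j$. The map $(x,y)\mapsto x\log(x/y)$ is jointly convex and positively homogeneous, hence subadditive. This gives $\sum_j v_j\log(v_j/c_j^2)\ge V\log(V/b_K)$, since $\sum_j c_j^2=b_K$. That is the nonlinear form, and the linear form follows as described at the start.

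The main obstacle is the Jensen entropy step, including the handling of zero sets of $e_j$. Everything else is bookkeeping of Fourier supports.
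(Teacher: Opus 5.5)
Your proposal is correct and follows essentially the same route as the paper: Doob increments with the coordinates of $K$ revealed first, the pivot factorization $\operatorname{Dir}(d_j)=\|d_j\|_2^2+\operatorname{Dir}(e_j)$, log-Sobolev, the Jensen bound $\Ent(Z^2)\ge v\log[v/(\E|Z|)^2]$ under the measure $Z^2\,d\PP/v$, $L^1$ contraction of conditional expectation, the log-sum inequality over the disjoint Fourier supports, and the Legendre inequality $x\log(x/b)\ge\lambda x-e^{\lambda-1}b$ for the linear form. The intermediate display in your Jensen step is garbled, but it ends at the correct inequality $\E_\pi\log|e|\ge\log(\E e^2/\E|e|)$, which gives the bound you need.
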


\begin{proof}
Order the coordinates of $K$ before its complement, and let
$h_i:\{-1,1\}^n\to\R$ be the Doob increments of $h$ in this
order, so
$h_i=\E[h\mid X_1,\ldots,X_i]-\E[h\mid X_1,\ldots,X_{i-1}]$.
Each has the form $h_i=X_i g_i(X_{<i})$, with
$g_i:\{-1,1\}^{i-1}\to\R$. Thus
$\Ent(h_i^2)=\Ent(g_i^2)$ and
$\operatorname{Dir}(h_i)=\|h_i\|_2^2+\operatorname{Dir}(g_i)$.
The log-Sobolev part of Theorem~\ref{ext:cube-functional}, applied
to $g_i$, gives
$\Ent(h_i^2)\le2[\operatorname{Dir}(h_i)-\|h_i\|_2^2]$.
Also $h_i=\E[\mathcal L_i h\mid X_1,\ldots,X_i]$, so
$\E|h_i|\le\E|\partial_i h|$.

For a real random variable $Z$ with $v=\E Z^2>0$, use the
probability measure $dQ=Z^2\,dP/v$. Jensen gives
$\E_Q\log|Z|\ge-\log\E_Q|Z|^{-1}=\log(v/\E|Z|)$, hence
$\Ent(Z^2)\ge v\log[v/(\E|Z|)^2]$.
Zeros of $Z$ have zero $Q$-measure, so approximation justifies this
calculation.
Applying this bound and the log-sum inequality to the increments
whose pivot is outside $K$ yields
$\sum_{i\notin K}\Ent(h_i^2)\ge V\log(V/b_K)$.
Their Fourier supports are disjoint: a nonconstant mode belongs
to the increment at its last revealed coordinate. Hence their squared
norms sum to $V$ and their energies sum to
$\operatorname{Dir}(h-P_Kh)$.
This proves the second inequality. The inequality
$V\log(V/b_K)\ge\lambda V-e^{\lambda-1}b_K$ gives the first inequality. If $b_K=0$, every outside derivative
vanishes, so $V=0$; when $V=0$ the latter inequality is immediate.
\end{proof}

\subsection{A signed Fourier operator and the source projection}

We combine the discarded-field estimate with the retained-source
variance bound. The resulting inequality assigns one cost to each
retained coordinate; the discarded singletons are paid for by the
unused part of the common profile budget.

There are two choices in the comparison. The parameter $t$ is the
energy level assigned to discarded modes of degree at least two;
$\gamma$ is the penalty for replacing the arcsine transform by a
multiple of the noisy source. It must also offset any negative
singleton multiplier. Accordingly, for $t\ge2$, choose
\begin{equation}\label{as:dual-parameters}
 \gamma>\max\{0,\tfrac12e^{2t-3}-t\},\qquad
 k(s)=\frac{\gamma s}{\gamma+s}.
\end{equation}
For a selected coordinate set $K$, Lemma~\ref{as:strengthened-split},
applied to $h$ with $\lambda=2t-2$, gives
\[
 \operatorname{Dir}(h)
 \ge \operatorname{Dir}(P_Kh)
   +t\|h-P_Kh\|_2^2
   -\tfrac12e^{2t-3}\sum_{i\notin K}\widehat h(i)^2.
\]
Here monotonicity of $h$ implies
$\E|\partial_i h|=\widehat h(i)\ge0$.
The quadratic form on the right defines a self-adjoint operator
$A\chi_S=\lambda_S\chi_S$. Its Fourier eigenvalues are
\begin{equation}\label{as:operator-multipliers}
 \lambda_S=\begin{cases}
 0,&S=\varnothing,\\
 |S|,&\varnothing\ne S\subseteq K,\\
 t-\tfrac12e^{2t-3},&S=\{i\},\ i\notin K,\\
 t,&|S|\ge2,\ S\not\subseteq K.
 \end{cases}
\end{equation}
The multiplier records how much energy the preceding estimate
assigns to each Fourier mode. A mode entirely inside $K$ retains
its actual degree, while an outside mode of degree at least two is
assigned the common lower bound $t$. Outside singletons incur a
correction and may have a negative multiplier. The choice of
$\gamma$ ensures that adding $\gamma I$ makes the operator positive,
so the square-completion lemma applies with the transferred
multiplier $k$ from \eqref{as:dual-parameters}.

\begin{lemma}\label{as:spectral-line}

With $R$ and $C_*$ from \eqref{eq:arcsine-remainder} and
\eqref{eq:arcsine-constant}, if $G_f(\rho)>0$,
\[
 {D(g)}\ge \rho\sum_{i\in K}\widehat f(i) R(\ell_i)-\gamma C_*\mathcal J
       +\frac{\pi^2}{4}\sum_S k(\lambda_S)\rho^{2|S|}\widehat f(S)^2.
\]
\end{lemma}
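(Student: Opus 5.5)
We combine three earlier ingredients: the simultaneous arcsine refinement, the martingale energy split, and spectral transfer with the arcsine error bound. The plan has four steps.

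\emph{Step 1 (refinement).} Apply Proposition~\ref{fh:simultaneous-refund} with $A=K$. Since $f$ is increasing, $|\widehat f(i)|=\widehat f(i)$. At a positive gap this gives
\[
 D(g)\ge\operatorname{Dir}(h)+\rho\sum_{i\in K}\widehat f(i)R(\ell_i).
\]
The first sum in the claim is therefore already in place, and it remains to bound $\operatorname{Dir}(h)$ from below.

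\emph{Step 2 (martingale split).} Apply Lemma~\ref{as:strengthened-split} to $h$ with $\lambda=2t-2$. Since $g=T_\rho f$ is increasing and $\arcsin$ is increasing, $h$ is monotone, so $\E|\partial_i h|=\widehat h(i)$. The Fourier supports of $P_Kh$ and $h-P_Kh$ are disjoint, hence $\operatorname{Dir}(h)=\operatorname{Dir}(P_Kh)+\operatorname{Dir}(h-P_Kh)$. This yields the displayed inequality preceding the lemma, namely
\[
 \operatorname{Dir}(h)\ge\langle h,Ah\rangle,
\]
where $A$ is the diagonal operator with the multipliers $\lambda_S$ of \eqref{as:operator-multipliers}. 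To check this, expand $t\|h-P_Kh\|_2^2$ in Fourier terms, which gives $t$ on every $S\not\subseteq K$. Subtracting $\tfrac12e^{2t-3}\widehat h(i)^2$ on the outside singletons produces exactly those multipliers, and $\operatorname{Dir}(P_Kh)$ gives $|S|$ on nonempty $S\subseteq K$.

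\emph{Step 3 (spectral transfer).} Apply Lemma~\ref{fh:square-completion} with $v=h$, $w=\tfrac\pi2 g$, and the given $\gamma$. Positive definiteness of $A+\gamma I$ is the one point that needs the parameter choice \eqref{as:dual-parameters}. The multipliers $0$, $|S|\ge1$, and $t\ge2$ are nonnegative. The outside singleton multiplier is $t-\tfrac12e^{2t-3}$, which exceeds $-\gamma$ precisely because $\gamma>\tfrac12e^{2t-3}-t$. We therefore obtain
\[
 \langle h,Ah\rangle\ge\tfrac{\pi^2}{4}\langle g,\gamma A(\gamma I+A)^{-1}g\rangle-\gamma\|h-\tfrac\pi2 g\|_2^2.
\]
Diagonalizing on Walsh characters, the first term equals $\tfrac{\pi^2}{4}\sum_S k(\lambda_S)\widehat g(S)^2$. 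The empty set contributes $k(0)=0$, so the mean causes no difficulty, and $\widehat g(S)=\rho^{|S|}\widehat f(S)$.

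\emph{Step 4 (error term).} Lemma~\ref{fh:arcsine-error} bounds $\|h-\tfrac\pi2g\|_2^2\le C_*\mathcal J$. Chaining Steps 1--4 gives the claim.

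The main obstacle is the bookkeeping in Step 2. One must confirm that monotonicity of $h$ really follows from monotonicity of $f$, since $T_\rho$ preserves monotonicity on the product cube. One must also confirm that the inequality in Lemma~\ref{as:strengthened-split} with $b_K=\sum_{i\notin K}\widehat h(i)^2$ matches the operator form term by term. In particular, an outside singleton enters both $t\|h-P_Kh\|^2$ and the $b_K$ correction, which is why its multiplier is $t-\tfrac12e^{2t-3}$. Step 3 is safe as long as $\gamma+\lambda_S>0$ for all $S$, since $g$ takes values in $(-1,1)$ and all quantities are finite.
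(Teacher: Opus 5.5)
Your proof is correct and follows the paper's argument step for step. You apply the simultaneous refinement with $A=K$, then the martingale split with $\lambda=2t-2$, using monotonicity of $h$ to turn $b_K$ into the outside singleton weights (as the paper does just before the lemma), then square completion with $v=h$, $w=\frac{\pi}{2}g$ under $\lambda_S>-\gamma$, and finally the arcsine error bound, including the same observation that $k(0)=0$ handles the constant mode.
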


\begin{proof}
The simultaneous remainder bound in Proposition~\ref{fh:simultaneous-refund}
first gives
${D(g)}\ge\rho\sum_{i\in K}\widehat f(i)R(\ell_i)+\operatorname{Dir}(h)$.
Every multiplier
in \eqref{as:operator-multipliers} is greater than $-\gamma$, even
when an outside singleton multiplier is negative. Apply
Lemma~\ref{fh:square-completion} to this operator, with
$v=h$ and $w=\frac{\pi}{2}g$,
then use $\|h-\frac{\pi}{2}g\|_2^2\le C_*\mathcal J$ from
Lemma~\ref{fh:arcsine-error}.
This proves the lemma. The constant mode of
$h$ need not vanish. It is included in the error norm,
whereas its quadratic-form multiplier is zero. Here $k(0)=0$ even when $\widehat g(\varnothing)=\mu\ne0$.
\end{proof}

Theorem~\ref{rt:source-projection}, applied to $f$ and $K$, gives $\Var(P_Kf)\le\sum_{i\in K}\psi(\widehat f(i))$.

Choose $0<z<u$ and retain the coordinates above the threshold
\begin{equation}\label{eq:retained-threshold}
 \beta=\frac{F(u)}{F(z)},\qquad
 K=\{i:\ell_i\ge z\}=\{i:\widehat f(i)\ge\beta\}.
\end{equation}
Define the nonnegative coefficients
\begin{equation}\label{eq:spectral-corrections}
 \begin{split}
 \Delta_2&=[k(t)-k(2)]\rho^4,\\
 \Delta_1&=[k(t)-k(1)]{\rho^2},\\
 \Delta_o&=[k(t)-k(t-\tfrac12e^{2t-3})]{\rho^2},\qquad
 \tau=\Delta_o\frac\beta z.
 \end{split}
\end{equation}
The inequalities $t\ge2$ and monotonicity of $k$ imply
$\Delta_1\ge\Delta_2\ge0$.

\begin{proposition}\label{as:master-ledger-equation}

If $G_f(\rho)>0$,
\[
 \begin{split}
 {D(g)}\ge{}&\frac{\pi^2}{4}k(t){\Var(g)}-\gamma C_*\mathcal J-\frac{\pi^2}{4}\tau\frac{D(g)}{\rho}\\
 &+\sum_{i\in K}\widehat f(i)\left[
    \rho R(\ell_i)+\frac{\pi^2}{4}\tau\ell_i
    -\frac{\pi^2}{4}\Delta_2\frac{\psi(\widehat f(i))}{\widehat f(i)}
    -\frac{\pi^2}{4}(\Delta_1-\Delta_2)\widehat f(i)\right].
 \end{split}
\]
\end{proposition}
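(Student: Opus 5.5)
The plan is to start from Lemma~\ref{as:spectral-line} and rewrite its spectral sum as $k(t)\Var(g)$ minus explicit per-mode deficits. Since $g=T_\rho f$, Parseval gives $\Var(g)=\sum_{S\ne\varnothing}\rho^{2|S|}\widehat f(S)^2$. Because $k(\lambda_\varnothing)=k(0)=0$, we can write
\[
 \sum_S k(\lambda_S)\rho^{2|S|}\widehat f(S)^2
 =k(t)\Var(g)-\sum_{S\ne\varnothing}\bigl[k(t)-k(\lambda_S)\bigr]\rho^{2|S|}\widehat f(S)^2 .
\]
Throughout we use that $k$ is increasing on $(-\gamma,\infty)$. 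We then bound each class of modes in \eqref{as:operator-multipliers} separately.

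\textbf{The four classes of modes.}
\begin{enumerate}[label=(\roman*)]
 \item Outside modes with $|S|\ge2$ have $\lambda_S=t$, so their deficit is zero.
 \item Retained modes $S\subseteq K$ with $|S|\ge t$ have a nonpositive deficit, which we discard.
 \item Retained modes $S\subseteq K$ with $2\le|S|<t$ satisfy $k(|S|)\ge k(2)$ and $\rho^{2|S|}\le\rho^4$. Hence their deficit is at most $\Delta_2\widehat f(S)^2$. The same bound also holds trivially for the modes in (ii).
 \item Singletons: each retained $i\in K$ costs exactly $\Delta_1\widehat f(i)^2$, and each outside $i\notin K$ costs exactly $\Delta_o\widehat f(i)^2$.
\end{enumerate}

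\textbf{Retained part.} Summing over $S\subseteq K$ with $|S|\ge2$ gives a total of at most
\[
 \Delta_2\sum_{S\subseteq K,\,|S|\ge2}\widehat f(S)^2
 =\Delta_2\Bigl[\Var(P_Kf)-\sum_{i\in K}\widehat f(i)^2\Bigr].
\]
Theorem~\ref{rt:source-projection}, applied to the increasing $f$ (for which $\|\partial_i f\|_1=\widehat f(i)$), gives $\Var(P_Kf)\le\sum_{i\in K}\psi(\widehat f(i))$. Adding the retained singletons therefore bounds the whole retained deficit by
\[
 \sum_{i\in K}\Bigl[\Delta_2\psi(\widehat f(i))+(\Delta_1-\Delta_2)\widehat f(i)^2\Bigr].
\]
Written as $\widehat f(i)$ times a bracket, this is exactly the last two terms of the bracket in the statement. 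Coordinates in $K$ have $\widehat f(i)\ge\beta>0$, so dividing by $\widehat f(i)$ is legitimate.

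\textbf{Outside singletons and the main obstacle.} The key step is to charge the outside singletons to the profile budget. For $i\notin K$ we have $0\le\widehat f(i)<\beta$. The claim is that
\[
 \widehat f(i)/\ell_i\le\beta/z .
\]
To see this, note that $F(\ell_i)=F(u)/\widehat f(i)$ gives $\widehat f(i)/\ell_i=F(u)/[\ell_iF(\ell_i)]$. By Lemma~\ref{as:profile-convex}, $vF(v)$ is decreasing, and $\ell_u$ is increasing in its argument. Hence $a\mapsto a/\ell_u(a)$ is nondecreasing, and it equals $\beta/z$ at $a=\beta$. Coefficients equal to zero contribute nothing to either side.

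It follows that $\Delta_o\widehat f(i)^2\le\tau\,\widehat f(i)\ell_i$ for every $i\notin K$, with $\Delta_o\ge0$ by monotonicity of $k$. The common budget in Proposition~\ref{fh:simultaneous-refund}, valid at positive gap, then gives
\[
 \sum_{i\notin K}\Delta_o\widehat f(i)^2
 \le\tau\Bigl[\frac{D(g)}{\rho}-\sum_{i\in K}\widehat f(i)\ell_i\Bigr].
\]

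\textbf{Assembly.} Multiply all the deficit bounds by $\pi^2/4$ and insert them into Lemma~\ref{as:spectral-line}. The terms regroup as follows:
\begin{itemize}
 \item the $-\tfrac{\pi^2}{4}\tau D(g)/\rho$ term from the outside singletons;
 \item the refund $+\tfrac{\pi^2}{4}\tau\ell_i$ inside the bracket for each $i\in K$;
 \item the $\rho R(\ell_i)$ terms, carried over from Lemma~\ref{as:spectral-line};
 \item the $\psi$ and $(\Delta_1-\Delta_2)$ terms from the retained part.
\end{itemize}
Together with $\tfrac{\pi^2}{4}k(t)\Var(g)-\gamma C_*\mathcal J$, this is exactly the stated inequality.
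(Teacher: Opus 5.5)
Your proposal is correct and follows essentially the same route as the paper: you start from Lemma~\ref{as:spectral-line}, subtract the per-mode deficits from $k(t)\Var(g)$, and bound the retained part by $\Delta_2\Var(P_Kf)+(\Delta_1-\Delta_2)\sum_{i\in K}\widehat f(i)^2$ and then by $\psi$ via Theorem~\ref{rt:source-projection}. You then charge the outside singletons to the profile budget using the monotonicity of $a/\ell_u(a)$; your explicit split of the retained modes at degree $t$ simply spells out the paper's remark that nonpositive coefficients may be dropped.
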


\begin{proof}
Relative to the constant multiplier $k(t)$ on nonconstant modes,
the raw Fourier accounting is
\begin{align*}
{D(g)}\ge{}&\frac{\pi^2}{4}k(t){\Var(g)}-\gamma C_*\mathcal J
 +\rho\sum_{i\in K}\widehat f(i)R(\ell_i)\\
&-\frac{\pi^2}{4}\left[
 \Delta_2\Var(P_Kf)
 +(\Delta_1-\Delta_2)\sum_{i\in K}\widehat f(i)^2
 +\Delta_o\sum_{i\notin K}\widehat f(i)^2\right].
\end{align*}
Indeed, the singleton coefficient inside $K$ is $\Delta_1$. For
every degree $j\ge2$,
$[k(t)-k(j)]\rho^{2j}\le\Delta_2$: a nonpositive coefficient may be dropped,
and otherwise $k(j)\ge k(2)$ and $\rho^{2j}\le \rho^4$. Thus the cost is
at most
\[
 \Delta_2\Var(P_Kf)+(\Delta_1-\Delta_2)\sum_{i\in K}\widehat f(i)^2
 \le\sum_{i\in K}\bigl[\Delta_2\psi(\widehat f(i))
                    +(\Delta_1-\Delta_2)\widehat f(i)^2\bigr].
\]
The remaining cost is exactly
$\Delta_o\sum_{i\notin K}\widehat f(i)^2$.

To bound this cost, Lemma~\ref{as:profile-convex} shows that
$\ell_u(a)/a$ is decreasing: at $v=\ell_u(a)$ it equals
$vF(v)/F(u)$, while $v$ increases with $a$.
Consequently $\widehat f(i)^2\le(\beta/z)\widehat f(i)\ell_i$ outside $K$.
The profile budget in Proposition~\ref{fh:simultaneous-refund} gives
$\sum_{i\notin K}\widehat f(i)\ell_i\le D(g)/\rho-\sum_{i\in K}\widehat f(i)\ell_i$.
Insert these two bounds into the inequality in Lemma~\ref{as:spectral-line} and collect
terms. This proves the proposition.
\end{proof}

\subsection{One bound for all retained coordinates}

We now compare the spectral baseline with the cost of each unit of
the profile budget. This treats every retained coordinate uniformly,
so the largest influence enters only as an upper bound on the range
of a scalar function.

The preceding bound has one spectral baseline and one correction
for each retained coordinate. Dividing that coordinate's correction
by its profile contribution $a\ell_u(a)$ gives the scalar cost
below. Bounding every such cost by the same number lets us sum
over an arbitrary number of coordinates using the common production
budget. Using $\psi$ from \eqref{eq:influence-cost} and the corrections
in \eqref{eq:spectral-corrections}, define, for $0<a\le1$,
\begin{equation}\label{eq:spectral-cost}
 \mathcal Q_u(a)=
 \frac{\rho[\ell_u(a)-R(\ell_u(a))]
       +\frac{\pi^2}{4}\Delta_2\psi(a)/a
       +\frac{\pi^2}{4}(\Delta_1-\Delta_2)a}{\ell_u(a)}.
\end{equation}

\begin{theorem}[Uniform spectral bound]\label{as:gate}
Let $f:\{-1,1\}^n\to\{-1,1\}$ be nonconstant and increasing,
$\rho=\tanh u$ with $u>0$, and $\alpha=\max_i|\widehat f(i)|$.
Choose $0<z<u$, $t\ge2$, and $\gamma$ satisfying
\eqref{as:dual-parameters}. With $\beta,\tau,\mathcal Q_u$ as in
\eqref{eq:retained-threshold}--\eqref{eq:spectral-cost}, put
\[
 b=\max\left\{\rho+\frac{\pi^2}{4}\tau,
              \sup_{\beta\le a\le\alpha}\mathcal Q_u(a)\right\}.
\]
If $G_f(\rho)>0$ and the mean-transfer coefficient from
\eqref{eq:mean-transfer} satisfies
$\eta_{\frac{\pi^2}{4}k(t),\gamma C_*}(\rho)\ge0$, then
\[
 \frac{D(T_\rho f)}{\rho}
 \ge\frac{\frac{\pi^2}{4}k(t)\rho^2-\gamma C_*\Psi(\rho)}{b}.
\]
At such a positive gap, $G_f'(\rho)>0$ whenever
\[
 \frac{\frac{\pi^2}{4}k(t)\rho^2-\gamma C_*\Psi(\rho)}{u}
 >\max\left\{\rho+\frac{\pi^2}{4}\tau,
              \sup_{\beta\le a\le\alpha}\mathcal Q_u(a)\right\}.
\]
\end{theorem}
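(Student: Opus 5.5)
The plan is to start from Proposition~\ref{as:master-ledger-equation}, replace the variance/entropy-error pair by Lemma~\ref{ub:transfer}, and then bound every remaining per-coordinate term by the single constant $b$ against the common budget $D(g)/\rho$. Throughout, $G_f(\rho)>0$ and $f$ is increasing, so $\widehat f(i)\ge0$. Set $\Lambda=\frac{\pi^2}{4}k(t)$ and $\Gamma=\gamma C_*$. The hypothesis $\eta_{\Lambda,\Gamma}(\rho)\ge0$ lets Lemma~\ref{ub:transfer} give
\[
 \Lambda\Var(g)-\Gamma\mathcal J\ge\Lambda\rho^2-\Gamma\Psi(\rho)+\eta_{\Lambda,\Gamma}(\rho)\mu^2\ge\Lambda\rho^2-\Gamma\Psi(\rho).
\]
So the first two terms of the master inequality are bounded below by the numerator $N:=\frac{\pi^2}{4}k(t)\rho^2-\gamma C_*\Psi(\rho)$.

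Next I rewrite the bracket for $i\in K$. Write $a=\widehat f(i)$ and $\ell=\ell_i=\ell_u(a)$. Definition \eqref{eq:spectral-cost} gives
\[
 \rho R(\ell)-\tfrac{\pi^2}{4}\Delta_2\psi(a)/a-\tfrac{\pi^2}{4}(\Delta_1-\Delta_2)a=\rho\ell-\ell\,\mathcal Q_u(a).
\]
Hence the bracket equals $\ell\,[\rho+\frac{\pi^2}{4}\tau-\mathcal Q_u(a)]$. For $i\in K$ we have $\beta\le a\le\alpha$, so $\mathcal Q_u(a)\le b$. The bracket is therefore at least $\ell\,[\rho+\frac{\pi^2}{4}\tau-b]$, and the factor in square brackets is $\le0$ because $b\ge\rho+\frac{\pi^2}{4}\tau$.

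The key step is to bound this negative term using the profile budget of Proposition~\ref{fh:simultaneous-refund}, which gives $\sum_{i\in K}a_i\ell_i\le\sum_i a_i\ell_i\le D(g)/\rho$. Since the coefficient is nonpositive, summing over $K$ gives
\[
 \sum_{i\in K}a_i\ell_i\,[\rho+\tfrac{\pi^2}{4}\tau-\mathcal Q_u(a_i)]\ge(\rho+\tfrac{\pi^2}{4}\tau-b)\,D(g)/\rho.
\]
Putting everything together,
\[
 D(g)\ge N-\tfrac{\pi^2}{4}\tau D(g)/\rho+(\rho+\tfrac{\pi^2}{4}\tau-b)D(g)/\rho=N+D(g)-b\,D(g)/\rho.
\]
This is exactly $b\,D(g)/\rho\ge N$, the first claim. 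Here $b>0$ since $b\ge\rho>0$. The point needing care is the sign: the maximum in the definition of $b$ is precisely what makes the coefficient nonpositive, so the upper budget bound may be applied.

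For the derivative claim, \eqref{eq:gap-derivative} gives $\rho G_f'(\rho)=D(g)-\rho u$. The first claim yields $D(g)/\rho\ge N/b$. The displayed hypothesis says $N/u>b$, i.e.\ $N/b>u$, so $D(g)>\rho u$ and hence $G_f'(\rho)>0$. I expect no serious obstacle. The only subtlety is bookkeeping: the $\tau$ terms must cancel exactly and the sign argument must be handled correctly. Coordinates with $\widehat f(i)=0$ are not in $K$, because $\beta>0$.
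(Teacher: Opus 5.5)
Your proof is correct and follows essentially the same route as the paper's: apply Lemma~\ref{ub:transfer} to Proposition~\ref{as:master-ledger-equation} and rewrite each retained term as $\widehat f(i)\ell_i[\rho+\pi^2\tau/4-\mathcal Q_u(\widehat f(i))]$. Then use $\rho+\pi^2\tau/4-b\le0$ together with the profile budget of Proposition~\ref{fh:simultaneous-refund} to obtain $b\,D(g)/\rho\ge\frac{\pi^2}{4}k(t)\rho^2-\gamma C_*\Psi(\rho)$, and conclude via $G_f'(\rho)=D(g)/\rho-u$; your version just makes the bookkeeping explicit.
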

\begin{proof}
Apply Lemma~\ref{ub:transfer} to
Proposition~\ref{as:master-ledger-equation}. Its retained contribution is
$\sum_{i\in K}\widehat f(i)\ell_i[\rho+\pi^2\tau/4-\mathcal Q_u(\widehat f(i))]$.
Because $\rho+\pi^2\tau/4-b\le0$, the profile budget in Proposition~\ref{fh:simultaneous-refund} bounds this below by
$(\rho+\pi^2\tau/4-b)D(g)/\rho$.
Consequently
\[
 D(g)\ge\frac{\pi^2}{4}k(t)\rho^2-\gamma C_*\Psi(\rho)
                  +(\rho-b)\frac{D(g)}{\rho}.
\]
Rearrange and use $G_f'(\rho)=D(g)/\rho-u$.
\end{proof}

A condition proved for all $a\le\alpha$ also covers every smaller
maximum influence. This simple monotonicity eliminates the partition
in the source parameter. Lemma~\ref{ub:exponential-local} covers
$\alpha\ge1-2e^{-u}$ throughout the compact range. The spectral
calculation therefore uses this moving cutoff throughout as well.
No separate uniform dictator-neighborhood theorem is needed for
the qualitative argument.

\subsection{An analytic bound for the spectral cost}

Supporting lines to the convex remainder $R$ give convex upper bounds
for the spectral cost on every positive height interval. This reduces
the remaining calculation to endpoint values. The key is a global
shape property of the entropy profile.

\begin{lemma}[Convex upper bounds for the scalar costs]
\label{as:convex-majorant}
Fix $u,c>0$ and $\Delta_1\ge\Delta_2\ge0$, and put $\rho=\tanh u$.
In the cost \eqref{eq:spectral-cost}, replace $\psi(a)$ by either
$(a+a^2)/2$ or $2a(1-a/4)/\log(4/a)$.
Set $a=F(u)/F(v)$ and replace $[v-R(v)]/v$ by
\[
 1-R'(c)+\frac{cR'(c)-R(c)}v.
\]
The resulting function is a convex upper bound for
$\mathcal Q_u(F(u)/F(v))$ on $0<v\le u$.
The latter replacement increases the expression by at most
$\rho(v-c)^2/(4v)$.
With the same choice for $\psi$, replacing
$\rho[v-R(v)]/v$ instead by $\pi^2/(4v)$ also gives
a convex upper bound.
\end{lemma}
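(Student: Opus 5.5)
Write the cost in the variable $v$: with $a=F(u)/F(v)$ we have $\ell_u(a)=v$, so
\[
 \mathcal Q_u(a)=\rho\frac{v-R(v)}{v}
 +\frac{\pi^2}{4}\Delta_2\frac{\psi(a)}{a\,v}
 +\frac{\pi^2}{4}(\Delta_1-\Delta_2)\frac{a}{v}.
\]
The plan is to prove convexity in $v$ on $(0,u]$ term by term.

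\emph{First term.} Since $R$ is convex (Lemma~\ref{fh:refund-convexity}), its tangent at $c$ lies below it:
\[
 R(v)\ge R(c)+R'(c)(v-c).
\]
Hence $[v-R(v)]/v\le 1-R'(c)+[cR'(c)-R(c)]/v$. The numerator $cR'(c)-R(c)$ is nonnegative, because $R$ is convex with $R(0)=0$. So the replacement is an upper bound, and it is convex in $v$ as a constant plus a nonnegative multiple of $1/v$.

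The size of the increase is $\rho\,[R(v)-R(c)-R'(c)(v-c)]/v$. Since $R'=(1-\arctan s/s)^2$ with $s=\sinh v$, I expect $R''\le1/2$. This gives the increase $\le\rho(v-c)^2/(4v)$. The key computation is $R''=2(1-\arctan s/s)\,\frac{d}{dv}(-\arctan s/s)$. Here $0\le 1-\arctan s/s<1$, and the derivative factor must be bounded by $1/4$; that is a one-variable check. The alternative replacement $\pi^2/(4v)$ is an upper bound because $\rho[v-R(v)]=\rho\zeta(v)^2/\tanh v\le\zeta(v)^2\le\pi^2/4$. It is convex as a multiple of $1/v$.

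\emph{Remaining terms.} These are $a/v$ and $\psi_\ast(a)/(av)$, where $\psi_\ast$ is either branch. For the polynomial branch, $\psi_\ast(a)/(av)=(1+a)/(2v)$. So it suffices to show that $1/v$ and $a/v$ are convex in $v$, with nonnegative coefficients. Now $a/v=F(u)/(vF(v))$, and Lemma~\ref{as:profile-convex} states that $1/[vF(v)^r]$ is convex for $r\in[0,1]$; the case $r=1$ gives $a/v$.

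For the logarithmic branch, the term is
\[
 \frac{2(1-a/4)}{v\log(4/a)} .
\]
This is the main obstacle. Write $a=e^{-w}$ with $w=\log F(v)-\log F(u)\ge0$, so the term is $2(1-e^{-w}/4)/[v(\log4+w)]$. The map $v\mapsto w$ is decreasing with $w'=(\log F)'$. I would try to show log-convexity of $v\mapsto v(\log 4+w)$ inverted. The inputs are $-(\log F)'>1$ and convexity of $F$, which give control of $w''$. Alternatively, write the term as an integral $2\int_0^1 a^r(1+r)^{-2}\,dr\,/v$, taken from the proof of Theorem~\ref{rt:source-projection}. Its monotone majorant $\int(a/4)^r$ is exactly this branch.

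Each integrand $a^r/v=F(u)^r/[vF(v)^r]$ is convex by Lemma~\ref{as:profile-convex}, for $0\le r\le1$. So the true integral is convex, and the plan is to verify that the branch formula equals $2\int_0^1 (a/4)^r\,dr/v=2\int_0^1 4^{-r}a^r\,dr/v$. This is again a positive mixture of the convex functions $a^r/v$, hence convex. This settles the obstacle cleanly. Summing the convex nonnegative pieces proves the lemma.
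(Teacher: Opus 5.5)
Your convexity argument matches the paper's. The tangent replacement for $R$ is a constant plus a nonnegative multiple of $1/v$. The quadratic branch gives a nonnegative combination of $1/v$ and $a/v=F(u)/[vF(v)]$, with coefficients $\Delta_2/2$ and $\Delta_1-\Delta_2/2$. The identity $2(1-a/4)/\log(4/a)=2\int_0^1(a/4)^r\,dr$ makes the logarithmic branch a positive mixture of the convex functions $1/[vF(v)^r]$ from Lemma~\ref{as:profile-convex}. You should also say explicitly that each branch formula is at least $\psi(a)$, because $\psi$ is their minimum, and that its coefficient is nonnegative; this is what makes the $\psi$-replacement an upper bound. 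The two auxiliary claims, however, each contain a step that fails.

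\emph{Tangent error.} Write $p=\arctan(\sinh v)/\sinh v$, so that $-\frac{dp}{dv}=\coth v\,(p-\operatorname{sech}^2v)$. This factor is \emph{not} bounded by $1/4$: at $v=1$ it is about $0.42$. So the one-variable check you propose is false, and bounding $1-p$ by $1$ separately cannot yield $R''\le1/2$. The missing observation is that the two factors in $R''(v)=2\coth v\,(1-p)(p-\operatorname{sech}^2v)$ are nonnegative, by $x/(1+x^2)\le\arctan x\le x$, and sum to $1-\operatorname{sech}^2v=\tanh^2v$. Their product is therefore at most $\tanh^4v/4$, which gives $R''\le\frac12\tanh^3v\le\frac12$. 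Taylor's theorem then yields the increase $\rho(v-c)^2/(4v)$.

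\emph{The $\pi^2/(4v)$ replacement.} Your step $\rho\zeta(v)^2/\tanh v\le\zeta(v)^2$ goes the wrong way. On $0<v\le u$ one has $\tanh v\le\tanh u=\rho$, so $\rho/\tanh v\ge1$. The crude bound $\zeta\le\pi/2$ alone leaves $(\pi^2/4)\rho/\tanh v$, which is unbounded as $v\downarrow0$. You need the chord bound $\arcsin x\le(\pi/2)x$ on $[0,1]$, which follows from convexity of $\arcsin$. It gives $\rho\zeta(v)^2/\tanh v\le(\pi^2/4)\rho\tanh v\le\pi^2/4$, as in the paper.
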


\begin{proof}
Put $a=F(u)/F(v)$. Since $F$ is decreasing, $0<v\le u$ implies
$0<a\le1$ and $\ell_u(a)=v$. Either formula in
\eqref{eq:influence-cost} is at least $\psi(a)$; its coefficient
in \eqref{eq:spectral-cost} is nonnegative.
Convexity of $R$ gives $R(v)\ge R(c)+R'(c)(v-c)$ and
$cR'(c)-R(c)\ge0$, since $R(0)=0$. The tangent replacement is
therefore an upper bound consisting of a constant and a nonnegative
multiple of $1/v$.

For the quadratic choice, the remaining terms are
\[
 \frac{\pi^2}{4}\left[\frac{\Delta_2}{2v}
 +\left(\Delta_1-\frac{\Delta_2}{2}\right)
   \frac{F(u)}{vF(v)}\right].
\]
For the logarithmic choice, use
$2(1-a/4)/\log(4/a)=2\int_0^1(a/4)^r\,dr$ to obtain
\[
 \frac{\pi^2}{4}\left[
 2\Delta_2\int_0^1\frac{(F(u)/4)^r}{vF(v)^r}\,dr
 +(\Delta_1-\Delta_2)\frac{F(u)}{vF(v)}\right].
\]
Both are nonnegative combinations of the convex functions
$1/[vF(v)^r]$, $0\le r\le1$, from Lemma~\ref{as:profile-convex}.
Integration preserves their convexity, proving the first assertion.

For the error estimate, put $p=\arctan(\sinh v)/\sinh v$.
The bounds $x/(1+x^2)\le\arctan x\le x$ give
$\operatorname{sech}^2v\le p\le1$. Differentiating $R'=(1-p)^2$ yields
\[
 0\le R''(v)=2\coth v(1-p)(p-\operatorname{sech}^2v)
 \le\tfrac12\tanh^3v\le\tfrac12,
\]
because the two nonnegative factors have sum $\tanh^2v$.
Taylor's theorem now gives
$0\le R(v)-R(c)-R'(c)(v-c)\le(v-c)^2/4$.
Multiplication by $\rho/v$ proves the stated increase from the
tangent replacement; it does not include the earlier choice for $\psi$.

Finally, convexity of $\arcsin$ on $[0,1]$ gives
$\arcsin x\le(\pi/2)x$. By \eqref{eq:arcsine-remainder},
\[
 \rho[v-R(v)]
 =\rho\frac{\arcsin^2(\tanh v)}{\tanh v}
 \le\frac{\pi^2}{4}\rho\tanh v\le\frac{\pi^2}{4}.
\]
Replacing the first term by $\pi^2/(4v)$ thus leaves a sum of
convex functions, proving the alternative bound.
\end{proof}

A fixed choice of the formula for $\psi$ and of $c$ therefore requires
only two endpoint checks
on any closed height interval. Choosing $c=\sqrt{v_-v_+}$ on
$[v_-,v_+]$ bounds the tangent error uniformly by
$\rho(\sqrt{v_+}-\sqrt{v_-})^2/4$.
This follows by maximizing $(v-c)^2/v$ at the endpoints.
More generally, intervals equally spaced in $\sqrt v$ make this
error decay quadratically in their number.

\subsection{An explicit choice on the compact noise range}\label{sec:compact-spectral}

The inverse profile determines the upper end of the retained height
range. Bounds on the growth of $e^{2v}F(v)$ give an explicit upper
height and control its influence. The compact calculation therefore
needs no inverse roots or curvature estimates.

\begin{lemma}[Inverse bound]\label{as:explicit-inverse}
Let $u>2$ and $e^{2-u}\le a\le1$. For
$v=u+(2u+1)\log a/(4u+1)$,
\[
 F^{-1}(F(u)/a)\le v\le u,\qquad
 \frac{F(u)}{F(v)}\le e^{-2(u-v)}\frac{2u+1}{2v+1}.
\]
\end{lemma}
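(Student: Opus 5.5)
The plan is to prove the elementary facts about $v$ first, and then both profile statements by integrating two-sided bounds on the logarithmic derivative $-(\log F)'$ over $[v,u]$. Lemma~\ref{as:profile-convex} supplies the lower bound; the upper bound I will derive from the same formula.

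\emph{Range of $v$.} Since $\log a\le0$, we get $v\le u$. Write $s=u-v=(2u+1)|\log a|/(4u+1)\ge0$. The hypothesis $|\log a|\le u-2$ gives
\[
 v\ge u-\frac{(2u+1)(u-2)}{4u+1}=\frac{2(u+1)^2}{4u+1}>\frac u2>1 .
\]
In particular $2u+1\le 4w+1$ for every $w\in[v,u]$, which will control the error terms later.

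\emph{Second inequality.} The proof of Lemma~\ref{as:profile-convex} gives
\[
 -(\log F)'(w)>\frac{4w}{1+2w}=2-\frac{2}{2w+1}.
\]
Integrating this over $[v,u]$ gives
\[
 \log F(v)-\log F(u)\ge 2(u-v)-\log\frac{2u+1}{2v+1},
\]
and exponentiating yields $F(u)/F(v)\le e^{-2(u-v)}(2u+1)/(2v+1)$.

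\emph{First inequality.} Since $F$ is decreasing, $F^{-1}(F(u)/a)\le v$ is equivalent to $F(v)\le F(u)/a$, that is, to $\log F(v)-\log F(u)\le|\log a|$. The left side is $\int_v^u-(\log F)'(w)\,dw$, and $|\log a|=s(4u+1)/(2u+1)$. So it suffices to prove the pointwise bound
\[
 -(\log F)'(w)\le 2-\frac{1}{2u+1}\qquad(v\le w\le u).
\]
For this I would use the exact formula from the proof of Lemma~\ref{as:profile-convex},
\[
 -(\log F)'(w)=\frac{4\log(2\cosh w)}{(e^{2w}-e^{-2w})\,H(\tanh w)} .
\]
In the numerator use $\log(2\cosh w)\le w+e^{-2w}$. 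In the denominator use the identity $H(\tanh w)=\log(1+e^{-2w})+2w/(e^{2w}+1)$ together with $\log(1+x)\ge x-x^2/2$ and $1/(e^{2w}+1)\ge e^{-2w}-e^{-4w}$. These give
\[
 H(\tanh w)\ge e^{-2w}(1+2w)-e^{-4w}\bigl(2w+\tfrac12\bigr).
\]
The leading term of the resulting quotient is $4w/(2w+1)=2-2/(2w+1)\le 2-2/(2u+1)$. This leaves a margin of $1/(2u+1)\ge1/(4w+1)$ to absorb the corrections.

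\emph{Main obstacle.} The remaining step is to show that the corrections are $O\bigl(w^2e^{-2w}\bigr)$ relative to $1$. This requires bounding the factor $1/(1-e^{-4w})$ and the perturbation of the denominator $H$, and then showing that the total is below $1/(4w+1)$ for all $w>1$. I expect this to reduce to a one-variable inequality of the form $C\,w(4w+1)e^{-2w}\le1$ on $w\ge1$. Its left side is decreasing in $w$ beyond a small threshold, so it should be checked at $w=1$ (or at the actual lower end $w\ge 2(u+1)^2/(4u+1)>3/2$ if $w=1$ is too tight). The constants need care here. If this direct estimate proves too lossy near $w\approx1$, I would fall back on the sandwich \eqref{rt:scaled-profile}, which handles the regime where $|\log a|/(4u+1)$ dominates $\log\bigl[(1+e^{-2v})/(1-e^{-2v})\bigr]$. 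That leaves the derivative bound to be used only for $a$ near $1$.
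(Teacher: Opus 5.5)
Your reduction is the paper's: both inequalities come from integrating two-sided bounds on $-(\log F)'=2-S'/S$, where $S(w)=e^{2w}F(w)$, over an interval ending at $u$. Your range computation is correct, and so is your proof of the second inequality, which integrates the bound $-(\log F)'>4w/(2w+1)$ from the proof of Lemma~\ref{as:profile-convex} over $[v,u]$. The paper integrates over $[F^{-1}(F(u)/a),u]$, after showing that this point exceeds $u+\log a\ge2$; working on $[v,u]$ is equally valid.

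\textbf{The gap is in the first inequality.} Its whole content is the pointwise bound $-(\log F)'(w)\le2-1/(2u+1)$ on $[v,u]$, and you leave this as an expectation. Worse, the $u$-free version you propose to verify is false: you want corrections below $1/(4w+1)$ for all $w>1$, checked at $w=1$. But $-(\log F)'(1)\approx1.701$, which exceeds even the weaker target $2-1/(2w+1)=5/3$, so no choice of constants can make a check at $w=1$ succeed. The fallback through \eqref{rt:scaled-profile} does not avoid this, since it still needs the same derivative bound for $a$ near $1$.

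\textbf{How to close it.} Two observations repair the step.
\begin{enumerate}
\item Your own lower bound on $v$ gives more than you used: $2(u+1)^2-2(4u+1)=2u(u-2)>0$, so $v>2$.
\item Since $w\le u$, it suffices to prove $-(\log F)'(w)\le2-1/(2w+1)$ for $w\ge2$. This gives margin $1/(2w+1)$ and makes the comparison $w>u/2$ unnecessary.
\end{enumerate}
This bound is exactly the lower half of Lemma~\ref{app:scaled-profile-derivative}, namely $(2w+1)S'/S>1$. The paper gets it from the series in $q=e^{-2w}$ via $S'>7/4$ and $S<\tfrac{55}{53}(2w+1)$.

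Your own elementary estimates also suffice. They give
\[
 -(\log F)'(w)\le\frac{4(w+q)}{(1-q^2)\bigl[(2w+1)-q(2w+\tfrac12)\bigr]}.
\]
Using $8w^2+6w+1=(4w+1)(2w+1)$ and cross-multiplying, the comparison with $(4w+1)/(2w+1)$ is implied by
\[
 q\Bigl(4w+4+\frac1{2(2w+1)}\Bigr)+q^2(4w+1)\le1 .
\]
The left side is below $1/4$ at $w=2$ and decreases for $w\ge2$. With this step inserted, your proof is complete.
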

\begin{proof}
Define the scaled profile
\begin{equation}\label{eq:scaled-profile}
 S(s)=e^{2s}F(s).
\end{equation}
Lemma~\ref{app:scaled-profile-derivative} gives
$1/(2s+1)<S'(s)/S(s)<2/(2s+1)$ for $s\ge2$.
The case $a=1$ is exact. Otherwise put $w=F^{-1}(F(u)/a)<u$.
Since $-F'/F>1$,
$-\log a=\int_w^u(-F'/F)\,ds>u-w$,
so $w>u+\log a\ge2$. On $[w,u]$ we also have
$-F'/F=2-S'/S<2-1/(2u+1)$. Integrating gives
$-\log a<(2-1/(2u+1))(u-w)$, hence $w<v\le u$.
Finally, integration on $[v,u]$ gives
$\log[S(u)/S(v)]\le\log[(2u+1)/(2v+1)]$.
Using \eqref{eq:scaled-profile} proves the second bound.
\end{proof}

Throughout $45951/20000\le u\le8$, use the single choice
\[
 z=\log u+\frac{u}{25}-\frac25,\qquad
 t=\frac{49u}{100}+\frac75,\qquad
 \gamma=\frac12e^{49u/50-1/5}+e^{9u/20+1/4}.
\]
Thus $\gamma>e^{2t-3}/2$ identically. Also $0<z<2u/3$ and
$t>2$, using $\log2>1/2$ and $\log u\le u/2$ for $u\ge2$.
The transfer condition is analytic as well. The ratio $\gamma/t$ increases because
$\gamma'/\gamma\ge9/20>t'/t$, and the exponential series gives
$\gamma(9/4)>85/12>(5/2)t(9/4)$. Since
$1-\rho^2<1/20$, $C_*>8/5$, $\pi^2/4<5/2$, and
$\log2<7/10$, the normalized transfer margin exceeds
$(8/5)(1/2-7/200)-5/7=26/875$.

Set $c(u)=1-2e^{-u}$ throughout this interval. This cutoff satisfies $u+\log c(u)>2$:
$u>9/4$, and $e^2>22/3$, $e^{1/4}>41/32$ give
$c(u)>e^{-1/4}$. Let
$v_+(u)=u+(2u+1)\log c(u)/(4u+1)$.
By Lemma~\ref{as:explicit-inverse}, this explicit height covers
all $a\le c(u)$, and $v_+(u)>u-1/4>2u/3$.

On $[z,2u/3]$, use the logarithmic formula for $\psi$ in
Lemma~\ref{as:convex-majorant}, with the tangent to $R$ at $z$.
On $[2u/3,v_+(u)]$, use the quadratic formula with the tangent
at $v_+(u)$. Each bound is convex in height, so its two endpoints
suffice. At the upper endpoint, substitute the upper bound for
$F(u)/F(v_+(u))$ from Lemma~\ref{as:explicit-inverse}; both
influence costs increase with this argument.

The certificate checks these four height endpoints and the
discarded-coordinate inequality on 24 closed noise intervals.
All parameters are explicit functions of $u$; there are no fitted
parameter families or stored height records. Fourth-order Taylor
bounds retain cancellation on each noise interval. The separate
audit checks the original costs throughout the extended height
range. Appendix~\ref{app:certificate} records the replay scope.

\section{Completion of the qualitative proof}\label{sec:qualitative-completion}

The same uniform spectral bound can reach the local entropy region
even as that region shrinks toward a dictator. We first state the
local threshold, then choose spectral parameters whose losses remain
small enough to meet it. Together these estimates complete the proof of CK.

\subsection{A local junction valid for both error masses}

The local entropy comparison covers sufficiently small dictator
distance at each correlation. The following explicit boundary will
also be the starting point of the spectral comparison.

Recall the distance $\delta(f)$ to a signed dictator from
\eqref{eq:dictator-distance}. To describe the local region, put
\begin{equation}\label{eq:local-thresholds}
 u_\delta=\log\frac{3}{2\delta}\quad(\delta>0),\qquad
 d_-(u)=\frac32e^{-u}.
\end{equation}
Thus $u\le u_\delta$ is equivalent to $\delta\le d_-(u)$.

\begin{lemma}\label{rt:local-junction}
Let $f:\{-1,1\}^n\to\{-1,1\}$ satisfy $0<\delta(f)\le1/64$.
Then CK holds for $0\le\rho\le\tanh u_{\delta(f)}$,
at arbitrary output mean.
\end{lemma}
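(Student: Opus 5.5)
The plan is to reduce the claim to a single sign check of the local deficit $M_\rho$ from \eqref{eq:local-deficit}, in the spirit of Lemma~\ref{ub:exponential-local}, but with the dictator distance enlarged by the factor $3/2$.

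\textbf{Step 1: reduction to one value.} Put $\delta=\delta(f)$ and $u=u_\delta=\log(3/(2\delta))$. Since $\delta\le1/64$, we have $u\ge\log96>4$. By \eqref{eq:dictator-distance}, $\alpha(f)=1-2\delta$. By Theorem~\ref{ub:local}, together with strict concavity of $M_\rho$ and $M_\rho(1)=0$, it suffices to show that $M_{\rho_0}(1-2\delta)\ge0$ at $\rho_0=\tanh u$. Lemma~\ref{ub:local-propagation}, applied with the fixed value $a=1-2\delta$, then gives $M_\rho(1-2\delta)\ge0$ for all $0\le\rho\le\rho_0$. Theorem~\ref{ub:local} finally converts this into CK at every such $\rho$ and every mean. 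Compression is not needed, since Theorem~\ref{ub:local} applies to arbitrary $f$.

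\textbf{Step 2: the scalar inequality.} Now set $\delta=\tfrac32e^{-u}$ and $p=e^{-2u}/(1+e^{-2u})$, so that $\rho_0=1-2p$ and $p\approx\tfrac49\delta^2$. Then:
\begin{itemize}
\item $1-\rho_0^2=4p(1-p)$;
\item $1-\rho_0^2a=2\delta+4p(1-2\delta)-4p^2(1-2\delta)$;
\item $H(\rho_0)=H_{\rm b}(p)$;
\item $H(\rho_0a)=H_{\rm b}(\delta+p(1-2\delta))$.
\end{itemize}
To leading order, $H_{\rm b}(\delta)\approx\delta(u+1-\log\tfrac32)$ and $H_{\rm b}(p)\approx p(2u+1)$. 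This gives
\[
 M_{\rho_0}(1-2\delta)\approx 2p\delta\bigl[2(u+1-\log\tfrac32)-(2u+1)\bigr]=2p\delta\bigl(1-2\log\tfrac32\bigr).
\]
The bracket is about $0.189>0$. The $u$-terms cancel exactly, which is why the constant $3/2$ is admissible while any constant above $\sqrt e$ would fail.

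To make this rigorous, I would reuse the estimates in the proof of Lemma~\ref{ub:exponential-local}. These are $H_{\rm b}(\delta)/\delta\ge\log(1/\delta)+1-\delta$ and $H_{\rm b}(p)/p\le\log(1/p)+1+O(p)$. I would also integrate $H_{\rm b}'\ge\log(1/x)-O(x)$ over $[\delta,\delta+p]$ to account for the shift by $p(1-2\delta)$. Dividing by $2p\delta$ should leave a lower bound of the form
\[
1-2\log\tfrac32-C_1u\delta-C_2\delta-(\text{higher order}).
\]
Each correction term is monotone in $\delta$ on $(0,1/64]$, as in the earlier proof, so it suffices to evaluate at $\delta=1/64$, where $u=\log96$.

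\textbf{Main obstacle.} The difficulty is the size of the margin. It is only about $0.19$, while the error terms involve $u\delta$; at $\delta=1/64$ this is $\log96/64\approx0.071$, multiplied by a constant of roughly $2$. The inequality therefore has to be tracked without losing constant factors. In particular, the relations $\log(1/p)=2u+\log(1+e^{-2u})$ and $\log(1/\delta)=u-\log\tfrac32$ must be kept exact, and rational bounds such as $\log\tfrac32<41/100$ and $\log96<457/100$ will be needed. If the crude expansion turns out to be too lossy near $\delta=1/64$, I would first isolate the exact cancellation of the $u$-terms, then bound the remainder by a monotone function of $\delta$ and verify it at the single endpoint.
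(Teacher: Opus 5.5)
Your proposal is correct and follows essentially the paper's proof. Both reduce, via Lemma~\ref{ub:local-propagation} and Theorem~\ref{ub:local}, to the single sign $M_{\rho_\delta}(1-2\delta)>0$ with $p=4\delta^2/(9+4\delta^2)$, divide by $2\delta p$, isolate the constant $1-\log(9/4)$, and bound the monotone corrections at $\delta=1/64$. The one simplification in the paper is that it drops the shift entirely via $H(\rho_\delta(1-2\delta))\ge H_{\rm b}(\delta)$ instead of integrating $H_{\rm b}'$; with $\log64<25/6$ and $\log(9/4)<13/16$ this still leaves the margin $65/4608$.
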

\begin{proof}
Put $\delta=\delta(f)$. By Lemma~\ref{ub:local-propagation},
it suffices to prove the local comparison at $\rho=\tanh u_\delta$.
Let $p=4\delta^2/(9+4\delta^2)$ and
$\rho_\delta=1-2p=\tanh u_\delta$.
Elementary entropy bounds give
$H(\rho_\delta(1-2\delta))\ge H_{\rm b}(\delta)$,
$H_{\rm b}(\delta)/\delta\ge \log(1/\delta)+1-\delta$, and
$H_{\rm b}(p)/p\le2\log(1/\delta)+\log(9/4)+4\delta^2/9+1$.
Using the local deficit \eqref{eq:local-deficit} and dividing
$M_{\rho_\delta}(1-2\delta)$ by $2\delta p$,
its positivity follows from
\[
 2(1-4\delta^2/9)(\log(1/\delta)+1-\delta)
 >(1+8\delta/9)(2\log(1/\delta)+\log(9/4)+4\delta^2/9+1).
\]
The difference is at least
\[
 1-\log(9/4)
 -\delta[2+(8/9)(2\log(1/\delta)+\log(9/4)+1)]
 -(4\delta^2/9)(2\log(1/\delta)+3).
\]
Use $\log64<25/6$ and $\log(9/4)<13/16$.
The functions $\delta \log(1/\delta)$ and $\delta^2\log(1/\delta)$ increase in this
range, so substitution at $\delta=1/64$ bounds the difference
below by $65/4608>1/80$.
Theorem~\ref{ub:local} now gives the claim.
\end{proof}

\subsection{A spectral comparison reaching the local neighborhood}

For the qualitative argument, choose the square-completion parameter
on the scale $e^u$, with spectral level $t$ on the scale $u/2$.
Its entropy penalty is then of order $ue^{-u}$, the same order as
the spectral gain at the local distance threshold. Two simple savings
make these estimates overlap: we keep the arcsine correction and
combine the degree corrections before estimating their remaining loss.
The stronger uniform margins used for stability are established in
 the companion paper \cite{VuTranStability}.

\begingroup
\raggedbottom

\begin{proposition}\label{qual:tail-derivative}
Let $f:\{-1,1\}^n\to\{-1,1\}$ be increasing, and let
$\rho=\tanh u$ with $u\ge8$. If
$\delta(f)\ge(3/2)e^{-u}$, then
$G_f'(\rho)>0$ whenever $G_f(\rho)>0$.
\end{proposition}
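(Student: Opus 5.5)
We apply Theorem~\ref{as:gate} at $\rho=\tanh u$, $u\ge8$, with parameters chosen analytically on the scale announced above. We take $t=u/2+t_0$ for a small constant $t_0$. We set $\gamma=\tfrac12e^{2t-3}+\gamma_0e^{u}$, with $\gamma_0$ chosen so that $\gamma\asymp e^{u}$ and \eqref{as:dual-parameters} holds. The retention height is $z=\log u+z_0$, possibly with a small linear term as in Section~\ref{sec:compact-spectral}.

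The hypothesis $\delta(f)\ge(3/2)e^{-u}$ gives $\alpha=1-2\delta\le1-3e^{-u}$. Lemma~\ref{as:explicit-inverse} then bounds the largest relevant height by $v_+\le u+(2u+1)\log(1-3e^{-u})/(4u+1)\le u-\tfrac32e^{-u}(1+o(1))$. So the supremum of $\mathcal Q_u$ is only needed for $a\le1-3e^{-u}$. This is exactly where the local junction of Lemma~\ref{rt:local-junction} stops applying, which is why the two regions meet.

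The first step is the mean-transfer condition $\eta_{\frac{\pi^2}{4}k(t),\gamma C_*}(\rho)\ge0$. It is easy because $1-\rho^2\le4e^{-2u}$ and $\gamma C_*(\tfrac12-4e^{-2u}\log2)\gg\frac{\pi^2}{4}k(t)$, since $k(t)\le t$ while $\gamma$ is exponentially large.

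The second step is the left side of the gate: $\bigl[\frac{\pi^2}{4}k(t)\rho^2-\gamma C_*\Psi(\rho)\bigr]/u$. We expand $k(t)=t-t^2/(\gamma+t)=u/2+t_0-O(u^2e^{-u})$. We also use $\Psi(\rho)=H(\rho)-(1-\rho^2)\log2=O(ue^{-2u})$, so the penalty is $\gamma C_*\Psi(\rho)=O(ue^{-u})$. The left side is therefore $\frac{\pi^2}{8}+\frac{\pi^2t_0}{4u}-O(e^{-u})$. This sits well above $\rho+\frac{\pi^2}{4}\tau$: since $\beta=F(u)/F(z)$ is of size $e^{-2(u-z)}$ up to polynomial factors, $\tau=\Delta_o\beta/z$ is exponentially small despite $\Delta_o$ being bounded.

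The third step, and the main obstacle, is bounding $\sup_{\beta\le a\le\alpha}\mathcal Q_u(a)$ by the same left side. We use the convex majorants of Lemma~\ref{as:convex-majorant} in height $v\in[z,v_+]$, splitting at an intermediate height such as $2u/3$ as in the compact range. Convexity reduces everything to endpoint values.

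At low heights we use the replacement $\rho[v-R(v)]/v\le\pi^2/(4v)$. At $v=z\approx\log u$ this gives $\pi^2/(4\log u)$, which is below $\pi^2/8$ once $u\ge8$ and $z_0$ is chosen suitably. The influence terms there are tiny because $a\approx e^{-2(u-v)}$.

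At the top height $v_+$ the spectral gain must beat the cost coming from $v-R(v)$. Near $v=u$ we have $R(v)\approx v-\pi^2/4+O(e^{-v})$, so $\rho(v-R(v))/v\approx\frac{\pi^2}{4u}$. The degree corrections contribute $\frac{\pi^2}{4}\bigl[\Delta_2\psi(a)/a+(\Delta_1-\Delta_2)a\bigr]/v$. Here $\Delta_1=k(t)-k(1)\approx u/2-1$ and $\Delta_2\approx u/2-2$. With $\psi(a)/a\le(1+a)/2$ and $a\le1$, these give at most $\frac{\pi^2}{4}\bigl[(u/2-2)+1\bigr]/u\approx\frac{\pi^2}{8}-\frac{\pi^2}{4u}$.

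The total at the top is thus about $\frac{\pi^2}{8}+O(e^{-u})$, against a left side of $\frac{\pi^2}{8}+\frac{\pi^2t_0}{4u}$. The margin is of order $1/u$, so the $O(e^{-u})$ effects are harmless. The delicate point is whether combining the degree corrections before estimating, together with keeping the arcsine correction $R$ exactly (using its precise asymptotics $v-R(v)=\zeta(v)^2/\tanh v$), genuinely produces this $1/u$ margin. If the first-order terms cancel exactly, I would instead tune $t_0$ and $\gamma_0$. The remaining budget is then the comparison of the $O(ue^{-u})$ entropy penalty with the exponentially small gap $u-v_+\approx\tfrac32e^{-u}$, which is where the constant $3/2$ in the hypothesis enters.

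Once all endpoint inequalities hold for every $u\ge8$, Theorem~\ref{as:gate} gives $G_f'(\rho)>0$ at any positive gap.
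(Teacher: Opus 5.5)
Your framework is the paper's: Theorem~\ref{as:gate} with $t\approx u/2$ and $\gamma\asymp e^u$, convex majorants in the height, and endpoint checks. Your transfer and baseline estimates are fine, but the decisive top-height step fails.

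You write $\Delta_1\approx u/2-1$ and $\Delta_2\approx u/2-2$, dropping the $t_0$ that you keep in the baseline. With $k(t)\approx u/2+t_0$ restored, your top cost is $\frac{\pi^2}{4}(\frac12+\frac{t_0}{u})$ to first order, which is exactly the left side. This cancellation is structural, so retuning $t_0,\gamma_0$ cannot remove it. At $a=1$ one has $\ell_u(1)=u$, $\rho[u-R(u)]=\arcsin^2\rho$ and $\psi(1)=1$. Hence, with $N=\frac{\pi^2}{4}k(t)\rho^2-\gamma C_*\Psi(\rho)$,
\[
 N-u\mathcal Q_u(1)
 =\Bigl[\frac{\pi^2}{4}\rho^2-\arcsin^2\rho\Bigr]
  -\frac{\pi^2\rho^2}{4(\gamma+1)}-\gamma C_*\Psi(\rho),
\]
which does not involve $t$. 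The bracket is about $2\pi e^{-u}$. By AM--GM the two losses sum to at least about $\pi\sqrt{C_*\Psi(\rho)}$, already more than $15e^{-u}$ at $u=8$, so this is negative for every $\gamma>0$. Since $u\mathcal Q_u$ changes by only $O(ue^{-u})$ between $a=1$ and $a_0=1-3e^{-u}$, the margin at the top height is at most of order $ue^{-u}$. So the $O(e^{-u})$ terms you call harmless are the entire content of the proposition. Bounding the degree corrections with $a\le1$, as you do, throws away exactly the available gain.

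What must be shown is an explicit balance at scale $e^{-u}$ at $a_0$, where $u-\ell_u(a_0)\le\frac74e^{-u}$. With the quadratic $\psi$, the two degree corrections combine into $\Delta_1a_0+\Delta_2(1-a_0)/2$. The resulting gain $\frac{\pi^2}{4}k(t)\rho^2\bigl[1-\frac uv\bigl(a_0+\rho^2\frac{1-a_0}{2}\bigr)\bigr]\ge\frac{\pi^2}{4}k(t)\rho^2\cdot\frac54e^{-u}$ has to beat two losses:
\begin{itemize}
\item the residual loss $\frac{\pi^2}{4}\cdot\frac uv\bigl[1-k(1)\rho^2a_0-k(2)\rho^4\frac{1-a_0}{2}\bigr]\approx\frac{\pi^2}{4\gamma}$;
\item the penalty $\gamma C_*\Psi(\rho)\le\gamma C_*(2u+1)e^{-2u}$.
\end{itemize}
This confines $\gamma e^{-u}$ to a window that must also respect $\gamma>\frac12e^{2t-3}$; the paper takes $t=u/2+1$ and $\gamma=e^u/4$. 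Even then, $u=8$ closes only with the arcsine saving $\rho[v-R(v)]\le\arcsin^2\rho<\frac{\pi^2}{4}-4e^{-u}$. The paper's final margin $(\frac{9u}{14}-\frac{134}{35})e^{-u}$ would be negative at $u=8$ without that $4e^{-u}$. The constant $3/2$ enters through $1-a_0=3e^{-u}$ in this gain.

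Two further estimates are wrong, though fixable by the choice of $z$.
\begin{itemize}
\item $\Delta_o$ is not bounded. Since $t-\frac12e^{2t-3}\asymp-e^u$ while $\gamma\asymp e^u$, you get $\Delta_o\asymp e^u$ (the paper has $\Delta_o<e^u/(2(e-2))$). Then $\tau\asymp e^{2z-u}$ up to polynomial factors, and at $u=8$ the discarded-coordinate check uses most of the slack $N-u\rho$, bounding $z$ from above.
\item At the lowest height the influence term is not tiny. The logarithmic $\psi$ gives $\Delta_2\psi(\beta)/\beta\approx2\Delta_2/\log(4/\beta)$, which is of order one and comparable to the arcsine term. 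So with your replacement $\pi^2/(4v)$, at $u=8$ a retained height $z$ just above $2$ fails; the paper uses $z=7/2$.
\end{itemize}
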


\begin{proof}
Apply Theorem~\ref{as:gate} with
$z=7/2, t=u/2+1$, and $\gamma=e^u/4$.
Write $h=e^{-u}$, $a_0=1-3h$, and
$N=(\pi^2/4)k(t)\rho^2-\gamma C_*\Psi(\rho)$ for the
spectral baseline. The distance assumption gives $\alpha\le a_0$.
We will compare the baseline with the discarded-coordinate cost
and every retained-coordinate cost on $\beta\le a\le a_0$.

First, the arcsine contribution has a useful saving. The function
$\arcsin^2 x/x$ increases on $(0,1)$, so for $0<v\le u$ we have
\[
 \rho[v-R(v)]\le\arcsin^2\rho
 =\frac{\pi^2}{4}-2\arctan h\,\bigl(\pi-2\arctan h\bigr)
 <\frac{\pi^2}{4}-4h.
\]
Here $h<1/2500$, since $e>27/10$, and
$h/(1+h^2)\le\arctan h\le h$ proves the last inequality.
Replacing the arcsine numerator by this positive constant gives a
convex upper cost for either formula for $\psi$, as in
Lemma~\ref{as:convex-majorant}. Use the logarithmic formula through
$a=1/8$ and the quadratic formula thereafter. Only the heights
$7/2$, $\ell_u(1/8)$, and $\ell_u(a_0)$ need to be checked.

We use $12/5<\pi^2/4<5/2$ and $1<C_*<12/7$.
The parameter condition follows from $e>2$, and the transfer
condition follows from $\gamma>10t$ and $1-\rho^2<4h^2$.
The bounds
\[
 k(t)\rho^2>t-\frac1{10},\qquad
 \gamma C_*\Psi(\rho)<\frac1{300}
\]
give $N/u>6/5+2/u$. Indeed,
$t^2/\gamma=(u+2)^2e^{-u}$ and $4te^{-2u}$ decrease from $u=8$,
and $\Psi(\rho)\le H(\rho)\le(2u+1)e^{-2u}$.

For discarded coordinates,
\[
 k(t)-k(t-e^{2t-3}/2)
 =\frac{\gamma^2 e^{2t-3}/2}
        {(\gamma+t)(\gamma+t-e^{2t-3}/2)}
 <\frac{e^u}{2(e-2)}.
\]
Together with \eqref{rt:scaled-profile}, this gives
\[
 \beta\le\frac{1001}{1000}\frac{2u+1}{8}e^{7-2u},\qquad
 \tau<\frac{1001}{1000}\frac{(2u+1)e^{7-u}}{56(e-2)}.
\]
The upper bound for $u\tau$ decreases for $u\ge8$.
Using $e>27/10$ at $u=8$ therefore gives
$N-u(\rho+\pi^2\tau/4)>1/3$ throughout this range.

For the first retained height, the profile bound gives
$\beta<1/200$ and $\log(4/\beta)\ge(3u-5)/2$.
For the latter inequality, compare
$2u-7+\log(32000/[1001(2u+1)])$ with $(3u-5)/2$:
the difference is positive at $8$ and has positive derivative.
Since $\Delta_2<t-1$ and $\Delta_1-\Delta_2<2$, the cost at
$v=7/2$ is less than
\[
 \frac57\left(1+\frac{2u}{3u-5}+\frac1{100}\right)
 <\frac65+\frac2u.
\]
At $a=1/8$, the profile derivative first gives
$\ell_u(1/8)>u-\log8>4$, then $\ell_u(1/8)>u-6/5$. The quadratic cost is less than
\[
 \frac52\frac{5/4+9u/32}{u-6/5}<\frac65+\frac2u.
\]
For both comparisons, clearing positive denominators gives a
quadratic that is positive and increasing for $u\ge8$.
The logarithmic cost is smaller at this junction, since
$\log2>31/45$. These costs are therefore below $N/u$.

It remains to compare the top height $v=\ell_u(a_0)$ with the
baseline. The profile derivative gives $u-v\le7h/4$.
In the quadratic cost, combine the degree corrections before
estimating them: its numerator is at most
\[
 \frac{\pi^2}{4}
 \left[1+\Delta_1a_0+\frac{\Delta_2}{2}(1-a_0)\right]-4h.
\]
This keeps the degree-one loss $1-k(1)$. Put $d=1-a_0=3h$.
Using $1-k(1)<1/\gamma$ and $k(1)-k(2)/2<1/\gamma$, we obtain
\[
 1-k(1)\rho^2a_0-k(2)\rho^4d/2
 \le(1+d)(\gamma^{-1}+4h^2)=4h(1+4h+3h^2).
\]
Also $u-v\le7h/4$ and $u\ge8$ imply
\[
 1-\frac uv(a_0+\rho^2d/2)\ge\frac54h,\qquad
 \frac{4u}{v}(1+4h+3h^2)<\frac{101}{25}.
\]
Writing $\overline{\mathcal Q}_u(a_0)$ for the upper cost and
using $u/v\ge1$ in the arcsine saving now gives
\begin{align*}
 \frac{N-u\overline{\mathcal Q}_u(a_0)}h
 &>3\left(\frac u2+\frac9{10}\right)
       -\frac{101}{10}-\frac37(2u+1)+4\\
 &=\frac{9u}{14}-\frac{134}{35}
 \ge\frac{46}{35}>0.
\end{align*}
The four terms record the spectral gain, degree-correction cost,
entropy penalty, and arcsine saving, respectively; their sum is a
positive linear function of $u$. All costs are therefore below $N/u$,
proving the derivative assertion by
Theorem~\ref{as:gate}.
\end{proof}

\par
\endgroup

\subsection{Completion of the proof}\label{ub:high-cover}

The compact spectral calculation uses the same moving local cutoff
throughout. It consequently gives a derivative criterion for every
possible positive-gap source without a separate neighborhood theorem.

\begin{proposition}\label{ub:compact-gate}
For every increasing $f:\{-1,1\}^n\to\{-1,1\}$, if
$49/50\le\rho\le\tanh8$ and $G_f(\rho)>0$, then
$G_f'(\rho)>0$.
\end{proposition}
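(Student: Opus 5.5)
The plan is to split according to the dictator distance and handle the two regimes separately. Write $\rho=\tanh u$, so $u$ ranges over $[\atanh(49/50),8]$; since $\atanh(49/50)=\tfrac12\log 99>45951/20000$, this range lies inside the interval where the explicit parameter choice of Section~\ref{sec:compact-spectral} is defined.

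\emph{Near dictators.} If $\delta(f)\le e^{-u}$, or equivalently $\alpha\ge 1-2e^{-u}=c(u)$, then $u\ge\log9$ holds because $\atanh(49/50)>\log 9$. Lemma~\ref{ub:exponential-local} therefore gives CK at $\rho$, so $G_f(\rho)\le0$ and the implication is vacuous.

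\emph{Remaining sources.} Now $\alpha<c(u)$. I would apply Theorem~\ref{as:gate} with
\[
 z=\log u+u/25-2/5,\qquad t=49u/100+7/5,\qquad \gamma=\tfrac12e^{49u/50-1/5}+e^{9u/20+1/4}.
\]
The side conditions have already been checked analytically in Section~\ref{sec:compact-spectral}: $\gamma$ satisfies \eqref{as:dual-parameters}, $t\ge2$, $0<z<u$, and the transfer margin $\eta\ge0$ holds. What remains is the strict scalar inequality
\[
 N(u)/u>\max\Bigl\{\rho+\tfrac{\pi^2}{4}\tau,\ \sup_{\beta\le a\le c(u)}\mathcal Q_u(a)\Bigr\},
\]
where $N=\tfrac{\pi^2}{4}k(t)\rho^2-\gamma C_*\Psi(\rho)$. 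Since the supremum over $a\le c(u)$ dominates the supremum over $a\le\alpha$, this inequality covers every such source at once.

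\emph{Reduction to finitely many checks.} Reparametrize by the height $v=\ell_u(a)$, which runs over $[z,\ell_u(c(u))]$. By Lemma~\ref{as:explicit-inverse} this interval is contained in $[z,v_+(u)]$. Split it at $2u/3$ and apply Lemma~\ref{as:convex-majorant} on each piece:
\begin{itemize}[leftmargin=*]
 \item on $[z,2u/3]$, use the logarithmic formula for $\psi$ with the tangent to $R$ at $z$;
 \item on $[2u/3,v_+(u)]$, use the quadratic formula with the tangent at $v_+(u)$.
\end{itemize}
Each majorant is convex in $v$, so it is enough to check the four endpoints. At $v_+$, replace $F(u)/F(v_+)$ by the explicit upper bound of Lemma~\ref{as:explicit-inverse}; the costs increase in this quantity, so the bound is legitimate. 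The discarded-coordinate term $\rho+\pi^2\tau/4$ is also explicit, through $\beta=F(u)/F(z)$ and \eqref{rt:scaled-profile}. Theorem~\ref{as:gate} then gives $G_f'(\rho)>0$.

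\emph{Main obstacle.} The hard step is verifying these five explicit inequalities uniformly in $u$ on the compact range. The margins are thin near $u\approx2.3$, which is also where $c(u)$ is furthest from $1$. I would cover $[\atanh(49/50),8]$ by 24 closed subintervals. On each, I would bound every term, a smooth function of $u$ built from $F$, $R$, $\tanh$ and exponentials, using fourth-order Taylor expansions with interval-arithmetic remainder bounds that preserve the cancellation in $N-u\cdot(\text{cost})$. If some interval fails, I would first refine it, then move the split point $2u/3$ or the tangent point; on high intervals the $\pi^2/(4v)$ replacement is a further alternative. This is the computer-assisted certificate of Appendix~\ref{app:certificate}.
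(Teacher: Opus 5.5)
Your proposal is correct and follows the paper's proof essentially step for step. Lemma~\ref{ub:exponential-local} excludes a positive gap when $\alpha\ge1-2e^{-u}$, using $\log9<\atanh(49/50)$. For the remaining sources, the paper applies Theorem~\ref{as:gate} with the explicit parameter rule of Section~\ref{sec:compact-spectral} and uses Lemmas~\ref{as:convex-majorant} and~\ref{as:explicit-inverse} to reduce it to four height endpoints plus the discarded-coordinate inequality on each of 24 noise intervals, which is exactly what the certificate checks. Your fallback options (refinement, moving the split, or the $\pi^2/(4v)$ replacement) are not needed, since that single fixed rule already passes on all intervals.
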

\begin{proof}[Computer-assisted proof]
Lemma~\ref{ub:exponential-local} rules out positive gap when
$\alpha\ge1-2e^{-u}$. For the remaining sources,
Theorem~\ref{as:gate} gives $D(T_\rho f)>\rho u$, hence
$G_f'(\rho)>0$. The certificate verifies this criterion on
24 closed noise intervals covering $[45951/20000,8]$, always
with the cutoff $1-2e^{-u}$.
Since $\log9<45951/20000<\atanh(49/50)$, both the local cutoff
and the spectral cover apply throughout the stated range.
Appendix~\ref{app:certificate} gives the verification details.
\end{proof}

\begin{proof}[Proof of Theorem~\ref{jp:main}]
Proposition~\ref{ub:lower-interval} gives $0\le\rho\le457/500$;
Theorems~\ref{ub:upper-middle} and~\ref{ub:last-middle} give
$457/500\le\rho\le49/50$. Fix an increasing nonconstant source.
We claim that every positive-gap point with $49/50\le\rho<1$
has $G_f'(\rho)>0$. Proposition~\ref{ub:compact-gate} proves
this through $u=8$. For $u\ge8$, Lemma~\ref{rt:local-junction}
excludes positive gap when $\delta(f)\le(3/2)e^{-u}$; the dictator
case is exact. Proposition~\ref{qual:tail-derivative} covers every
remaining source. The claim follows.

Since $G_f(1)=-\Phi(\mu)\le0$, the differential barrier in
Lemma~\ref{ub:barrier} now proves CK throughout $[49/50,1]$.
Compress an arbitrary source at the fixed correlation in question;
constants have zero information. All intervals are closed, and the
endpoint assertions are $0\le0$ and $H(\mu)\le\log2$.
\end{proof}

\section*{Acknowledgements}

Vu Khac Ky thanks Professor Chandra Nair for introducing him to this problem and sharing valuable insights and ideas during his postdoctoral appointment under Professor Nair's supervision at the Chinese University of Hong Kong.

The authors formulated the approach and used ChatGPT to develop the proofs, prepare the verification code, and revise the exposition. The authors take full responsibility for the paper’s content.

\appendix

\section{Certificates and reproducibility}
\label{sec:certificates}\label{app:certificate}
Fixed low-correlation signs are checked by rational series bounds;
the other transcendental signs are checked on complete closed rational
domains by outward-rounded Arb arithmetic \cite{Johansson2017}. The moving-cap source
boundaries are quadratic algebraic numbers reconstructed exactly. A strict sign is accepted only when
the entire enclosure has that sign; a nonfinite enclosure or one
containing zero is inconclusive. Exact rational and algebraic arithmetic checks coverage.
Every supporting line is valid by convexity; its explicit integral
is accepted only after the required denominator and final signs pass.
The compact inverse profile is bounded analytically. Bisections retain both children unless infeasibility is
proved. Data-free scripts check the finite constants in the analytic
argument of Proposition~\ref{qual:tail-derivative} and the local
junction in Lemma~\ref{rt:local-junction}.

\subsection{The compact certificate}

The qualitative file \path{qualitative_compact_certificate.json}
covers $[45951/20000,8]$ by 24 closed rational noise intervals:
eight equal intervals through $4$, followed by 16 intervals of
length $1/4$. The cutoff is $1-2e^{-u}$ throughout, and
Lemma~\ref{ub:exponential-local} covers every larger influence.
The single explicit parameter rule uses 96 height endpoint inequalities
from Lemma~\ref{as:convex-majorant}. No height records or fitted
parameter families are stored. The split is $2u/3$, and
Lemma~\ref{as:explicit-inverse} supplies the upper endpoint without
root calculations. Exact adjacency checks verify the noise partition.

The replay \path{qualitative_compact_verify.py} recomputes every
scalar sign at 256 bits. It clears positive denominators and expands
logarithms of profile ratios before interval evaluation, retaining
cancellation without changing the signs. The transfer condition is
proved analytically. The separate \path{qualitative_compact_audit.py}
uses independently derived unscaled formulas, 384-bit arithmetic,
and first-order bounds on the full extended height domain.
Both verify the full stated domain and its moving influence cutoff. Neither uses an optimizer or inverse roots.

\subsection{Files and replay scope}

The certificate data and verification code are available in the
GitHub repository linked below. The low-correlation argument
uses ten fixed endpoint inequalities and no correlation partition.
The middle-range checks use ten scalar endpoint gates, five local
endpoint signs, and 29 clocks with 56 integration intervals and 152
supporting-line evaluations. The two-coordinate mean bound uses eight
fixed rational signs and no mean partition. The unbounded endpoint
argument is analytic; its checker verifies the finite constants used
in that argument, without sampling the unbounded domain.

From the bundle root, install \path{requirements.txt} and run
\texttt{python3 verify.py}. This checks the file manifest and runs
the ten numerical stages supporting the exact theorem. Each stage
recomputes its signs and domain coverage and writes its report to a
fresh output directory. Optimized Python is rejected because the
verifiers use assertions. The separate stability paper has its own
package and stronger quantitative checks; none of them is an input
to this exact proof. The \path{README.md} records the build and replay
commands. Only current inputs are included. The certificate data and verification code are available at
\url{https://github.com/vukhacky/courtade-kumar-conjecture/}.

\subsection{Growth of the scaled profile}

This estimate gives the explicit inverse bound in
Lemma~\ref{as:explicit-inverse} using only first derivatives.

\begin{lemma}\label{app:scaled-profile-derivative}
For the scaled profile $S$ in \eqref{eq:scaled-profile} and $v\ge2$,
$1<(2v+1)S'(v)/S(v)<2$.
\end{lemma}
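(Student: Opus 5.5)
The plan is a direct computation: write $(2v+1)S'/S$ in closed form in the variable $x=e^{-2v}$ and reduce each of the two inequalities to an elementary inequality involving $v$, $x$ and $L=\log(1+x)$. From \eqref{eq:scaled-profile}, $S'/S=2+F'/F$. The proof of Lemma~\ref{as:profile-convex} gives
\[
-\frac{F'}{F}=\frac{\log(2\cosh v)}{\sinh v\cosh v\,H(\tanh v)}.
\]
With $x=e^{-2v}$ and $L=\log(1+x)$ we have $\log(2\cosh v)=v+L$ and $\sinh v\cosh v=(1-x^2)/(4x)$. The identity used for \eqref{rt:scaled-profile} gives $H(\tanh v)=L+2vx/(1+x)$. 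Hence
\[
-\frac{F'}{F}=\frac{N}{D},\qquad N=4x(v+L),\qquad D=(1-x)\bigl[(1+x)L+2vx\bigr].
\]
The claim $1<(2v+1)(2-N/D)<2$ is then equivalent, after clearing the positive denominators, to
\[
4vD<(2v+1)N<(4v+1)D .
\]

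For the upper bound $(2v+1)S'/S<2$, the required inequality is $x(v+L)(2v+1)>v(1-x)[(1+x)L+2vx]$. Expanding, the difference of the two sides equals
\[
vx+(2v+1)xL-v(1-x^2)L+2v^2x^2 .
\]
Since $L\le x$, we get $v(1-x^2)L\le vx$, so the difference is at least $(2v+1)xL+2v^2x^2>0$. This half therefore holds for every $v>0$.

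For the lower bound $(2v+1)S'/S>1$, I need $4x(v+L)(2v+1)<(4v+1)(1-x)[(1+x)L+2vx]$. I bound $L\le x$ on the left and $L\ge x-x^2/2$ on the right. Dividing by $x$ then leaves a polynomial inequality in $v$ and $x$. Its leading terms are $8v^2+4v$ on the left against $(4v+1)(2v+1)=8v^2+6v+1$ on the right. This leaves a margin of $2v+1$, against error terms of size $O(v^2x)$.

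The main obstacle is making this last step rigorous uniformly for $v\ge2$. My approach is to collect all error terms as $x\cdot p(v)$, with $p$ an explicit cubic bounded by, say, $20v^2$ or so. I then need $x\,p(v)<2v+1$. Because $x=e^{-2v}$, the quantity $v^2e^{-2v}$ decreases for $v\ge1$. So it suffices to check the inequality at $v=2$, where $e^{-4}<1/50$ makes it immediate with room to spare.
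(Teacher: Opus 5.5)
Your proposal is correct, and it takes a different route from the paper. The paper expands $S$ as a power series in $q=e^{-2v}$, namely $S(v)=2v+1+\sum_{n\ge1}(2v+1+c_n)q^n$ with $c_n\in(0,1/2]$. It differentiates termwise to get $7/4<S'(v)<2$ for $v\ge2$, and then divides by the two-sided bound $2v+1<S(v)\le(2v+1)(1+q)/(1-q)$ from \eqref{rt:scaled-profile}. The lower bound then reads $(7/4)(1-q)/(1+q)>1$.

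You instead write $(2v+1)S'/S$ exactly as $(2v+1)(2-N/D)$ and clear denominators. Everything then reduces to the two elementary bounds $x-x^2/2\le\log(1+x)\le x$. Your formulas for $N$ and $D$ are correct, and so is the expansion in the upper bound.

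The step you flagged as the main obstacle is easier than you expect. After substituting the two logarithm bounds and dividing by $x$, the right side minus the left side is exactly $(2v+1)[1-(4v+5)x]+\tfrac12(4v+1)x(1-x)^2$. So the error term is the quadratic $x(2v+1)(4v+5)$, not a cubic. It suffices that $(4v+5)e^{-2v}<1$, which holds at $v=2$ because $13e^{-4}<1$, and continues to hold because the left side decreases.

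The paper's route bounds $S$ and $S'$ separately and reuses an inequality already proved for Lemma~\ref{as:profile-convex}. The price is controlling the coefficients $c_n$ and justifying termwise differentiation. Your route is self-contained, needs no series, and makes the margin explicit: $2v+1$ against an error of order $v^2e^{-2v}$.
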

\begin{proof}
Put $q=e^{-2v}$. The entropy formula gives
\[
 S(v)=2v+1+\sum_{n\ge1}(2v+1+c_n)q^n,\qquad
 c_n=\sum_{k=1}^n\frac{(-1)^{k+1}}{k(k+1)}\in(0,1/2].
\]
Termwise differentiation gives
$2-(4v+3)q/(1-q)^2\le S'(v)<2$.
The subtracted term decreases for $v\ge2$; $e^4>54$ bounds
it by $594/2809<1/4$, so $S'>7/4$.
Equation~\eqref{rt:scaled-profile} gives
$2v+1<S(v)\le(2v+1)(1+q)/(1-q)$.
Since $(1+q)/(1-q)<55/53<7/4$, the asserted ratios follow.
\end{proof}

\subsection{The central energy remainder}

The next estimate completes the affine interpolation argument in
Proposition~\ref{ub:lower-interval}. Its proof reduces the entire
noise interval to a concave polynomial and one endpoint value.

\begin{lemma}\label{app:central-energy}
For $0\le\rho\le457/500$,
\[
 \frac{76-49\rho^2}{57}H(\rho)
 <(1-\rho)\left(1+\rho-\frac{193027}{250000}\rho^2\right).
\]
\end{lemma}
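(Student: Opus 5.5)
Both sides vanish at $\rho=1$, and the left side carries the factor $H(\rho)$, whose behaviour near $\rho=1$ is like $(1-\rho)\log\frac{1}{1-\rho}$. The plan is therefore to divide by $1-\rho$ and compare $H(\rho)/(1-\rho)$ with a polynomial. Put $x=1-\rho\in[43/500,1]$. The series from the proof of Lemma~\ref{ub:cubic-endpoints} gives
\[
 \frac{H(\rho)}{1-\rho}=\frac12\log\frac{2e}{x}-R(x),\qquad R(x)=\sum_{k\ge1}\frac{x^k}{2^{k+1}k(k+1)}\ge0 .
\]
At $\rho=0$ the claim reads $(76/57)\log2<1$, which holds since $\log2<347/500$, so the small-$\rho$ end is easy. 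The inequality to prove is then
\[
 \frac{76-49\rho^2}{57}\Bigl(\tfrac12\log\tfrac{2e}{x}-R(x)\Bigr)<1+\rho-\tfrac{193027}{250000}\rho^2 .
\]
Rather than use the logarithm expansion in $x$, which is poor when $x$ is near $1$, I would work directly in $\rho$, using $H(\rho)=\log2-\sum_{k\ge1}\rho^{2k}/(2k(2k-1))$.

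Define
\[
 P(\rho)=(1-\rho)\Bigl(1+\rho-\tfrac{193027}{250000}\rho^2\Bigr)-\tfrac{76-49\rho^2}{57}\,H_N(\rho),
\]
where $H_N\ge H$ is obtained by truncating the entropy series after finitely many negative terms. Then $P$ is an explicit polynomial, and it suffices to show $P>0$ on $[0,457/500]$. The paper states that the proof reduces to a concave polynomial and one endpoint value, so the intended structure is: show $P$ (or $P$ minus a constant) is concave on the interval, and check positivity at both endpoints. Positivity at $\rho=0$ comes from the bound on $\log2$. Positivity at $\rho=457/500$ is a rational evaluation, with $\log 2$ replaced by a rational upper bound such as $347/500$.

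For concavity, expand $P''$. The term $-\tfrac{76-49\rho^2}{57}H_N$ contributes $-\tfrac{1}{57}\bigl[(76-49\rho^2)H_N''-196\rho H_N'-98H_N\bigr]$. Here $H_N''$ and $H_N'$ are negative with explicit polynomial bounds, while $-98H_N<0$ helps. The cubic part contributes $-2-6\cdot\tfrac{193027}{250000}\rho$-type negative terms plus $\tfrac{2\cdot193027}{250000}$, which nets to something negative. I would bound each piece by crude monomial estimates on $[0,1]$ to get $P''<0$.

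The main obstacle is the right endpoint. Near $\rho=457/500$ the entropy series converges slowly, roughly like $\rho^{2k}/k^2$, so the truncation must be long enough to keep the margin. I expect the margin to be small, since the constant $193027/250000$ was tuned. If concavity of the full truncated polynomial fails, my fallback is to split off the logarithmic part exactly, writing $H(\rho)=\log2-\Phi(\rho)$ with $\Phi$ convex, and to use $-\Phi''<0$ to obtain concavity before truncating.
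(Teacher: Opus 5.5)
Your reduction to an explicit polynomial via a truncation $H_N\ge H$ is sound. You also correctly identify $\rho=457/500$ as the delicate point; the true margin there is only about $2\cdot10^{-5}$. The gap is the concavity step, which is false. Put $c=193027/250000$ and let $V(\rho)$ be the right side minus the left side. Since $H'=-\atanh$ and $H''=-1/(1-\rho^2)$,
\[
 V''(\rho)=-2(1+c)+6c\rho+\frac1{57}\Bigl[\frac{76-49\rho^2}{1-\rho^2}+98H(\rho)-196\rho\atanh\rho\Bigr].
\]
At $\rho=1/2$ the polynomial part is about $-1.23$ and the bracket contributes about $+1.51$, so $V''(1/2)\approx0.29>0$. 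In fact $V''>0$ throughout, for instance, $[2/5,7/10]$. A truncation accurate enough for the right endpoint changes $V''$ only negligibly on that stretch. So your $P$ is not concave, and no argument of the form ``concave with positive endpoint values'' can succeed.

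Your sign bookkeeping is where this slipped. The cubic part has second derivative $-2-2c+6c\rho$; your sketch swaps the signs of the $c$-terms. This is positive once $\rho>(1+c)/(3c)\approx0.765$. In the entropy part, $-(76-49\rho^2)H_N''/57$ and $98H_N/57$ enter $P''$ with positive sign, so they work against concavity; only $196\rho H_N'/57$ is negative. The fallback through $H=\log2-\Phi$ cannot help, because the failure of concavity belongs to $V$ itself, not to how it is estimated.

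What is true is monotonicity, and this is what the ``concave polynomial'' in the paper's description refers to: that polynomial bounds $V'$, not $V$. From the series for $H$ and $\atanh$,
\[
 \frac{V'(\rho)}{\rho}=-2(1+c)+\frac43+\frac{98}{57}\log2+3c\rho+\sum_{n\ge1}d_n\rho^{2n},
 \qquad d_n=\frac4{3(2n+1)}-\frac{49(n+1)}{57n(2n-1)}.
\]
Here $d_1,\dots,d_4<0$ and $d_n<1/64$ for $n\ge5$, so the tail is below $\rho^{10}/[64(1-\rho^2)]<1/25$. This yields $V'(\rho)/\rho<\widetilde P(\rho)$ for an explicit polynomial $\widetilde P$ with $\widetilde P''<-5/2$. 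A quadratic tangent bound at $3/4$ then gives $\widetilde P<0$.

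Hence $V$ decreases on $[0,457/500]$, and your check at $\rho=0$ becomes unnecessary. Everything reduces to the single value $V(457/500)>1/50000$. That is the one place where a long truncation of the entropy series, or sharp rational logarithm bounds, are genuinely required.
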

\begin{proof}
Let $V(\rho)$ be the right side minus the left side. We show that
$V$ decreases and check its value at the right endpoint.
The entropy series gives
\[
 \frac{V'(\rho)}\rho
 =-2\left(1+\frac{193027}{250000}\right)+\frac43
  +\frac{98}{57}\log2+\frac{579081}{250000}\rho
  +\sum_{n\ge1}d_n\rho^{2n},
\]
where $d_n=4/[3(2n+1)]-49(n+1)/[57n(2n-1)]$.
We have
$d_1<-80/63$, $d_2<-17/105$, $d_3<-4/105$, and $d_4<0$.
For $n\ge5$, the bound $d_n<1/64$ follows from
\[
 84n^3-1280n^2+5227n+1152
 =(84n+64)(n-8)^2+875n-2944>0.
\]
Indeed, this polynomial is
$1344n(4n^2-1)$ times the difference between $1/64$ and the
upper bound obtained by replacing $49/57$ by $6/7$.
Also $\rho^{10}/[64(1-\rho^2)]<1/25$ on the stated interval.
Therefore $\log2<7/10$ gives $V'(\rho)/\rho<P(\rho)$, where
\[
 P(\rho)=-\frac{362581}{375000}+\frac{579081}{250000}\rho
          -\frac{80}{63}\rho^2-\frac{17}{105}\rho^4
          -\frac4{105}\rho^6.
\]
Now $P''<-5/2$, $P(3/4)<-19/10000$, and
$0<P'(3/4)<17/200$. Completing the square in its quadratic
tangent bound gives
$P(\rho)\le P(3/4)+P'(3/4)^2/5<-1/2500$.
Thus $V$ decreases. Finally $V(457/500)>1/50000$ by the
rational logarithm bounds below, proving the claim.
\end{proof}

For completeness, all ten low-correlation comparisons use explicit
finite series. Reduce a positive rational logarithm argument to
$x\in[1,2]$ by extracting powers of $2$. With $z=(x-1)/(x+1)$,
the first twelve terms of
$2\sum_{j\ge0}z^{2j+1}/(2j+1)$ underestimate $\log x$ by at
most $2z^{25}/[25(1-z^2)]$. The same formula bounds $\log2$.
For the Fourier constants, twelve terms of the alternating
arctangent series in
$\pi=16\arctan(1/5)-4\arctan(1/239)$ give rational bounds for
$\pi$; squaring rational endpoints bounds each square root.
The file \path{low_correlation_verify.py} performs these rational
operations with no input certificate. The separate
\path{low_correlation_audit.py} checks the same constants directly
with Arb at 384 bits.

\subsection{A rational curvature bound}

The following calculation supplies the curvature estimate used in
Lemma~\ref{ub:cubic-endpoints}. It involves only polynomial
coefficients and rational inequalities.

\begin{lemma}\label{app:cubic-curvature}
For $0\le x\le43/500$ and $3/4\le M\le83/100$, put
$C=3-2M+(-3+5M)x+(1-4M)x^2+Mx^3$ and
$V=(1/2+3x-3x^2+x^3)/C$.
Then $-7+30x<V''<0$ and $V'<2$.
\end{lemma}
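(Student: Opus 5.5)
The plan is to treat $V=N/C$ as an explicit rational function of the two variables $(x,M)$ on the closed box $\mathcal B=[0,43/500]\times[3/4,83/100]$, with $N=1/2+3x-3x^2+x^3$. We clear the positive denominator and reduce each of the three inequalities to the sign of a polynomial on $\mathcal B$.

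\textbf{Step 1: positivity and monotonicity of $C$.} First show that $C$ is bounded below by a positive rational on $\mathcal B$, so that dividing by powers of $C$ is harmless. At $x=0$ we have $C=3-2M\ge 67/50$. Moreover
\[
C'=-3+5M+2(1-4M)x+3Mx^2 .
\]
On $\mathcal B$ the constant term $-3+5M$ lies in $[3/4,23/20]$, and the correction terms are at most $6x\le 1/2$ in absolute value. Hence $C'>0$ on $\mathcal B$, so $C$ is increasing in $x$ and $C\ge 67/50$ throughout. The same crude estimate gives $C''=2(1-4M)+6Mx<0$. Record explicit rational enclosures for $C$, $C'$ and $C''$ on $\mathcal B$. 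Since each of these is affine in $M$ for fixed $x$, its extremes over $M$ occur at $M=3/4$ or $M=83/100$.

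\textbf{Step 2: reduce to polynomial sign conditions.} Write
\[
V'=\frac{N'C-NC'}{C^2},\qquad
V''=\frac{N''C^2-2N'C'C-NC''C+2NC'^2}{C^3}.
\]
The claims then become three polynomial inequalities on $\mathcal B$:
\[
P_1:=2C^2-N'C+NC'>0,\qquad
P_2:=-(N''C^2-2N'C'C-NC''C+2NC'^2)>0,
\]
\[
P_3:=N''C^2-2N'C'C-NC''C+2NC'^2+(7-30x)C^3>0 .
\]
Each $P_j$ has low degree in $x$ and degree at most $3$ in $M$. Expand each $P_j$ exactly in the rational monomial basis $x^aM^b$.

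\textbf{Step 3: certify the signs.} Sanity check at the corner $x=0$, $M=83/100$: we get $C=1.34$ and the $V''$-numerator is about $-16$. So $V''\approx-6.6$ there, and the lower bound $-7$ holds only with a modest margin. The upper bound $V''<0$ and the bound $V'<2$ (here $V'(0)\approx(3\cdot1.34-0.5\cdot1.15)/1.34^2<2$) have comfortable margins.

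For $P_1$ and $P_2$ I would argue as follows. Take the value at $x=0$, which is a cubic in $M$, and bound it below on $[3/4,83/100]$ by its endpoint values together with a bound on its derivative. Then bound the $x$-dependent remainder by the absolute values of its coefficients times $x\le43/500$.

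For $P_3$ this crude absorption may be too lossy, because of the small margin. The fallback is a Bernstein expansion of $P_3$ on $\mathcal B$ after the affine rescaling $x=(43/500)s$, $M=3/4+(2/25)r$ with $s,r\in[0,1]$. If all Bernstein coefficients are positive, $P_3>0$ follows, since a polynomial is a convex combination of its Bernstein coefficients. If some coefficient is not positive, bisect in $x$ (or in $M$) and repeat; this is the same exact-rational procedure used elsewhere in the certificate.

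The main obstacle I expect is the lower bound $V''>-7+30x$ near $x=0$, $M=83/100$. There the margin is only about $0.4$ relative to a quantity of size $7$, and the $-30x$ slack grows only linearly, so sloppy monomial bounds could fail. The Bernstein subdivision should be tried first there.
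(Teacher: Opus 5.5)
Your overall strategy is valid in principle: clear the positive denominator, then certify three explicit polynomials on the box, with exact Bernstein subdivision as a fallback. All three inequalities hold with strict margin on the compact box, so subdivision must terminate.

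It differs from the paper in two ways. First, the paper never certifies your $P_1$. Once $V''<0$ is known, $V'$ decreases in $x$, so $V'\le V'(0)=(21/2-17M/2)/(3-2M)^2$. This increases in $M$ (its derivative has the sign of $33-34M$) and is about $1.919$ at $M=83/100$. Second, for the lower bound the paper avoids two-dimensional subdivision. It expands $C^3(V''+7-30x)=\sum_{j=0}^3P_j(x)(83/100-M)^j$ about $M=83/100$ and shows that each one-variable coefficient $P_j$ is positive on $[0,43/500]$. For $V''<0$ it does what you propose for $P_2$, obtaining $-C^3V''\ge15-60x-54x^3-7x^5>8$.

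The step you propose for $P_1$, however, fails as written, and your claim that $V'<2$ has a comfortable margin is mistaken. Put $c_0=3-2M$, $c_1=5M-3$, $c_2=1-4M$.
\begin{itemize}
\item \emph{The margin at the corner is small.} One has $P_1(0,M)=2c_0^2-3c_0+c_1/2\approx0.146$ at $M=83/100$. Equivalently $V'(0)\approx1.919$, a relative margin of about $0.04$. This is smaller than the relative margin of about $0.07$ in $V''>-7$ at the same corner.
\item \emph{The crude absorption then fails.} The linear coefficient of $P_1$ in $x$ is $4c_0c_1+6c_0+c_2\approx11.9$ there. Replacing the $x$-dependent part by $-\sum_{k\ge1}|p_k|(43/500)^k$ therefore subtracts more than $1$ and gives a negative lower bound.
\item \emph{The same mechanism defeats crude absorption for $P_3$.} There the corner value is about $1.25$ and the linear coefficient is about $+26.5$. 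The problem is a large \emph{positive} linear coefficient thrown away by taking absolute values, not mainly the small $-30x$ slack.
\end{itemize}

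To repair $P_1$, you have three options:
\begin{itemize}
\item keep the sign of its linear term, which dominates the higher-order terms on this short interval;
\item run your Bernstein fallback on $P_1$ as well;
\item most simply, prove $P_2>0$ first and deduce $V'<2$ from concavity in $x$, as the paper does.
\end{itemize}

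A minor slip in Step 1: $6\cdot43/500>1/2$. However, only the term $2(1-4M)x\ge-(116/25)x$ is negative there, so $C'\ge3/4-(116/25)(43/500)>0$ still follows.
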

\begin{proof}
All derivatives in this proof are with respect to $x$.
The denominator is positive: $C\ge67/50$, since
$C(0)=3-2M\ge67/50$ and
$C'\ge3/4-(116/25)(43/500)>0$.
Expanding the numerator of $-V''$ and bounding its coefficients over the
given interval for $M$ gives
$-C^3V''\ge15-60x-54x^3-7x^5>8$.
In particular, $V''<0$.
Also $V'(0)=(21/2-17M/2)/(3-2M)^2<2$, so $V'<2$.

For the lower bound, write
$C^3(V''+7-30x)=\sum_{j=0}^3P_j(x)(83/100-M)^j$.
Direct expansion, discarding positive terms and using $x<1/10$,
gives the following bounds:
\[
\begin{array}{c|l|c}
j&\text{lower bound for }P_j(x)&\text{further lower bound}\\ \hline
0&\frac54+(\frac{53}{2}-304x)x-588x^5&\frac65\\
1&57-316x-250x^2&22\\
2&165-1123x-2001x^3&50\\
3&56-660x+3186x^2-8742x^3&13
\end{array}
\]
The first row uses $53/2>304(43/500)$; the last uses
$56-660x+3186x^2\ge56-660^2/(4\cdot3186)$.
The two middle rows follow by setting $x=1/10$.
Thus every coefficient $P_j$ is positive, which proves the lower bound.
\end{proof}

\end{document}